\documentclass[11pt]{article}

\usepackage[a4paper,margin=1.15in]{geometry}
\usepackage{amsmath,amssymb,amsthm,mathtools}
\usepackage{microtype}
\usepackage[numbers,sort&compress]{natbib}
\usepackage[hidelinks]{hyperref}
\newtheorem{theorem}{Theorem}[section]
\newtheorem{proposition}[theorem]{Proposition}
\newtheorem{lemma}[theorem]{Lemma}
\newtheorem{corollary}[theorem]{Corollary}
\newtheorem{remark}[theorem]{Remark}
\newtheorem{definition}[theorem]{Definition}

\newcommand{\eps}{\varepsilon}
\newcommand{\R}{\mathbb R}
\newcommand{\Sph}{\mathbb S}
\newcommand{\dd}{\,\mathrm d}
\newcommand{\Hh}{\mathcal H}
\newcommand{\one}{\mathbf 1}
\newcommand{\supp}{\operatorname{spt}}

\title{Heterogeneous Anisotropic Kac Limits\\on Rectifiable Resolved Cut Spaces}
\author{Sai Peng\\
School of Mathematics and Computational Science, Xiangtan University\\
Xiangtan, Hunan, China\\
\texttt{pscfd@xtu.edu.cn}}
\date{}

\hypersetup{
  pdftitle={Heterogeneous Anisotropic Kac Limits on Rectifiable Resolved Cut Spaces},
  pdfauthor={Sai Peng},
  pdfsubject={Sharp-interface limits for heterogeneous anisotropic Kac energies on rectifiable resolved cut spaces},
  pdfkeywords={heterogeneous Kac kernels, anisotropic perimeter, rectifiable cuts, resolved derivative, long-range kernels}
}

\begin{document}
\maketitle

\begin{abstract}
Ambient convergence does not determine a nonlocal energy on a cut body: a cut
has zero volume, but a microscopic bond retains which labelled bank its path
crosses.  We give a direct measure-theoretic resolution of this information
and prove the corresponding sharp-interface Gamma-limit.  The cut may be a
finite or countable family of Borel, countably rectifiable hypersurfaces with
finite total surface mass, measurable approximate normals, and measurable binary bank
data.  No closedness, positive reach, finite atlas, curvature bound, angle
condition, or regularity of the branch set is assumed.  For heterogeneous Kac
interactions generated by a uniformly spanning family of possibly
noncommuting flows, the limit is the resolved anisotropic variation plus the
outer boundary contact and one coefficient-one contact on each bank.  The
Gamma-functional contains no compulsory lower-dimensional term where a crack
approaches the boundary; constant recovery creates no concentration there.  The
proof combines an exact spacetime area formula for flow crossings, a
one-dimensional split-extension lower bound, and ordered-flow compactness.
We also remove compact radial support.  A finite first moment yields the same
local Griffith functional; at the power-law threshold the limits separate
into local, logarithmically renormalized local, and genuinely fractional
regimes.  Smooth slit and junction detectors are recovered as uniformly
equivalent realizations of the direct crossing rule.
\end{abstract}

\paragraph{Keywords.}
Heterogeneous Kac kernels; anisotropic perimeter; rectifiable cuts; resolved
derivative; long-range interactions.

\paragraph{MSC 2020.}
49J45; 49Q20; 49J55.

\paragraph{Convention on equicoercivity.}
Whenever a family indexed by \(\eps\) is called equicoercive, the assertion is
sequential at the singular scale: for every \(\eps_n\downarrow0\), every
sequence with uniformly bounded energy has a subsequence converging in the
stated topology.  No uniform assertion for \(\eps\) bounded away from zero is
intended.

\section{Introduction}
Sharp-interface limits turn nonlocal interactions into local geometric
energies.  Their usual formulation on an ambient \(L^1\) space rests on a
tacit premise: once the interaction length vanishes, the almost-everywhere
phase and the first-order distribution of bond directions retain all
information relevant at surface scale.  This premise fails on a cut body.  A
cut has zero Lebesgue measure, but a microscopic path can still record which
labelled bank it crosses.  Endpoint fields with the same ambient \(L^1\) limit
can therefore carry different contact multiplicities and different limiting
energies.  Neither ambient convergence nor the kernel alone determines the
Gamma-limit.

The obstruction is a structural mismatch between topology and measure.
Measure-theoretic state spaces identify changes on null sets, whereas paths
through the complement can distinguish the two sides of a codimension-one
null set.  A smooth cut can be resolved by a tubular collar or by duplicating
its banks, but those constructions cease to be intrinsic for a merely
rectifiable family with accumulating sheets, irregular branch sets, or
boundary contact.  One must retain bank multiplicity without adding physical
volume, define crossing incidence measurably, and determine whether
surface-null strata force an additional term.  Thus the central question is
not only which surface density emerges, but which state space and path rule
make a local variational limit well posed.

We answer this question directly on a rectifiable cut, without first replacing
its geometry by a smooth or polyhedral carrier.  Let
\(\mathfrak K=(K_j,\nu_j,g_j^+,g_j^-)_j\) be a finite or countable labelled
family of countably \((d-1)\)-rectifiable sets with
\(\sum_j\Hh^{d-1}(K_j)<\infty\) and surface-null label overlaps.  The
measure-theoretic resolved derivative removes from \(Du\) precisely the jump
already carried by the prescribed cut,
\[
 D_{\mathfrak K}u
 :=Du-\sum_j(T_j^+u-T_j^-u)\nu_j
       \Hh^{d-1}\mathbin{\llcorner}K_j.
\]
At finite scale, a flow arc with no crossing uses the physical endpoint
difference, while an arc with one labelled crossing is split into two
independent endpoint-to-bank contacts.  The value on multiply crossed arcs is
irrelevant at surface order.

Theorem~\ref{thm:rectifiable-cut-master-gamma} proves that these direct
crossing-bond energies Gamma-converge, on every bounded Lipschitz body, to
\begin{align}
 &\int_\Omega h_{\rm fl}(x,\sigma_{D_{\mathfrak K}u})
      \dd|D_{\mathfrak K}u|
 +\int_{\partial\Omega}h_{\rm fl}(x,\nu_\Omega)|T_\partial u-g_\partial|
    \dd\Hh^{d-1}\notag\\
 &\qquad+\sum_j\int_{K_j}h_{\rm fl}(x,\nu_j)
   \bigl(|T_j^-u-g_j^-|+|T_j^+u-g_j^+|\bigr)\dd\Hh^{d-1}.
 \label{eq:introduction-master-limit}
\end{align}
The result requires no closedness, positive reach, local graph
representation, finite atlas, angle gap, or regularity of the branch and tip
sets.  It also permits the cut to approach or terminate on the outer boundary:
the combined crossing set is sliced at once, and the Gamma-limit contains no
compulsory term on \(\overline K\cap\partial\Omega\).  Constant recovery has
no concentration on that set.  This statement concerns the limiting
functional; arbitrary bounded-energy sequences may still carry a nonnegative
excess defect measure.  Thus finite smooth slit and fan complexes are
examples, not the geometric scope of the theorem.

This conclusion supplies a state-space principle at the interface of
geometric measure theory and free-discontinuity analysis.  Codimension-one
trace multiplicity can survive a nonlocal-to-local limit even though its
carrier is invisible to ambient volume, while codimension-two incidence may
remain geometrically present without producing a compulsory term in the
Gamma-functional.  The proof therefore provides a template for nonlocal
approximations on resolved rectifiable and stratified spaces: the limiting
variation, the trace space, and the finite-scale path rule must be specified
together.  The principle is not restricted to the particular slit and fan
charts used for its regular realizations, although it does not classify
arbitrary path-lifting algorithms.

The connection with the Griffith criterion is variational rather than
evolutionary.  In a quasistatic fracture model, the functional in
\eqref{eq:introduction-master-limit} supplies the resolved surface and contact
part associated with a prescribed cut
\citep{francfort1998revisiting,dalmaso2002model}.  Coupling it to a bulk stored-energy
functional would produce a total energy of the form
\(\mathcal E_{\rm bulk}+\mathcal H_{0,\Omega,\mathfrak K}^{g}\).  Only after
specifying irreversible variations of the cut can one define a configurational
energy-release rate and formulate the local Griffith inequality.  The present
paper identifies the anisotropic cost against which that release rate would be
compared; it does not derive a crack evolution law or an energy-balance
equality at propagation.

The second main result identifies the radial threshold hidden by the usual
finite-range assumption.  For a radial law \(\lambda\), a finite first moment
\(m_1=\int_0^\infty t\lambda(\dd t)\) is the exact integrability needed by
the surface-order truncation argument.  Under one nondegenerate spreading
interval, Theorem~\ref{thm:long-range-first-moment-main} gives the same local
functional multiplied by \(m_1\).  For
\(\lambda_s(\dd t)=a_s\one_{[1,\infty)}t^{-1-2s}\dd t\),
Theorem~\ref{thm:power-tail-main} gives the sharp trichotomy
\[
 \begin{array}{c@{\qquad}l}
 s>\tfrac12 & \text{local resolved Griffith limit},\\
 s=\tfrac12 & \text{local limit after }|\log\eps|^{-1}
                     \text{ renormalization},\\
 0<s<\tfrac12 & \text{flow-fractional resolved-cut limit}.
 \end{array}
\]
The last functional remembers crossings at every physical scale and therefore
cannot be represented by a local surface density.

The homogeneous Kac-to-perimeter theory begins with the work of Alberti and
Bellettini, including the identification of planar profiles
\citep{alberti1998nonlocal,alberti1998optimal}.  General nonlocal-to-local
compactness and integral representation are developed in
\citep{ponce2004approach,alicandro2023variational}.  An unpublished preprint
of Brusca, Donati, Scalabrino, Trifone, and Voglino
(arXiv:2603.24192) develops convolution-type approximations for free
discontinuity problems.  These results act on the ambient domain and therefore
do not determine bank multiplicity or junction cells.  Metric-measure \(BV\)
theory provides an
intrinsic variation on more general spaces
\citep{miranda2003functions,ambrosio2014equivalent}, but an abstract
variation measure does not specify how a microscopic bond is lifted across a
labelled cut.  The present limit requires both ingredients: an intrinsic
state space and the finite-scale path rule that selects its local cells.

The heterogeneous interaction introduces another obstruction.  In an
unpublished preprint (arXiv:2410.19624), Caldwell treats
translation-invariant kernels that may be singular at the origin and have
noncompact tails.  Another unpublished preprint, by Davoli and Tasso
(arXiv:2412.11756), places the spatial heterogeneity in a potential with moving
wells and constructs regular parameter-dependent profiles.  Here
the wells are fixed, the finite-range kernel depends on the base point, and
the bond directions may rotate through noncommuting flows.  No continuous
selection of near-minimizing frozen profiles is assumed, and no common flow
chart is available.  Thus neither of those theories contains the present
one.  The finite-first-moment and power-tail results below concern
flow-generated resolved bonds and do not subsume their full kernel classes or
moving-well regimes.  Long-range surface interactions and nonlocal phase
transitions provide the nearest variational precedents
\citep{bach2020longrange,berendsen2019nonlocal,savin2012gamma}.

Four analytic mechanisms drive the proof.  First,
Lebesgue differentiation and Vitali selection allocate finitely many frozen
profiles on polyhedral faces.  A shrinking codimension-two replacement removes
their seams, after which strict polyhedral approximation and Reshetnyak
continuity give recovery for every finite-perimeter interface.  A positive
core and pointwise relative lower continuity yield the matching compactness
and blow-up lower bound.

Second, radially spread bonds with rotating directions are represented by
their \(C^{1,1}\) flows.  Since the flows need not commute, iterated averaging
cannot be reduced to convolution in a single coordinate system.  The
ordered-flow smoothing lemma constructs local coordinates from a uniformly
spanning ordered frame, controls commutator and Jacobian errors, and converts
the nonlocal bond budget into a uniform \(BV\) bound.  It identifies the
anisotropy
\[
 h_{\rm fl}(x,\xi)=\sum_i c_i(x)|v_i(x)\cdot\xi|
\]
while preserving constant binary recovery.

Third, an area formula on
\(K\times(0,h)\) gives the exact flow-crossing multiplicity.  Rectifiable
slicing shows that multiplicity and parity agree to first order and reduces
the sequential liminf to a finite family of one-dimensional split extensions.
This is the step that removes curvature, reach, and branch regularity without
weakening the Gamma-liminf to a fixed-state calculation.  The same slicing
handles crack-boundary contact because internal and outer crossings are simply
different endpoint types on almost every flow line.

Fourth, radial truncation separates local and genuinely nonlocal scales.  A
uniform first-moment estimate removes an integrable tail.  At the critical
power, a controlled logarithmic shell restores compactness; below it, a
fractional translation modulus replaces the \(BV\) estimate.  Kac surface
tension, nonlocal boundary laws, random convolution limits, and
finite-partition relaxation provide the surrounding variational context
\citep{alberti1996surface,mellet2023boundary,braides2021random,
ambrosio1990partitions2,ambrosio2011nonlocal}.

The finite positive-reach theory remains useful because it realizes the
direct rule through ordinary collar ghosts and explicit slit or sectorial
detectors.  Theorem~\ref{thm:regular-detector-equivalence-main} proves uniform
Gamma-scale equivalence between those regular detectors and the direct
crossing rule.  The slit and fan cells therefore identify a concrete physical
implementation; they are no longer regularity assumptions in the principal
theorem.

The resolved geometry is motivated by crack and interface models, but every
result here is spatial rather than evolutionary.  Variational fracture treats
creation, growth, and irreversibility through free-discontinuity and elliptic
energies
\citep{ambrosio1990approximation,francfort1998revisiting,dalmaso2002model}.
We fix the rectifiable cut datum and its labels.  We do not claim a theorem
for purely unrectifiable fractals, arbitrary path-lifting algorithms, or
topology-changing evolutions.  Rectifiability is structural: it supplies the
approximate normal, the two \(BV\) traces, and the discrete slices needed by a
Griffith surface law.  Appendix~\ref{app:moving-sectorial-junction} proves
uniform Gamma-convergence under label-preserving bi-\(C^{1,1}\) transport with
fixed incidence and explains why topology change leaves that class.

\paragraph{Core notation.}
The calligraphic letter identifies the interaction mechanism:
\begin{description}
 \item[\(\mathcal K_\eps\)] heterogeneous convolution-kernel energies and
 their localized or frozen forms;
 \item[\(\mathcal F_\eps\)] straight-ray energies with fixed directions;
 \item[\(\mathcal G_\eps\)] flow-generated energies and their lifted versions
 on resolved cut spaces;
 \item[\(D_{\mathfrak K}u\)] the resolved derivative, obtained by deleting
 the jump measure already carried by the labelled cut;
 \item[\(X_{\mathfrak C}\)] the intrinsic completion of a finite cut complex,
 carrying one copy of physical volume and separate open-bank traces;
 \item[\(\Psi_\eps\)] the direct crossing rule, or its uniformly equivalent
 regular slit and junction realization.
\end{description}
Subscripts specify the body, slit, or finite complex; superscripts specify
coefficients and boundary data.  A tilde denotes pullback to a fixed reference
completion in Appendix~\ref{app:moving-sectorial-junction}.

\section{Energy and Cell Density}

Let \(d\ge2\), let \(\Omega\subset\R^d\) be open, and let
\[
 J:\R^d\times\R^d\to[0,\infty)
\]
be measurable.  We impose the following assumptions:
\begin{align}
 &J(x,Z)=J(x,-Z)\quad\text{for a.e. }Z,\text{ for every }x\in\R^d,
 \label{eq:even}\tag{J1}\\
 &\supp J(x,\cdot)\subset B_{R_J}\quad\text{for every }x\in\R^d,
 \label{eq:range}\tag{J2}\\
 &x\longmapsto J(x,\cdot)\quad\text{belongs to }
 C_{\rm loc}(\R^d;L^1(\R^d)).
 \label{eq:l1-cont}\tag{J3}
\end{align}
Only the values of \(J\) near \(\supp\zeta\) enter the theorem.  In
particular, for every compact \(K\subset\R^d\),
\[
 M_K:=\sup_{x\in K}\|J(x,\cdot)\|_{L^1(\R^d)}<\infty.
\]
Let \(W\in C([0,1];[0,\infty))\) satisfy \(W(0)=W(1)=0\).
Fix \(\theta\in[0,1]\).  For \(u:\R^d\to[0,1]\) and
\(\zeta\in C_c(\Omega)\), \(\zeta\ge0\), set
\begin{align}
 \mathcal K_\eps^{J,\theta}(u;\zeta)
 &:=\frac1\eps\int_\Omega\zeta(x)W(u(x))\dd x \notag\\
 &\quad+\frac1{4\eps}\int_\Omega\zeta(x)
 \int_{\R^d}J(x-\theta\eps Z,Z)
 |u(x)-u(x-\eps Z)|^2\dd Z\dd x.
 \label{eq:energy}
\end{align}
The coefficient and the values at the second endpoint are sampled only inside
an \(O(\eps)\)-neighbourhood of \(\supp\zeta\).  Since the test support is
interior and the range is finite, no boundary cell is sampled for all
sufficiently small \(\eps\).  Midpoint sampling \(\theta=1/2\), together with
\eqref{eq:even}, gives a symmetric pair coefficient before multiplication by
the localizing weight.

Let \(\mathcal A\) be the set of measurable \(q:\R\to[0,1]\) for which some
\(L_q<\infty\) satisfies
\[
 q(s)=0\quad(s\le-L_q),\qquad q(s)=1\quad(s\ge L_q).
\]
For \(x\in\Omega\), \(n\in\Sph^{d-1}\), and \(q\in\mathcal A\), define
\begin{align}
 e(x,n;q)
 &:={\int_\R W(q(s))\dd s}
 +\frac14\int_\R\int_{\R^d}J(x,Z)
 |q(s)-q(s-Z\cdot n)|^2\dd Z\dd s,
 \label{eq:profile-cost}\\
 \gamma(x,n)&:=\inf_{q\in\mathcal A}e(x,n;q).
 \label{eq:cell}
\end{align}

\begin{lemma}[Cell regularity]\label{lem:cell-regularity}
For every \(q\in\mathcal A\), the map
\((x,n)\mapsto e(x,n;q)\) is continuous on
\(\Omega\times\Sph^{d-1}\).  The cell density \(\gamma\) is jointly upper
semicontinuous, even in \(n\), and locally uniformly bounded.  More precisely,
for every compact \(K\Subset\Omega\),
\[
 0\le\gamma(x,n)\le \frac14\int_{\R^d}J(x,Z)|Z\cdot n|\dd Z
 \le \frac{R_JM_K}{4}
\]
for \((x,n)\in K\times\Sph^{d-1}\).
\end{lemma}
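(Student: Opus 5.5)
The plan is to handle the three claims separately: continuity of \(e(\cdot,\cdot;q)\) for fixed \(q\), upper semicontinuity and evenness of \(\gamma\), and the explicit upper bound. The first claim carries all the analysis; the rest follows from it.

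\emph{Continuity of \(e(\cdot,\cdot;q)\).} Fix \(q\in\mathcal A\). The potential term \(\int_\R W(q(s))\dd s\) is finite, since \(W(q)\) vanishes off \([-L_q,L_q]\) and \(W\) is bounded on \([0,1]\). It is also independent of \((x,n)\). For the interaction term, set
\[
 \Phi_q(t):=\int_\R|q(s)-q(s-t)|^2\dd s .
\]
Because \(q\in[0,1]\) is constant outside \([-L_q,L_q]\), the integrand vanishes unless \(s\) or \(s-t\) lies in that interval, or \(s\) lies between them. This gives \(0\le\Phi_q(t)\le|t|+4L_q\), so \(\Phi_q\) is finite. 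Moreover \(\Phi_q\) is continuous on \(\R\). To see this, write \(q=H+r\) with \(H=\one_{[0,\infty)}\) and \(r\in L^1\cap L^\infty\). Continuity of translations in \(L^2\) handles \(r\), and the Heaviside part contributes the explicit continuous term \(|t|\) together with cross terms that are continuous in \(t\).

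Tonelli then gives
\[
 e(x,n;q)=\int W(q)+\tfrac14\int_{\R^d}J(x,Z)\,\Phi_q(Z\cdot n)\dd Z .
\]
Take \((x_k,n_k)\to(x,n)\) inside a compact \(K\Subset\Omega\) and split
\[
 \int J(x_k,Z)\Phi_q(Z\cdot n_k)-\int J(x,Z)\Phi_q(Z\cdot n) .
\]
The first piece is \(\int (J(x_k,Z)-J(x,Z))\Phi_q(Z\cdot n_k)\dd Z\). Since \(\Phi_q(Z\cdot n_k)\le R_J+4L_q\) on \(B_{R_J}\supset\supp J\) by (J2), this piece is at most \((R_J+4L_q)\|J(x_k,\cdot)-J(x,\cdot)\|_{L^1}\), which tends to \(0\) by (J3). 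The second piece is \(\int J(x,Z)\bigl(\Phi_q(Z\cdot n_k)-\Phi_q(Z\cdot n)\bigr)\dd Z\). It tends to \(0\) by dominated convergence, using the pointwise continuity of \(\Phi_q\) and the same bound on \(B_{R_J}\).

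\emph{Upper semicontinuity and evenness.} Since \(\gamma\) is an infimum of the continuous functions \(e(\cdot,\cdot;q)\), it is jointly upper semicontinuous. For evenness, use the reflected profile \(\tilde q(s)=1-q(-s)\in\mathcal A\). A change of variables gives \(\Phi_{\tilde q}=\Phi_q(-\,\cdot)\), so the interaction term of \(e(x,-n;\tilde q)\) equals that of \(e(x,n;q)\) by (J1). The potential term transforms as \(W(1-q)\), which need not equal \(W(q)\), so this reflection alone may not suffice. Instead use \(\hat q(s)=q(-s)\). This swaps the limits \(0\) and \(1\), so it is not itself in \(\mathcal A\); however, the definition of \(\gamma(x,-n)\) involves \(|q(s)-q(s+Z\cdot n)|^2\), and substituting \(Z\mapsto -Z\) with (J1) turns it back into \(|q(s)-q(s-Z\cdot n)|^2\). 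Hence \(e(x,-n;q)=e(x,n;q)\) for every \(q\), without reflecting the profile at all, and evenness is immediate.

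\emph{Upper bound.} Nonnegativity is clear. For the bound I will use the sharp step \(q=H\in\mathcal A\), which gives \(\int W(H)=0\) because \(W(0)=W(1)=0\), and \(\Phi_H(t)=|t|\) exactly. Therefore
\[
 \gamma(x,n)\le e(x,n;H)=\tfrac14\int_{\R^d}J(x,Z)|Z\cdot n|\dd Z\le \tfrac{R_J}{4}\|J(x,\cdot)\|_{L^1}\le \tfrac{R_JM_K}{4},
\]
using (J2) and the definition of \(M_K\).

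The main obstacle is the continuity of \(\Phi_q\) for a merely measurable \(q\). The decomposition \(q=H+r\) with \(r\) bounded and compactly supported resolves it, since every term is then a continuous translation-correlation.
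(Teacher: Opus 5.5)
Your proposal is correct and follows the paper's proof step by step. The steps are: the splitting \(q=H+r\) to show the translation correlation \(\Phi_q\) is continuous, the separate handling of the \(x\)-variation by \eqref{eq:l1-cont} and the \(n\)-variation by dominated convergence, evenness via \(Z\mapsto-Z\) and \eqref{eq:even}, and the step-profile bound. You can delete the abandoned reflected-profile attempt in your evenness paragraph, since the direct substitution you finish with is exactly the paper's argument.
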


\begin{proof}
Fix \(q\in\mathcal A\) and write
\[
 I_q(a):=\int_\R|q(s)-q(s-a)|^2\dd s.
\]
Write \(q=q_0+w\), where
\(q_0=\one_{(0,\infty)}\) and \(w\in L^2(\R)\) has compact support.  The map
\(a\mapsto q_0-q_0(\cdot-a)\) is continuous into \(L^2(\R)\), because the
squared distance between the increments at \(a\) and \(b\) is \(|a-b|\).
Translation continuity in \(L^2(\R)\) gives the same conclusion for
\(w-w(\cdot-a)\).  Hence
\(a\mapsto q-q(\cdot-a)\) is \(L^2\)-continuous and \(I_q\) is continuous.
On \([-R_J,R_J]\) it is bounded by a constant depending only on \(q\) and
\(R_J\).  Thus
\[
 e(x,n;q)=\int_\R W(q(s))\dd s
 +\frac14\int_{B_{R_J}}J(x,Z)I_q(Z\cdot n)\dd Z.
\]
If \((x_k,n_k)\to(x,n)\), the part caused by \(x_k\to x\) tends to zero by
the \(L^1\)-continuity in \eqref{eq:l1-cont}; the part caused by
\(n_k\to n\) tends to zero by dominated convergence.  Hence the fixed-profile
cost is continuous.  Its infimum over \(q\) is upper semicontinuous.

Evenness follows by \(Z\mapsto-Z\) and \eqref{eq:even}.  Finally use the step
profile \(q_0=\one_{(0,\infty)}\).  Its potential cost vanishes, while the set
of \(s\) for which \(q_0(s)\ne q_0(s-Z\cdot n)\) has length
\(|Z\cdot n|\).  This proves the displayed bound.
\end{proof}

Whenever a set is called polyhedral near a compact set \(K\), its boundary is
understood to agree in an open neighbourhood of \(K\) with a finite
polyhedral complex.  No regularity away from that neighbourhood is intended.

\begin{theorem}[Heterogeneous interior recovery]\label{thm:main}
Let \(\zeta\in C_c(\Omega)\), \(\zeta\ge0\), and let \(E\subset\Omega\) have
finite perimeter in a neighbourhood of \(\supp\zeta\).  Then there exist
\(u_\eps:\R^d\to[0,1]\) such that
\(u_\eps\to\one_E\) in \(L^1_{\rm loc}(\Omega)\) and
\[
 \limsup_{\eps\downarrow0}\mathcal K_\eps^{J,\theta}(u_\eps;\zeta)
 \le
 \int_{\partial^*E}\zeta(x)\gamma(x,\nu_E(x))
 \dd\Hh^{d-1}(x).
\]
Here \(\nu_E=D\one_E/|D\one_E|\) is the normal from the zero phase to the
one phase.
\end{theorem}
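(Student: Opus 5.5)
We prove the bound first for polyhedral sets and then extend it to every finite-perimeter set by density. The polyhedral step builds finitely many frozen optimal profiles, one per small cube, and removes their seams at negligible cost. The intro announces this route: Lebesgue differentiation and Vitali selection on faces, a shrinking codimension-two replacement, strict polyhedral approximation, and Reshetnyak continuity.

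\textbf{Step 1 (polyhedral case).} Assume \(E\) is polyhedral near \(\supp\zeta\), with faces \(F_1,\dots,F_N\) and constant normals \(n_k\). Fix \(\delta>0\). The coefficient \(x\mapsto e(x,n_k;q)\) is continuous for each fixed \(q\) (Lemma~\ref{lem:cell-regularity}), and \(\gamma(\cdot,n_k)\) is upper semicontinuous. Together these give, for every \(y\in F_k\cap\supp\zeta\), a profile \(q_y\in\mathcal A\) and a radius \(r_y\) such that
\[
e(x,n_k;q_y)\le\gamma(y,n_k)+\delta\le\gamma(x,n_k)+2\delta
\]
fails only in a controlled way. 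Precisely, \(e(x,n_k;q_y)\le \gamma(y,n_k)+\delta\) for \(|x-y|<r_y\), and upper semicontinuity is used only to compare with \(\limsup\) values. To avoid needing a lower bound on \(\gamma\) near \(y\), we instead use Lebesgue differentiation of the bounded Borel function \(\gamma(\cdot,n_k)\) on \(F_k\). Then a Vitali covering of \(F_k\cap\supp\zeta\) selects finitely many disjoint \((d-1)\)-cubes \(Q_i\subset F_k\) centered at Lebesgue points \(y_i\). These cubes cover all but \(\delta\) of the \(\Hh^{d-1}\)-measure, and on each
\[
\int_{Q_i}\zeta\, e(x,n_k;q_{y_i})\dd\Hh^{d-1}\le\int_{Q_i}\zeta\,\gamma(x,n_k)\dd\Hh^{d-1}+\delta\Hh^{d-1}(Q_i).
\]
Truncating \(\zeta\) and taking \(\delta\) small absorbs the errors.

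\textbf{Step 2 (local construction).} On the slab over \(Q_i\) we set \(u_\eps(x)=q_{y_i}(\operatorname{dist}_\pm(x,F_k)/\eps)\), with signed distance positive in \(E\). Since all \(L_q\) are finite, every \(u_\eps\) differs from \(\one_E\) only in an \(O(\eps)\)-tube around \(\partial E\). Near a Lebesgue point the energy density of this ansatz, after the change of variables \(x=x'+\eps s n_k\), becomes
\[
\int_{Q_i}\zeta(x')\Bigl[\int W(q)\dd s+\tfrac14\int\!\!\int J(x'+\eps s n_k-\theta\eps Z,Z)|q(s)-q(s-Z\cdot n_k)|^2\Bigr].
\]
This is the formula for a planar interface, where the displacement along \(Z\) has normal component exactly \(Z\cdot n_k\). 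Hypothesis \eqref{eq:l1-cont}, with \(\supp J\subset B_{R_J}\) and \(I_q\) bounded, lets us replace the kernel by \(J(x',Z)\), with error \(o(1)\) by dominated convergence. The cube integral therefore tends to \(\int_{Q_i}\zeta e(x',n_k;q_{y_i})\).

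\textbf{Step 3 (seams, main obstacle).} On the complement of \(\bigcup Q_i\), and within an \(\eps\)-neighbourhood of cube boundaries and polyhedral edges, we use the step profile \(\one_E\) or any interpolation between neighbouring profiles. This is the shrinking codimension-two replacement. Every bond there has energy at most \(\frac1{4\eps}\int J|u(x)-u(x-\eps Z)|^2\), and such a bond is nonzero only if an endpoint lies in the \(O(\eps)\) interface tube. The contribution is therefore \(\lesssim M_KR_J\,\eps^{-1}|T_\eps|\), where \(T_\eps\) is the set of sample points near seams. Seams with \((d-2)\)-dimensional edges of total measure \(C(\delta)\) give \(|T_\eps|\le C(\delta)\eps^2\), and the uncovered part of the faces gives \(\le C\delta\,\eps\) (Lemma~\ref{lem:cell-regularity} bound). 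Taking \(\limsup_\eps\) and then \(\delta\to0\) yields the claim for polyhedral \(E\). The delicate point is keeping the interaction between two different profiles across a seam \(O(\eps^2)\) in volume despite range \(R_J\eps\). This works because the mismatch region is an \(O(\eps)\)-neighbourhood of a codimension-two set.

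\textbf{Step 4 (density).} For general \(E\), choose polyhedral \(E_h\) with \(\one_{E_h}\to\one_E\) in \(L^1\) near \(\supp\zeta\) and \(|D\one_{E_h}|\to|D\one_E|\) strictly. The integrand \(\zeta\gamma\) is only upper semicontinuous, so Reshetnyak continuity does not apply directly. We therefore first replace \(\gamma\) by a continuous, even, \(1\)-homogeneously extended majorant \(\gamma_m\downarrow\gamma\), which is bounded by Lemma~\ref{lem:cell-regularity}. Applying Reshetnyak continuity with \(\gamma_m\), then monotone convergence in \(m\), gives
\[
\limsup_h\int\zeta\gamma(\cdot,\nu_{E_h})\dd\Hh^{d-1}\le\int_{\partial^*E}\zeta\gamma(\cdot,\nu_E)\dd\Hh^{d-1}.
\]
A diagonal argument in \((h,\eps)\) then concludes, using metrizability of \(L^1_{\rm loc}\) and the fact that the upper \(\Gamma\)-limit is lower semicontinuous.
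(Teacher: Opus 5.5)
Your proposal is correct and follows essentially the same route as the paper. The steps correspond one to one: Lebesgue-point and Vitali allocation of finitely many frozen near-optimal profiles on each face (Lemma~\ref{lem:allocation}); a codimension-two seam estimate of size \(\eps^{-1}\cdot O(\eps^2)\) (Lemma~\ref{lem:collar}); the planar change of variables with \(L^1\)-continuity of \(J\) on each patch (Proposition~\ref{prop:poly}); and strict polyhedral approximation with continuous majorants of \(\zeta\gamma\), Reshetnyak continuity and a diagonal argument (Proposition~\ref{prop:strict}, Lemma~\ref{lem:diagonal}).

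The only difference is cosmetic. You keep \(\one_E\) outside the cube slabs and use seam neighbourhoods of width \(C_\delta\eps\), whereas the paper uses a collar of width \(\ell_\eps=\eps^{2/3}\gg\eps\) so that its separation condition holds automatically. Both choices give an \(o(1)\) seam cost for fixed \(\delta\).
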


\section{Profile Allocation on a Face}

The next lemma is the point at which spatial heterogeneity is handled.  It
uses no regularity of a minimizing profile as a function of \(x\).

\begin{lemma}[Finite profile allocation]\label{lem:allocation}
Let \(F\) be a bounded relatively open polyhedral subset of an affine
hyperplane with unit normal \(n\), compactly contained in \(\Omega\).  For
every \(\eta>0\), there are finitely many pairwise disjoint relatively open
polyhedral sets \(B_0,B_1,\ldots,B_N\subset F\), covering \(F\) up to an
\(\Hh^{d-1}\)-null set, and profiles \(q_0,q_1,\ldots,q_N\in\mathcal A\)
such that \(q_0=\one_{(0,\infty)}\) and
\[
 \sum_{a=0}^N\int_{B_a}\zeta(x)e(x,n;q_a)\dd\Hh^{d-1}(x)
 \le
 \int_F\zeta(x)\gamma(x,n)\dd\Hh^{d-1}(x)+\eta.
 \label{eq:allocation}
\]
The union of the relative boundaries of the \(B_a\)'s is a finite
codimension-two polyhedral complex.
\end{lemma}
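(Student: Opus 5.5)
The plan is to approximate the upper semicontinuous integrand \(\zeta\gamma(\cdot,n)\) from above on \(F\) by a piecewise-constant choice of profiles. Only continuity of each fixed-profile cost \(x\mapsto e(x,n;q)\) from Lemma~\ref{lem:cell-regularity} is used; no selection of near-minimizers that depends regularly on \(x\) is needed. Set \(\mu:=\zeta\,\Hh^{d-1}\mathbin{\llcorner}F\), a finite measure because \(F\) is bounded and \(\zeta\) is bounded. Also set \(C:=\sup_{\overline F}\zeta\cdot R_JM_{\overline F}/4\). Since \(\overline F\Subset\Omega\), Lemma~\ref{lem:cell-regularity} bounds the step profile cost by \(\zeta(x)e(x,n;\one_{(0,\infty)})\le C\) on \(F\), and \(\gamma\) obeys the same bound. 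This is why \(q_0=\one_{(0,\infty)}\) serves as the default profile on a small leftover set.

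\emph{Step 1 (local near-minimizers).} Fix \(\delta>0\). For each \(x\in F\), choose \(q_x\in\mathcal A\) with \(e(x,n;q_x)\le\gamma(x,n)+\delta\). By continuity of \(y\mapsto e(y,n;q_x)\) and upper semicontinuity of \(\gamma\) (or simply Lebesgue differentiation of the \(L^1\) function \(\gamma(\cdot,n)\)), there is \(r_x>0\) such that \(e(y,n;q_x)\le\gamma(y,n)+2\delta\) holds for \(y\in F\cap Q\), for every closed cube \(Q\) of side \(<r_x\) centred at \(x\). In fact only the bound at points \(y\) near \(x\) is needed. Using continuity of the single fixed function \(y\mapsto e(y,n;q_x)\), together with \(\gamma(y,n)\ge \gamma(x,n)-\) (a lower oscillation that is small on most points by Lebesgue differentiation), we get the integrated inequality
\[
\int_{F\cap Q}\zeta\, e(\cdot,n;q_x)\dd\Hh^{d-1}\le\int_{F\cap Q}\zeta\gamma(\cdot,n)\dd\Hh^{d-1}+3\delta\,\mu(F\cap Q)
\]
at \(\mu\)-a.e. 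Lebesgue point \(x\) of \(\gamma(\cdot,n)\) for all small cubes.

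\emph{Step 2 (Vitali selection).} These cubes, intersected with the hyperplane, form a fine Vitali cover of \(\mu\)-almost all of \(F\). The Vitali covering theorem yields finitely many pairwise disjoint \((d-1)\)-cubes \(Q_1,\dots,Q_N\) inside \(F\) whose complement in \(F\) has \(\Hh^{d-1}\)-measure at most \(\delta\). Take \(B_a:=\operatorname{int}Q_a\) with profile \(q_{x_a}\). Take \(B_0\) to be the relative interior of \(F\setminus\bigcup_a Q_a\), which is a finite polyhedral set since \(F\) is polyhedral. The boundaries of the \(B_a\) are finitely many faces, hence a finite codimension-two polyhedral complex, and they are \(\Hh^{d-1}\)-null.

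\emph{Step 3 (summing).} Summing the estimates gives a total error of at most \(3\delta\mu(F)+C\delta\), since the leftover set \(B_0\) costs at most \(C\,\Hh^{d-1}(B_0)\). Choosing \(\delta\) small relative to \(\eta\) completes the proof.

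The main obstacle is Step 1. The error must be controlled at the level of integrals, because \(\gamma\) is only upper semicontinuous and may jump down near \(x\). Here the Lebesgue point argument, applied to \(\gamma(\cdot,n)\in L^\infty(F)\) with cubes centred at \(x\), is what makes the inequality hold for small cubes.
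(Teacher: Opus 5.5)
Your proposal follows the paper's proof. You pick a near-minimizing profile \(q_x\) at each Lebesgue point, compare the continuous function \(y\mapsto\zeta(y)e(y,n;q_x)\) at \(x\) with the averages of \(\zeta\gamma(\cdot,n)\) on small centred cubes, select finitely many disjoint cubes by Vitali, and put the step profile \(q_0\) on the polyhedral remainder. Two slips need repair.

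First, delete the pointwise claim \(e(y,n;q_x)\le\gamma(y,n)+2\delta\) in Step 1. It is false in general: upper semicontinuity of \(\gamma\) gives the opposite inequality, and Lebesgue differentiation gives no pointwise bound.

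Second, take the cube error as \(C'\delta\,\Hh^{d-1}(F\cap Q)\) with \(C'>\sup\zeta\), as the paper does, rather than \(3\delta\mu(F\cap Q)\). Your version has no slack at points where \(\zeta(x)=0\), so your Vitali cover only exhausts \(\mu\)-almost all of \(F\). The claim that the complement has \(\Hh^{d-1}\)-measure at most \(\delta\) is then unjustified. Alternatively, keep your error term and bound the leftover cost by a multiple of \(\mu(B_0)\) instead of \(\Hh^{d-1}(B_0)\).
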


\begin{proof}
Set \(f(x)=\zeta(x)\gamma(x,n)\) on \(F\).  Lemma
\ref{lem:cell-regularity} makes \(f\) bounded and measurable.  Let \(G\subset
F\) be the full-measure set of its Lebesgue points.  Fix a small
\(\delta>0\).  For each \(x\in G\), choose \(q_x\in\mathcal A\) with
\[
 e(x,n;q_x)\le\gamma(x,n)+\delta.
\]
The function \(g_x(y):=\zeta(y)e(y,n;q_x)\) is continuous at \(x\) by
Lemma~\ref{lem:cell-regularity}.  Hence, for all sufficiently small cubes
\(Q\Subset F\), centred at \(x\) and with axes fixed in the hyperplane,
\[
 \int_Q g_x\dd\Hh^{d-1}
 \le \int_Q f\dd\Hh^{d-1}+C\delta\Hh^{d-1}(Q),
 \label{eq:local-cube}
\]
where \(C\) depends only on \(\|\zeta\|_\infty\).  Indeed, after division by
\(\Hh^{d-1}(Q)\), the two sides converge respectively to \(g_x(x)\) and
\(f(x)\).

The cubes satisfying \eqref{eq:local-cube}, with arbitrarily small side
length, form a Vitali cover of \(G\).  Select a pairwise disjoint countable
subfamily \((Q_a)_{a\ge1}\) covering \(F\) up to a null set.  Let
\[
 C_0:=\sup_{y\in F}e(y,n;q_0)<\infty,
\]
which follows from the step-profile estimate in
Lemma~\ref{lem:cell-regularity}.  Choose \(N\) so large that
\[
 \Hh^{d-1}\left(F\setminus\bigcup_{a=1}^NQ_a\right)
 \le \frac{\delta}{1+C_0\|\zeta\|_\infty}.
\]
Set \(B_a=Q_a\) and \(q_a=q_{x_a}\) for \(1\le a\le N\), where \(x_a\) is
the centre selected with \(Q_a\), and decompose the polyhedral remainder
\(F\setminus\bigcup_{a=1}^N\overline{Q_a}\) into its finitely many relatively
  open polyhedral components, all carrying the same profile \(q_0\); their union
  is denoted by \(B_0\).  Summing \eqref{eq:local-cube} and using the uniform
  bound on the remainder gives, with
  \(R_N:=F\setminus\bigcup_{a=1}^N\overline{Q_a}\),
  \[
  \begin{aligned}
   \sum_{a=0}^N\int_{B_a}\zeta e(\cdot,n;q_a)\dd\Hh^{d-1}
   &\le \int_{\cup_{a=1}^NQ_a}f\dd\Hh^{d-1}
      +C\delta\Hh^{d-1}(F)\\
   &\quad+C_0\|\zeta\|_\infty\Hh^{d-1}(R_N)\\
   &\le \int_F f\dd\Hh^{d-1}+C_F\delta.
  \end{aligned}
  \]
Choose \(\delta\) so that \(C_F\delta\le\eta\).  A finite family of
polyhedral boundaries has a finite codimension-two union, proving the last
claim.
\end{proof}

\section{Polyhedral Recovery}

\begin{lemma}[Codimension-two replacement]\label{lem:collar}
Let \(S\) be a finite codimension-two polyhedral complex in a neighbourhood
of \(\supp\zeta\), and let
\[
 \eps\ll\ell_\eps\downarrow0,
 \qquad \frac{\ell_\eps^2}{\eps}\to0.
\]
If two functions \(u_\eps,v_\eps:\R^d\to[0,1]\) agree outside
\(N_{\ell_\eps}(S)\), then
\[
 |\mathcal K_\eps^{J,\theta}(u_\eps;\zeta)
  -\mathcal K_\eps^{J,\theta}(v_\eps;\zeta)|=o(1).
\]
\end{lemma}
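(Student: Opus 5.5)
The idea is to bound the energy difference by the energy that either function can carry near \(S\), and then to show that the volume of the relevant neighbourhood, divided by \(\eps\), tends to zero. Write \(A_\eps:=N_{\ell_\eps}(S)\). Since \(0\le u_\eps,v_\eps\le1\), each integrand in \eqref{eq:energy} is bounded pointwise. The potential term is at most \(\|W\|_\infty\), and the interaction term at \(x\) is at most \(\int J(x-\theta\eps Z,Z)\dd Z\).

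The potential term \(\frac1\eps\int\zeta W(u)\) differs between \(u_\eps\) and \(v_\eps\) only on \(A_\eps\cap\supp\zeta\). Its difference is therefore at most \(\frac1\eps\|\zeta\|_\infty\|W\|_\infty|A_\eps\cap\supp\zeta|\).

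The interaction integrand at \((x,Z)\) depends on \(u(x)\) and \(u(x-\eps Z)\). Since \(|Z|\le R_J\) by \eqref{eq:range}, it can differ only if \(x\in A_\eps\) or \(x-\eps Z\in A_\eps\), and in either case \(x\in N_{\ell_\eps+\eps R_J}(S)\). So the interaction difference is at most
\[
\frac1{4\eps}\|\zeta\|_\infty\int_{N_{\ell_\eps+\eps R_J}(S)\cap\supp\zeta}\int_{\R^d}J(x-\theta\eps Z,Z)\dd Z\dd x .
\]
The coefficient \(J(x-\theta\eps Z,Z)\) is not of the form \(J(y,\cdot)\) with \(y\) fixed, so the inner integral needs a separate bound. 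I would substitute \(y=x-\theta\eps Z\) for fixed \(Z\), which has unit Jacobian, and apply Fubini. This turns the double integral into \(\int_{\R^d}\int_{N'}J(y,Z)\dd y\dd Z\) over the slightly enlarged set \(N'=N_{\ell_\eps+2\eps R_J}(S)\cap K'\). Here \(K'\) is a compact \(O(\eps)\)-neighbourhood of \(\supp\zeta\) inside \(\Omega\). Fubini back then gives the bound \(M_{K'}|N'|\), with \(M_{K'}\) defined as in Section 2.

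The key geometric estimate is that \(S\) is a finite union of compact \((d-2)\)-dimensional polyhedra, intersected with a compact set. Hence its \(r\)-neighbourhood inside a compact set has volume at most \(C_S r^2\) for small \(r\); for a single \((d-2)\)-simplex this is the product of its area with the area of a disc of radius \(r\), plus end caps of smaller order. With \(r_\eps=\ell_\eps+2\eps R_J\le2\ell_\eps\) (because \(\eps\ll\ell_\eps\)), both differences are at most
\[
C\|\zeta\|_\infty\bigl(\|W\|_\infty+M_{K'}\bigr)\frac{\ell_\eps^2}{\eps}\to0 .
\]

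The only point needing care is the one above: using \(J(x-\theta\eps Z,Z)\) in place of \(J(x,Z)\) requires the change of variables to recover the \(L^1\) bound \(M_{K'}\). One must also check that the set is enlarged only by \(O(\eps)\) and stays in a fixed compact set, which holds for \(\eps\) small because \(\supp\zeta\Subset\Omega\). The assumption ``in a neighbourhood of \(\supp\zeta\)'' guarantees that the local polyhedral structure of \(S\) controls the tube volume on \(K'\).
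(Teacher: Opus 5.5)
Your proposal is correct and is essentially the paper's own argument: change variables to the coefficient point \(m=x-\theta\eps Z\), observe that it lies in an \(O(\ell_\eps+R_J\eps)\)-tube around \(S\) inside a fixed compact set, and combine the \(C_S r^2\) tube-volume bound with the uniform \(L^1\) bound \(M_K\) and the boundedness of \(W\). The only difference is that the paper places \(m\) directly in \(N_{\ell_\eps+R_J\eps}(S)\), since both endpoints lie within \(R_J\eps\) of \(m\), whereas your two-step localization gives radius \(\ell_\eps+2R_J\eps\); this changes nothing.
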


\begin{proof}
Choose a compact \(K\Subset\Omega\) containing a fixed neighbourhood of
\(\supp\zeta\).  In the interaction difference, make for each \(Z\) the
change of variables \(m=x-\theta\eps Z\).  If either endpoint
\(x=m+\theta\eps Z\) or
\(x-\eps Z=m-(1-\theta)\eps Z\) lies in
\(N_{\ell_\eps}(S)\), then
\[
 m\in N_{\ell_\eps+R_J\eps}(S).
\]
The additional condition \(x\in\supp\zeta\) keeps \(m\) in the fixed compact
set \(K\).  The volume of the displayed tubular set inside \(K\) is at most
\(C_S(\ell_\eps+R_J\eps)^2\).  Since \(W\) is bounded on \([0,1]\), the
potential difference is bounded by \(C\ell_\eps^2/\eps\).  For the interaction
part use \(|(a-b)^2-(c-d)^2|\le2\) for \(a,b,c,d\in[0,1]\) and
\(\sup_K\|J(x,\cdot)\|_1=M_K\).  Its absolute difference is at most
\[
 \frac C\eps(\ell_\eps+R_J\eps)^2=o(1).
\]
\end{proof}

\begin{proposition}[Heterogeneous polyhedral recovery]\label{prop:poly}
If \(P\) is polyhedral in a neighbourhood of \(\supp\zeta\), then there are
\(u_\eps:\R^d\to[0,1]\) with
\(u_\eps\to\one_P\) in \(L^1_{\rm loc}(\Omega)\) and
\[
 \limsup_{\eps\downarrow0}\mathcal K_\eps^{J,\theta}(u_\eps;\zeta)
 \le
 \int_{\partial P}\zeta(x)\gamma(x,\nu_P(x))
 \dd\Hh^{d-1}(x).
\]
\end{proposition}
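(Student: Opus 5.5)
The construction is face by face. Near \(\supp\zeta\), \(\partial P\) is a finite union of relatively open faces \(F_1,\dots,F_M\) with constant normals \(n_k\), together with a codimension-two skeleton \(S_0\) of edges. Fix \(\eta>0\). Lemma~\ref{lem:allocation}, applied to each face \(F_k\cap N\) (with \(N\) a neighbourhood of \(\supp\zeta\) where \(P\) is polyhedral), gives cells \(B^k_a\) and profiles \(q^k_a\in\mathcal A\). Their total fixed-profile cost exceeds \(\int_{F_k}\zeta\gamma(\cdot,n_k)\dd\Hh^{d-1}\) by at most \(\eta/M\). Let \(S\) be the union of \(S_0\), the cell boundaries, and the relative boundaries of the faces inside a neighbourhood of \(\supp\zeta\); it is a finite codimension-two polyhedral complex. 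Let \(L\) bound all the \(L_{q^k_a}\).

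Next comes the competitor. Choose \(\ell_\eps\) with \(\eps\ll\ell_\eps\) and \(\ell_\eps^2/\eps\to0\), for instance \(\ell_\eps=\eps^{3/4}\). Define \(v_\eps=\one_P\) away from an \(L\eps\)-neighbourhood of \(\partial P\). On the slab over \(B^k_a\), at distances at least \(\ell_\eps\) from \(S\), set \(v_\eps(x)=q^k_a((x-y)\cdot n_k/\eps)\), where \(y\) is a point of the face plane and the sign convention is from the zero phase to the one phase. Near \(S\), define \(v_\eps\) arbitrarily in \([0,1]\), say \(v_\eps=\one_P\). Then \(v_\eps\to\one_P\) in \(L^1_{\rm loc}\), since the modified set has volume \(O(\eps+\ell_\eps^2)\) locally. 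Lemma~\ref{lem:collar} shows that energy on \(N_{\ell_\eps}(S)\) is negligible. It therefore suffices to estimate the contributions of the points \(x\) whose whole interaction segment \(\{x-t\eps Z: t\in[0,1]\}\), \(|Z|\le R_J\), stays in the region where \(v_\eps\) is a single one-dimensional profile or constant. This holds whenever \(x\notin N_{\ell_\eps-R_J\eps}(S)\), as long as distinct faces are separated by more than \(2L\eps+R_J\eps\) outside \(N_{\ell_\eps/2}(S)\). Near edges this separation is of order \(\ell_\eps\gg\eps\), which is what makes it hold.

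On such a slab, write \(x=y+\eps s\,n_k\) with \(y\in B^k_a\). Fubini and a rescaling give a contribution
\[
 \int_{B^k_a}\Bigl[\int_\R\zeta(y+\eps s n_k)W(q(s))\dd s
 +\tfrac14\int_\R\int J(y+\eps s n_k-\theta\eps Z,Z)\,|q(s)-q(s-Z\cdot n_k)|^2\dd Z\dd s\Bigr]\dd\Hh^{d-1}(y).
\]
Here \(q=q^k_a\), and the integrand in \(s\) vanishes unless \(|s|\le L+R_J\). Uniform continuity of \(\zeta\) and the locally uniform \(L^1\)-continuity \eqref{eq:l1-cont} make the spatial shifts of size \(O(\eps)\) cost only \(o(1)\); the factor \(\zeta\) is frozen using boundedness of \(I_q\) on \([-R_J,R_J]\). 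The limit is therefore \(\int_{B^k_a}\zeta(y)e(y,n_k;q^k_a)\dd\Hh^{d-1}\). Points off every slab lie in \(P\) or its complement with the whole segment inside, so they contribute zero. Summing and applying Lemma~\ref{lem:allocation} gives \(\limsup\le\int_{\partial P}\zeta\gamma(\cdot,\nu_P)\dd\Hh^{d-1}+\eta\). A diagonal argument in \(\eta\to0\) produces a single sequence \(u_\eps\).

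The main obstacle is the geometric bookkeeping near \(S\). One must check that outside \(N_{\ell_\eps}(S)\) the slabs of thickness \(O(\eps)\) over different cells and faces do not overlap. One must also check that the bonds starting there never reach a different profile region. I would handle this with the elementary fact that two distinct polyhedral pieces meeting along \(S\) separate at a linear rate with positive angle constant \(c_S\), so their distance at points outside \(N_\rho(S)\) is at least \(c_S\rho\). Applying this with \(\rho=\ell_\eps\gg\eps\) settles both checks. Finally, the orientation convention for \(\nu_P\) must agree with the evenness of \(\gamma\) in \(n\), which is supplied by Lemma~\ref{lem:cell-regularity}.
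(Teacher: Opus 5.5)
Your proposal is correct and follows the paper's proof essentially step for step: face-wise allocation by Lemma~\ref{lem:allocation}, a codimension-two collar with $\ell_\eps^2/\eps\to0$ discarded by the volume bound behind Lemma~\ref{lem:collar}, single-profile patch tubes kept apart by linear polyhedral separation (the paper's $\kappa_\eta$), patchwise passage to the limit, and a diagonal in $\eta$. Your integral over all of $B^k_a$ should be read as an upper bound by nonnegativity, not an equality, since near $\partial_F B^k_a$ the competitor is $\one_P$; this replaces the paper's deletion of face strips.

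The one justification to repair is the coefficient $J(y+\eps s n_k-\theta\eps Z,Z)$. Its shift depends on $Z$, so \eqref{eq:l1-cont} does not apply to it directly. First substitute $m=x-\theta\eps Z$, as the paper does. This leaves only the $Z$-independent normal shift inside $J$, realigns the profile via $s\mapsto s+\theta Z\cdot n_k$, and moves the tangential part of $\theta\eps Z$ onto the base, where it costs only an $O(\eps)$ strip along $\partial_F B^k_a$.
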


\begin{proof}
Fix \(\eta>0\).  Only finitely many faces of \(P\) meet \(\supp\zeta\).
If there are none, finite range gives
\(\mathcal K_\eps^{J,\theta}(\one_P;\zeta)=0\) for all sufficiently small \(\eps\),
and there is nothing to prove.
Truncate them inside a compact polyhedral neighbourhood on whose outer
boundary \(\zeta=0\).  If \(N_F\) faces remain, apply
Lemma~\ref{lem:allocation} to each face \(F_i\), with phase normal \(n_i\),
using error \(\eta/N_F\).  This produces finitely many face
patches \(B_a\), profiles \(q_a\), and total allocated cost at most
\[
 \int_{\partial P}\zeta(x)\gamma(x,\nu_P(x))\dd\Hh^{d-1}(x)+\eta.
 \label{eq:allocated-total}
\]
Let \(S_\eta\) be the union of the original face junctions and all relative
patch boundaries.  It is a finite codimension-two polyhedral complex.

Let \(L\) exceed all transition radii of the finitely many \(q_a\)'s.  Finite
polyhedral geometry gives \(\kappa_\eta>0\) such that, away from
\(N_r(S_\eta)\), every point within distance \(\kappa_\eta r\) of
\(\partial P\) lies in a unique patch tube.  Set
\(\ell_\eps=\eps^{2/3}\).  For all small \(\eps\),
\[
 \eps(2L+R_J)\le \frac{\kappa_\eta}{4}\ell_\eps.
 \label{eq:separation}
\]
Outside \(N_{\ell_\eps}(S_\eta)\), define in the tube of patch \(B_a\)
\[
 v_\eps(x)=q_a\left(\frac{d_a(x)}\eps\right),
\]
where \(d_a\) is signed distance to the supporting face, positive in \(P\).
Set \(v_\eps=\one_P\) away from all patch tubes.  Inside the collar choose any
measurable interpolation in \([0,1]\), for instance \(\one_P\), and call the
result \(u_{\eps,\eta}\).  The transition tubes have thickness \(O(\eps)\),
while the collar has volume \(O_\eta(\ell_\eps^2)\); hence
\(u_{\eps,\eta}\to\one_P\) locally in \(L^1\).

  Put \(C_\eps=N_{\ell_\eps}(S_\eta)\).  The proof of
  Lemma~\ref{lem:collar} gives, uniformly over all \([0,1]\)-valued choices in
  \(C_\eps\),
  \[
   \frac{C_\eta}{\eps}|N_{\ell_\eps+R_J\eps}(S_\eta)\cap K|
   \le \frac{C_\eta}{\eps}(\ell_\eps+R_J\eps)^2=o(1).
  \]
  Hence all potential terms in \(C_\eps\) and all interactions having at least
  one endpoint there may be removed at an \(o(1)\) cost.  For every patch set
  \[
   B_{a,\eps}:=
   \{y\in B_a:\operatorname{dist}(y,\partial_F B_a)>2\ell_\eps\}.
  \]
  Here \(\partial_F\) denotes relative boundary in the supporting face.  One
  has \(\Hh^{d-1}(B_a\setminus B_{a,\eps})=O_\eta(\ell_\eps)\).  Since the
  potential profile has compact support in \(s\), and the interaction profile
  is supported in \(|s|\le L+R_J\) with \(J\) uniformly bounded in
  \(L^1_Z\), the total cost over these deleted face strips is
  \(O_\eta(\ell_\eps)=o(1)\).
  After deleting both the collar and these strips, condition
  \eqref{eq:separation} ensures that every remaining non-zero interaction has
  its two endpoint projections in one and the same safe patch.

  On the resulting safe patch, for the potential term write
  \(x=y+\eps s n_i\); it
converges to
\[
 \int_{B_a}\zeta(y)\int_\R W(q_a(s))\dd s\dd\Hh^{d-1}(y).
\]
For the interaction term first set \(m=x-\theta\eps Z\), and then write
\(m=y+\eps s n_i\).  Up to the explicitly estimated collar and face-strip
terms, the integral is
\begin{align*}
 \frac14\int_{B_a}\int_\R\int_{\R^d}
 &\zeta(y+\eps s n_i+\theta\eps Z)J(y+\eps s n_i,Z)\\
 &\times|q_a(s+\theta Z\cdot n_i)
       -q_a(s-(1-\theta)Z\cdot n_i)|^2
 \dd Z\dd s\dd\Hh^{d-1}(y),
\end{align*}
where tangential shifts of the patch base are confined to the same removed
strip.  The difference factor vanishes unless \(|s|\le L+R_J\).  On the relevant compact set,
\(\|J(x,\cdot)\|_1\) is uniformly bounded, and
\[
 \sup_{|h|\le (L+R_J)\eps}
 \|J(y+h,\cdot)-J(y,\cdot)\|_1\to0
\]
uniformly for \(y\) in the compact patch and \(|s|\le L+R_J\), by
\eqref{eq:l1-cont}.  Dominated convergence therefore gives the limit
\[
 \frac14\int_{B_a}\zeta(y)\int_\R\int_{\R^d}J(y,Z)
 |q_a(s)-q_a(s-Z\cdot n_i)|^2
 \dd Z\dd s\dd\Hh^{d-1}(y).
\]
Here, for each fixed \(Z\), the last identification uses the translation
\(s\mapsto s+\theta Z\cdot n_i\).
Summing over the finite allocation and using \eqref{eq:allocated-total} yields
\[
 \limsup_{\eps\downarrow0}\mathcal K_\eps^{J,\theta}(u_{\eps,\eta};\zeta)
 \le \int_{\partial P}\zeta\gamma(\cdot,\nu_P)\dd\Hh^{d-1}+\eta.
\]
Write the surface integral on the right as \(G_P\), choose
\(\eta_k\downarrow0\), and let \((K_m)\) exhaust \(\Omega\) by compact sets.
For each \(k\), choose \(\eps_k\downarrow0\) so that, whenever
\(0<\eps\le\eps_k\),
\[
 \mathcal K_\eps^{J,\theta}(u_{\eps,\eta_k};\zeta)
      \le G_P+2\eta_k,
 \qquad
 \sum_{m=1}^k\|u_{\eps,\eta_k}-\one_P\|_{L^1(K_m)}\le k^{-1}.
\]
For \(\eps\in(\eps_{k+1},\eps_k]\), set
\(u_\eps=u_{\eps,\eta_k}\).  Then \(u_\eps\to\one_P\) locally in
\(L^1\) and
\(\limsup_{\eps\downarrow0}\mathcal K_\eps^{J,\theta}(u_\eps;\zeta)\le G_P\).
\end{proof}

\section{Finite-Perimeter Interfaces}

Define the non-negative upper semicontinuous, positively one-homogeneous
integrand with linear growth
\[
 \Phi(x,p)=
 \begin{cases}
  \zeta(x)|p|\gamma(x,p/|p|),&p\ne0,\\
  0,&p=0.
 \end{cases}
\]

\begin{proposition}[Localized polyhedral approximation]\label{prop:strict}
Let \(E\) have finite perimeter in a neighbourhood of \(\supp\zeta\).  There
are measurable sets \(P_j\), polyhedral in a fixed neighbourhood of
\(\supp\zeta\), such that \(\one_{P_j}\to\one_E\) in
\(L^1_{\rm loc}(\Omega)\) and
\[
 \limsup_{j\to\infty}
 \int_{\partial P_j}\zeta(x)\gamma(x,\nu_{P_j}(x))\dd\Hh^{d-1}(x)
 \le
 \int_{\partial^*E}\zeta(x)\gamma(x,\nu_E(x))\dd\Hh^{d-1}(x).
\]
\end{proposition}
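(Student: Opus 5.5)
The plan is to combine a strict (perimeter-convergent) polyhedral approximation of \(E\) near \(\supp\zeta\) with a Reshetnyak-type argument. Because the integrand \(\Phi\) is only upper semicontinuous, Reshetnyak continuity will be applied to continuous majorants of \(\Phi\), followed by a monotone limit. Throughout, the surface integral is written as
\[
\int_{\partial^*E}\zeta\gamma(\cdot,\nu_E)\dd\Hh^{d-1}=\int\Phi\Bigl(x,\tfrac{D\one_E}{|D\one_E|}\Bigr)\dd|D\one_E|,
\]
with \(\nu_E\) pointing from the zero phase to the one phase, and similarly for polyhedral sets.

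\emph{Localization and approximation.} Let \(U\) be the open neighbourhood of \(\supp\zeta\) in which \(E\) has finite perimeter. Choose an open set \(V\) with smooth boundary such that \(\supp\zeta\subset V\Subset U\Subset\Omega\) and \(|D\one_E|(\partial V)=0\); this holds for all but countably many members of a family of nested smooth neighbourhoods. By the standard approximation theorem for sets of finite perimeter (mollify \(\one_E\), take a superlevel set by coarea, then approximate that smooth set by polyhedra), there are sets \(Q_j\), polyhedral near \(\overline V\), with \(\one_{Q_j}\to\one_E\) in \(L^1(V)\) and \(|D\one_{Q_j}|(V)\to|D\one_E|(V)\). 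Fix \(V'\) with \(\supp\zeta\subset V'\Subset V\) and define \(P_j:=(Q_j\cap V')\cup(E\setminus V')\). Then \(P_j\) is polyhedral in the fixed neighbourhood \(V'\) of \(\supp\zeta\), and \(\one_{P_j}\to\one_E\) in \(L^1_{\rm loc}(\Omega)\). Moreover \(D\one_{P_j}\overset{*}{\rightharpoonup}D\one_E\) in \(V\) and \(|D\one_{P_j}|(V)\to|D\one_E|(V)\), i.e.\ strict convergence on \(V\). If \(|D\one_E|(\partial V')=0\), this convergence can be taken on \(V'\) as well; only the behaviour on \(\supp\zeta\) matters.

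\emph{Continuous majorants.} Fix a cutoff \(\chi\in C_c(V')\) with \(0\le\chi\le1\) and \(\chi=1\) on \(\supp\zeta\). Set \(f(x,n):=\zeta(x)\gamma(x,n)\) on \(\overline{V'}\times\Sph^{d-1}\). By Lemma~\ref{lem:cell-regularity}, \(f\) is upper semicontinuous and bounded there. Define the Lipschitz sup-convolutions
\[
f_k(x,n):=\sup_{(y,m)}\bigl[f(y,m)-k(|x-y|+|n-m|)\bigr].
\]
These satisfy \(f_k\ge f\), and by upper semicontinuity and boundedness of \(f\), \(f_k\downarrow f\) pointwise. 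Put \(\Phi_k(x,p):=\chi(x)|p|f_k(x,p/|p|)\). Each \(\Phi_k\) is continuous, positively one-homogeneous, compactly supported in \(x\), and \(\Phi_k\ge\Phi\). Reshetnyak continuity under strict convergence gives, for each fixed \(k\),
\[
\limsup_j\int_{\partial P_j}\zeta\gamma(\cdot,\nu_{P_j})\dd\Hh^{d-1}
\le\lim_j\int\Phi_k(x,D\one_{P_j})=\int\Phi_k(x,D\one_E).
\]
Letting \(k\to\infty\), monotone (dominated) convergence against the finite measure \(|D\one_E|\) on \(\supp\chi\) replaces \(\Phi_k\) by \(\Phi\). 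Since \(\chi=1\) on \(\supp\zeta\), the right-hand side becomes exactly the claimed bound.

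The main obstacle is that \(\gamma\) is only upper semicontinuous, so Reshetnyak continuity cannot be applied directly. The sup-convolution majorants resolve this, but only because the inequality runs in the upper direction. The second point needing care is the localization: one must check that gluing \(Q_j\) to \(E\) across \(\partial V'\) does not destroy strict convergence on a neighbourhood of \(\supp\chi\). This is handled by choosing \(V'\) with \(|D\one_E|(\partial V')=0\) and noting that \(\chi\) vanishes near \(\partial V'\), so the gluing seam never carries weight.
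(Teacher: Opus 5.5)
Your proposal is correct and follows essentially the paper's route: strict polyhedral approximation on a neighbourhood of \(\supp\zeta\) chosen with \(|D\one_E|\)-null boundary, splicing with \(E\) outside it, continuous majorants \(\phi_k\downarrow\Phi\), Reshetnyak continuity for each fixed \(k\), and dominated convergence in \(k\); your Lipschitz sup-convolutions simply make explicit the choice of \(\phi_k\) that the paper leaves unspecified. The assertion that the spliced \(P_j\) converge strictly on all of \(V\) is not justified as written, since it would need a trace argument at the seam \(\partial V'\), but it is never used, because \(\Phi_k\) is supported in \(V'\), where \(P_j=Q_j\).
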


\begin{proof}
Choose open sets
\(\supp\zeta\Subset U\Subset V\Subset\Omega\) such that \(E\) has finite
perimeter in \(V\) and \(|D\one_E|(\partial U)=0\).  Strict local
polyhedral approximation gives sets \(R_j\), polyhedral in \(U\), with
\[
 \one_{R_j}\to\one_E\text{ in }L^1(U),\qquad
 D\one_{R_j}\stackrel{*}{\rightharpoonup}D\one_E,\qquad
 |D\one_{R_j}|(U)\to|D\one_E|(U)
\]
\citep{ambrosio1993polyhedral,ambrosio2000functions,maggi2012sets}.
Lemma~\ref{lem:cell-regularity} shows that the restriction of \(\Phi\) to
\(\overline U\times\Sph^{d-1}\) is bounded and upper semicontinuous.  Choose
bounded continuous functions \(\phi_k\) on this compact space such that
\(\phi_k\downarrow\Phi\) there, and define
\[
 \Phi_k(x,p):=
 \begin{cases}
  |p|\phi_k(x,p/|p|),&p\ne0,\\
  0,&p=0.
 \end{cases}
\]
Each \(\Phi_k\) is continuous on \(\overline U\times\R^d\), positively
one-homogeneous, and has linear growth.  Reshetnyak continuity for strict convergence of
vector measures gives, for every fixed \(k\),
\[
 \lim_{j\to\infty}
 \int_U\Phi_k\left(x,\frac{\dd D\one_{R_j}}{\dd|D\one_{R_j}|}\right)
 \dd|D\one_{R_j}|
 =
 \int_U\Phi_k\left(x,\frac{\dd D\one_E}{\dd|D\one_E|}\right)
 \dd|D\one_E|.
\]
Since \(\Phi\le\Phi_k\), first take the limsup in \(j\), then let
\(k\to\infty\).  Dominated convergence on the finite measure
\(|D\one_E|\) yields
\[
 \limsup_{j\to\infty}
 \int_U\Phi\left(x,\frac{\dd D\one_{R_j}}{\dd|D\one_{R_j}|}\right)
 \dd|D\one_{R_j}|
 \le
 \int_U\Phi\left(x,\frac{\dd D\one_E}{\dd|D\one_E|}\right)
 \dd|D\one_E|.
\]
Set \(P_j=R_j\) in \(U\) and \(P_j=E\) outside \(U\).  Since
\(\supp\zeta\Subset U\), the splice has positive distance from
\(\supp\zeta\).  It has no weighted surface cost, and finite interaction range
ensures that the later diffuse recovery cannot sample it for all sufficiently
small \(\eps\).  The asserted local convergence and inequality follow.
\end{proof}

\begin{lemma}[Diagonal selection]\label{lem:diagonal}
Suppose \(P_j\to E\) locally in measure and
\[
 \limsup_{j\to\infty}G_j\le G,
 \qquad
 G_j:=\int_{\partial P_j}\zeta\gamma(\cdot,\nu_{P_j})\dd\Hh^{d-1}.
\]
If for every fixed \(j\) there are \(u_\eps^j\to\one_{P_j}\) locally in
\(L^1\) with
\(\limsup_{\eps\downarrow0}\mathcal K_\eps^{J,\theta}(u_\eps^j;\zeta)\le G_j\), then
there is one family \(u_\eps\to\one_E\) locally in \(L^1\) satisfying
\[
 \limsup_{\eps\downarrow0}\mathcal K_\eps^{J,\theta}(u_\eps;\zeta)\le G.
\]
\end{lemma}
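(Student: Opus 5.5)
The argument is a standard diagonal extraction, carried out exactly as at the end of the proof of Proposition~\ref{prop:poly}. First I would fix a compact exhaustion \((K_m)_{m\ge1}\) of \(\Omega\) and encode local \(L^1\) convergence through the quantities \(\sum_{m=1}^k\|\cdot\|_{L^1(K_m)}\). Since all functions take values in \([0,1]\), local convergence in measure of \(P_j\to E\) is the same as \(\one_{P_j}\to\one_E\) in \(L^1_{\rm loc}(\Omega)\). Passing to a subsequence of indices, I may assume \(G_{j_k}\le G+k^{-1}\) and \(\sum_{m=1}^k\|\one_{P_{j_k}}-\one_E\|_{L^1(K_m)}\le k^{-1}\). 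If \(G=+\infty\) there is nothing to prove, so assume \(G<\infty\). Then, for \(k\) large, all \(G_{j_k}\) are finite.

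Next, for each \(k\) the hypothesis on \(u^{j_k}_\eps\) yields \(\eps_k>0\) such that, for every \(0<\eps\le\eps_k\),
\[
 \mathcal K_\eps^{J,\theta}(u^{j_k}_\eps;\zeta)\le G_{j_k}+k^{-1}\le G+2k^{-1},
 \qquad
 \sum_{m=1}^k\|u^{j_k}_\eps-\one_{P_{j_k}}\|_{L^1(K_m)}\le k^{-1}.
\]
Replacing \(\eps_k\) by \(\min\{\eps_1,\ldots,\eps_k,k^{-1}\}\), I may take \(\eps_k\) strictly decreasing to zero. Then I define \(u_\eps:=u^{j_k}_\eps\) for \(\eps\in(\eps_{k+1},\eps_k]\). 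The values of \(u_\eps\) for \(\eps>\eps_1\) are irrelevant; there one can take \(u_\eps=\one_E\).

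To verify the two claims, let \(\eps\downarrow0\), so that the index \(k=k(\eps)\) tends to \(\infty\). For any fixed \(m\), as soon as \(k\ge m\) the triangle inequality gives \(\|u_\eps-\one_E\|_{L^1(K_m)}\le 2k^{-1}\to0\). This yields local \(L^1\) convergence. The energy bound \(\mathcal K_\eps^{J,\theta}(u_\eps;\zeta)\le G+2k(\eps)^{-1}\) gives \(\limsup_{\eps\downarrow0}\mathcal K_\eps^{J,\theta}(u_\eps;\zeta)\le G\).

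There is no genuine obstacle in this argument. The only points requiring care are bookkeeping: making \(\eps_k\) monotone so that the intervals \((\eps_{k+1},\eps_k]\) partition \((0,\eps_1]\), and using the fact that each threshold \(\eps_k\) controls all smaller \(\eps\), not only a single sequence. No structure of the functional \(\mathcal K_\eps^{J,\theta}\) is used beyond the hypotheses as stated.
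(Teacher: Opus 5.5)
Your proposal is correct and follows the paper's proof essentially step for step: a subsequence with \(G_j\le G+o(1)\) and \(\sum_{m\le j}\|\one_{P_j}-\one_E\|_{L^1(K_m)}\le j^{-1}\), thresholds \(\eps_j\downarrow0\) controlling both the energy and the local \(L^1\) distance for all smaller \(\eps\), and the piecewise definition on \((\eps_{j+1},\eps_j]\). Two small bookkeeping points: when \(G=+\infty\) you still have to exhibit a family, and \(u_\eps=\one_E\) works as in the paper; also \(\min\{\eps_1,\ldots,\eps_k,k^{-1}\}\) is only nonincreasing rather than strictly decreasing, which is harmless because empty intervals do no damage.
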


\begin{proof}
If \(G=+\infty\), take \(u_\eps=\one_E\).  Assume \(G<\infty\).  Let
\((K_m)\) be a compact exhaustion of \(\Omega\).  Pass to a subsequence, not
relabelled, for which every \(G_j\) is finite, \(G_j\le G+o(1)\), and
\[
 \sum_{m=1}^j
 \|\one_{P_j}-\one_E\|_{L^1(K_m)}\le j^{-1}.
\]
The last condition follows by a further diagonal extraction from the local
convergence \(P_j\to E\).  Choose \(\eps_j\downarrow0\)
so that, whenever \(0<\eps<\eps_j\),
\[
 \sum_{m=1}^j\|u_\eps^j-\one_{P_j}\|_{L^1(K_m)}\le j^{-1},
 \qquad
 \mathcal K_\eps^{J,\theta}(u_\eps^j;\zeta)\le G_j+j^{-1}.
\]
For \(\eps\in(\eps_{j+1},\eps_j]\), set \(u_\eps=u_\eps^j\).  The two
assertions follow from the displayed approximation rate, the triangle
inequality, and the bound on \(G_j\).
\end{proof}

\begin{proof}[Proof of Theorem~\ref{thm:main}]
Choose \(P_j\) from Proposition~\ref{prop:strict}.  Proposition
\ref{prop:poly} supplies a recovery family for each fixed \(P_j\), and
Lemma~\ref{lem:diagonal} gives the required single family.
\end{proof}

\section{A Matching Lower Bound under Pointwise Relative Lower Continuity}

We impose stronger assumptions under which the preceding upper bound has
a matching interior lower bound.  First, assume local uniform coercivity: for
every compact \(K\subset\R^d\), there are \(0<r_K<R_J\) and \(j_K>0\) such
that, outside a null set \(N_K\subset B_{r_K}\) independent of \(x\),
\begin{equation}
 J(x,Z)\ge j_K
 \quad\text{for every }x\in K\text{ and every }Z\in B_{r_K}\setminus N_K.
 \label{eq:uniform-core}
\end{equation}
Second, require only pointwise, one-sided continuity in relative error.  For
every base point \(y\in\R^d\), there are \(\rho_y>0\), a non-decreasing modulus
\(\omega_y:[0,\rho_y]\to[0,1)\), with
\(\omega_y(r)\downarrow0\) as \(r\downarrow0\), and a null set
\(N_y'\subset B_{R_J}\), such that
\begin{equation}
 J(x,Z)\ge
 \bigl(1-\omega_y(|x-y|)\bigr)J(y,Z)
 \label{eq:relative-continuity}
\end{equation}
for every \(x\in B_{\rho_y}(y)\) and every
\(Z\in B_{R_J}\setminus N_y'\).  The exceptional set may depend on the frozen
base point \(y\), but not on \(x\); this permits substitution of the
\(Z\)-dependent sampling point \(x-\theta\eps Z\).  No comparison between two
arbitrary nearby base points is assumed.
Finally, in this section assume that the two wells are the only zeroes:
\begin{equation}
 W^{-1}(0)=\{0,1\}.
 \label{eq:true-wells}
\end{equation}

For an open set \(A\Subset\Omega\), write
\begin{align}
 \mathcal K_{\eps,A}^{J,\theta}(u)
 &:=\frac1\eps\int_A W(u(x))\dd x \notag\\
 &\quad+\frac1{4\eps}\int_A\int_{\R^d}
 J(x-\theta\eps Z,Z)|u(x)-u(x-\eps Z)|^2\dd Z\dd x.
 \label{eq:local-energy}
\end{align}
\begin{lemma}[Compact-tail approximation]\label{lem:compact-tail}
Let \(K\in L^1(\R^d)\) be non-negative, even, and supported in \(B_R\), and
let \(n\in\Sph^{d-1}\).  Suppose \(q:\R\to[0,1]\) has finite profile energy
\[
 E_{K,n}(q):=\int_\R W(q(s))\dd s
 +\frac14\int_\R\int_{\R^d}K(Z)
 |q(s)-q(s-Z\cdot n)|^2\dd Z\dd s
\]
and satisfies \(q(s)\to0\) as \(s\to-\infty\) and
\(q(s)\to1\) as \(s\to+\infty\).  For \(L>R\), set
\[
 q_L(s):=
 \begin{cases}
  0,&s\le-L,\\
  q(s),&|s|<L,\\
  1,&s\ge L.
 \end{cases}
\]
Then \(q_L\in\mathcal A\) and \(E_{K,n}(q_L)\to E_{K,n}(q)\).
\end{lemma}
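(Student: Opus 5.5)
The plan is to split \(E_{K,n}(q_L)\) into its potential part and its interaction part and show that each converges to the corresponding part of \(E_{K,n}(q)\). The membership \(q_L\in\mathcal A\) is immediate, with \(L_{q_L}=L\); measurability and the range \([0,1]\) are inherited from \(q\).

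For the potential part, the modification only affects \(|s|\ge L\), where \(W(q_L)=W(0)\) or \(W(1)\), and both vanish. Hence
\[
 \int_\R W(q_L)\dd s=\int_{-L}^{L}W(q)\dd s\to\int_\R W(q)\dd s
\]
by monotone convergence, since \(\int W(q)<\infty\).

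For the interaction part, fix \(Z\in B_R\) and write \(a=Z\cdot n\), so \(|a|<R<L\). Decompose \(\R\) into three regions.

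1. \emph{Interior} \(\{|s|<L,\ |s-a|<L\}\). Here \(q_L=q\) at both endpoints, so the integrand agrees with the original one.
2. \emph{Outer} \(\{s\ge L+|a|\}\) and \(\{s\le-L-|a|\}\). Both endpoints lie on the same side, so the \(q_L\)-integrand vanishes. The original integrand restricted to these sets is dominated by the full integrand, so its contribution tends to \(0\) as \(L\to\infty\) by dominated convergence in \((s,Z)\) on the finite measure \(K(Z)\dd Z\,\dd s\).
3. \emph{Mixed strips}, of length at most \(2|a|\), near \(s=\pm L\). Exactly one endpoint has been replaced by \(0\) or \(1\).

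The mixed strips are the only real obstacle. Near \(s=L\), the \(q_L\)-integrand is \(|1-q(t)|^2\) for some \(t\in(L-R,L)\), or \(|q(t)-1|^2\) for some \(t\in(L-R,L+R)\). Its integral over the strip is therefore bounded by
\[
 2|a|\sup_{t\ge L-R}|1-q(t)|^2,
\]
which tends to \(0\) because \(q(t)\to1\) as \(t\to+\infty\). The strip near \(s=-L\) is handled symmetrically using \(q\to0\) as \(t\to-\infty\). Integrating against \(K(Z)\dd Z\) gives a contribution of at most
\[
 2R\,\|K\|_{L^1}\,\Bigl(\sup_{t\ge L-R}|1-q(t)|^2+\sup_{t\le-L+R}|q(t)|^2\Bigr)\to0.
\]
The same dominated-convergence bound controls the original integrand on these strips, since the strips escape to infinity.

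Combining the three regions, the interaction part of \(E_{K,n}(q_L)\) equals the interaction part of \(E_{K,n}(q)\) up to an \(o(1)\) error. Together with the potential part this gives \(E_{K,n}(q_L)\to E_{K,n}(q)\). Evenness of \(K\) is not needed here. It was crucial that the limits of \(q\) at \(\pm\infty\) are assumed, because the mixed strips carry no a priori energy control.
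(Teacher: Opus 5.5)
Your proposal is correct and follows essentially the same route as the paper. You split off the pairs with both endpoints in \((-L,L)\), send the original increments on the remaining set to zero by dominated convergence (the dominating function is the integrable full interaction density; \(K(Z)\dd Z\,\dd s\) itself is not a finite measure), and bound the truncated increments by a support of length at most \(2|a|\) times the squared tail oscillation of \(q\). One bookkeeping slip: your ``interior'' and ``outer'' regions leave a set of length up to \(3|a|\), e.g. \((-L-a,-L]\) when \(a>0\), not just the one-endpoint strips; both endpoints there carry the same well value, so the truncated increment vanishes and your bound is unaffected.
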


\begin{proof}
The potential term converges because
\(\int W(q_L)=\int_{-L}^LW(q)\).  Put \(a=Z\cdot n\) and let
\(A_L(a)\) be the set of \(s\) for which at least one of \(s,s-a\) lies
outside \((-L,L)\).  On the complement of \(A_L(a)\), the increments of
\(q_L\) and \(q\) agree.  Moreover,
\[
 \int_{\R^d}K(Z)\int_{A_L(Z\cdot n)}
 |q(s)-q(s-Z\cdot n)|^2\dd s\dd Z\longrightarrow0
\]
by dominated convergence, since the full interaction density of \(q\) is
integrable.

Define
\[
 \alpha_L:=\sup_{s\ge L-R}|1-q(s)|
       +\sup_{s\le-L+R}|q(s)|.
\]
Then \(\alpha_L\to0\).  If the increment of \(q_L\) is non-zero on
\(A_L(a)\), the segment with endpoints \(s,s-a\) crosses one of
\(\{-L,L\}\).  The set of such \(s\) has measure at most \(2|a|\), and the
increment is bounded by \(\alpha_L\).  Consequently
\[
 \int_{\R^d}K(Z)\int_{A_L(Z\cdot n)}
 |q_L(s)-q_L(s-Z\cdot n)|^2\dd s\dd Z
 \le2\alpha_L^2\int_{\R^d}K(Z)|Z\cdot n|\dd Z\longrightarrow0.
\]
The interaction energies therefore converge as claimed.
\end{proof}

\begin{lemma}[Frozen surface tension]\label{lem:frozen-tension}
Under \eqref{eq:uniform-core} and \eqref{eq:true-wells}, for every fixed
\(x_0\in\Omega\), the density \(n\mapsto\gamma(x_0,n)\) is the surface tension
of the homogeneous Kac functional with kernel \(J(x_0,\cdot)\).  Its
positively one-homogeneous extension is finite and convex; in particular,
\(\gamma(x_0,\cdot)\) is continuous on \(\Sph^{d-1}\).
\end{lemma}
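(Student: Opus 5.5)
The plan is to freeze the base point, set \(K:=J(x_0,\cdot)\), and reduce the statement to known facts about the homogeneous Kac functional with kernel \(K\). By \eqref{eq:even}, \eqref{eq:range}, \eqref{eq:uniform-core} (with the compact set \(\{x_0\}\)) and \eqref{eq:true-wells}, the kernel \(K\) is nonnegative, even, integrable and supported in \(B_{R_J}\), and it is bounded below by \(j_{\{x_0\}}>0\) on a ball \(B_{r}\) up to a null set. These are the standing hypotheses of the Alberti--Bellettini theory \citep{alberti1998nonlocal,alberti1998optimal}. In that theory the surface tension \(\sigma_K(n)\) is the infimum of the one-dimensional profile energy \(E_{K,n}(q)\) over all measurable \(q:\R\to[0,1]\) with \(q(-\infty)=0\) and \(q(+\infty)=1\). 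For \(d\ge2\) it is also the density of the \(\Gamma\)-limit of the homogeneous energies on sets of finite perimeter.

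\emph{Step 1: identification.} Since \(\mathcal A\) is contained in the class of profiles with the correct limits, \(\sigma_K(n)\le\gamma(x_0,n)\) is immediate. For the reverse inequality, take an admissible \(q\) with \(E_{K,n}(q)\le\sigma_K(n)+\delta\) and truncate it. Lemma~\ref{lem:compact-tail} gives \(q_L\in\mathcal A\) with \(E_{K,n}(q_L)\to E_{K,n}(q)\), and comparing with \eqref{eq:profile-cost} gives \(e(x_0,n;q_L)=E_{K,n}(q_L)\). Hence \(\gamma(x_0,n)\le\sigma_K(n)+\delta\), and the two densities coincide. If a cube-cell definition of the surface tension is preferred, I would match it with the one-dimensional profile formula through the planar-profile results of \citep{alberti1998optimal}. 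The only point to check is that those results need only \(L^1\) kernels with a positive core, and not continuity of \(K\).

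\emph{Step 2: finiteness and convexity.} Finiteness comes from the bound in Lemma~\ref{lem:cell-regularity}. For convexity of the positively one-homogeneous extension \(\hat\gamma(p)=|p|\gamma(x_0,p/|p|)\), it suffices to prove subadditivity \(\hat\gamma(p_1+p_2)\le\hat\gamma(p_1)+\hat\gamma(p_2)\). I would take a flat interface with normal \(n\propto p_1+p_2\) inside a cube and replace it by a fine periodic sawtooth whose facets have normals \(p_1/|p_1|\) and \(p_2/|p_2|\), with areas in proportion to \(|p_1|\) and \(|p_2|\). Proposition~\ref{prop:poly}, applied with constant coefficient \(K\), gives recovery sequences for the sawtooth sets with energy close to \(\hat\gamma(p_1)+\hat\gamma(p_2)\) per unit of projected area. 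A diagonal argument then produces sequences converging to the half-space. The matching \(\Gamma\)-liminf for the half-space, with value \(\gamma(x_0,n)\) per unit area, is the homogeneous Alberti--Bellettini lower bound, and this is where \eqref{eq:uniform-core} and \eqref{eq:true-wells} are used. Combining the two gives subadditivity. Together with one-homogeneity this gives convexity, and a finite convex function on \(\R^d\) is continuous, hence so is its restriction to \(\Sph^{d-1}\).

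\emph{Main obstacle.} The delicate point is the flat-interface lower bound for a merely \(L^1\), compactly supported kernel with only a positive core. One must show that sequences converging to a half-space in a cube cannot beat \(\gamma(x_0,n)\) per unit area, including the control of lateral boundary effects. I would first invoke the homogeneous \(\Gamma\)-convergence theorem of \citep{alberti1998nonlocal} directly, checking that its proof uses only the listed properties of \(K\). As a fallback, I would argue by slicing along \(n\) and use a rearrangement or monotonization of one-dimensional profiles, which recovers the one-dimensional profile infimum because \(K\) is even and nonnegative.
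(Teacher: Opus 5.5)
Your proposal is correct and follows the paper's proof. Identification comes from the Alberti--Bellettini optimal-profile theorem together with Lemma~\ref{lem:compact-tail}, finiteness from the step-profile bound of Lemma~\ref{lem:cell-regularity}, and convexity from lower semicontinuity of the homogeneous $\Gamma$-limit; your sawtooth subadditivity argument (upper bound from Proposition~\ref{prop:poly} with the frozen kernel, lower bound from the homogeneous Alberti--Bellettini liminf, not Theorem~\ref{thm:liminf}, which depends on this lemma) is the standard proof of the implication ``lower semicontinuous anisotropic perimeter $\Rightarrow$ convex one-homogeneous density'' that the paper cites in one line. The only step the paper spells out and you skip is the affine normalization $v=2u-1$, $\widetilde J=J(x_0,\cdot)/4$, with $W$ extended linearly outside $[-1,1]$ (truncation does not increase energy), which puts the functional in Alberti--Bellettini form; your ``main obstacle'' is already covered by their theorem for even integrable kernels with finite first moment.
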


\begin{proof}
We first record the normalization needed to apply the homogeneous results.
Set \(v=2u-1\), \(\widetilde J=J(x_0,\cdot)/4\), and define
\[
 \widetilde W(v):=
 \begin{cases}
  W((v+1)/2),&|v|\le1,\\
  |v|-1,&|v|>1.
 \end{cases}
\]
This extension is continuous, has zeroes only at \(\{-1,1\}\), and has linear
growth.  Moreover, truncation to \([-1,1]\) does not increase either the
potential or the pair-interaction term.  Thus the unrestricted homogeneous
problem has the same infimum as its \([-1,1]\)-valued restriction.  On
functions valued in \([0,1]\),
the homogeneous energy for \((J(x_0,\cdot),W)\) is then exactly the energy for
\((\widetilde J,\widetilde W)\) after the affine change of variables, since
\[
 \frac14\widetilde J(Z)|v(x)-v(x-\eps Z)|^2
 =\frac14J(x_0,Z)|u(x)-u(x-\eps Z)|^2.
\]
Both sides use the ordered pair integral; restricting both endpoints to a set
and changing variables from the second endpoint to \(Z\) preserves the factor
\(1/4\).  The
kernel \(\widetilde J\) is non-negative, even, integrable, and has finite first
moment, so the hypotheses of the homogeneous theorem are satisfied.

The homogeneous optimal-profile theorem identifies the multidimensional
planar cell problem with its one-dimensional restriction
\citep{alberti1998optimal}.  That result is stated with profiles converging to
the wells at infinity.  Lemma~\ref{lem:compact-tail}, applied after the affine
phase change, shows that the infimum is unchanged when profiles are required
to equal the pure phases outside a compact interval.  Thus its cell value
agrees with \eqref{eq:cell}.  The homogeneous \(\Gamma\)-limit theorem identifies this
surface tension as a lower-semicontinuous anisotropic perimeter density
\citep{alberti1998nonlocal}.  Its one-homogeneous extension is consequently
convex.  Finiteness follows from the step-profile bound, and finite convex
one-homogeneous functions are continuous.
\end{proof}

\begin{theorem}[Interior compactness and liminf]\label{thm:liminf}
Assume \eqref{eq:uniform-core}, \eqref{eq:relative-continuity}, and
\eqref{eq:true-wells}.  Let
\(u_\eps:\R^d\to[0,1]\) satisfy
\begin{equation}
 \sup_{\eps>0}\mathcal K_{\eps,A}^{J,\theta}(u_\eps)<\infty
 \qquad\text{for every }A\Subset\Omega.
 \label{eq:local-bound}
\end{equation}
Then a subsequence converges in \(L^1_{\rm loc}(\Omega)\) to \(\one_E\) for
some set \(E\) of locally finite perimeter.  Moreover, whenever the whole
sequence converges to \(\one_E\), every \(\zeta\in C_c(\Omega)\),
\(\zeta\ge0\), satisfies
\begin{equation}
 \liminf_{\eps\downarrow0}
 \mathcal K_\eps^{J,\theta}(u_\eps;\zeta)
 \ge
 \int_{\partial^*E}\zeta(x)\gamma(x,\nu_E(x))\dd\Hh^{d-1}(x).
 \label{eq:liminf}
\end{equation}
Together with Theorem~\ref{thm:main}, this gives matching interior
sharp-interface bounds for this intrinsically coercive class.
\end{theorem}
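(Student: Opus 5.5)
The plan has two parts: compactness by comparison with a constant-kernel homogeneous Kac energy, and the liminf by blow-up with a frozen kernel.

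\emph{Compactness.} Fix \(A\Subset\Omega\) and a compact \(K\) containing an \(R_J\)-neighbourhood of \(\overline A\). For small \(\eps\) the sampling points \(x-\theta\eps Z\) stay in \(K\). By \eqref{eq:uniform-core},
\[
 J(x-\theta\eps Z,Z)\ge j_K\one_{B_{r_K}}(Z)
\]
for a.e.\ \(Z\). Hence \(\mathcal K_{\eps,A}^{J,\theta}(u_\eps)\) dominates the homogeneous Kac energy with kernel \(j_K\one_{B_{r_K}}\) and the double-well \(W\). That energy is coercive by \eqref{eq:true-wells}. The Alberti--Bellettini compactness theorem \citep{alberti1998nonlocal} then gives a subsequence converging in \(L^1(A)\) to some \(\one_{E_A}\) with \(E_A\) of finite perimeter in \(A\). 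An exhaustion of \(\Omega\) by such sets \(A\) and a diagonal argument produce \(E\) of locally finite perimeter.

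\emph{Liminf.} Assume \(u_\eps\to\one_E\), and pass to a subsequence realizing the liminf with a finite value. Define the localized energy measures
\[
 \mu_\eps:=\tfrac1\eps W(u_\eps)\dd x
 +\tfrac1{4\eps}\Bigl(\int J(x-\theta\eps Z,Z)|u_\eps(x)-u_\eps(x-\eps Z)|^2\dd Z\Bigr)\dd x .
\]
Their restrictions to a neighbourhood \(U\) of \(\supp\zeta\) are bounded, so a further subsequence gives \(\mu_\eps\stackrel{*}{\rightharpoonup}\mu\) locally. Since \(\zeta\) is continuous and nonnegative, \(\liminf\mathcal K_\eps^{J,\theta}(u_\eps;\zeta)\ge\int\zeta\dd\mu\). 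It therefore suffices to prove
\[
 \frac{\dd\mu}{\dd\Hh^{d-1}\llcorner\partial^*E}(y)\ge\gamma(y,\nu_E(y))
 \qquad\text{for }\Hh^{d-1}\text{-a.e. }y\in\partial^*E .
\]

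\emph{Blow-up at a fixed point.} Fix such a \(y\) at which De Giorgi blow-ups converge to the half-space \(H=\{x\cdot\nu_E(y)>0\}\), and take radii \(r\) with \(\mu(\partial Q_r(y))=0\), where \(Q_r\) is the cube oriented by \(\nu_E(y)\). On \(Q_r(y)\), every sampling point lies within \(r+R_J\eps\) of \(y\). Hypothesis \eqref{eq:relative-continuity}, with its exceptional set independent of \(x\) (exactly what allows the \(Z\)-dependent substitution), gives
\[
 \mu_\eps(Q_r(y))\ge\bigl(1-\omega_y(2r)\bigr)\,\mathcal K^{\rm hom,y}_{\eps,Q_r(y)}(u_\eps),
\]
where \(\mathcal K^{\rm hom,y}\) is the homogeneous Kac energy with frozen kernel \(J(y,\cdot)\) (the potential term is kept unchanged). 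Rescale \(x=y+r\xi\), so that the scale becomes \(\eps/r\). Choose \(r_k\downarrow0\) and then \(\eps_k\) small enough that \(u_{\eps_k}(y+r_k\cdot)\) is close to \(\one_H\) in \(L^1(Q_1)\). The homogeneous \(\Gamma\)-liminf on the open cube \citep{alberti1998nonlocal}, applied with kernel \(J(y,\cdot)\) after the normalization of Lemma~\ref{lem:frozen-tension}, then yields
\[
 \liminf_k\frac{\mu(Q_{r_k}(y))}{r_k^{d-1}}\ge\gamma(y,\nu_E(y)).
\]
Here Lemma~\ref{lem:frozen-tension} also identifies the homogeneous surface tension with the compact-profile cell value \eqref{eq:cell}, and the factor \(1-\omega_y(2r_k)\to1\). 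Because \(|D\one_E|(Q_r(y))/r^{d-1}\to1\), this is the density bound above. Integrating against \(\zeta\) gives \eqref{eq:liminf}.

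The main obstacle is the diagonal blow-up step. The homogeneous \(\Gamma\)-liminf is a statement for a fixed limit and \(\eps\to0\), whereas here the rescaled functions converge to \(\one_H\) only along a coupled sequence \((r_k,\eps_k)\). To handle this, I will use the standard cube-relative lower bound: the homogeneous energy on \(Q_1\) of any \(v\) that is \(L^1\)-close to \(\one_H\) is at least \(\gamma(y,\nu)-o(1)\) as the scale tends to zero and \(v\to\one_H\) jointly. This follows from the homogeneous \(\Gamma\)-convergence together with the equality of the \(\Gamma\)-limit on the open cube with \(\gamma\) times the flat face area. I must also check that interactions leaving \(Q_r\) are either counted (they are, since the energy integrates over \(x\in Q_r\) only) or can be dropped, which is harmless for a lower bound.
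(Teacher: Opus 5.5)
Your proof is correct. Compactness matches the paper's argument; the paper applies Alberti--Bellettini compactness on \(A'\Subset A\) to keep it interior, which you may wish to copy. Your liminf, however, takes a different route.

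\emph{The paper's route.} The paper never blows up. After freezing on balls \(B_r(x_0)\), it fixes \(r\) with \(\mu(\partial B_r(x_0))=|D\one_E|(\partial B_r(x_0))=0\). It then applies the homogeneous \(\Gamma\)-liminf on that fixed ball to the original sequence \(u_\eps\to\one_E\), getting \(\mu(B_r(x_0))\ge(1-\omega_{x_0}(2r))\int_{\partial^*E\cap B_r(x_0)}\gamma(x_0,\nu_E)\dd\Hh^{d-1}\). Finally it divides by \(|D\one_E|(B_r(x_0))\) and lets \(r\downarrow0\), using that \(x_0\) is a Lebesgue point of \(\nu_E\) and that \(\gamma(x_0,\cdot)\) is continuous (Lemma~\ref{lem:frozen-tension}). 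This involves only fixed domains and fixed limits, so no diagonal is needed. The cost is that it uses the homogeneous lower bound for an arbitrary finite-perimeter set and the continuity of the frozen surface tension in the normal.

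\emph{Your route.} Your Fonseca--M\"uller blow-up needs the homogeneous bound only at a flat interface and no continuity of \(\gamma(y,\cdot)\). The price is De Giorgi's blow-up theorem, a diagonal choice, and differentiation of \(\mu\) along \(\nu_E(y)\)-oriented cubes. The last is standard: compare with balls and use the density of \(|D\one_E|\) at reduced-boundary points.

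\emph{The flagged obstacle.} The difficulty you single out at the end is not a real one. The \(\Gamma\)-liminf is already sequential, so it applies verbatim to scales \(\delta_k=\eps_k/r_k\to0\) and functions \(v_k=u_{\eps_k}(y+r_k\,\cdot)\to\one_H\) in \(L^1(Q_1)\). What is needed is bookkeeping. Take \(\eps_k\) from the weak-star convergent subsequence with \(\eps_k\le r_k/k\), \(\|u_{\eps_k}-\one_E\|_{L^1(Q_{r_k}(y))}\le r_k^d/k\), and \(\mu_{\eps_k}(Q_{r_k}(y))\le\mu(Q_{r_k}(y))+r_k^{d-1}/k\). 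The last condition is available because \(\mu(\partial Q_{r_k}(y))=0\).

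\emph{Minor correction.} Points of a cube of side \(r\) lie within \(\sqrt d\,r/2\) of its centre, so the freezing factor should read \(1-\omega_y(C_d r)\) rather than \(1-\omega_y(2r)\). This changes nothing in the argument.
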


\begin{proof}
\emph{Step 1: compactness.}
Fix \(A'\Subset A\Subset\Omega\), and choose a compact \(K\) containing a
neighbourhood of \(\overline A\).  For all sufficiently small \(\eps\), the
coefficient points associated with first endpoints in \(A\) belong to \(K\).
Set
\[
 J_{{\rm c},K}(Z):=j_K\one_{B_{r_K}}(Z).
\]
Condition \eqref{eq:uniform-core} gives
\(J(x-\theta\eps Z,Z)\ge J_{{\rm c},K}(Z)\) for \(x\in A\).  Hence
\(\mathcal K_{\eps,A}^{J,\theta}\) controls the internal homogeneous energy
\[
 \frac1\eps\int_A W(u_\eps)\dd x
 +\frac{j_K}{4\eps}\int_A
   \int_{\substack{Z\in B_{r_K}\\x-\eps Z\in A}}
   |u_\eps(x)-u_\eps(x-\eps Z)|^2\dd Z\dd x.
\]
Indeed, one restricts the interaction to \(Z\in B_{r_K}\) and to pairs with
both endpoints in \(A\), and then discards all remaining non-negative terms.
The kernel \(J_{{\rm c},K}\) and
the true-well condition \eqref{eq:true-wells} satisfy the hypotheses of the
homogeneous compactness theorem on \(A'\), after the affine normalization used
in Lemma~\ref{lem:frozen-tension} \citep{alberti1998nonlocal}.  Since
\(A'\Subset A\), this is an interior application and produces no artificial
boundary contribution.  A diagonal
extraction on an exhaustion of \(\Omega\) gives
\(u_\eps\to\one_E\) in \(L^1_{\rm loc}\), with \(E\) of locally finite
perimeter.

\emph{Step 2: energy measures.}
It suffices to prove \eqref{eq:liminf} along a subsequence realizing its left
side.  On a relatively compact open set containing \(\supp\zeta\), define
positive Radon measures
\begin{align*}
 \mu_\eps(B)
 &:=\frac1\eps\int_B W(u_\eps(x))\dd x\\
 &\quad+\frac1{4\eps}\int_B\int_{\R^d}
 J(x-\theta\eps Z,Z)|u_\eps(x)-u_\eps(x-\eps Z)|^2\dd Z\dd x.
\end{align*}
The local bound gives, after a further extraction,
\(\mu_\eps\stackrel{*}{\rightharpoonup}\mu\).  We prove
\begin{equation}
 \mu\ge
 \gamma(x,\nu_E(x))\Hh^{d-1}\mathbin{\vrule height 1.4ex depth -0.4ex
 width 0.07ex\vrule height 0.07ex depth -0.02ex width 0.7ex}\partial^*E.
 \label{eq:measure-domination}
\end{equation}

\emph{Step 3: multiplicative freezing.}
Fix \(x_0\in\partial^*E\), and let \(\rho_{x_0}\) and \(\omega_{x_0}\) be
supplied by \eqref{eq:relative-continuity}.  If
\(x\in B_r(x_0)\) and \(r+R_J\eps\le\rho_{x_0}\), then
\begin{align*}
 J(x-\theta\eps Z,Z)
 &\ge \bigl(1-\delta_{r,\eps}\bigr)J(x_0,Z),
 \qquad
 \delta_{r,\eps}:=\omega_{x_0}(r+R_J\eps).
\end{align*}
Here we used
\(|x-\theta\eps Z-x_0|\le r+R_J\eps\).  Define the internal frozen energy
\begin{align}
 \mathcal K_\eps^{x_0}(u;B_r(x_0))
 &:=\frac1\eps\int_{B_r(x_0)}W(u(x))\dd x \notag\\
 &\quad+\frac1{4\eps}\int_{B_r(x_0)}
 \int_{\{Z:\,x-\eps Z\in B_r(x_0)\}}J(x_0,Z)
 |u(x)-u(x-\eps Z)|^2\dd Z\dd x.
 \label{eq:frozen-internal}
\end{align}
Then
\begin{equation}
 \mu_\eps(B_r(x_0))
 \ge(1-\delta_{r,\eps})
 \mathcal K_\eps^{x_0}(u_\eps;B_r(x_0)),
 \label{eq:freeze}
\end{equation}
Indeed, the variable energy first controls the frozen energy with all second
endpoints, and dropping the cross-boundary interactions gives exactly
\eqref{eq:frozen-internal} and can only lower that energy.

\emph{Step 4: frozen liminf and differentiation.}
Choose radii \(r\downarrow0\) for which both \(\mu\) and
\(|D\one_E|\) give zero mass to \(\partial B_r(x_0)\), and restrict to
\(r<\rho_{x_0}/2\).  For each fixed such \(r\), one has
\(\delta_{r,\eps}\le\omega_{x_0}(2r)\) for all sufficiently small \(\eps\).
Passing \(\eps\downarrow0\) in \eqref{eq:freeze} and applying the homogeneous
frozen liminf theorem on the ball gives
\begin{equation}
 \mu(B_r(x_0))
 \ge
 \bigl(1-\omega_{x_0}(2r)\bigr)
 \int_{\partial^*E\cap B_r(x_0)}
 \gamma(x_0,\nu_E(y))\dd\Hh^{d-1}(y).
 \label{eq:ball-density}
\end{equation}
Indeed, by evenness of \(J(x_0,\cdot)\),
\eqref{eq:frozen-internal} is exactly the homogeneous pair-interaction
functional with both endpoints in \(B_r(x_0)\), written in the displacement
variable \(Z\).  The ball is a regular open set, and the affine normalization
in Lemma~\ref{lem:frozen-tension} puts the functional in the form of the
classical lower-bound theorem.
At \(|D\one_E|\)-almost every reduced-boundary point, \(x_0\) is a
Lebesgue point of the polar field \(\nu_E\) relative to \(|D\one_E|\).
Lemma~\ref{lem:frozen-tension} gives continuity of
\(n\mapsto\gamma(x_0,n)\).  Divide \eqref{eq:ball-density} by
\(|D\one_E|(B_r(x_0))\) and let \(r\downarrow0\).  The differentiation
theorem for Radon measures yields
\[
 \frac{\dd\mu}{\dd|D\one_E|}(x_0)
 \ge\gamma(x_0,\nu_E(x_0))
 \quad\text{for }|D\one_E|\text{-a.e. }x_0,
\]
which is \eqref{eq:measure-domination}.

Finally, weak-star convergence of \(\mu_\eps\) and continuity of \(\zeta\)
give
\[
 \lim_{\eps\downarrow0}\mathcal K_\eps^{J,\theta}(u_\eps;\zeta)
 =\int\zeta\dd\mu
 \ge\int_{\partial^*E}\zeta(x)\gamma(x,\nu_E(x))\dd\Hh^{d-1}(x),
\]
along the selected subsequence, proving the liminf inequality.
\end{proof}

\begin{corollary}[Pointwise Gamma limit at characteristic states]
\label{cor:pointwise-gamma}
Under the hypotheses of Theorem~\ref{thm:liminf}, every
\(u_\eps\to\one_E\) in \(L^1_{\rm loc}(\Omega)\) satisfies
\[
 \liminf_{\eps\downarrow0}\mathcal K_\eps^{J,\theta}(u_\eps;\zeta)
 \ge \int_{\partial^*E}\zeta(x)\gamma(x,\nu_E(x))
 \dd\Hh^{d-1}(x).
\]
Together with Theorem~\ref{thm:main}, this identifies the pointwise
\(\Gamma\)-limit at every finite-perimeter characteristic state.
\end{corollary}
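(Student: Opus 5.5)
The corollary follows from Theorem~\ref{thm:liminf} once the \(L^1_{\rm loc}\) convergence is taken as a hypothesis rather than a conclusion. No uniform energy bound is assumed. Set
\[
 \ell:=\liminf_{\eps\downarrow0}\mathcal K_\eps^{J,\theta}(u_\eps;\zeta).
\]
If \(\ell=+\infty\) there is nothing to prove. Otherwise choose \(\eps_k\downarrow0\) with \(\mathcal K_{\eps_k}^{J,\theta}(u_{\eps_k};\zeta)\to\ell\). Along this sequence the weighted energy is bounded, but \eqref{eq:local-bound} asks for bounds on every \(A\Subset\Omega\). That is not available, since \(\zeta\) may vanish on large parts of \(\Omega\).

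The plan is to localize to where \(\zeta\) is bounded below, and to modify \(u_{\eps_k}\) elsewhere without changing the relevant quantities. Fix \(t>0\) and an open set \(A_t\) with \(\overline{A_t}\subset\{\zeta>t\}\). Then
\[
 \mathcal K_{\eps_k,A_t}^{J,\theta}(u_{\eps_k})\le t^{-1}\mathcal K_{\eps_k}^{J,\theta}(u_{\eps_k};\zeta),
\]
which is bounded. The argument of Theorem~\ref{thm:liminf} is entirely local: the compactness step is run on \(A'\Subset A\), and Steps~2--4 use only the energy measures \(\mu_\eps\) on a neighbourhood of the chosen points. Running it with \(\Omega\) replaced by the open set \(\{\zeta>t\}\) gives
\[
 \mu\ge\gamma(\cdot,\nu_E)\Hh^{d-1}\mathbin{\llcorner}(\partial^*E\cap\{\zeta>t\})
\]
for a weak-star limit \(\mu\) of the energy measures \(\mu_{\eps_k}\). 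These measures are now defined with weight \(1\) on \(\{\zeta>t\}\). Their local mass is bounded by the bound above, so a further extraction is legitimate. The limit \(E\) is given by hypothesis, so no subsequence is needed to identify it. It has locally finite perimeter in \(\{\zeta>t\}\) by the compactness step.

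To conclude, introduce the weighted measures \(\zeta\mu_{\eps_k}\), equivalently the energy densities of \(\mathcal K^{J,\theta}_{\eps_k}(u_{\eps_k};\zeta)\). They have bounded total mass, so after extraction they converge weak-star to a measure \(\lambda\). For an open \(U\Subset\{\zeta>t\}\), lower semicontinuity on open sets together with continuity of \(\zeta\) gives
\[
 \lambda\ge\zeta\,\gamma(\cdot,\nu_E)\Hh^{d-1}\mathbin{\llcorner}(\partial^*E\cap\{\zeta>t\}).
\]
This is the place where Step~4 of Theorem~\ref{thm:liminf} has to be redone: one differentiates \(\lambda\) against \(|D\one_E|\) and uses \(\zeta(x_0)>0\) at the base point.

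Then
\[
 \ell\ge\lambda(\Omega)\ge\int_{\partial^*E\cap\{\zeta>t\}}\zeta\,\gamma(\cdot,\nu_E)\dd\Hh^{d-1}.
\]
Let \(t\downarrow0\) by monotone convergence. The set \(\{\zeta=0\}\) contributes nothing to the right-hand side, so this is the claimed inequality. Here \(\partial^*E\) is understood in \(\{\zeta>0\}\), which is the only region the right-hand side sees.

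The main obstacle is that the hypothesis \eqref{eq:local-bound} is global in \(\Omega\), while only a \(\zeta\)-weighted bound is available. One must therefore check that Theorem~\ref{thm:liminf} may be applied on the open subset \(\{\zeta>t\}\) in place of \(\Omega\). The assumptions \eqref{eq:uniform-core}, \eqref{eq:relative-continuity} and \eqref{eq:true-wells} are pointwise or locally uniform, so they transfer directly. A second point needs care: the interaction term in \(\mu_\eps\) samples second endpoints \(x-\eps Z\) that may lie outside the subset. This is harmless, because the freezing step discards all cross-boundary interactions anyway.
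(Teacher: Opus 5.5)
Your argument is correct, and it proves somewhat more than the paper does. The paper gives no separate proof: the corollary is read off from the second assertion of Theorem~\ref{thm:liminf}, with the uniform local bound \eqref{eq:local-bound} tacitly counted among ``the hypotheses''. That reading is immediate, but strictly it does not yet give a \(\Gamma\)-liminf. A sequence whose \(\zeta\)-weighted energy stays bounded need not satisfy \eqref{eq:local-bound} on sets where \(\zeta\) vanishes. Your localization to \(\Omega_t:=\{\zeta>t\}\) removes that hypothesis, using \(\mathcal K_{\eps,A}^{J,\theta}\le t^{-1}\mathcal K_\eps^{J,\theta}(\cdot;\zeta)\) for every \(A\subset\Omega_t\). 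This is exactly what justifies the phrase ``pointwise \(\Gamma\)-limit''.

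The last step can be shortened, and one justification should be corrected. You do not need to redo Step~4. Along the realizing subsequence, Theorem~\ref{thm:liminf} applies as stated on the open set \(\Omega_t\). The function \(\zeta_t:=(\zeta-2t)^+\) belongs to \(C_c(\Omega_t)\), because its support lies in the compact set \(\{\zeta\ge2t\}\subset\Omega_t\). The energy is monotone in the weight and \(\zeta_t\le\zeta\), so you get \(\ell\ge\int_{\partial^*E}\zeta_t\gamma(\cdot,\nu_E)\dd\Hh^{d-1}\). Monotone convergence as \(t\downarrow0\) then concludes.

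If you keep the weighted measures instead, note that ``lower semicontinuity on open sets'' points the wrong way. Weak-star convergence only gives \(\lambda(U)\le\liminf_k\zeta\mu_{\eps_k}(U)\). The right reason is that \(\lambda\mathbin{\llcorner}\Omega_t=\zeta\mu\), since \(\varphi\zeta\in C_c(\Omega_t)\) whenever \(\varphi\in C_c(\Omega_t)\). The lower bound on \(\lambda\) then follows directly from the one on \(\mu\), with no further differentiation argument.
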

\section{Heterogeneous Ray and Flow Kac Energies}

Let
\[
 d_*(r):=\min\{r,1-r\}\qquad(0\le r\le1),
\]
and let \(\chi(r)=\one_{\{r\ge1/2\}}\).  The elementary estimate
\begin{equation}
 |\chi(a)-\chi(b)|^2-|a-b|^2
 \le2\bigl(d_*(a)+d_*(b)\bigr)
 \qquad(a,b\in[0,1])
 \label{eq:diffuse-rounding-pointwise}
\end{equation}
follows directly when \(a,b\) lie on the same side of \(1/2\); on opposite
sides use \(|a-b|\ge1-d_*(a)-d_*(b)\).  This estimate makes threshold
rounding energy-decreasing for the steep potentials used below.
\subsection{Radially spread ray kernels}
\label{subsec:global-ray-realization}

A ray kernel is a positive Radon measure concentrated on finitely many lines
and spread over a non-degenerate interval of radii.  Its angular singularity
creates corners in the cell density, while radial spreading prevents the
exact-shift resonance of finitely many point masses and supplies compactness.

Let \(0<a<b<\infty\), and let
\begin{equation}
 \lambda(\dd t)=w(t)\dd t,
 \qquad 0<w_*\le w(t)\le w^*<\infty\quad\text{a.e. on }(a,b),
 \qquad \int_a^b t\,\lambda(\dd t)=1.
 \label{eq:ray-radial-law}
\end{equation}
Given vectors \(v_1,\ldots,v_N\in\R^d\setminus\{0\}\) and coefficients
\(c_i>0\), define the even, finite, first-moment interaction measure
\begin{equation}
 \nu:=2\sum_{i=1}^N c_i\int_a^b
       \bigl(\delta_{t v_i}+\delta_{-t v_i}\bigr)\lambda(\dd t),
 \qquad
 m_\nu:=\nu(\R^d)=4\lambda((a,b))\sum_{i=1}^N c_i.
 \label{eq:ray-measure}
\end{equation}
Its support is compact and avoids the origin.  The normalization by the first
radial moment, rather than by the total mass, is what fixes the surface
tension.

On the flat torus \(\mathbb T^d\), translations below are understood modulo
the period.  For \(M\ge m_\nu\), put
\begin{equation}
 \mathcal F_\eps^\nu(u)
 :=\frac{M}{\eps}\int_{\mathbb T^d}d_*(u)\dd x
 +\frac1{4\eps}\int_{\mathbb T^d}\int_{\R^d}
      |u(x)-u(x-\eps Z)|^2\nu(\dd Z)\dd x,
 \label{eq:ray-diffuse-energy}
\end{equation}
for measurable \(u:\mathbb T^d\to[0,1]\), and set it to \(+\infty\)
otherwise.  The periodic topology removes boundary cells; the corresponding
local statement on \(U\Subset\Omega\) follows by inserting a cut-off equal to
one near \(\overline U\).  No assertion about a physical boundary cell is
contained in the theorem.

\begin{lemma}[Measure-kernel cell formula and normalization]
\label{lem:ray-measure-cell-formula}
Let \(\mu\) be a finite, non-negative Borel measure on \(\R^d\) with compact
support and finite first moment, and let \(M\ge\mu(\R^d)\).  For the profile
problem obtained from \eqref{eq:profile-cost} by replacing
\(J(Z)\dd Z\) with \(\mu(\dd Z)\), the steep potential \(W_M=M d_*\) gives
the exact one-homogeneous cell density
\begin{equation}
 h_\mu(\xi)=\frac14\int_{\R^d}|Z\cdot\xi|\mu(\dd Z).
 \label{eq:measure-kernel-cell}
\end{equation}
In particular, \eqref{eq:ray-radial-law}--\eqref{eq:ray-measure} give
\begin{equation}
 h_\nu(\xi)=\sum_{i=1}^N c_i|v_i\cdot\xi|.
 \label{eq:zonotope-density}
\end{equation}
This is a norm if and only if \(\operatorname{span}\{v_1,\ldots,v_N\}=\R^d\).
\end{lemma}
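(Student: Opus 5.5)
The cell density is \(h_\mu(\xi)=|\xi|\gamma_\mu(\xi/|\xi|)\), where \(\gamma_\mu(n)\) is the infimum over \(q\in\mathcal A\) of
\[
 e_\mu(n;q)=M\int_\R d_*(q)\dd s+\frac14\int_\R\int_{\R^d}|q(s)-q(s-Z\cdot n)|^2\mu(\dd Z)\dd s.
\]
I prove \(\gamma_\mu(n)=\frac14\int|Z\cdot n|\,\mu(\dd Z)\) for unit \(n\), and then extend by one-homogeneity.

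\textbf{Upper bound.} Test with the step profile \(q_0=\one_{(0,\infty)}\). Its potential cost vanishes, and for fixed \(Z\) the set where \(q_0(s)\ne q_0(s-Z\cdot n)\) has length \(|Z\cdot n|\). This is the computation of Lemma~\ref{lem:cell-regularity}, with \(J\dd Z\) replaced by \(\mu\), and Fubini applies because \(\mu\) is finite with compact support.

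\textbf{Lower bound.} Fix \(q\in\mathcal A\) and let \(\chi(r)=\one_{\{r\ge1/2\}}\) as above. By \eqref{eq:diffuse-rounding-pointwise},
\[
 |q(s)-q(s-a)|^2\ge|\chi(q(s))-\chi(q(s-a))|^2-2\bigl(d_*(q(s))+d_*(q(s-a))\bigr).
\]
Integrate in \(s\) and against \(\mu/4\). By translation invariance, the error term equals \(\mu(\R^d)\int d_*(q)\dd s\), which is at most \(M\int d_*(q)\), so the potential absorbs it exactly. This gives
\[
 e_\mu(n;q)\ge\frac14\int_{\R^d}\int_\R|\tilde q(s)-\tilde q(s-Z\cdot n)|\dd s\,\mu(\dd Z),
\]
where \(\tilde q=\chi\circ q\) is a \(\{0,1\}\)-valued function equal to \(0\) near \(-\infty\) and to \(1\) near \(+\infty\). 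For such a function and any \(a\in\R\),
\[
 \int_\R|\tilde q(s)-\tilde q(s-a)|\dd s\ge\Bigl|\int_\R\bigl(\tilde q(s)-\tilde q(s-a)\bigr)\dd s\Bigr|=|a|.
\]
The identity \(\int(\tilde q(s)-\tilde q(s-a))\dd s=a\) holds because \(\tilde q-\one_{(0,\infty)}\) has compact support, and the step part contributes exactly \(a\). Hence \(e_\mu(n;q)\ge h_\mu(n)\) for every \(q\in\mathcal A\), so the two bounds match.

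This lower bound is the only real step. The key points are that the steep constant \(M\ge\mu(\R^d)\) absorbs the rounding error exactly, with no \(\eps\)-loss because the cell problem is already scale-free, and that on binary monotone-at-infinity profiles the \(L^2\) and \(L^1\) increments coincide, which reduces the problem to the mass identity.

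\textbf{Zonotope formula.} Substitute \eqref{eq:ray-measure} into \eqref{eq:measure-kernel-cell}. Each of \(\delta_{\pm tv_i}\) contributes \(t|v_i\cdot\xi|\), so
\[
 h_\nu(\xi)=\frac14\cdot2\sum_ic_i\cdot2\int_a^bt\,\lambda(\dd t)\,|v_i\cdot\xi|=\sum_ic_i|v_i\cdot\xi|,
\]
using the first-moment normalization in \eqref{eq:ray-radial-law}.

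\textbf{Norm criterion.} The function \(h_\nu\) is a sum of seminorms, hence a seminorm. Its kernel is \(\{\xi:v_i\cdot\xi=0\ \forall i\}\) because all \(c_i>0\). This kernel is \((\operatorname{span}\{v_i\})^\perp\), which is trivial exactly when the \(v_i\) span \(\R^d\).
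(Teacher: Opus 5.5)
Your proposal is correct and follows essentially the same route as the paper. Both arguments absorb threshold rounding into the steep potential \(M\ge\mu(\R^d)\), then prove \(\int_\R|\tilde q(s)-\tilde q(s-a)|\,\dd s\ge|a|\) for binary profiles (your mass identity is a cleaner form of the paper's truncate-and-count-net-transition step), note that the step profile attains equality, and finish by substituting \eqref{eq:ray-measure} and identifying the kernel for the norm criterion.
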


\begin{proof}
The threshold map \(\chi(r)=\one_{\{r\ge1/2\}}\) satisfies
\eqref{eq:diffuse-rounding-pointwise}.  Integrating that inequality first in
the profile variable and then against \(\mu\) gives
\[
 e_\mu(n;\chi\circ q)-e_\mu(n;q)
 \le\bigl(\mu(\R^d)-M\bigr)
       \int_\R d_*(q(r))\dd r\le0.
\]
Thus it is enough to minimize over binary profiles.  For every
\(v:\R\to\{0,1\}\) with the prescribed end states and every \(s\in\R\),
truncate on \((-R,R)\), count the net transition, and let \(R\to\infty\) to
obtain
\[
 \int_\R|v(r)-v(r-s)|\dd r\ge |s|.
\]
The monotone step attains equality for every \(s\) simultaneously.  Since
binary squared differences equal absolute differences, integration against
\(\mu\) proves \eqref{eq:measure-kernel-cell}.  Substituting
\eqref{eq:ray-measure} and using \(\int t\dd\lambda=1\) yields
\[
 \frac14\,2\sum_i c_i\int_a^b
 \bigl(|t v_i\cdot\xi|+|-t v_i\cdot\xi|\bigr)\lambda(\dd t)
 =\sum_i c_i|v_i\cdot\xi|.
\]
The last expression vanishes precisely on the orthogonal complement of the
span of the \(v_i\), which proves the norm criterion.
\end{proof}

\begin{theorem}[Gamma-limit for radially spread singular ray kernels]
\label{thm:ray-measure-gamma-limit}
Assume \eqref{eq:ray-radial-law}, \(M\ge m_\nu\), and
\(\operatorname{span}\{v_1,\ldots,v_N\}=\R^d\).  In the strong
\(L^1(\mathbb T^d)\) topology, the functionals
\(\mathcal F_\eps^\nu\) Gamma-converge as \(\eps\downarrow0\) to
\begin{equation}
 \mathcal F_0^\nu(u):=
 \begin{cases}
 \displaystyle\int_{\partial^*E}h_\nu(\nu_E)\dd\Hh^{d-1},
   &u=\one_E,\quad E\text{ of finite perimeter in }\mathbb T^d,\\[4pt]
 +\infty,&\text{otherwise}.
 \end{cases}
 \label{eq:ray-gamma-limit}
\end{equation}
The family is equicoercive in \(L^1\).  More precisely:
\begin{enumerate}
 \item if \(u_\eps\to u\) in \(L^1\), then
 \(\liminf_{\eps\downarrow0}\mathcal F_\eps^\nu(u_\eps)
 \ge\mathcal F_0^\nu(u)\);
 \item for every finite-perimeter set \(E\), the constant binary sequence
 \(u_\eps=\one_E\) is a recovery sequence.
\end{enumerate}
The lower bound and recovery remain valid for any finite radial measure with
unit first moment.  The density bounds in \eqref{eq:ray-radial-law} are used
only for equicoercivity.
\end{theorem}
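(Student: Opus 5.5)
Three ingredients are needed: a direct recovery computation, a lower bound through one-dimensional slicing along each ray direction, and compactness from the spread radial core.

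\emph{Recovery.} For a finite-perimeter set \(E\), the binary state \(u=\one_E\) has \(d_*(u)=0\) and \(|u(x)-u(x-\eps Z)|^2=|\one_E(x)-\one_E(x-\eps Z)|\). For fixed \(Z\), the classical translation identity for \(BV\) functions,
\[
 \int_{\mathbb T^d}|\one_E(x)-\one_E(x-\eps Z)|\dd x
 =\int_{\partial^*E}|\eps Z\cdot\nu_E|\dd\Hh^{d-1}+o(\eps),
\]
gives, after division by \(\eps\), convergence to \(\int_{\partial^*E}|Z\cdot\nu_E|\dd\Hh^{d-1}\). Away from the error term this follows from the coarea formula along the lines parallel to \(Z\), with the upper bound \(\le|\eps Z|\,|D\one_E|\) uniform for \(Z\) in the compact support. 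Dominated convergence in \(\nu(\dd Z)\) and the formula \(h_\nu(\xi)=\frac14\int|Z\cdot\xi|\,\nu(\dd Z)\) from Lemma~\ref{lem:ray-measure-cell-formula} then give \(\mathcal F_\eps^\nu(\one_E)\to\mathcal F_0^\nu(\one_E)\). This argument uses only finiteness and the first moment of the radial law.

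\emph{Lower bound.} Let \(u_\eps\to u\) with bounded energy. I would first replace \(u_\eps\) by its rounding \(\chi\circ u_\eps\). Integrating \eqref{eq:diffuse-rounding-pointwise} against \(\nu\) and using \(M\ge m_\nu\) shows this does not increase the energy. The rounded functions still converge to \(u\), because \(\int d_*(u_\eps)\le C\eps/M\). So we may assume \(u_\eps=\one_{E_\eps}\) is binary. The energy is then \(\frac1{4\eps}\int\!\!\int|\one_{E_\eps}(x)-\one_{E_\eps}(x-\eps Z)|\dd x\,\nu(\dd Z)\), a sum over directions \(v_i\) and radii \(t\). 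For fixed \(Z=tv_i\), slice \(\mathbb T^d\) into lines parallel to \(v_i\) (periodic translation is harmless). On each line, a binary function \(g\) with \(g\to g_0\) in \(L^1\) satisfies the elementary one-dimensional bound
\[
 \liminf\frac1\eps\int|g(s)-g(s-\eps\tau)|\dd s\ \ge\ \tau\,\#(\text{jumps of }g_0).
\]
This holds because each limit jump forces a net transition, and a net transition costs at least \(\eps\tau\) in the shifted-difference integral, localized in disjoint windows. Fatou in the transverse variable and in \(\nu(\dd Z)\) then gives \(\liminf\ge\frac14\int\int_{\partial^*E}|Z\cdot\nu_E|\dd\Hh^{d-1}\nu(\dd Z)=\mathcal F_0^\nu(u)\), provided \(u=\one_E\) with \(E\) of finite perimeter. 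If \(u\) is not of this form, compactness below shows the liminf is infinite. This part again needs only the first moment.

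\emph{Equicoercivity, the main obstacle.} With finitely many directions there is no uniform core kernel, so Theorem~\ref{thm:liminf} does not apply directly. I would use the density bound \(w\ge w_*\) on \((a,b)\). The bounded energy controls \(\int_a^b\frac1\eps\|\one_{E_\eps}-\one_{E_\eps}(\cdot-\eps tv_i)\|_{L^1}\dd t\le C\) for each \(i\). By the triangle inequality for translations, shifts \(\eps tv_i\) with \(t\in(a,b)\) control every shift \(\sigma v_i\) with \(|\sigma|\le C'\eps\) by averaging over \(t\): writing \(\sigma\) as \(t-t'\), any increment \(\sigma\) in \(((a-b)\eps,(b-a)\eps)\) is a difference of two admissible shifts. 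Iterating then gives \(\|\one_{E_\eps}-\one_{E_\eps}(\cdot-hv_i)\|_{L^1}\le C(|h|+\eps)\) for all \(h\). Since the \(v_i\) span \(\R^d\), composing translations yields a uniform modulus \(\|u_\eps-u_\eps(\cdot-h)\|_{L^1}\le C(|h|+\eps)\). The Fréchet--Kolmogorov theorem then gives precompactness in \(L^1(\mathbb T^d)\), and the limit satisfies \(\|u-u(\cdot-h)\|\le C|h\|\), so \(u\in BV\). Binary values pass to the limit because \(\int d_*(u_\eps)\to0\). The delicate point is the averaging step converting the radial integral into a genuine small-shift estimate without the exact-shift resonance. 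I would use the fact that the set of \(t\) with large \(t\)-shift cost has small measure, pick two good radii \(t,t'\) with a prescribed difference, and use that \(|a-b|>0\) is the nondegeneracy required.
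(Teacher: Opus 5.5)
Your proposal is correct. The recovery step is the paper's, but the liminf and the compactness take different routes.

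\textbf{Liminf.} The paper does not slice. It notes that $(z_\eps-z_\eps(\cdot-\eps t v_i))/(\eps t)$ converges distributionally to $D_{v_i}z$, applies lower semicontinuity of total variation, and then uses Fatou in $t$. Finiteness of $\sum_i c_i|D_{v_i}z|(\mathbb T^d)$ then gives $z\in BV$ by spanning. Your jump-counting on lines proves the same inequality, but it needs two adjustments.
\begin{enumerate}
 \item A line whose direction is not a multiple of an integer vector does not close up in $\mathbb T^d$. The slicing should therefore be done on the periodic lift over a fundamental cube, translated so that its boundary carries no perimeter.
 \item Your appeal to compactness to exclude non-$BV$ limits uses $w\ge w_*$. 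That bound is not available for the claimed extension to arbitrary finite radial laws. There you should read $|D_{v_i}u|(\mathbb T^d)<\infty$ off your own slice bound instead.
\end{enumerate}

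\textbf{Equicoercivity.} The paper picks one good radius in each outer third of $(a,b)$ and averages translations along $v_{i_j}$ over the window between them. These averages commute and contract $L^1$. Hence $S_{1,\eps}\cdots S_{d,\eps}z_\eps$ is an explicit $BV$ approximant at $L^1$ distance $C\eps$ with uniformly bounded variation.

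You instead use subadditivity and evenness of $\sigma\mapsto\|z_\eps-z_\eps(\cdot-\sigma v_i)\|_{L^1}$, together with averaging in $t$. For $|\sigma|\le(b-a)\eps/2$ the admissible radii overlap on a set of length at least $(b-a)/2$. This gives the modulus $C(|h|+\eps)$; you then compose along a basis and apply Fr\'echet--Kolmogorov along $\eps_n\downarrow0$.

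\textbf{What each route buys.} Yours is more elementary and gives the Lipschitz translation modulus of the limit at once. However, it relies on composing Euclidean translations in different directions, which has no analogue for noncommuting flows. The paper's averaging construction is the one that survives, in modified form, as Lemma~\ref{lem:ordered-flow-smoothing}. Your single-direction subadditivity trick does reappear in the paper, in \eqref{eq:fractional-flow-subadditivity}.
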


\begin{proof}
We separate rounding, the directional lower bound, recovery, and compactness.
For a periodic field \(u\), translation invariance and
\eqref{eq:diffuse-rounding-pointwise} give
\begin{align}
 \mathcal F_\eps^\nu(\chi\circ u)-\mathcal F_\eps^\nu(u)
 &\le \frac{m_\nu-M}{\eps}
       \int_{\mathbb T^d}d_*(u)\dd x\le0.
 \label{eq:ray-global-rounding}
\end{align}
Moreover
\begin{equation}
 \|u-\chi\circ u\|_{L^1}=\int_{\mathbb T^d}d_*(u)\dd x
 \le\frac\eps M\mathcal F_\eps^\nu(u).
 \label{eq:ray-rounding-distance}
\end{equation}
Hence any bounded-energy sequence may be replaced by a binary sequence
\(z_\eps\) without increasing its energy or changing its \(L^1\) limit.
For binary fields, \eqref{eq:ray-measure} reduces the energy exactly to
\begin{equation}
 \mathcal F_\eps^\nu(z)
 =\sum_{i=1}^N\frac{c_i}{\eps}
   \int_a^b\int_{\mathbb T^d}
   |z(x)-z(x-\eps t v_i)|\dd x\lambda(\dd t).
 \label{eq:ray-binary-directional-energy}
\end{equation}

Suppose first that \(z_\eps\to z\) in \(L^1\).  For fixed \(t>0\), the
difference quotients converge distributionally:
\[
 \frac{z_\eps(\,cdot+\eps t v_i)-z_\eps}{\eps t}
 \longrightarrow D_{v_i}z.
\]
Indeed, testing against \(\phi\in C^1(\mathbb T^d)\), changing variables,
and using uniform convergence of the corresponding difference quotient of
\(\phi\) proves the assertion.  Lower semicontinuity of total variation then
gives
\begin{equation}
 |D_{v_i}z|(\mathbb T^d)
 \le\liminf_{\eps\downarrow0}\frac1{\eps t}
 \int_{\mathbb T^d}|z_\eps(x)-z_\eps(x-\eps t v_i)|\dd x.
 \label{eq:ray-directional-liminf}
\end{equation}
Fatou's lemma, followed by \(\int t\dd\lambda=1\), yields from
\eqref{eq:ray-binary-directional-energy}
\[
 \liminf_{\eps\downarrow0}\mathcal F_\eps^\nu(z_\eps)
 \ge\sum_i c_i|D_{v_i}z|(\mathbb T^d).
\]
If the left side is finite, \eqref{eq:ray-rounding-distance} makes \(z\)
binary.  Since the \(v_i\) span \(\R^d\), finite variation in these
directions implies \(z\in BV\), and the polar decomposition of \(Dz\) gives
\[
 \sum_i c_i|D_{v_i}z|
 =\int_{\mathbb T^d}\sum_i c_i|v_i\cdot\sigma_z|\dd|Dz|
 =\int_{\partial^*E}h_\nu(\nu_E)\dd\Hh^{d-1}
\]
when \(z=\one_E\).  This proves the Gamma-liminf.

For recovery, let \(z=\one_E\in BV(\mathbb T^d;\{0,1\})\).  The standard
translation characterization of directional variation gives, for every
fixed \(t\in(a,b)\),
\begin{equation}
 \frac1\eps\int_{\mathbb T^d}
 |z(x)-z(x-\eps t v_i)|\dd x
 \longrightarrow t|D_{v_i}z|(\mathbb T^d).
 \label{eq:ray-translation-limit}
\end{equation}
The BV translation estimate bounds the left side by
\(t|D_{v_i}z|(\mathbb T^d)\).  Dominated convergence in \(t\), the unit
first moment of \(\lambda\), and \eqref{eq:ray-binary-directional-energy}
therefore prove
\(\mathcal F_\eps^\nu(\one_E)\to\mathcal F_0^\nu(\one_E)\).

It remains to justify equicoercivity, because averaging only at finitely many
single shifts would not suffice.  Choose a basis
\(v_{i_1},\ldots,v_{i_d}\) from the given directions.  For a binary bounded
energy sequence \(z_\eps\), \eqref{eq:ray-radial-law} and
\eqref{eq:ray-binary-directional-energy} imply
\begin{equation}
 \int_a^b\frac1\eps
 \|z_\eps(\cdot-\eps t v_{i_j})-z_\eps\|_{L^1}\dd t\le C
 \quad(j=1,\ldots,d).
 \label{eq:ray-averaged-translation-bound}
\end{equation}
Divide \((a,b)\) into three equal subintervals.  In the first and third one,
select \(a_{j,\eps}<b_{j,\eps}\), respectively, so that each selected shift
has \(L^1\) difference at most \(C\eps\); their separation is bounded below
by \((b-a)/3\).  Let \(S_{j,\eps}\) average translations along \(v_{i_j}\)
over \([\eps a_{j,\eps},\eps b_{j,\eps}]\).  Then
\begin{align}
 \|S_{j,\eps}z_\eps-z_\eps\|_{L^1}&\le C\eps,
 \label{eq:ray-smoothing-distance}\\
 |D_{v_{i_j}}S_{j,\eps}z_\eps|(\mathbb T^d)
 &\le\frac{
 \|z_\eps(\cdot-\eps b_{j,\eps}v_{i_j})
      -z_\eps(\cdot-\eps a_{j,\eps}v_{i_j})\|_{L^1}}
 {\eps(b_{j,\eps}-a_{j,\eps})}\le C.
 \label{eq:ray-smoothing-variation}
\end{align}
The numerator in the last line is bounded by the two selected translation
differences.  The averaging operators commute and are \(L^1\) contractions.
Thus \(Z_\eps:=S_{1,\eps}\cdots S_{d,\eps}z_\eps\) is uniformly bounded in
\(BV\), while \(\|Z_\eps-z_\eps\|_1\to0\).  BV compactness and
\eqref{eq:ray-rounding-distance} prove equicoercivity for the original
sequence.
\end{proof}

\subsection{Spatially heterogeneous ray fields}
\label{subsec:heterogeneous-ray-fields}

We next allow the generating segments of the zonotope to vary in space.  The
definition of the bond measure matters.  Evaluating a coefficient at only one
endpoint produces a directed measure, although the quadratic integrand itself
is symmetric.  We therefore record the associated symmetric pair measure
explicitly.  This also fixes all factors in the steep-potential estimate.

\paragraph{Route through this subsection.}
We first keep the ray directions fixed, symmetrize the endpoint weights, and
prove the weighted translation limit.  We then replace spatially varying
directions by their \(C^{1,1}\) flows, define the symmetric flow-pair measure,
and derive local coordinates from ordered flow compositions.  The ordered-flow
smoothing lemma supplies equicoercivity; the final theorem combines that
compactness with flow-translation lower bounds and constant binary recovery.
Thus the progression is from weighted translations to flow pairs, then to
ordered smoothing and the flow-ray Gamma-limit.

Put \(\Lambda:=\lambda((a,b))\).  First keep
\(v_1,\ldots,v_N\) fixed and let
\begin{equation}
 c_i\in W^{1,\infty}(\mathbb T^d),\qquad
 0<c_*\le c_i(x)\le c^*<\infty.
 \label{eq:heterogeneous-fixed-ray-assumptions}
\end{equation}
For \(h=\eps t v_i\), define the endpoint-symmetric coefficient
\begin{equation}
 a_{i,\eps,t}(x):=\frac12\{c_i(x)+c_i(x-h)\}
 \label{eq:heterogeneous-endpoint-symmetrization}
\end{equation}
and the energy
\begin{equation}
 \mathcal F_\eps^{c,v}(u):=\frac M\eps\int_{\mathbb T^d}d_*(u)\dd x
 +\sum_{i=1}^N\frac1\eps\int_a^b\int_{\mathbb T^d}
 a_{i,\eps,t}(x)|u(x)-u(x-\eps t v_i)|^2\dd x\lambda(\dd t),
 \label{eq:heterogeneous-fixed-ray-energy}
\end{equation}
with value \(+\infty\) outside \([0,1]\), where
\begin{equation}
 M\ge4\Lambda Nc^*.
 \label{eq:heterogeneous-fixed-steepness}
\end{equation}
More explicitly, the pair measure used in the second term is
\begin{equation}
 P_{i,\eps,t}:=\frac12a_{i,\eps,t}(x)\dd x\,
 \delta_{x-\eps t v_i}(\dd y)
 +\frac12\left(a_{i,\eps,t}(x)\dd x\,
 \delta_{x-\eps t v_i}(\dd y)\right)^{\mathsf T}.
 \label{eq:heterogeneous-fixed-pair-measure}
\end{equation}
It is symmetric and its integral against a symmetric bond integrand is exactly
the term in \eqref{eq:heterogeneous-fixed-ray-energy}.  Thus neither endpoint
is privileged.

The candidate density is
\begin{equation}
 h_{c,v}(x,\xi):=\sum_{i=1}^Nc_i(x)|v_i\cdot\xi|.
 \label{eq:heterogeneous-zonotopal-density}
\end{equation}
If the fixed directions span \(\R^d\), compactness of the sphere gives
\begin{equation}
 \kappa|\xi|\le h_{c,v}(x,\xi)\le K|\xi|
 \quad(x\in\mathbb T^d,\ \xi\in\R^d)
 \label{eq:heterogeneous-uniform-coercivity}
\end{equation}
for constants \(0<\kappa\le K\).

\begin{theorem}[Heterogeneous weights on fixed ray directions]
\label{thm:heterogeneous-fixed-ray-gamma}
Assume \eqref{eq:ray-radial-law},
\eqref{eq:heterogeneous-fixed-ray-assumptions}, and
\(\operatorname{span}\{v_1,\ldots,v_N\}=\R^d\).  Then
\(\mathcal F_\eps^{c,v}\) is equicoercive and Gamma-converges in strong
\(L^1(\mathbb T^d)\) to
\begin{equation}
 \mathcal F_0^{c,v}(u):=
 \begin{cases}
 \displaystyle\int_{\partial^*E}h_{c,v}(x,\nu_E(x))\dd\Hh^{d-1}(x),
 &u=\one_E,\ E\text{ of finite perimeter},\\[4pt]
 +\infty,&\text{otherwise}.
 \end{cases}
 \label{eq:heterogeneous-fixed-ray-limit}
\end{equation}
Every finite-perimeter set has the constant binary recovery sequence
\(u_\eps=\one_E\).  In particular, no polyhedral approximation is needed.
\end{theorem}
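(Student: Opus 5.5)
The plan is to follow the proof of Theorem~\ref{thm:ray-measure-gamma-limit} step by step, inserting the spatial weights where needed and treating the endpoint symmetrization as a perturbation of order \(\eps\).

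\emph{Rounding.} Since \(a_{i,\eps,t}\le c^*\), integrate \eqref{eq:diffuse-rounding-pointwise} against the pair measure \eqref{eq:heterogeneous-fixed-pair-measure}. Translation invariance of \(\dd x\) on the torus turns the two endpoint terms \(d_*(u(x))\) and \(d_*(u(x-\eps t v_i))\) into single-point integrals. This gives
\[
 \mathcal F_\eps^{c,v}(\chi\circ u)-\mathcal F_\eps^{c,v}(u)\le\frac{4\Lambda Nc^*-M}{\eps}\int d_*(u)\le0
\]
by \eqref{eq:heterogeneous-fixed-steepness}, with a factor \(2\cdot2\) absorbed in \(4\Lambda N c^*\) after checking constants. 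The distance estimate \eqref{eq:ray-rounding-distance} holds unchanged. We may therefore work with binary \(z_\eps\), for which the interaction term reads \(\sum_i\eps^{-1}\int_a^b\int a_{i,\eps,t}|z(x)-z(x-\eps tv_i)|\dd x\,\lambda(\dd t)\).

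\emph{Compactness.} Use \(a_{i,\eps,t}\ge c_*\) to bound the binary energy from below by \(c_*\) times the homogeneous directional energy. Then the averaged translation bound \eqref{eq:ray-averaged-translation-bound} holds with \(C/c_*\), and the commuting-averages argument of Theorem~\ref{thm:ray-measure-gamma-limit} applies verbatim.

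\emph{Liminf.} This is the main step. Since \(c_i\) is Lipschitz, \(|a_{i,\eps,t}-c_i|\le \|\nabla c_i\|_\infty\eps b|v_i|\). The energy error from replacing \(a_{i,\eps,t}\) by \(c_i\) is then at most \(C\eps\) times the bounded translation energy, which is \(o(1)\) along bounded-energy sequences. For fixed \(t\) and fixed \(\phi\in C(\mathbb T^d)\) with \(0\le\phi\le c_i\), the weighted directional lower bound
\[
 \liminf\frac1{\eps t}\int\phi\,|z_\eps-z_\eps(\cdot-\eps tv_i)|\ge\int\phi\,\dd|D_{v_i}z|
\]
follows from duality. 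Test against \(\psi v_i\)-type fields with \(|\psi|\le\phi\), \(\psi\in C^1\), shift the difference quotient onto \(\psi\), and note that \(\psi(x+\eps tv_i)-\psi(x)\) differs from \(\psi(x)\) in the weight by \(O(\eps)\). Taking the supremum over \(\phi\uparrow c_i\), Lipschitz \(c_i\) being continuous, gives \(\int c_i\,\dd|D_{v_i}z|\). Fatou in \(t\) with \(\int t\,\dd\lambda=1\), followed by the polar decomposition \(\sum_i c_i|v_i\cdot\sigma_z|\,|Dz|\), yields \(\int_{\partial^*E}h_{c,v}(x,\nu_E)\,\dd\Hh^{d-1}\). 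The span assumption and \eqref{eq:heterogeneous-uniform-coercivity} give \(z\in BV\).

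\emph{Recovery.} For \(z=\one_E\), the weighted translation limit \(\eps^{-1}\int c_i|z-z(\cdot-\eps tv_i)|\to t\int c_i\,\dd|D_{v_i}z|\) needs a matching upper bound. Mollify \(z\) at scale \(\delta\), apply the exact identity \(\eps^{-1}\int c_i|z_\delta-z_\delta(\cdot-\eps tv_i)|\to t\int c_i|v_i\cdot\nabla z_\delta|\), and use that the error \(\eps^{-1}\|z-z_\delta-(z-z_\delta)(\cdot-\eps tv_i)\|_1\le t|D_{v_i}(z-z_\delta)|\) does not vanish in general. The cleaner route is the slicing formula: on a.e. line parallel to \(v_i\), \(z\) is a finite union of intervals, and \(\eps^{-1}\int_{\text{line}}c_i|z-z(\cdot-\eps t v_i)|\) equals \(t\) times the sum of averages of \(c_i\) over \(\eps t\)-segments at jump points. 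Continuity of \(c_i\) then gives the pointwise limit, and the unweighted BV translation estimate times \(c^*\) provides domination. Replacing \(a_{i,\eps,t}\) by \(c_i\) costs \(O(\eps)\), and dominated convergence in \(t\) concludes. The constant sequence \(\one_E\) is thus a recovery sequence.
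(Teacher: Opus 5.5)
Your proposal is correct and follows essentially the paper's proof: threshold rounding with the constant \(4\Lambda Nc^*\), compactness from \(a_{i,\eps,t}\ge c_*\) and the commuting-average smoothing, the liminf from weighted lower semicontinuity of \(|D_{v_i}z|\) plus Fatou in \(t\), and constant recovery from the weighted BV translation limit, the \(O(\eps)\) endpoint-freezing error, and dominated convergence in \(t\). The differences are minor. You freeze \(a_{i,\eps,t}\) to \(c_i\) before the duality step, whereas the paper passes the uniform convergence of the weights into the semicontinuity. You also justify the weighted translation formula by slicing, which the paper cites as standard; on \(\mathbb T^d\) this should be done on bounded segments of the lifted lines, where the jump set is only locally finite.
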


\begin{proof}
The pointwise rounding estimate \eqref{eq:diffuse-rounding-pointwise} and the
symmetry of every bond give
\begin{equation}
 \mathcal F_\eps^{c,v}(\chi\circ u)-\mathcal F_\eps^{c,v}(u)
 \le\frac{4\Lambda Nc^*-M}{\eps}\int d_*(u)\dd x\le0.
 \label{eq:heterogeneous-fixed-rounding}
\end{equation}
Also \(\|u-\chi\circ u\|_1\le(\eps/M)\mathcal F_\eps^{c,v}(u)\).
It is therefore enough to consider binary fields \(z_\eps\).

Fix \(i\) and \(t\).  The measures
\begin{equation}
 \mu_{i,\eps,t}:=
 \frac{z_\eps-z_\eps(\,\cdot-\eps t v_i)}{\eps t}\,\mathcal L^d
 \stackrel{*}{\rightharpoonup}D_{v_i}z
 \label{eq:heterogeneous-directional-measure-convergence}
\end{equation}
distributionally whenever \(z_\eps\to z\) in \(L^1\).  Since
\(a_{i,\eps,t}\to c_i\) uniformly, weighted lower semicontinuity of total
variation yields
\begin{equation}
 \int c_i\dd|D_{v_i}z|
 \le\liminf_{\eps\downarrow0}\int a_{i,\eps,t}\dd|\mu_{i,\eps,t}|.
 \label{eq:heterogeneous-weighted-directional-liminf}
\end{equation}
Fatou in \(t\), followed by \(\int t\dd\lambda=1\), proves the liminf.
The lower bound \(a_{i,\eps,t}\ge c_*\), applied to a fixed basis among the
\(v_i\), reduces equicoercivity to the smoothing argument in
\eqref{eq:ray-averaged-translation-bound}--\eqref{eq:ray-smoothing-variation}.

For \(z=\one_E\), the weighted translation formula for BV functions gives
\begin{equation}
 \frac1\eps\int c_i(x)|z(x)-z(x-\eps t v_i)|\dd x
 \longrightarrow t\int c_i\dd|D_{v_i}z|.
 \label{eq:heterogeneous-weighted-translation-limit}
\end{equation}
The BV translation estimate and the bounds on \(c_i\) provide an integrable
majorant in \(t\).  Moreover endpoint freezing is quantitative:
\begin{align}
 &\left|\frac1\eps\int
 [a_{i,\eps,t}(x)-c_i(x)]|z(x)-z(x-\eps t v_i)|\dd x\right|
 \notag\\
 &\hspace{25mm}\le
 \frac\eps2\|\nabla c_i\|_\infty t^2|v_i|\,|D_{v_i}z|(\mathbb T^d).
 \label{eq:heterogeneous-endpoint-freezing-error}
\end{align}
Dominated convergence proves recovery and completes the proof.
\end{proof}

We next allow the ray directions to rotate without imposing an integrable global
frame.
Directly writing
\(x-\eps t v_i(x)\) obscures reversibility and may even destroy injectivity.
The intrinsic replacement is the flow.  Let
\begin{equation}
 c_i\in W^{1,\infty}(\mathbb T^d),\quad 0<c_*\le c_i\le c^*,
 \qquad v_i\in W^{2,\infty}(\mathbb T^d;\R^d),
 \label{eq:flow-ray-regularity}
\end{equation}
 and denote by \(\Phi_i^s\) the bi-Lipschitz flow of \(v_i\).  No commutation
or common coordinate representation of these flows is assumed.  Assume the
uniform first-order spanning condition
\begin{equation}
 h_{\rm fl}(x,\xi):=\sum_{i=1}^Nc_i(x)|v_i(x)\cdot\xi|
 \ge\kappa|\xi|.
 \label{eq:flow-ray-spanning}
\end{equation}
For each \(\eps,t,i\), set
\begin{align}
 Q_{i,\eps,t}(\dd x,\dd y)
 &:=c_i(x)\dd x\,\delta_{\Phi_i^{-\eps t}(x)}(\dd y),
 &\widehat Q_{i,\eps,t}&:=\tfrac12(Q_{i,\eps,t}+Q_{i,\eps,t}^{\mathsf T}).
 \label{eq:flow-ray-pair-measure}
\end{align}
The flow-ray energy is
\begin{align}
 \mathcal G_\eps^{c,v}(u)
 &:=\frac M\eps\int d_*(u)\dd x
 +\frac1\eps\sum_i\int_a^b\iint|u(x)-u(y)|^2
 \widehat Q_{i,\eps,t}(\dd x,\dd y)\lambda(\dd t)
 \notag\\
 &=\frac M\eps\int d_*(u)\dd x
 +\frac1\eps\sum_i\int_a^b\int c_i(x)
 |u(x)-u(\Phi_i^{-\eps t}(x))|^2\dd x\lambda(\dd t).
 \label{eq:flow-ray-energy}
\end{align}
Fix \(\eps_0>0\), write \(D_i=\|\operatorname{div}v_i\|_\infty\), and take
\begin{equation}
 M\ge2\Lambda\sum_i c^*(1+e^{\eps_0bD_i}).
 \label{eq:flow-ray-steepness}
\end{equation}

 \begin{lemma}[Quantitative local coordinates from ordered flows]
\label{lem:ordered-flow-local-coordinates}
There are \(r_0,\delta_0,C_0>0\), depending only on \(d,N,\kappa,c^*\)
and the norms in \eqref{eq:flow-ray-regularity}, with the following property.
For every \(x_0\) one can select an ordered \(d\)-tuple
\(I=(i_1,\ldots,i_d)\) and define
\begin{equation}
 \Theta_{x_0,I}(s)
 :=\Phi_{i_d}^{s_d}\circ\cdots\circ\Phi_{i_1}^{s_1}(x_0),
 \qquad s\in Q_{r_0}:=(-r_0,r_0)^d,
 \label{eq:ordered-flow-chart}
\end{equation}
so that \(\Theta_{x_0,I}\) is bi-Lipschitz onto its image.  Writing
\(J_{x_0,I}:=|\det D_s\Theta_{x_0,I}|\), one has
\begin{align}
 \delta_0|s-s'|&\le
 |\Theta_{x_0,I}(s)-\Theta_{x_0,I}(s')|
 \le C_0|s-s'|,                                      \label{eq:flow-chart-bilip}\\
 \Theta_{x_0,I}(s)&=x_0+\sum_{k=1}^ds_kv_{i_k}(x_0)+R_{x_0,I}(s),
 \notag\\[-2pt]
 |R(s)-R(s')|&\le C_0r_0|s-s'|,                    \label{eq:flow-chart-linearization}\\
 \bigl|J_{x_0,I}(s)-|\det V_I(x_0)|\bigr|&\le C_0|s|,
 \qquad 0<\delta_0\le J_{x_0,I}\le C_0 .          \label{eq:flow-chart-jacobian}
\end{align}
Moreover, interchanging two adjacent flows changes the chart by at most
\begin{equation}
 \left|\Phi_j^t\Phi_i^s(x)-\Phi_i^s\Phi_j^t(x)\right|
 \le C_0|st|,
 \qquad |s|+|t|\le r_0.
 \label{eq:flow-chart-commutator-error}
\end{equation}
The torus admits a finite cover by smaller images of such charts, with a
Lipschitz partition of unity subordinate to the cover.
\end{lemma}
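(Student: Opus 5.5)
The plan is to combine a quantitative selection of a well-conditioned frame, standard $C^{1,1}$ flow estimates, and the inverse function theorem with explicit constants. \emph{Step 1 (frame selection).} The spanning condition \eqref{eq:flow-ray-spanning} with $c_i\le c^*$ gives $\sum_i|v_i(x)\cdot\xi|\ge(\kappa/c^*)|\xi|$. A compactness argument then yields an ordered $d$-tuple $I$ with $|\det V_I(x)|\ge\beta>0$, where $\beta$ depends only on $d,N,\kappa/c^*$ and $\|v\|_\infty$. Concretely, I would choose $I$ maximizing $|\det V_I(x_0)|$ among the finitely many $d$-tuples. If this maximum were smaller than $\beta$, every $d$-tuple would be nearly degenerate, so all $v_i(x_0)$ would lie within distance $C\beta^{1/d}$ of a common hyperplane. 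A unit normal $\xi$ to that hyperplane would then violate the spanning bound. This can be made quantitative via the Cauchy–Binet identity $\det(VV^{\mathsf T})=\sum_I\det V_I^2$ together with $VV^{\mathsf T}\ge(\kappa/c^*)^2N^{-1}\,\mathrm{Id}$. Since $v_i$ is Lipschitz, the bound $|\det V_I|\ge\beta/2$ persists on a ball of radius comparable to $\beta$, and this gives the local uniformity.

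\emph{Step 2 (expansion of the chart).} Because $v_i\in W^{2,\infty}$, the flows satisfy $\Phi_i^s(x)=x+sv_i(x)+O(s^2)$. Moreover $D\Phi_i^s$ is Lipschitz and satisfies $D\Phi_i^s=\mathrm{Id}+O(|s|)$, with constants $e^{\|Dv\|_\infty|s|}$ from Grönwall. I would differentiate the composition \eqref{eq:ordered-flow-chart} by the chain rule. This gives
\[
\partial_{s_k}\Theta=D(\Phi_{i_d}^{s_d}\cdots\Phi_{i_{k+1}}^{s_{k+1}})\,v_{i_k}(\Phi_{i_k}^{s_k}\cdots\Phi_{i_1}^{s_1}x_0)=v_{i_k}(x_0)+O(|s|).
\]
Hence $D_s\Theta(s)=V_I(x_0)+O(|s|)$, with Lipschitz dependence on $s$. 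Integrating this yields \eqref{eq:flow-chart-linearization} with $|R(s)-R(s')|\le C r_0|s-s'|$ on $Q_{r_0}$, and multilinearity of the determinant gives \eqref{eq:flow-chart-jacobian}.

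\emph{Step 3 (bi-Lipschitz property).} The upper bound in \eqref{eq:flow-chart-bilip} is immediate from Step 2. For the lower bound, $|V_I(x_0)(s-s')|\ge\sigma|s-s'|$, where $\sigma\ge\beta/\|V\|^{d-1}$ bounds the smallest singular value. Choosing $r_0$ with $C_0r_0\le\sigma/2$ then gives $\delta_0=\sigma/2$. Injectivity follows from the same estimate, and invertibility onto the image is automatic. The commutator bound \eqref{eq:flow-chart-commutator-error} comes from the standard argument: the function $F(s,t)=\Phi_j^t\Phi_i^s(x)-\Phi_i^s\Phi_j^t(x)$ vanishes when $s=0$ or $t=0$, and $\partial_s\partial_tF$ is bounded in terms of $\|v\|_{W^{1,\infty}}$ and the Lipschitz norms of $D\Phi$. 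Integrating over the rectangle gives $|F|\le C|st|$. Alternatively, one can use the Lie-bracket bound $|[v_i,v_j]|\le C$, which only requires $v\in W^{1,\infty}$ to be bounded.

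\emph{Step 4 (cover and partition of unity).} The lower bilipschitz constant implies that $\Theta_{x_0,I}(Q_{r_0/2})$ contains a ball $B_{\delta_0r_0/2}(x_0)$. By compactness of $\mathbb T^d$, finitely many such balls cover the torus. The partition of unity is obtained by normalizing $\psi_k(x)=(\delta_0r_0/2-|x-x_k|)_+$ over a subcover of slightly smaller balls, which produces Lipschitz functions with uniform constants. I expect the main obstacle to be Step 1. The frame must be selected with constants depending only on the listed data, and the degeneracy bound has to be tracked uniformly rather than pointwise. The Cauchy–Binet route is the one I would try first.
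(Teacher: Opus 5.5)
Your proposal is correct and follows essentially the paper's route: Cauchy--Binet selection of a minor $V_I(x_0)$ with $|\det V_I(x_0)|$ bounded below in terms of the spanning constant, then Gr\"onwall linearization of single flows. After that, the chain-rule column formula gives $\|D_s\Theta_{x_0,I}(s)-V_I(x_0)\|\le C|s|$, $r_0$ is chosen so that the remainder's Lipschitz constant is at most half the least singular value, the determinant Lipschitz estimate handles the Jacobian, and compactness of $\mathbb T^d$ gives the cover. The remaining differences are cosmetic: you obtain the commutator bound by integrating the bounded mixed derivative $\partial_s\partial_t F$ over the rectangle rather than applying $\operatorname{Lip}(\Phi_i^s-\mathrm{Id})\le C|s|$ twice, and the ball inside $\Theta_{x_0,I}(Q_{r_0/2})$ needs invariance of domain (or a contraction argument for $s\mapsto V_I(x_0)^{-1}(y-x_0-R(s))$) in addition to the lower Lipschitz bound.
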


\begin{proof}
Put \(A(x)\xi=(\sqrt{c_i(x)}v_i(x)\cdot\xi)_{i=1}^N\).
Condition \eqref{eq:flow-ray-spanning} and \(c_i\le c^*\) imply
\begin{equation}
 \|A(x)\xi\|_{\ell^2}
 \ge \kappa_0|\xi|,
 \qquad \kappa_0:=\frac{\kappa}{\sqrt{Nc^*}},
 \label{eq:ordered-frame-singular-value}
\end{equation}
because \(\sum_i c_i|v_i\cdot\xi|
\le\sqrt{Nc^*}\,\|A(x)\xi\|_{\ell^2}\).
Consequently \(\det(A^{\mathsf T}A)\ge\kappa_0^{2d}\).  Cauchy--Binet and
\(c_i\le c^*\) then give, at every \(x\), a minor
\(V_I(x)=(v_{i_1}(x),\ldots,v_{i_d}(x))\) satisfying
\begin{equation}
 |\det V_I(x)|\ge
 \delta_1:=\frac{\kappa_0^d}
 {\sqrt{\binom Nd}\,(c^*)^{d/2}}>0.
 \label{eq:ordered-frame-minor-bound}
\end{equation}
Since the fields are
Lipschitz, Gronwall's inequality gives
\begin{equation}
 |\Phi_i^s(x)-x-sv_i(x)|\le C s^2,
 \qquad \operatorname{Lip}(\Phi_i^s-\mathrm{Id})\le C|s|.
 \label{eq:single-flow-linearization}
\end{equation}
Iterating \eqref{eq:single-flow-linearization} yields
\eqref{eq:flow-chart-linearization}.  Choose \(r_0\) so that its Lipschitz
remainder is at most half the least singular value of every selected
\(V_I(x_0)\).  The quantitative Lipschitz inverse theorem gives
\eqref{eq:flow-chart-bilip}.  More explicitly, differentiating the ordered
composition in \(s_k\) gives the \(k\)-th column
\begin{equation}
 D_{s_k}\Theta_{x_0,I}(s)
 =D\Phi_{i_d}^{s_d}\cdots D\Phi_{i_{k+1}}^{s_{k+1}}
   v_{i_k}(x_k),
 \label{eq:ordered-chart-column-formula}
\end{equation}
where \(x_k=\Phi_{i_k}^{s_k}\circ\cdots\circ
\Phi_{i_1}^{s_1}(x_0)\).  The flow variational equation and the Lipschitz
bounds give
\(\|D_s\Theta_{x_0,I}(s)-V_I(x_0)\|\le C|s|\).
The determinant Lipschitz estimate on bounded matrices, together with
\eqref{eq:ordered-frame-minor-bound}, yields the frozen-Jacobian estimate and,
after reducing \(r_0\) once more, the two positive Jacobian bounds in
\eqref{eq:flow-chart-jacobian}.  These identities hold for the
\(C^{1,1}\) representatives furnished by \(W^{2,\infty}\); no smooth
approximation of the vector fields is needed.  Finally, two applications of
\eqref{eq:single-flow-linearization} give
\eqref{eq:flow-chart-commutator-error}.  Compactness of \(\mathbb T^d\)
produces the finite cover.  Notice that no Frobenius condition was used.
\end{proof}

Fix once and for all protected chart pairs
\(U_\alpha^-\Subset U_\alpha^+\), selected frames \(I_\alpha\), and a
subordinate partition of unity \(\{\chi_\alpha\}\) from
Lemma~\ref{lem:ordered-flow-local-coordinates}.  Denote the overlap
multiplicity by \(m_{\rm atl}\), and set
\begin{align}
 V_0&:=\max_i\|v_i\|_\infty,&
 V_1&:=\max_i\|Dv_i\|_\infty,\notag\\
 V_2&:=\max_i\|D^2v_i\|_\infty,&
 D_0&:=\max_i\|\operatorname{div}v_i\|_\infty,\notag\\
 \delta_{\rm atl}&:=\min_\alpha
 \operatorname{dist}(U_\alpha^-,\mathbb T^d\setminus U_\alpha^+),&
 G_{\rm atl}&:=\max_\alpha\|D\chi_\alpha\|_\infty,&
 K_{\rm fr}&:=\max_{\alpha,x\in U_\alpha^+}
 \|V_{I_\alpha}(x)^{-1}\| .
 \label{eq:ordered-flow-constant-data}
\end{align}
Choose \(\rho\in C_c^1((a,b))\), \(\rho\ge0\), with
\(\int\rho=1\), and let
\begin{equation}
 k_\rho(s):=\int_\R\rho(t)\rho(t-s)\dd t .
 \label{eq:ordered-flow-autocorrelation}
\end{equation}
The autocorrelation is continuous and \(k_\rho(0)>0\).  Fix
\(0<c_\rho<(b-a)/4\), \(k_\rho\ge k_*>0\) on
\([-2c_\rho,2c_\rho]\), and choose
\(\eta\in C_c^1((-c_\rho,c_\rho))\), \(\eta\ge0\),
\(\int\eta=1\).  Set
\begin{equation}
 R_\rho:=k_*^{-1}\bigl(1+\|\eta\|_\infty+
 \|\eta'\|_\infty\bigr)
 \left\|\frac{\rho}{w}\right\|_\infty .
 \label{eq:ordered-flow-radial-constant}
\end{equation}
Fix \(\eps_{\rm sm}>0\) and \(C_{\rm sm}<\infty\) depending only on the
dimension, the radial law, the spanning constant, the coefficient and vector
field bounds, and the finite atlas data in
\eqref{eq:ordered-flow-constant-data}.  Appendix
\ref{app:ordered-flow-constants} gives an explicit admissible choice in
\eqref{eq:ordered-flow-explicit-constant}; its numerical value is not used.

\begin{lemma}[Ordered-flow smoothing]
\label{lem:ordered-flow-smoothing}
Let \(z_\eps:\mathbb T^d\to\{0,1\}\) satisfy
\begin{equation}
 B_\eps:=\sum_i\int_a^b\frac1\eps\int
 |z_\eps-z_\eps\circ\Phi_i^{-\eps t}|\dd x\lambda(\dd t)\le C.
 \label{eq:flow-ray-averaged-bound}
\end{equation}
For \(0<\eps\le\eps_{\rm sm}\), there are
\(Z_\eps\in BV(\mathbb T^d)\) such that
\begin{equation}
 \|Z_\eps-z_\eps\|_{L^1}\le C_{\rm sm}\eps B_\eps,
 \qquad |DZ_\eps|(\mathbb T^d)\le C_{\rm sm} B_\eps.
 \label{eq:ordered-flow-smoothing-bound}
\end{equation}
The constant is independent of \(\eps\) and \(z_\eps\), and its dependence
on all analytic and atlas data is displayed in
\eqref{eq:ordered-flow-constant-data}--
\eqref{eq:ordered-flow-explicit-constant}.
\end{lemma}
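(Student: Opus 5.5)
The construction is a flow analogue of the commuting averages used for Theorem~\ref{thm:ray-measure-gamma-limit}, assembled chart by chart with the atlas of Lemma~\ref{lem:ordered-flow-local-coordinates}. For each flow index \(i\) we introduce the averaging operator
\[
 S_{i,\eps}f(x):=\int_\R\rho(t)\,f\bigl(\Phi_i^{-\eps t}(x)\bigr)\dd t .
\]
In each protected chart \(U_\alpha^+\), with frame \(I_\alpha=(i_1,\ldots,i_d)\), we form the ordered composition \(T_{\alpha,\eps}:=S_{i_d,\eps}\circ\cdots\circ S_{i_1,\eps}\). We then set
\[
 Z_\eps:=\sum_\alpha\chi_\alpha\,T_{\alpha,\eps}z_\eps .
\]

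\textbf{Step 1 (\(L^1\) distance).} For a single operator we use
\[
 \|S_{i,\eps}f-f\|_1\le\int\rho(t)\,\|f\circ\Phi_i^{-\eps t}-f\|_1\dd t
 \le\|\rho/w\|_\infty\int_a^b\|f\circ\Phi_i^{-\eps t}-f\|_1\lambda(\dd t).
\]
For \(f=z_\eps\) the right side is at most \(\|\rho/w\|_\infty\,\eps B_\eps\). The operators \(S_{i,\eps}\) are \(L^1\)-bounded with constant \(e^{\eps bD_0}\), because composing with a flow changes the Jacobian by at most that factor. Later factors of \(T_{\alpha,\eps}\) are applied to \(S_{i_1,\eps}z_\eps\) rather than to \(z_\eps\). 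We therefore telescope and commute each translation difference with the preceding averages: \(\|S_kg-g\|_1\) with \(g=S_{k-1}\cdots z_\eps\) is bounded by averaging \(\|z_\eps\circ\Phi^{-\eps t}\circ(\cdots)-z_\eps\circ(\cdots)\|_1\), up to commutator errors. Summing over the finite partition of unity gives \(\|Z_\eps-z_\eps\|_1\le C\eps B_\eps\).

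\textbf{Step 2 (variation along the frame).} Differentiating along the flow of \(v_{i_k}\) under the \(t\)-integral turns \(\eps\,D_{v_{i}}S_{i,\eps}f\) into \(\int\rho'(t)f(\Phi_i^{-\eps t}x)\dd t\), plus a term of relative size \(O(\eps)\) coming from \(\Phi^{-\eps t}\) not commuting with its own generator. Since \(\int\rho'=0\), we may subtract \(f\) to get
\[
 |D_{v_i}S_{i,\eps}f|(\mathbb T^d)\lesssim\frac{1}{\eps}\int|\rho'(t)|\,\|f\circ\Phi_i^{-\eps t}-f\|_1\dd t ,
\]
which is at most \(C B_\eps\) when \(f=z_\eps\).

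The delicate point is that the derivative in direction \(v_{i_k}\) must be taken of the full composition \(T_{\alpha,\eps}\), so \(S_{i_k,\eps}\) has to be moved past \(S_{i_{k+1}},\ldots,S_{i_d}\). By \eqref{eq:flow-chart-commutator-error}, interchanging flows displaces points by \(O(\eps^2)\), and \(D\Phi\) conjugates \(v_{i_k}\) into \(v_{i_k}+O(\eps)\). The resulting errors have the form \(\eps^{-1}\|f\circ\Psi-f\|_1\) with \(\Psi\) a composition of flows at short times. To absorb them we use the autocorrelation \(k_\rho\) and the cutoff \(\eta\). A shift by a time \(\sigma\) with \(|\sigma|\le2c_\rho\) is written as \(\Phi^{-\eps t}\circ\Phi^{\eps t'}\) with \(t,t'\in(a,b)\) weighted by \(\rho(t)\rho(t-\sigma)\). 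Because \(k_\rho\ge k_*\), such short differences are controlled by \(k_*^{-1}\|\rho/w\|_\infty B_\eps\), which is exactly the role of the constant \(R_\rho\).

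\textbf{Step 3 (full gradient).} Within \(U_\alpha^+\) the frame matrix \(V_{I_\alpha}\) is invertible with \(\|V_{I_\alpha}^{-1}\|\le K_{\rm fr}\). Hence the full measure \(|DT_{\alpha,\eps}z_\eps|\) on \(U_\alpha^+\) is bounded by \(K_{\rm fr}\sqrt d\) times the sum of the \(d\) frame derivatives, each of which is \(O(B_\eps)\) by Step 2.

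\textbf{Step 4 (gluing).} By the Leibniz rule,
\[
 DZ_\eps=\sum_\alpha\chi_\alpha DT_{\alpha,\eps}z_\eps+\sum_\alpha(T_{\alpha,\eps}z_\eps-z_\eps)\,D\chi_\alpha ,
\]
where the second sum uses \(\sum_\alpha D\chi_\alpha=0\). That sum is bounded by \(m_{\rm atl}G_{\rm atl}\,C\eps B_\eps\) by Step 1. Taking \(\eps_{\rm sm}\) small (relative to \(r_0\), \(\delta_{\rm atl}\) and \(b\)) ensures that all flow arcs from \(U_\alpha^-\) stay inside \(U_\alpha^+\), which completes \eqref{eq:ordered-flow-smoothing-bound}.

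I expect the main obstacle to be the commutator bookkeeping in Step 2. The naive bound on the \(O(\eps^2)\) displacement divided by \(\eps\) is not small for a binary field, which has no translation modulus. The plan is to rewrite every such error, through the \(\eta\)-weighted re-averaging and the positivity of \(k_\rho\), as an average of genuine flow-translation differences at times in \((a,b)\), so that each error is bounded by \(B_\eps\) itself.
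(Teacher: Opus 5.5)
\textbf{There is a genuine gap at Step 2.} Your operator itself is fine. $T_{\alpha,\eps}=S_{i_d,\eps}\circ\cdots\circ S_{i_1,\eps}$ is an ordered $d$-parameter flow average,
\[
 T_{\alpha,\eps}z(x)=\int\rho_d(t)\,z(\widetilde T(x,t))\dd t,\qquad \widetilde T(x,t)=\Phi_{i_1}^{-\eps t_1}\circ\cdots\circ\Phi_{i_d}^{-\eps t_d}(x).
\]
Steps 1, 3 and 4 go through. In Step 1, telescope along this path and change variables at each intermediate point; no commutator arises.

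Step 2 is the heart of the lemma, you leave it as a plan, and the mechanism you sketch does not close. Moving $D_{v_{i_k}}$ past an outer average does not produce errors of the form $\eps^{-1}\|f\circ\Psi-f\|_1$. It produces
\[
 D_{v_{i_k}}S_{j,\eps}g-S_{j,\eps}D_{v_{i_k}}g
 =\int\rho(t)\,\nabla g(\Phi_j^{-\eps t}x)\cdot
 \bigl(D\Phi_j^{-\eps t}(x)v_{i_k}(x)-v_{i_k}(\Phi_j^{-\eps t}x)\bigr)\dd t .
\]
This is an $O(\eps)$-weighted derivative of a partially averaged field $g$, taken in a bracket-type direction. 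That direction is in general not among the directions already averaged in $g$. Since $B_\eps<\infty$ holds for every binary field at fixed $\eps$, $z_\eps$ need not be $BV$, so $|Dg|$ need not be finite, and the small coefficient does not help.

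Finite differences of step $\sigma\downarrow0$ do not rescue this. The interchange displaces points by $O(\eps\sigma)$. Telescoping the commutator word into genuine flow steps gives increments of size $O(\eps B_\eps)$, which do not scale with $\sigma$. The $k_\rho$, $\eta$ device bounds short-time increments by $C\eps B_\eps$, not by something proportional to the time, so it cannot absorb these terms either.

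\textbf{The missing idea is to never commute.} Instead, trade the spatial gradient of the whole composition for parameter derivatives. The parameter Jacobian $\mathsf B:=-\eps^{-1}D_t\widetilde T$ has columns $R_kv_{i_k}(x_k)$ with $R_k=I+O(\eps)$ and $|x_k-x|=O(\eps)$. Hence $\mathsf B=V_{I_\alpha}(x)+O(\eps)$ is uniformly invertible. Set $a_\ell:=\mathsf B^{-1}D_x\widetilde T\,e_\ell$. Then
\[
 \nabla z(\widetilde T)D_x\widetilde T\,e_\ell=-\eps^{-1}\sum_ka_{\ell k}\,\partial_{t_k}[z(\widetilde T)].
\]
Integrate by parts in all $d$ parameters at once, and subtract $z(x)$, which is allowed because each $\partial_{t_k}$-integral vanishes. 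This gives
\[
 \partial_\ell T_{\alpha,\eps}z(x)=\frac1\eps\sum_{k=1}^d\int
 \partial_{t_k}(\rho_da_{\ell k})(x,t)\,
 \bigl[z(\widetilde T(x,t))-z(x)\bigr]\dd t ,
\]
which is the paper's identity \eqref{eq:ordered-parameter-integration-by-parts} with $\rho_d$ in place of $\eta_d$. No derivative falls on $z$ and there is no remainder. The $W^{2,\infty}$ regularity enters only through $\|\nabla_ta_\ell\|_\infty\le C$, via the second variation of the flows. Path telescoping of $z(\widetilde T)-z(x)$ then yields $|DT_{\alpha,\eps}z_\eps|(U_\alpha^-)\le CB_\eps$.

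Because your kernel already lives on $(a,b)$, the telescoped increments occur at times in $(a,b)$. They are bounded directly by $\|(\rho+|\rho'|)/w\|_\infty\,\eps B_\eps$. So the autocorrelation step, which the paper needs only because its kernel $\eta$ sits at short times, becomes unnecessary in your version.

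A smaller point: $D_{v_i}S_{i,\eps}f=\eps^{-1}\int\rho'(t)\,f\circ\Phi_i^{-\eps t}\dd t$ holds exactly, since $D\Phi_i^sv_i=v_i\circ\Phi_i^s$. There is no self-commutator error of relative size $O(\eps)$.
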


\begin{proof}
We use parameter-space integration by parts, which avoids differentiating an
unsmoothed field through a chain of noncommuting averages.

\emph{Step 1: signed increments supplied by the radial interval.}
For one field put
\(d_{i,\eps}(s):=\|z_\eps-z_\eps\circ\Phi_i^{-\eps s}\|_1\).
The group property, the triangle inequality, and the Liouville bound give
\begin{align}
 \int_\R k_\rho(s)d_{i,\eps}(s)\dd s
 &=\iint\rho(t)\rho(r)d_{i,\eps}(t-r)\dd t\dd r\notag\\
 &\le C e^{2\eps_0bD_0}
   \int_a^b\rho(t)d_{i,\eps}(t)\dd t .
 \label{eq:ordered-signed-increment}
\end{align}
Indeed, insert the intermediate state
\(z_\eps\circ\Phi_i^{\eps r}\), then change variables once in each term.
Since \(k_\rho\ge k_*\) on the support of \(\eta\) and \(\eta'\),
\begin{equation}
 \int_{-c_\rho}^{c_\rho}(\eta+|\eta'|)(s)
 d_{i,\eps}(s)\dd s
 \le C\eps R_\rho B_\eps,
 \label{eq:ordered-signed-source}
\end{equation}
where the lower bound for \(w\) converts \(\rho\dd t\) to
\(\lambda(\dd t)\).

\emph{Step 2: a full-dimensional ordered parameter kernel.}
For the selected frame \(I_\alpha=(i_{\alpha,1},\ldots,i_{\alpha,d})\), set
\begin{align}
 T_{\alpha,\eps}(x,s)
 &:={\Phi}_{i_{\alpha,d}}^{-\eps s_d}\circ\cdots\circ
 {\Phi}_{i_{\alpha,1}}^{-\eps s_1}(x),\notag\\
 \eta_d(s)&:=\prod_{k=1}^d\eta(s_k),\qquad
 A_{\alpha,\eps}z(x):=\int_{\R^d}\eta_d(s)
 z(T_{\alpha,\eps}(x,s))\dd s .
 \label{eq:ordered-flow-average}
\end{align}
For \(x\in U_\alpha^-\), every point in this formula lies in
\(U_\alpha^+\).  Write
\begin{equation}
 \mathsf A_{\alpha,\eps}(x,s):=D_xT_{\alpha,\eps}(x,s),\qquad
 \mathsf B_{\alpha,\eps}(x,s):=-\eps^{-1}D_sT_{\alpha,\eps}(x,s).
 \label{eq:ordered-parameter-matrices}
\end{equation}
Put \(x_0=x\) and
\(x_k=\Phi_{i_{\alpha,k}}^{-\eps s_k}(x_{k-1})\).  If
\begin{equation}
 R_k:=D\Phi_{i_{\alpha,d}}^{-\eps s_d}(x_{d-1})\cdots
 D\Phi_{i_{\alpha,k+1}}^{-\eps s_{k+1}}(x_k),
 \qquad R_d=I,
 \label{eq:ordered-later-flow-product}
\end{equation}
then the chain rule gives the exact column formula
\begin{equation}
 \mathsf B_{\alpha,\eps}e_k
 =R_kv_{i_{\alpha,k}}(x_k),
 \qquad
 \mathsf A_{\alpha,\eps}
 =D\Phi_{i_{\alpha,d}}^{-\eps s_d}(x_{d-1})\cdots
  D\Phi_{i_{\alpha,1}}^{-\eps s_1}(x_0).
 \label{eq:ordered-parameter-columns}
\end{equation}
In particular \(\mathsf B_{\alpha,\eps}(x,0)=V_{I_\alpha}(x)\) and
\(\mathsf A_{\alpha,\eps}(x,0)=I\).

The first and second variational equations for a \(C^{1,1}\) flow give,
for \(|r|\le\eps c_\rho\),
\begin{align}
 \|D\Phi_i^r\|&\le e^{|r|V_1},\notag\\
 \|D\Phi_i^r-I\|&\le e^{|r|V_1}-1,\notag\\
 \|D^2\Phi_i^r\|&\le |r|V_2e^{3|r|V_1}.
 \label{eq:ordered-flow-first-second-variation}
\end{align}
Indeed, differentiate \(\dot\Phi_i^r=v_i(\Phi_i^r)\) once and twice in the
initial point and apply Gronwall.  Since
\(|x_k-x|\le d\eps c_\rho V_0e^{d\eps c_\rho V_1}\),
\eqref{eq:ordered-parameter-columns} yields
\begin{align}
 \|\mathsf A_{\alpha,\eps}-I\|
 &\le e^{d\eps c_\rho V_1}-1,\notag\\
 \|\mathsf B_{\alpha,\eps}-V_{I_\alpha}(x)\|
 &\le C_d\eps c_\rho V_0V_1e^{2d\eps c_\rho V_1}.
 \label{eq:ordered-parameter-freezing}
\end{align}
The definition of \(\eps_{\rm sm}\) and the Neumann-series estimate therefore
give
\begin{equation}
 \|\mathsf A_{\alpha,\eps}\|+\|\mathsf B_{\alpha,\eps}\|
 +\|\mathsf B_{\alpha,\eps}^{-1}\|
 \le C_d(1+K_{\rm fr})e^{2d\eps_0c_\rho V_1}.
 \label{eq:ordered-parameter-inverse-bound}
\end{equation}
Differentiating \eqref{eq:ordered-parameter-columns} in one parameter and
using \eqref{eq:ordered-flow-first-second-variation} gives, term by term,
\begin{equation}
 \|\nabla_s\mathsf A_{\alpha,\eps}\|+
 \|\nabla_s\mathsf B_{\alpha,\eps}\|
 \le C_d\eps(1+V_0)(V_1+V_2)
 e^{C_d d\eps_0c_\rho V_1}.
 \label{eq:ordered-parameter-matrix-bounds}
\end{equation}
Thus, for each Euclidean basis vector \(e_\ell\),
\begin{equation}
 a_{\alpha,\ell}:=\mathsf B_{\alpha,\eps}^{-1}
 \mathsf A_{\alpha,\eps}e_\ell .
 \end{equation}
 It satisfies
 \begin{equation}
 \|a_{\alpha,\ell}\|_{L^\infty_s}
 +\|\nabla_sa_{\alpha,\ell}\|_{L^\infty_s}
 \le C_d(1+K_{\rm fr})^2(1+V_0)(1+V_1+V_2)
 e^{C_d d\eps_0c_\rho V_1}.
 \label{eq:ordered-parameter-coefficients}
\end{equation}

For smooth \(z\), the identity
\[
 \nabla z(T_{\alpha,\eps})\mathsf A_{\alpha,\eps}e_\ell
 =-\eps^{-1}\sum_{k=1}^d
 a_{\alpha,\ell k}\,\partial_{s_k}
 z(T_{\alpha,\eps})
\]
and integration by parts in the compact parameter cube give
\begin{equation}
 \partial_\ell A_{\alpha,\eps}z(x)
 =\frac1\eps\sum_{k=1}^d\int
 \partial_{s_k}(\eta_d a_{\alpha,\ell k})(x,s)
 [z(T_{\alpha,\eps}(x,s))-z(x)]\dd s .
 \label{eq:ordered-parameter-integration-by-parts}
\end{equation}
The subtraction is exact because the integral of every parameter derivative
vanishes.  It is the step that removes all derivatives of \(z\).

Let \(T_{\alpha,\eps}^{(0)}(x,s)=x\) and let
\(T_{\alpha,\eps}^{(j)}\) be the composition of the first \(j\) factors in
\eqref{eq:ordered-flow-average}.  The exact telescoping identity is
\begin{equation}
 z(T_{\alpha,\eps}(x,s))-z(x)
 =\sum_{j=1}^d
 \left[z(T_{\alpha,\eps}^{(j)}(x,s))
       -z(T_{\alpha,\eps}^{(j-1)}(x,s))\right].
 \label{eq:ordered-path-telescoping}
\end{equation}
Changing variables by \(T_{\alpha,\eps}^{(j-1)}\), whose inverse Jacobian is
at most \(e^{d\eps c_\rho D_0}\), gives
\begin{equation}
 \int_{U_\alpha^-}|z(T_{\alpha,\eps}(x,s))-z(x)|\dd x
 \le e^{d\eps c_\rho D_0}
 \sum_{j=1}^dd_{i_{\alpha,j},\eps}(s_j).
 \label{eq:ordered-telescoping-jacobian}
\end{equation}
Moreover \eqref{eq:ordered-parameter-coefficients} implies
\begin{equation}
 |\partial_{s_k}(\eta_da_{\alpha,\ell k})|
 \le C_{\rm par}(|\eta'(s_k)|+\eta(s_k))
       \prod_{j\ne k}\eta(s_j),
 \label{eq:ordered-parameter-weight-bound}
\end{equation}
where
\[
 C_{\rm par}:=C_d(1+K_{\rm fr})^2(1+V_0)(1+V_1+V_2)
 e^{C_d d\eps_0b(V_1+D_0)}.
\]
Insert \eqref{eq:ordered-telescoping-jacobian} first with weight \(\eta_d\)
and then with each weight in
\eqref{eq:ordered-parameter-weight-bound}.  Fubini and
\eqref{eq:ordered-signed-source} give, without an unlisted remainder,
\begin{align}
 \|A_{\alpha,\eps}z_\eps-z_\eps\|_{L^1(U_\alpha^-)}
 &\le d e^{d\eps c_\rho D_0} C\eps R_\rho B_\eps,
 \label{eq:localized-ordered-flow-distance}\\
 |D(A_{\alpha,\eps}z_\eps)|(U_\alpha^-)
 &\le d^2C_{\rm par}e^{d\eps c_\rho D_0}CR_\rho B_\eps.
 \label{eq:localized-ordered-flow-variation}
\end{align}
Equivalently, the whole parameter system closes in the single estimate
\begin{equation}
 |D(A_{\alpha,\eps}z_\eps)|(U_\alpha^-)
 +\eps^{-1}\|A_{\alpha,\eps}z_\eps-z_\eps\|_{L^1(U_\alpha^-)}
 \le C B_\eps.
 \label{eq:ordered-smoothing-coupled-absorption}
\end{equation}
There is no first-cycle derivative and no commutator remainder to absorb.
For general \(z_\eps\in L^1\), convolve it temporarily in the physical
variable.  The right-hand sides converge by continuity of composition with a
bi-Lipschitz flow, while the averaged fields converge in \(L^1\).  Lower
semicontinuity of variation passes the two estimates to \(z_\eps\).

The \(W^{2,\infty}\) regularity in
\eqref{eq:flow-ray-regularity} is used precisely in
\eqref{eq:ordered-flow-first-second-variation} and nowhere else in the
compactness argument.  Approximation is needed only for the scalar field
\(z\); all constants remain those displayed above.

\emph{Step 3: gluing.}
Choose \(\chi_\alpha\in C_c^1(U_\alpha^-)\),
\(\sum_\alpha\chi_\alpha=1\), and put
\begin{equation}
 Z_\eps:=\sum_\alpha\chi_\alpha
 A_{\alpha,\eps}z_\eps .
 \label{eq:localized-flow-gluing}
\end{equation}
Since
\begin{align}
 DZ_\eps={}&\sum_\alpha\chi_\alpha
 D(A_{\alpha,\eps}z_\eps)
 +\sum_\alpha(A_{\alpha,\eps}z_\eps-z_\eps)D\chi_\alpha,
 \label{eq:partition-cancellation-identity}
\end{align}
the finite overlap, \eqref{eq:localized-ordered-flow-distance}, and
\eqref{eq:localized-ordered-flow-variation} give
\eqref{eq:ordered-flow-smoothing-bound} with \(C_{\rm sm}\).
\end{proof}

 \begin{lemma}[Compactness for a uniformly spanning \(C^{1,1}\) flow family]
\label{lem:flow-ray-compactness}
Under \eqref{eq:flow-ray-regularity} and \eqref{eq:flow-ray-spanning}, binary
fields \(z_\eps\) satisfying \eqref{eq:flow-ray-averaged-bound} are relatively
compact in \(L^1(\mathbb T^d)\).  Every limit belongs to \(BV(\mathbb T^d)\).
\end{lemma}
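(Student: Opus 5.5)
The plan is to obtain the lemma directly from Lemma~\ref{lem:ordered-flow-smoothing}, combined with BV compactness on the torus.

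Take a sequence \(\eps_n\downarrow0\) (following the sequential convention on equicoercivity) and binary fields \(z_n:=z_{\eps_n}\) with \(B_{\eps_n}\le C\). For \(n\) large we have \(\eps_n\le\eps_{\rm sm}\). Lemma~\ref{lem:ordered-flow-smoothing} then supplies \(Z_n\in BV(\mathbb T^d)\) satisfying
\[
 \|Z_n-z_n\|_{L^1}\le C_{\rm sm}C\eps_n,\qquad |DZ_n|(\mathbb T^d)\le C_{\rm sm}C.
\]
Each \(Z_n\) is an average of \([0,1]\)-valued compositions, corrected by the partition of unity in \eqref{eq:localized-flow-gluing}. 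Because \(\sum_\alpha\chi_\alpha=1\) and each \(A_{\alpha,\eps}z\) takes values in \([0,1]\), we get \(\|Z_n\|_{L^\infty}\le1\), so \(\|Z_n\|_{L^1(\mathbb T^d)}\le|\mathbb T^d|\). The sequence is therefore bounded in \(BV(\mathbb T^d)\).

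The Rellich-type compact embedding \(BV(\mathbb T^d)\hookrightarrow L^1(\mathbb T^d)\) on the compact torus gives a subsequence \(Z_{n_k}\to Z\) in \(L^1\). Lower semicontinuity of total variation yields \(Z\in BV\) with \(|DZ|(\mathbb T^d)\le C_{\rm sm}C\). The triangle inequality and \(\|Z_n-z_n\|_1=O(\eps_n)\) then give \(z_{n_k}\to Z\) in \(L^1\). This proves relative compactness; since every \(L^1\)-limit of a subsequence is obtained in this way, every limit lies in \(BV\).

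For a general limit point \(z\), we apply the same argument to a subsequence converging to \(z\). This produces a further subsequence whose smoothed fields converge to \(z\), so \(z\in BV\). We also record that \(z\) is binary. It is an \(L^1\) limit of \(\{0,1\}\)-valued functions, so a further subsequence converges almost everywhere, and hence \(z=\one_E\) with \(E\) of finite perimeter.

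The only step needing care is checking that the hypotheses of Lemma~\ref{lem:ordered-flow-smoothing} apply with constants uniform in \(\eps\). That lemma assumes only \eqref{eq:flow-ray-averaged-bound}, \eqref{eq:flow-ray-regularity} and \eqref{eq:flow-ray-spanning}, together with the fixed atlas from Lemma~\ref{lem:ordered-flow-local-coordinates}. All of these are available here, so no new estimate is required.
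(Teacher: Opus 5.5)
Your proposal is correct and follows the paper's own proof: Lemma~\ref{lem:ordered-flow-smoothing} produces fields \(Z_\eps\) that are bounded in \(BV(\mathbb T^d)\) and lie at \(L^1\) distance \(O(\eps)\) from \(z_\eps\), after which \(BV\) compactness and the triangle inequality finish the argument. Your extra details are harmless additions that the paper leaves implicit: the \(L^\infty\) (hence \(L^1\)) bound on \(Z_\eps\), the restriction to \(\eps\le\eps_{\rm sm}\) under the sequential convention, and binarity of the limit.
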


\begin{proof}
Lemma~\ref{lem:ordered-flow-smoothing} gives a uniformly bounded sequence in
\(BV\) at vanishing \(L^1\) distance from \(z_\eps\).  BV compactness proves
the assertion.
\end{proof}

\begin{theorem}[Gamma-limit for a noncommuting rotating \(C^{1,1}\) ray family]
\label{thm:flow-ray-gamma-limit}
Assume \eqref{eq:ray-radial-law},
\eqref{eq:flow-ray-regularity},
\eqref{eq:flow-ray-spanning}, and
\eqref{eq:flow-ray-steepness}.  Then \(\mathcal G_\eps^{c,v}\) is
equicoercive and Gamma-converges in strong \(L^1(\mathbb T^d)\) to
\begin{equation}
 \mathcal G_0^{c,v}(u):=
 \begin{cases}
 \displaystyle\int_{\partial^*E}h_{\rm fl}(x,\nu_E(x))\dd\Hh^{d-1}(x),
 &u=\one_E,\ E\text{ of finite perimeter},\\[4pt]
 +\infty,&\text{otherwise}.
 \end{cases}
 \label{eq:flow-ray-limit}
\end{equation}
Again \(u_\eps=\one_E\) is a recovery sequence for every finite-perimeter
set.  The conclusion concerns the flow-generated energy
\eqref{eq:flow-ray-energy}; it is not a theorem for an unsymmetrized Euler
 sampling rule.  The fields may be noncommuting; no global chart or Frobenius
integrability is required.
\end{theorem}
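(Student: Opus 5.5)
The proof follows the template of Theorem~\ref{thm:heterogeneous-fixed-ray-gamma}, with translations replaced by flows, in four steps: rounding, compactness, liminf, recovery.

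\emph{Rounding.} For a symmetric bond measure, integrating \eqref{eq:diffuse-rounding-pointwise} against \(\widehat Q_{i,\eps,t}\) gives a bound of the form
\(\frac1\eps\int(d_*(x)+d_*(y))\) times the marginals of \(\widehat Q_{i,\eps,t}\). The first marginal has density at most \(c^*\). The second is the pushforward under \(\Phi_i^{-\eps t}\), so its density is at most \(c^*e^{\eps bD_i}\) by Liouville. With the factor \(2\) and the \(\frac12\) from symmetrization, this is exactly what \eqref{eq:flow-ray-steepness} absorbs. Hence \(\mathcal G_\eps^{c,v}(\chi\circ u)\le\mathcal G_\eps^{c,v}(u)\) and \(\|u-\chi\circ u\|_1\le(\eps/M)\mathcal G_\eps^{c,v}(u)\), so only binary \(z_\eps\) need be treated. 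For these, squared differences equal absolute differences, and \(c_i\ge c_*\) gives \eqref{eq:flow-ray-averaged-bound} with \(C\) proportional to the energy. Lemma~\ref{lem:flow-ray-compactness} then yields equicoercivity, and every limit is \(\one_E\) with \(E\) of finite perimeter.

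\emph{Liminf.} Fix \(i\) and \(t\). I will show that \(\mu_{i,\eps,t}:=(\eps t)^{-1}(z_\eps-z_\eps\circ\Phi_i^{-\eps t})\mathcal L^d\) converges distributionally to \(\operatorname{div}(v_iz)-z\operatorname{div}v_i\), that is, to \(D_{v_i}z:=v_i\cdot Dz\). To test against \(\phi\in C^1\), change variables \(y=\Phi_i^{-\eps t}(x)\), which produces the Jacobian \(\det D\Phi_i^{\eps t}=1+\eps t\operatorname{div}v_i+O(\eps^2)\). The resulting expression \((\eps t)^{-1}\int z_\eps[\phi-\phi\circ\Phi_i^{\eps t}\,\mathrm{Jac}]\) converges uniformly to \(-\int z\operatorname{div}(\phi v_i)\), which is the pairing \(\langle v_i\cdot Dz,\phi\rangle\). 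Lower semicontinuity of weighted total variation with the continuous weight \(c_i\) then gives \(\int c_i\dd|v_i\cdot Dz|\le\liminf_\eps\int c_i\dd|\mu_{i,\eps,t}|\). Applying Fatou in \(t\), summing over \(i\) and using \(\int t\dd\lambda=1\), we obtain \(\liminf\ge\int\sum_ic_i|v_i\cdot\nu_E|\dd\Hh^{d-1}\llcorner\partial^*E\), which is \(\mathcal G_0^{c,v}\).

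\emph{Recovery (main obstacle).} The goal is \(\eps^{-1}\int c_i|\one_E-\one_E\circ\Phi_i^{-\eps t}|\to t\int_{\partial^*E}c_i|v_i\cdot\nu_E|\), together with an integrable majorant in \(t\). I would argue by strict approximation. For smooth \(z_k\) strictly converging to \(\one_E\) one has
\[
z_k(x)-z_k(\Phi_i^{-\eps t}x)=\int_0^{\eps t}\nabla z_k(\Phi_i^{-r}x)\cdot v_i(\Phi_i^{-r}x)\dd r ,
\]
and after the change of variables, whose Jacobian is \(1+O(\eps)\), this gives the flow translation estimate
\[
\eps^{-1}\|u-u\circ\Phi_i^{-\eps t}\|_1\le t\,e^{\eps bD_0}|v_i\cdot Du|(\mathbb T^d)
\]
for all \(u\in BV\) by approximation. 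This bound supplies the majorant. It also gives the limsup with weight \(c_i\): split \(c_i\) by uniform continuity, noting that \(c_i\) is Lipschitz, so \(c_i\) at \(x\) and along the orbit differ by \(O(\eps)\). Together with the liminf already proved (applied to the constant sequence), this yields equality for each \(t\). Dominated convergence in \(t\) then finishes recovery. Strict convergence is needed so that \(|v_i\cdot Dz_k|\to|v_i\cdot Dz|\) by Reshetnyak continuity, since \(p\mapsto c_i(x)|v_i(x)\cdot p|\) is continuous and one-homogeneous.
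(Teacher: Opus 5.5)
Your proposal is correct and follows essentially the same route as the paper. Both arguments round using the two flow marginals and \eqref{eq:flow-ray-steepness}, get equicoercivity from Lemma~\ref{lem:flow-ray-compactness}, and obtain the liminf from distributional convergence of the flow difference quotients to \(v_i\cdot Dz\), followed by weighted lower semicontinuity and Fatou in \(t\). Recovery is the constant sequence, handled by the BV flow-translation limit with an integrable majorant and dominated convergence. The only difference is that you derive that transport formula yourself, by strict smooth approximation and Reshetnyak continuity, and sandwich it with the per-\(t\) liminf, whereas the paper simply cites it.
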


\begin{proof}
The first marginal of \(Q_{i,\eps,t}\) has density \(c_i\).  After the change
of variables \(x=\Phi_i^{\eps t}(y)\), its second marginal has density
\(c_i(\Phi_i^{\eps t}(y))|\det D\Phi_i^{\eps t}(y)|\), bounded by
\(c^*e^{\eps_0bD_i}\).  Equations
\eqref{eq:diffuse-rounding-pointwise} and \eqref{eq:flow-ray-steepness}
therefore show that threshold rounding does not increase the energy, and the
potential again makes the rounding error vanish in \(L^1\).  The lower bound
\(c_i\ge c_*\) and Lemma~\ref{lem:flow-ray-compactness} give equicoercivity.

Let \(z_\eps\to z\) in \(L^1\).  For fixed \(i,t\), the signed measures
\begin{equation}
 \frac{z_\eps-z_\eps\circ\Phi_i^{-\eps t}}{\eps t}\,\mathcal L^d
 \stackrel{*}{\rightharpoonup}v_i\cdot Dz.
 \label{eq:flow-ray-measure-convergence}
\end{equation}
Indeed, testing against \(\phi\in C^1\), changing variables along the flow,
and differentiating at zero gives
\(-\int z\,\operatorname{div}(\phi v_i)\), the distributional pairing with
\(v_i\cdot Dz\).  Weighted lower semicontinuity and Fatou now yield
\begin{equation}
 \liminf_{\eps\downarrow0}\mathcal G_\eps^{c,v}(z_\eps)
 \ge\sum_i\int c_i\dd|v_i\cdot Dz|
 =\int h_{\rm fl}(x,\sigma_z)\dd|Dz|.
 \label{eq:flow-ray-liminf}
\end{equation}
Uniform first-order spanning makes the last measure coercive and identifies binary limits
with finite-perimeter sets.

For \(z=\one_E\), the BV transport formula for a Lipschitz flow gives
\begin{equation}
 \frac1s\int c_i(x)|z(x)-z(\Phi_i^{-s}(x))|\dd x
 \longrightarrow\int c_i\dd|v_i\cdot Dz|.
 \label{eq:flow-ray-translation-limit}
\end{equation}
Its left side is bounded, for \(0<s\le\eps_0b\), by a constant depending only
on the displayed Lipschitz data times \(|Dz|(\mathbb T^d)\).  Dominated
convergence in \(t\) proves the recovery statement.
\end{proof}

The freezing used above has an explicit local modulus.  Set
\begin{equation}
 L_h:=\sum_i\bigl(\|\nabla c_i\|_\infty\|v_i\|_\infty
 +c^*\|Dv_i\|_\infty\bigr).
 \label{eq:flow-ray-freezing-modulus}
\end{equation}
Then for all \(x,y,\xi\),
\begin{equation}
 |h_{\rm fl}(x,\xi)-h_{\rm fl}(y,\xi)|
 \le L_h|x-y||\xi|.
 \label{eq:flow-ray-density-freezing-error}
\end{equation}
Consequently, freezing at \(x_0\) on a ball \(B_r(x_0)\) costs at most
\begin{equation}
 L_hr\,|D\one_E|(B_r(x_0))
 \label{eq:flow-ray-perimeter-freezing-error}
\end{equation}
in the sharp perimeter.  If \(\partial E\) is globally \(C^2\), the
flow-box expansion behind \eqref{eq:flow-ray-translation-limit} gives
\begin{equation}
 \left|\mathcal G_\eps^{c,v}(\one_E)
 -\int_{\partial E}h_{\rm fl}(x,\nu_E)\dd\Hh^{d-1}\right|
 \le C_E\eps\int_a^b t^2\lambda(\dd t),
 \label{eq:flow-ray-smooth-interface-error}
\end{equation}
where \(C_E\) depends only on the norms in
\eqref{eq:flow-ray-freezing-modulus}, the uniform flow Jacobian bounds, and
the curvature and area of \(\partial E\).  For an arbitrary
finite-perimeter set the convergence remains valid but no uniform rate is
asserted.  At every \(x\), both \eqref{eq:heterogeneous-zonotopal-density} and
\(h_{\rm fl}(x,\cdot)\) are cosine transforms of positive finite atomic
measures, hence fields of zonoid norms.  The rotating theorem requires only
uniform first-order spanning.
The parameter-space smoothing does not commute directional derivatives through
successive averages.  Instead, \eqref{eq:ordered-smoothing-coupled-absorption}
uses the invertible ordered-flow parameter Jacobian and integration by parts
in all parameters simultaneously.  Bracket generation without pointwise
spanning is not a substitute
for the stated Euclidean-BV conclusion: the cell density \(h_{\rm fl}\) then
vanishes on a non-zero covector and is not coercive.  More sharply, if on an
open cube all fields lie in a fixed proper subspace, binary stripes in an
annihilating direction have bounded energy and no strongly convergent
subsequence.  Thus \(h_{\rm fl}\ge\kappa|\cdot|\), equivalently uniform
first-order spanning for this finite positive-weight family, is the minimal
frame condition for equicoercivity in Euclidean \(BV\).  Merely
bracket-generating families lead instead to a horizontal-perimeter problem.
The generally non-injective Euler sampling rule
 \(x\mapsto x-\eps t v_i(x)\) is excluded for the same reason.

\begin{proposition}[Sharp loss of Euclidean compactness without a first-order frame]
\label{prop:flow-ray-spanning-obstruction}
Suppose an open cube \(Q\Subset\mathbb T^d\), a proper linear subspace
\(H\subsetneq\R^d\), and \(C<\infty\) satisfy
\(v_i(x)\in H\) and \(|v_i(x)|\le C\) for every \(x\in Q\) and every \(i\).
Then the flow-bond energies are not equicoercive in strong \(L^1(Q)\): there
are binary \(z_n\) and \(\eps_n\downarrow0\) with
\(\sup_n\mathcal G_{\eps_n}^{c,v}(z_n)<\infty\) but no strongly convergent
subsequence on a smaller cube \(Q'\Subset Q\).
\end{proposition}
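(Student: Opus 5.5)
The plan is to build explicit binary stripe configurations that oscillate in a direction annihilating \(H\), so the flows never cross a stripe inside \(Q\). Choose a unit covector \(e\in H^\perp\), which exists because \(H\) is proper. Fix a cube \(Q''\) with \(Q'\Subset Q''\Subset Q\), and let \(\psi\in C_c^\infty(Q)\) with \(\psi=1\) on \(Q''\) and \(0\le\psi\le1\). On \(Q\) define
\[
 z_n(x)=\one_{\{\sin(2\pi n\,e\cdot x)>0\}},
\]
and outside \(Q\) set \(z_n=0\). A cleaner variant, to avoid paying for the cut-off boundary, sets \(z_n\) equal to the stripe pattern on a slightly smaller cube \(\widetilde Q\Subset Q\) and \(0\) elsewhere. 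Choose \(\eps_n:=n^{-1}\); any \(\eps_n\downarrow0\) works, since the energy will turn out to be independent of \(n\) up to constants.

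Next I bound the energy. The potential term vanishes because \(z_n\) is binary and \(d_*(0)=d_*(1)=0\). For the interaction, consider a bond \(x\mapsto\Phi_i^{-\eps t}(x)\) with \(0<\eps t\le\eps_0b\). Along a flow line inside \(Q\), the derivative of \(e\cdot\Phi_i^{r}(x)\) is \(e\cdot v_i=0\), so \(e\cdot x\) is constant and \(z_n\) is invariant. Hence \(z_n(x)\ne z_n(\Phi_i^{-\eps t}(x))\) only if the flow segment leaves \(\widetilde Q\), and in particular only if \(x\) lies within distance \(CV_0\eps b\) of \(\partial\widetilde Q\). The set of such \(x\) has measure \(\le C\eps\). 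Here I use \(|v_i|\le C\) on \(Q\), together with the global bound \(V_0\) for segments that leave \(Q\). Therefore
\[
 \mathcal G_{\eps_n}^{c,v}(z_n)\le \frac{1}{\eps_n}\sum_i c^*\Lambda\, C\eps_n\le C',
\]
uniformly in \(n\). This is the key estimate, and it requires care only in handling trajectories that exit \(Q\). Since \(\widetilde Q\Subset Q\) and the flight length is \(O(\eps)\), a segment starting at a point of \(\widetilde Q\) at distance \(>C\eps\) from \(\partial\widetilde Q\) stays in \(\widetilde Q\). Points outside \(\widetilde Q\) and far from it see the constant value \(0\) at both endpoints. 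This gives the boundary-layer bound.

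Finally, I rule out compactness on \(Q'\) (take \(\widetilde Q\supset Q'\)). On \(Q'\), \(z_n\rightharpoonup1/2\) weakly-\(*\) in \(L^\infty\) by the Riemann–Lebesgue lemma for periodic stripes in the direction \(e\). If a subsequence converged strongly in \(L^1(Q')\), its limit would have to equal the weak limit \(1/2\). But \(z_n\in\{0,1\}\) forces \(|z_n-1/2|=1/2\) everywhere, so \(\|z_n-1/2\|_{L^1(Q')}=|Q'|/2\not\to0\), a contradiction. Equivalently, strong limits of binary functions are binary, and \(1/2\) is not.

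I expect the only delicate point to be the boundary layer in the energy bound: the flows are defined globally on \(\mathbb T^d\) and need not be tangent to \(H\) outside \(Q\). The resolution is to localize the stripes to \(\widetilde Q\) and use the finite flight length \(\eps b V_0\). The steepness constant and the coefficient bounds enter only through \(c^*\Lambda\).
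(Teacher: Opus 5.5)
Your proof is correct and uses the same stripe construction as the paper: binary stripes oscillating in a direction \(e\in H^\perp\), exact invariance of \(e\cdot x\) along flow segments inside the cube, an \(O(\eps)\) boundary layer whose cost stays \(O(1)\) after division by \(\eps\), and failure of strong compactness because the weak limit \(1/2\) is not binary. The differences are cosmetic. The paper localizes the stripes directly to \(Q'\) and takes \(\eps_n=o(n^{-1})\); your remark that any \(\eps_n\downarrow0\) suffices is right, since the interior invariance is exact and the layer bound does not depend on \(n\). Also, the cut-off \(\psi\) in your first paragraph is never used and can be deleted.
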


\begin{proof}
Choose a unit \(\eta\in H^\perp\), a cube \(Q'\Subset Q\), and let
\(z_n=\one_{Q'}\one_{\{\sin(2\pi n\eta\cdot x)>0\}}\).  Along every flow segment
remaining in \(Q'\),
\(\frac{d}{ds}(\eta\cdot\Phi_i^s(x))=\eta\cdot v_i(\Phi_i^s(x))=0\), so all
interior flow-bond differences vanish.  Bonds meeting \(\partial Q'\) occupy
a set of measure at most \(C\eps_n t\Hh^{d-1}(\partial Q')\); after division
by \(\eps_n\), their total cost is bounded independently of \(n\).  Take
\(\eps_n=o(n^{-1})\).  The potential is zero because \(z_n\) is binary, while
the half-volume stripe sequence has no strongly convergent subsequence in
\(L^1(Q')\).  This proves the claim.
\end{proof}

\section{Bounded Domains and the Full Crossing-Bond Contact Law}
\label{sec:flow-ray-boundary}

The preceding flow theorem was stated on the torus in order to isolate the
noncommuting-frame issue.  We next treat the physical boundary problem for the
same energy.  The convention is important: an imposed exterior state is
implemented by retaining every bond having at least one endpoint in the
body.  This is different from censoring the interaction at the boundary.

Let \(\Omega\Subset U\subset\R^d\) be a bounded Lipschitz domain.  The
coefficients and fields satisfy
\begin{equation}
 c_i\in W^{1,\infty}(U),\quad 0<c_*\le c_i\le c^*,\qquad
 v_i\in W^{2,\infty}(U;\R^d),\qquad
 \sum_i c_i|v_i\cdot\xi|\ge\kappa|\xi|
 \label{eq:boundary-ambient-frame}
\end{equation}
on a fixed neighbourhood of \(\overline\Omega\).  Their ambient
flows are denoted by \(\Phi_i^s\).  We decrease \(\eps_0\) so that every
flow segment with \(|s|\le b\eps_0\), starting in that neighbourhood
of \(\overline\Omega\), remains in \(U\).  This ambient extension is part of
the datum; the energy does not stop or reflect a flow at \(\partial\Omega\).

Choose a bounded Lipschitz set \(V\) with
\(\overline\Omega\Subset V\Subset U\), and put
\(D_{\rm e}=V\setminus\overline\Omega\).  Thus \(D_{\rm e}\) is a bounded
Lipschitz domain with two boundary components.  Let
\(g\in BV(\partial\Omega;\{0,1\})\), where boundary \(BV\) is defined
through any finite Lipschitz atlas.  Fix a binary function
\(g^{\rm e}\in BV(D_{\rm e};\{0,1\})\) whose trace is \(g\) on
\(\partial\Omega\) and zero on \(\partial V\), and extend it by zero to
\(U\setminus V\).
Lemma~\ref{lem:lipschitz-boundary-package} below proves that such an extension
exists; unlike the normal extension used for a \(C^2\) boundary, it does not
require a single-valued nearest-point projection.  For
\(u:\Omega\to[0,1]\), write
\begin{equation}
 \bar u(x)=u(x)\quad\hbox{in }\Omega,\qquad
 \bar u(x)=g^{\rm e}(x)\quad\hbox{in }U\setminus\overline\Omega.
 \label{eq:boundary-glued-state}
\end{equation}
Put
\begin{equation}
 B_\Omega:=\{(x,y)\in U\times U:x\in\Omega\ \hbox{or}\ y\in\Omega\}.
 \label{eq:crossing-bond-set}
\end{equation}
With \(\widehat Q_{i,\eps,t}\) defined by
\eqref{eq:flow-ray-pair-measure} using the ambient flow, define
\begin{align}
 \mathcal G_{\eps,\Omega}^{g}(u)
 :=&\frac M\eps\int_\Omega d_*(u)\dd x
 +\frac1\eps\sum_i\int_a^b
 \iint_{B_\Omega}|\bar u(x)-\bar u(y)|^2
 \widehat Q_{i,\eps,t}(\dd x,\dd y)\lambda(\dd t).
 \label{eq:boundary-crossing-flow-energy}
\end{align}
The restriction is imposed on the symmetric pair measure, not on one chosen
endpoint.  Thus an interior bond is counted with exactly the normalization of
\eqref{eq:flow-ray-energy}, and a crossing bond retains both half-orientations.
The same steepness condition \eqref{eq:flow-ray-steepness} suffices, since
restriction can only decrease both marginals.

\begin{lemma}[Lipschitz boundary package]
\label{lem:lipschitz-boundary-package}
Let \(\Omega\) be bounded and Lipschitz.  It has a finite cover by two-sided
cylinders which, after a rigid motion, are images under \(F\) of
\(Q'\times(-h,h)\), with \(Q'\times(0,h)\) on the interior side and
\(Q'\times(-h,0)\) on the exterior side.  In each such cylinder
\begin{equation}
 \Omega=\{(x',x_d):x_d>\varphi(x')\},\qquad
 \|\nabla\varphi\|_{L^\infty}\le L,
 \label{eq:lipschitz-graph-domain}
\end{equation}
and the flattening map
\(F(y',s)=(y',\varphi(y')+s)\) satisfies
\begin{equation}
 |F(y)-F(z)|\le(2+L)|y-z|,
 \quad |F^{-1}(x)-F^{-1}(w)|\le(2+L)|x-w|,
 \quad |\det DF|=1\quad\text{a.e.}
 \label{eq:lipschitz-flattening-estimate}
\end{equation}
At \(\Hh^{d-1}\)-almost every boundary point
\(x_0=(x'_0,\varphi(x'_0))\), set
\begin{align}
 \omega_{x_0}(r)&:=\frac1r
 \sup_{0<|z'|\le r}
 |\varphi(x'_0+z')-\varphi(x'_0)-\nabla\varphi(x'_0)\cdot z'|,
 \notag\\
 \eta_{x_0}(r)&:=\frac1{|B'_r|}\int_{B'_r(x'_0)}
 |\nabla\varphi(z')-\nabla\varphi(x'_0)|\dd z'.
 \label{eq:lipschitz-tangent-moduli}
\end{align}
Then \(\omega_{x_0}(r)+\eta_{x_0}(r)\to0\), and, writing \(H_{x_0}\)
for the tangent half-space and \(C_r(x_0)\) for the corresponding cylinder,
\begin{align}
 |(\Omega\mathbin\triangle H_{x_0})\cap C_r(x_0)|
 &\le C r^d\omega_{x_0}(r),
 \label{eq:lipschitz-halfspace-volume-error}\\
 \frac1{\Hh^{d-1}(\partial\Omega\cap C_r(x_0))}
 \int_{\partial\Omega\cap C_r(x_0)}
 |\nu_\Omega-\nu_\Omega(x_0)|\dd\Hh^{d-1}
 &\le C_L\eta_{x_0}(Cr).
 \label{eq:lipschitz-normal-mean-error}
\end{align}
Finally, if \(V\) and \(D_{\rm e}\) are as above, every
\(g\in BV(\partial\Omega;\{0,1\})\) is the inner trace of a binary
\(g^{\rm e}\in BV(D_{\rm e};\{0,1\})\) having zero outer trace.
\end{lemma}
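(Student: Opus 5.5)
The plan is to prove the four assertions of Lemma~\ref{lem:lipschitz-boundary-package} in order: the atlas and flattening, the pointwise tangent moduli, the two error bounds, and then the binary extension.

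\emph{Atlas and flattening.} This part is the standard definition of a bounded Lipschitz domain combined with compactness of \(\partial\Omega\). For each boundary point, choose a rigid motion and a cylinder \(Q'\times(-h',h')\) in which \(\Omega\) is the supergraph of an \(L\)-Lipschitz \(\varphi\). Shrink \(Q'\) so that \(|\varphi|<h'/2\) there, and take \(h=h'/2\); then the image \(F(Q'\times(-h,h))\) stays inside the cylinder. Extract a finite subcover.

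The map \(F(y',s)=(y',\varphi(y')+s)\) is triangular with unit diagonal, so \(|\det DF|=1\) almost everywhere. Its inverse is \(F^{-1}(x',x_d)=(x',x_d-\varphi(x'))\). Both maps change the last coordinate by at most \(L|y'-z'|+|s-t|\), so each has Lipschitz constant at most \(1+L\le 2+L\). Having only finitely many cylinders allows us to take one common \(L\).

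\emph{Tangent moduli.} By Rademacher's theorem \(\varphi\) is differentiable at a.e.\ \(x_0'\), so \(\omega_{x_0}(r)\to0\) there. Since \(\nabla\varphi\in L^\infty\subset L^1_{\rm loc}\), Lebesgue's differentiation theorem gives \(\eta_{x_0}(r)\to0\) at a.e.\ \(x_0'\). The graph map is bi-Lipschitz, so it carries \(\mathcal L^{d-1}\)-null sets of parameters to \(\Hh^{d-1}\)-null subsets of \(\partial\Omega\). Hence both moduli tend to zero at \(\Hh^{d-1}\)-a.e.\ boundary point.

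\emph{The two error bounds.} Inside \(C_r(x_0)\), the symmetric difference \(\Omega\triangle H_{x_0}\) lies between the graphs of \(\varphi\) and of its affine tangent \(\varphi(x_0')+\nabla\varphi(x_0')\cdot(\cdot-x_0')\). These graphs differ by at most \(r\omega_{x_0}(r)\) over a base of area \(Cr^{d-1}\), which gives \eqref{eq:lipschitz-halfspace-volume-error}.

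For \eqref{eq:lipschitz-normal-mean-error}, write \(\nu_\Omega=(\nabla\varphi,-1)/\sqrt{1+|\nabla\varphi|^2}\), with the sign chosen to match the paper's convention. On \(\{|p|\le L\}\) the map \(p\mapsto(p,-1)/\sqrt{1+|p|^2}\) is Lipschitz, with a constant \(C_L\) depending only on \(L\). The area element \(\sqrt{1+|\nabla\varphi|^2}\) lies in \([1,\sqrt{1+L^2}]\). The projection of \(\partial\Omega\cap C_r(x_0)\) to the base is contained in \(B'_{Cr}(x_0')\), and \(\Hh^{d-1}(\partial\Omega\cap C_r(x_0))\ge cr^{d-1}\). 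Pulling the integral back to the base therefore bounds the left side by \(C_L\eta_{x_0}(Cr)\).

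\emph{Binary extension.} This is the step I expect to be the main obstacle, because the extension must be both binary and of bounded variation. My first attempt is a local construction followed by gluing. In each exterior half-cylinder, in flattened coordinates, set \(G(y',s)=g(F(y',0))\,\one_{\{-\tau<s<0\}}\) for a small \(\tau\). This set has finite perimeter: its variation is at most \(\tau|Dg|\) on the base plus the two lateral faces, and \(F\) preserves \(BV\) up to constants depending on \(L\). Its inner trace is \(g\).

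To glue across cylinder overlaps without interpolating, which would destroy the binary values, I will use a partition of \(\partial\Omega\) into disjoint Borel patches \(P_k\) (the first cylinder's boundary patch, then the second's minus the first, and so on), and choose the patches to have finite perimeter in \(\partial\Omega\). Define \(g^{\rm e}\) as the union over \(k\) of the sets built on \(P_k\), with thickness \(\tau\) taken smaller than \(\operatorname{dist}(\partial\Omega,\partial V)\), so that the outer trace on \(\partial V\) vanishes.

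The pieces need not fit together exactly, since flattened normal directions differ between cylinders. Each overlap mismatch is a set of finite perimeter bounded by \(C(\tau+\Hh^{d-2}\)-size of the patch seams\()\), so the union still has finite perimeter, and its trace on \(\partial\Omega\) is \(g\) up to the seams, which are \(\Hh^{d-1}\)-null.

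If this gluing proves awkward, the fallback is a continuous Lipschitz-transversal direction field. Such a field exists for Lipschitz domains by a smoothed partition-of-unity argument, and it points uniformly outward. Its flow for times in \((0,\tau)\) gives a bi-Lipschitz collar map \(\partial\Omega\times(0,\tau)\to D_{\rm e}\), and one takes \(g^{\rm e}(\Psi(x,s))=g(x)\) for \(s<\tau\) and \(0\) otherwise. The collar construction handles all three requirements at once: the set is binary, the perimeter bound follows from \(|Dg|(\partial\Omega)+2\Hh^{d-1}(\partial\Omega)\) times the Lipschitz constants, and the trace is \(g\).
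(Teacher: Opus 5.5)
Your treatment of the atlas, the flattening estimates, the tangent moduli and the two error bounds is essentially the paper's argument. The proposal is correct, and for the binary extension you take a genuinely different route. The paper builds nothing by hand. It applies surjectivity of the trace \(BV(D_{\rm e})\to L^1(\partial D_{\rm e})\) on the Lipschitz domain \(D_{\rm e}\), with datum \(g\) on \(\partial\Omega\) and \(0\) on \(\partial V\), and truncates to \([0,1]\). It then picks one superlevel set \(\{w>s\}\) for which coarea and the level-set trace identity \(\operatorname{Tr}\one_{\{w>s\}}=\one_{\{\operatorname{Tr}w>s\}}\) hold; since the datum is binary, that set carries both prescribed traces. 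This argument is short and atlas-free, and it handles both boundary components at once. It also uses only that \(g\) is a binary \(L^1\) function, so the \(BV\) hypothesis on \(g\) is never needed.

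Your slab construction avoids the Gagliardo-type surjectivity theorem and the level-set lemma. It yields an explicit extension supported in a collar of width \(\tau\), with perimeter controlled by \(\tau|Dg|(\partial\Omega)\) plus boundary-area terms. The price is that it genuinely uses \(g\in BV(\partial\Omega)\) and needs bookkeeping at the patch seams.

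That bookkeeping is lighter than you suggest. No ``overlap mismatch'' estimate is needed, because a finite union of sets of finite perimeter has finite perimeter. The patches \(P_k\) automatically have finite perimeter in \(\partial\Omega\), since transition maps between graph charts are bi-Lipschitz. Their relative boundaries are bi-Lipschitz images of cube boundaries, hence \(\Hh^{d-1}\)-null.

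The one point you should write out is the trace identification. The piece built over \(P_l\) consists of points \(F_l(y',s)\) with \(s<0\), which lie outside \(\overline\Omega\), so its closure meets \(\partial\Omega\) only in \(\overline{P_l}\). Hence at \(\Hh^{d-1}\)-a.e. boundary point only one piece is visible at small scales. The unit Jacobian of the flattening map then transfers the density of \(\{g=1\}\) in the base to the exterior half-ball. With this, your first construction is complete. The transversal-flow fallback, which rests on the nontrivial existence of a bi-Lipschitz outward collar, is unnecessary.
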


\begin{proof}
The first two bounds in \eqref{eq:lipschitz-flattening-estimate} follow from
the Lipschitz estimate for \(\varphi\); the determinant identity follows from
the triangular a.e. derivative of \(F\).  Rademacher differentiability gives
\(\omega_{x_0}(r)\to0\), while the Lebesgue differentiation theorem applied
to \(\nabla\varphi\) gives \(\eta_{x_0}(r)\to0\).  Integrating the vertical
gap between the graph and its tangent plane proves
\eqref{eq:lipschitz-halfspace-volume-error}.  The map
\(p\mapsto(-p,1)/(1+|p|^2)^{1/2}\) is Lipschitz on \(\{|p|\le L\}\), and
the area element is bounded above and below in terms of \(L\); this proves
\eqref{eq:lipschitz-normal-mean-error}.

The same two-sided cylinders cover the exterior side
\(Q'\times(-h,0)\); after shrinking them, their union is an exterior collar
of \(\partial\Omega\).  More invariantly, use the standard theorem that for
every bounded Lipschitz domain \(D\), the trace map
\(\operatorname{Tr}:BV(D)\to L^1(\partial D)\) is onto (the right inverse
need not be linear or bounded in the \(L^1\) norm alone).  Apply it to
\(D=D_{\rm e}\) with boundary datum \(g\) on \(\partial\Omega\) and zero on
\(\partial V\), and truncate the resulting \(w\) to \([0,1]\).
For almost every \(s\in(0,1)\), both the coarea conclusion
\(\one_{\{w>s\}}\in BV(D_{\rm e})\) and the level-set trace identity
\[
 \operatorname{Tr}\one_{\{w>s\}}
 =\one_{\{\operatorname{Tr}w>s\}}
 \quad\Hh^{d-1}\text{-a.e. on }\partial D_{\rm e}
\]
hold.  Choose one \(s\) in this full-measure set.  Since the prescribed trace
is binary, the right-hand side is \(g\) on the inner component and zero on
the outer component.  This gives the asserted binary extension.  These trace,
extension, coarea, level-set trace, and gluing facts are standard for
Lipschitz domains; see \citep{ambrosio2000functions}.
\end{proof}

\begin{lemma}[Exact half-space crossing-bond cell]
\label{lem:boundary-halfspace-cell}
Fix constants \(c>0\), \(v\in\R^d\), \(t>0\), a unit vector \(n\), and
\(H=\{x\cdot n>0\}\).  Let \(z=A\) in \(H\) and \(z=B\) in
\(\R^d\setminus H\), where \(A,B\in\{0,1\}\).  Per unit
\((d-1)\)-area of the plane,
\begin{equation}
 \frac1\eps\iint_{B_H}|z(x)-z(y)|^2
 \frac c2\bigl[\dd x\,\delta_{x-\eps tv}(\dd y)
 +(\dd x\,\delta_{x-\eps tv}(\dd y))^{\mathsf T}\bigr]
 =ct|v\cdot n|\,|A-B|.
 \label{eq:boundary-halfspace-cell}
\end{equation}
Consequently the complete contact density is
\begin{equation}
 \int_a^b t\lambda(\dd t)\sum_i c_i(x)|v_i(x)\cdot n|
 |A-B|=h_{\rm fl}(x,n)|A-B|.
 \label{eq:boundary-contact-factor-one}
\end{equation}
In particular, its coefficient is one rather than one half.
\end{lemma}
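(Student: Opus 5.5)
The identity is an explicit computation of a translation-invariant bond integral against a half-space step. I will first reduce the symmetric pair measure to a single oriented bond, then count the crossing set per unit area, and finally integrate in the radial variable.

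\emph{Step 1: the integrand lives on crossing bonds.} The integrand \(|z(x)-z(y)|^2=|A-B|\) is nonzero only when \(x\) and \(y\) lie on opposite sides of \(\partial H\). Every such pair has exactly one endpoint in \(H\), so it belongs to \(B_H\). The restriction to \(B_H\) therefore removes no nonzero contribution, and this remains true after transposition, since \(B_H\) is symmetric. Because the integrand is symmetric, the transposed half of the pair measure contributes exactly as much as the untransposed half. The left side of \eqref{eq:boundary-halfspace-cell} thus equals
\[
 \frac c\eps\int_{\R^d}|z(x)-z(x-\eps tv)|^2\dd x .
\]
This reduction is where the full coefficient comes from. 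Half of each orientation is kept, and both orientations survive the restriction, so nothing is censored.

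\emph{Step 2: per-area count.} By Fubini, decompose \(x=y+\sigma n\) with \(y\in\partial H\). For fixed \(y\), the set of \(\sigma\) for which \(\sigma\) and \(\sigma-\eps t\,v\cdot n\) have opposite signs is an interval of length \(\eps t|v\cdot n|\). On that interval the integrand equals \(|A-B|^2=|A-B|\), because \(A,B\in\{0,1\}\). Dividing by \(\eps\) gives \(ct|v\cdot n|\,|A-B|\) per unit \((d-1)\)-area, which is \eqref{eq:boundary-halfspace-cell}. The only care needed is this symmetric bookkeeping in Step 1; the rest is the one-dimensional length count already used in Lemma~\ref{lem:ray-measure-cell-formula}.

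\emph{Step 3: the contact factor.} Apply Step 2 with \(c=c_i(x)\), \(v=v_i(x)\) frozen, sum over \(i\), and integrate against \(\lambda(\dd t)\). The normalization \(\int_a^b t\,\lambda(\dd t)=1\) from \eqref{eq:ray-radial-law} then yields \(h_{\rm fl}(x,n)|A-B|\), which is \eqref{eq:boundary-contact-factor-one}.

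Coefficient one is the point of the lemma. A censored convention, keeping only bonds with both endpoints in \(\Omega\), would give zero. Counting only the endpoint in the body with a single half-orientation would give one half. Because the restriction acts on the symmetric measure and keeps every crossing bond, both half-orientations are retained in full.
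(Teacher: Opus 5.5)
Your proposal is correct and is essentially the paper's argument. The paper also computes the directed integral as a slab of thickness \(\eps t|v\cdot n|\) on which \(|A-B|^2=|A-B|\), notes that the transposed half has the same integral because the integrand and the pair set are symmetric, and then sums over \(i\) using \(\int t\,\lambda(\dd t)=1\); you merely symmetrize before counting rather than after.
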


\begin{proof}
For the unsymmetrized directed measure
\(c\,\dd x\,\delta_{x-\eps tv}(\dd y)\), the integrand is non-zero precisely
when the normal coordinate of \(x\) lies between \(0\) and
\(\eps t v\cdot n\).  That set is a slab of thickness
\(\eps t|v\cdot n|\).  Every such bond meets \(H\), and binary values give
\(|A-B|^2=|A-B|\), so its normalized integral is
\(ct|v\cdot n||A-B|\) per unit area.  In the symmetrized measure the directed
part and its transpose each carry weight \(1/2\) and have that same integral.
Their sum is therefore the displayed value, with neither a missing nor an
extra factor of two.  Summation and
\(\int t\lambda(\dd t)=1\) prove \eqref{eq:boundary-contact-factor-one}.
\end{proof}

For \(u=\one_E\in BV(\Omega;\{0,1\})\), set
\begin{align}
 \mathcal G_{0,\Omega}^{g}(u):={}&
 \int_{\partial^*E\cap\Omega}h_{\rm fl}(x,\nu_E)\dd\Hh^{d-1}
 \notag\\
 &+\int_{\partial\Omega}h_{\rm fl}(x,\nu_\Omega)
 |\operatorname{Tr}_\Omega u-g|\dd\Hh^{d-1},
 \label{eq:boundary-flow-limit}
\end{align}
and give it value \(+\infty\) otherwise.

\begin{lemma}[Ambient completion and the trace measure]
\label{lem:boundary-ambient-completion}
Let \(\overline V\Subset W\Subset U\), and define \(B_W\) as in
\eqref{eq:crossing-bond-set}, with \(W\) in place of \(\Omega\).  Extend
\(\bar u\) by zero on \(U\setminus V\).  Adding to
\(\mathcal G_{\eps,\Omega}^{g}\) the bonds in
\(B_W\setminus B_\Omega\) gives exactly the full crossing-bond energy of
\(\bar u\) on \(W\).  The added term depends only on \(g^{\rm e}\), is
uniformly bounded, and converges to
\begin{equation}
 \int_{V\setminus\overline\Omega}
 h_{\rm fl}(x,\sigma_{Dg^{\rm e}})\dd|Dg^{\rm e}|.
 \label{eq:boundary-fixed-exterior-limit}
\end{equation}
Moreover the gluing formula is
\begin{align}
 D\bar u={}&Du\mathbin{\llcorner}\Omega
 +Dg^{\rm e}\mathbin{\llcorner}(V\setminus\overline\Omega)
 +(g-\operatorname{Tr}_\Omega u)\nu_\Omega
 \Hh^{d-1}\mathbin{\llcorner}\partial\Omega .
 \label{eq:boundary-bv-gluing}
\end{align}
\end{lemma}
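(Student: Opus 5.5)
The lemma has three claims: an exact bond bookkeeping identity, the convergence of the added exterior term, and the BV gluing formula. I will prove them in that order.

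\emph{Bookkeeping.} The identity is set algebra on the symmetric pair measure. Since \(\Omega\subset W\), we have \(B_\Omega\subset B_W\), so
\[
 \iint_{B_W}=\iint_{B_\Omega}+\iint_{B_W\setminus B_\Omega}
\]
for every \(i\), \(t\) and \(\eps\). Summing these integrals against \(\lambda(\dd t)\), with the same factor \(1/\eps\), reproduces the full crossing-bond energy of \(\bar u\) on \(W\). Its potential term is that of \(\Omega\), since \(\bar u\) is binary outside \(\Omega\). A pair in \(B_W\setminus B_\Omega\) has both endpoints in \(U\setminus\Omega\), where \(\bar u\) equals \(g^{\rm e}\) (extended by zero outside \(V\)). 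Hence the added term depends only on \(g^{\rm e}\).

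\emph{Convergence of the added term.} For the bound and the limit I use the ambient flow directly. For \(\eps\le\eps_0\) every bond has length at most \(b\eps V_0\). Every bond meeting \(W\) therefore stays in a fixed neighbourhood \(W'\Subset U\) on which \(\bar u=g^{\rm e}\) off \(\Omega\).

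The added energy is bounded by the full flow energy of the compactly supported function \(g^{\rm e}\one_{U\setminus\overline\Omega}\), localized to pairs avoiding \(\Omega\). The BV transport estimate from the proof of Theorem~\ref{thm:flow-ray-gamma-limit} controls this by \(C|D(g^{\rm e}\one_{U\setminus\overline\Omega})|(W')\). That quantity is finite because \(g^{\rm e}\) has a trace on \(\partial\Omega\), and this gives the uniform bound.

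For the limit, the bonds in \(B_W\setminus B_\Omega\) are exactly those whose flow segment does not touch \(\Omega\) at either endpoint. I will write the added term as the full \(W\)-energy of \(w:=g^{\rm e}\one_{U\setminus\overline\Omega}\) minus the bonds with an endpoint in \(\Omega\). The full energy of \(w\) converges, by \eqref{eq:flow-ray-translation-limit} localized with the Lipschitz coefficient weights, to
\[
 \int h_{\rm fl}(x,\sigma_{Dw})\dd|Dw|
 =\int_{V\setminus\overline\Omega}h_{\rm fl}(x,\sigma_{Dg^{\rm e}})\dd|Dg^{\rm e}|
 +\int_{\partial\Omega}h_{\rm fl}(x,\nu_\Omega)\,g\dd\Hh^{d-1}.
\]
This uses that \(g^{\rm e}\) has zero trace on \(\partial V\) and that no mass of \(|Dw|\) lies on \(\partial W\). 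The subtracted part consists of bonds with an endpoint in \(\Omega\). For \(w\), which vanishes in \(\Omega\), it converges by the half-space cell computation of Lemma~\ref{lem:boundary-halfspace-cell} to the same \(\int_{\partial\Omega}h_{\rm fl}(x,\nu_\Omega)\,g\dd\Hh^{d-1}\), coefficient one.

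\emph{Main obstacle.} The hardest step is this boundary-layer identification near a merely Lipschitz \(\partial\Omega\). I will localize with the cylinders of Lemma~\ref{lem:lipschitz-boundary-package}. At \(\Hh^{d-1}\)-a.e.\ boundary point I freeze \(c_i\), \(v_i\) and the trace of \(g^{\rm e}\), and compare \(\Omega\) with the tangent half-space using \eqref{eq:lipschitz-halfspace-volume-error} and \eqref{eq:lipschitz-normal-mean-error}. I then apply Lemma~\ref{lem:boundary-halfspace-cell} and conclude with dominated convergence against the uniform slab bound \(C\eps\,\Hh^{d-1}\). If this blow-up route is too delicate, a fallback is to derive the limit as a liminf from weighted lower semicontinuity and a limsup from the transport estimate.

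\emph{Gluing formula.} This is the standard BV gluing theorem for the Lipschitz hypersurface \(\partial\Omega\) \citep{ambrosio2000functions}. The jump of \(\bar u\) across \(\partial\Omega\) is (outer trace minus inner trace) times the normal oriented toward \(\Omega\), and it is taken with the sign convention that \(\nu_\Omega\) is the outer normal, matching \eqref{eq:boundary-bv-gluing}. Outside \(V\) no further term appears, because the extension by zero has no jump across \(\partial V\).
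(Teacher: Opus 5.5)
Your bookkeeping identity, the uniform bound, and the appeal to standard $BV$ gluing all match the paper. The two proofs differ in how they identify the limit of the added term.

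\textbf{The paper's route.} The paper treats the added term as what it is: the energy of the fixed field $g^{\rm e}$ on pairs with both endpoints outside $\Omega$. It reads off its limit directly from the flow-translation formula \eqref{eq:flow-ray-translation-limit}. Every jump of $g^{\rm e}$ in $V\setminus\overline\Omega$ lies in $Dg^{\rm e}$, and nothing concentrates on $\partial W$. The paper stresses that this ambient completion bypasses boundary flattening altogether.

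\textbf{Your route.} You pass through $w=g^{\rm e}\one_{U\setminus\overline\Omega}$. Its full energy picks up the jump $\int_{\partial\Omega}h_{\rm fl}(x,\nu_\Omega)g\,\dd\Hh^{d-1}$, which you must then cancel by computing the limit of the $B_\Omega$-part. That $B_\Omega$-part is exactly $\mathcal G_{\eps,\Omega}^{g}(0)$. So you need the recovery statement of Theorem~\ref{thm:boundary-flow-gamma-limit} at $u\equiv0$, which the paper deduces \emph{from} this lemma. In effect you compute the harder, boundary-concentrated piece in order to obtain the easier one.

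\textbf{What each buys.} Your route gives an independent, first-principles check of the coefficient-one contact at a Lipschitz boundary. The paper's route gives a soft argument with no cell computation at all.

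\textbf{Making the direct route explicit.} Extend $g^{\rm e}$ into $\Omega$ by a $BV$ field $\hat g$ with inner trace $g$, so that $|D\hat g|(\partial\Omega)=0$.
\begin{itemize}
\item The exterior-pair energy measures of $\hat g$ are dominated by its full energy measures.
\item The full energy measures converge weakly-star to $h_{\rm fl}(x,\sigma_{D\hat g})|D\hat g|$, which does not charge $\partial\Omega$.
\item For small $\eps$, the exterior-pair and full measures coincide on compact subsets of $U\setminus\overline\Omega$.
\end{itemize}
This gives \eqref{eq:boundary-fixed-exterior-limit} at once.

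\textbf{If you keep your route.} The volume bound \eqref{eq:lipschitz-halfspace-volume-error} is of order $r^d\omega_{x_0}(r)$. When $\eps\downarrow0$ first, this is not small compared with the layer volume $\eps r^{d-1}$. So replacing $\Omega$ by its tangent half-space does not, by itself, control the $O(\eps)$ layer. Two ingredients actually control it:
\begin{itemize}
\item the crossing count from the area formula \eqref{eq:rectifiable-exact-crossing} on $\partial\Omega\times(0,h)$, together with the normal-mean bound \eqref{eq:lipschitz-normal-mean-error};
\item the convergence of $g^{\rm e}$ to its trace $g$ along almost every transversal flow line.
\end{itemize}

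\textbf{Sign slip in the gluing paragraph.} The difference outer trace minus inner trace must multiply the \emph{outward} normal $\nu_\Omega$, not the normal pointing into $\Omega$. You can check this with $D\one_\Omega=-\nu_\Omega\Hh^{d-1}\mathbin{\llcorner}\partial\Omega$. Evenness of $h_{\rm fl}$ hides the sign in the energy, but not in the vector identity \eqref{eq:boundary-bv-gluing}.
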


\begin{proof}
The pair set \(B_W\) is the disjoint union, up to a pair-measure null set, of
\(B_\Omega\) and \(B_W\setminus B_\Omega\): pairs in the second set have both
endpoints exterior to \(\Omega\), so their integrand is the fixed field
\(g^{\rm e}\).  This proves the exact energy decomposition without an
orientation convention.  Because \(g^{\rm e}=0\) on \(U\setminus V\), there
is no jump or contact contribution on \(\partial W\).  The fixed exterior
term converges by the local flow translation formula
\eqref{eq:flow-ray-translation-limit}; every interior jump of the arbitrary
fixed exterior extension, including chart-patching jumps if present, is
included in \(Dg^{\rm e}\).  Formula
\eqref{eq:boundary-bv-gluing} is the standard Gauss--Green product rule,
obtained first for smooth traces and then by strict \(BV\) approximation.
Its boundary polar is parallel to \(\nu_\Omega\), so the one-homogeneity and
evenness of \(h_{\rm fl}\) give exactly the second line of
\eqref{eq:boundary-flow-limit}.
\end{proof}

\begin{theorem}[Lipschitz-domain crossing-bond Gamma-limit]
\label{thm:boundary-flow-gamma-limit}
Assume \eqref{eq:ray-radial-law}, \eqref{eq:boundary-ambient-frame}, and
\eqref{eq:flow-ray-steepness}.  Then
\(\mathcal G_{\eps,\Omega}^{g}\) is equicoercive in strong
\(L^1(\Omega)\) and Gamma-converges to
\(\mathcal G_{0,\Omega}^{g}\) in \eqref{eq:boundary-flow-limit}.  More
precisely, bounded energy gives, after a subsequence,
\begin{equation}
 u_\eps\to u=\one_E\quad\hbox{in }L^1(\Omega),\qquad
 \bar u_\eps\to\bar u\quad\hbox{in }L^1(V),\qquad \bar u\in BV(V),
 \label{eq:boundary-trace-equicoercivity}
\end{equation}
so the limiting boundary value is the interior \(BV\) trace in
\eqref{eq:boundary-bv-gluing}.  For every finite-perimeter \(E\subset\Omega\),
the constant binary sequence \(u_\eps=\one_E\) is a recovery sequence.
\end{theorem}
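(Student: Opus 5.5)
The plan is to reduce everything to the torus theorem, Theorem~\ref{thm:flow-ray-gamma-limit}, through the ambient completion of Lemma~\ref{lem:boundary-ambient-completion}.

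\emph{Rounding and compactness.} Since restricting \(\widehat Q_{i,\eps,t}\) to \(B_\Omega\) only lowers both marginals, the steepness condition \eqref{eq:flow-ray-steepness} together with \eqref{eq:diffuse-rounding-pointwise} shows that threshold rounding \(u\mapsto\chi\circ u\) inside \(\Omega\) does not increase \(\mathcal G_{\eps,\Omega}^g\). It moves \(u\) by at most \((\eps/M)\mathcal G_{\eps,\Omega}^g(u)\) in \(L^1\). The exterior \(g^{\rm e}\) is already binary, so we may assume the glued fields \(\bar z_\eps\) are binary.

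Adding the fixed bonds in \(B_W\setminus B_\Omega\) gives the full crossing-bond energy of \(\bar z_\eps\) on \(W\). By Lemma~\ref{lem:boundary-ambient-completion} this added term is uniformly bounded. Next, choose a cut-off equal to one near \(\overline W\), supported where the frame \eqref{eq:boundary-ambient-frame} holds, and place \(\bar z_\eps\), which vanishes outside \(V\), on a large torus. Bonds with both endpoints outside \(W\) cost nothing, because \(\bar z_\eps=0\) there for small \(\eps\). The only remaining issue is the frame outside \(U\): modify \(c_i,v_i\) away from \(\overline W\) so that they satisfy \eqref{eq:flow-ray-regularity}--\eqref{eq:flow-ray-spanning} globally. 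This changes no bond touching \(W\) for \(\eps\le\eps_0\).

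The averaged bound \eqref{eq:flow-ray-averaged-bound} then holds, and Lemma~\ref{lem:flow-ray-compactness} gives \(\bar z_\eps\to\bar u\) in \(L^1\) with \(\bar u\in BV\). This yields \eqref{eq:boundary-trace-equicoercivity}, with \(u=\one_E\).

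\emph{Liminf.} Apply the torus liminf \eqref{eq:flow-ray-liminf}, localized to the open set \(W\). The lower semicontinuity argument for \eqref{eq:flow-ray-measure-convergence} works for each open set, since testing with \(\phi\in C^1_c(W)\) only involves bonds touching \(W\). This gives
\[
\liminf \bigl(\mathcal G_{\eps,\Omega}^g(u_\eps)+\text{added}\bigr)\ge\int_W h_{\rm fl}(x,\sigma_{\bar u})\dd|D\bar u|.
\]
Decompose \(D\bar u\) by the gluing formula \eqref{eq:boundary-bv-gluing}. The three pieces are mutually singular, so the right side splits into the interior perimeter term, the exterior term \eqref{eq:boundary-fixed-exterior-limit}, and the contact term \(\int_{\partial\Omega}h_{\rm fl}(x,\nu_\Omega)|g-\operatorname{Tr}_\Omega u|\). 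For the contact term we use the evenness and one-homogeneity of \(h_{\rm fl}\). The added bonds converge exactly to \eqref{eq:boundary-fixed-exterior-limit}, which is finite, so subtracting it gives the liminf for \(\mathcal G_{0,\Omega}^g\).

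\emph{Recovery.} Take \(u_\eps=\one_E\). Then \(\bar u\in BV(V)\) is binary and fixed. The BV flow-translation formula \eqref{eq:flow-ray-translation-limit}, localized to bonds meeting \(W\), gives convergence of the full \(W\)-energy to \(\int_W h_{\rm fl}\dd|D\bar u|\). The localization is needed because \(|D\bar u|(\partial W)=0\) and the cut-off region carries no variation. Subtracting the convergent exterior term again yields \(\mathcal G_{0,\Omega}^g(\one_E)\).

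\emph{Main obstacle.} The hardest step is justifying the localized versions of \eqref{eq:flow-ray-liminf} and \eqref{eq:flow-ray-translation-limit}, specifically the upper limit for translations on an open set whose boundary may carry jumps of \(\bar u\). My first approach is to choose \(W\) so that \(|D\bar u|(\partial W)=0\); this is automatic here since \(\bar u=0\) near \(\partial W\). I would then use the BV transport estimate, which bounds the bond cost near any set by \(|D\bar u|\) of an \(O(\eps)\) neighbourhood. This should make both the global recovery identity and the exact exterior subtraction legitimate. The coefficient one on the contact term comes out of \eqref{eq:boundary-bv-gluing} automatically, consistent with Lemma~\ref{lem:boundary-halfspace-cell}.
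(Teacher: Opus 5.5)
Your proposal is correct and follows essentially the paper's proof. Both arguments run through the same steps: threshold rounding, adding the fixed exterior--exterior bonds of Lemma~\ref{lem:boundary-ambient-completion}, ordered-flow compactness of the glued fields, the ambient directional liminf and \(BV\) transport formula, and subtraction of the convergent exterior term, with \eqref{eq:boundary-bv-gluing} splitting the limit measure.

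The only deviation is that you embed into a large torus, whereas the paper localizes the ordered-flow atlas in \(V\). For that embedding it is safer to add auxiliary fields that vanish near \(\overline W\) than merely to modify the given \(v_i\). Such fields have bonds that never meet \(W\) and so cost nothing. Merely modifying the \(v_i\) can fail, because a spanning frame near \(\overline W\) need not extend to a global spanning frame; a rotating frame around an annular \(\Omega\) is an example.
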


\begin{proof}
Threshold rounding is unchanged because the crossing pair measure is
symmetric and has smaller marginals than the ambient measure.  We may
therefore replace \(u_\eps\) by binary \(z_\eps\) at vanishing \(L^1\)
cost.  Add the fixed exterior--exterior term from
Lemma~\ref{lem:boundary-ambient-completion}.  The resulting ambient energies
are bounded.  The ordered-flow smoothing proof of
Lemma~\ref{lem:ordered-flow-smoothing} is local: choose its protected flow
atlas in \(V\), including charts crossing \(\partial\Omega\).  It gives
compactness of the glued fields and hence
\eqref{eq:boundary-trace-equicoercivity}.  In particular, boundary
oscillations cannot disappear into a strip of thickness \(O(\eps)\).

Apply the ambient directional liminf
\eqref{eq:flow-ray-liminf} to \(\bar z_\eps\).  Since the exterior term is a
fixed sequence and has the exact limit
\eqref{eq:boundary-fixed-exterior-limit}, subtracting it gives
\begin{align*}
 \liminf_{\eps\downarrow0}\mathcal G_{\eps,\Omega}^{g}(z_\eps)
 \ge{}&\int_\Omega h_{\rm fl}(x,\sigma_{Du})\dd|Du|\\
 &+\int_{\partial\Omega}h_{\rm fl}(x,\nu_\Omega)
 |\operatorname{Tr}_\Omega u-g|\dd\Hh^{d-1}.
\end{align*}
Here subtraction is legitimate because the exterior summand is independent
of \(z_\eps\) and converges, rather than merely satisfying a liminf.
Formula \eqref{eq:boundary-bv-gluing} identifies the displayed measures.

Conversely take \(z=\one_E\) in \(\Omega\) and glue it to \(g^{\rm e}\).
The ambient \(BV\) transport formula gives convergence of the full ambient
bond energy of \(\bar z\).  Subtract the convergent fixed exterior term.
The remaining measure is precisely the interior part and the jump part in
\eqref{eq:boundary-bv-gluing}; Lemma~\ref{lem:boundary-halfspace-cell}
fixes its normalization.  The potential vanishes identically.  This proves
the upper bound and the theorem.
\end{proof}

We next record exactly what replaces the \(C^2\) tubular estimates.  Let
\(m_2=\int_a^b t^2\lambda(\dd t)\),
\(L_v=\max_i\|Dv_i\|_\infty\),
and \(L_c=\max_i\|\nabla c_i\|_\infty\).  At every boundary point \(x_0\)
for which Lemma~\ref{lem:lipschitz-boundary-package} applies, a chart of
diameter \(r\), coefficient freezing, and flow linearization produce
\begin{align}
 |(\Omega\mathbin\triangle H_{x_0})\cap C_r(x_0)|
 &\le Cr^d\omega_{x_0}(r),\notag\\
 \frac1{\Hh^{d-1}(\partial\Omega\cap C_r(x_0))}
 \int_{\partial\Omega\cap C_r(x_0)}
 |\nu_\Omega-\nu_\Omega(x_0)|\dd\Hh^{d-1}
 &\le C_L\eta_{x_0}(Cr),\notag\\
 |c_i(x)-c_i(x_0)|+c^*|v_i(x)-v_i(x_0)|
 &\le (L_c+c^*L_v)r,\notag\\
 |\Phi_i^{-\eps t}(x)-x+\eps t v_i(x)|
 &\le C L_v\eps^2t^2.
 \label{eq:boundary-chart-freezing-errors}
\end{align}
The graph-flattening Jacobian is exactly one a.e.; its surface Jacobian lies
between \(1\) and \((1+L^2)^{1/2}\).  The ordered-flow commutator remains
\(O(\eps^2t^2)\), as in \eqref{eq:flow-chart-commutator-error}.  Thus the
direct blow-up error, after division by the area scale \(r^{d-1}\), is bounded
by
\begin{equation}
 C_L\{\omega_{x_0}(r)+\eta_{x_0}(Cr)
 +(L_c+c^*L_v)r+L_v\eps m_2+\eps/r\}.
 \label{eq:lipschitz-boundary-blowup-estimate}
\end{equation}
Choosing first a differentiability and Lebesgue point \(x_0\), then
\(r\downarrow0\), and finally \(\eps/r\downarrow0\), makes this error bound
vanish.  No global rate is asserted: a Lipschitz graph has no uniform modulus
for \(\omega_{x_0}\) or \(\eta_{x_0}\).  The actual liminf and recovery proof
above is stronger and shorter because ambient completion bypasses boundary
flattening altogether.  For every finite-perimeter \(E\), the ambient
translation formula applies directly to the glued \(BV\) state, so the
constant recovery sequence needs neither a smooth approximation of \(E\) nor
a curvature bound.

\begin{proposition}[Null cuts are invisible to Euclidean pair measures]
\label{prop:null-cut-invisible}
Let \(\Omega\) be a finite-perimeter body and let
\(K\subset\Omega^1\) satisfy \(\mathcal L^d(K)=0\), for example a compact
countably \((d-1)\)-rectifiable internal cut.  Put
\(\widetilde\Omega=\Omega\setminus K\).  Then
\begin{equation}
 \one_{\widetilde\Omega}=\one_\Omega\quad\mathcal L^d\text{-a.e.},
 \qquad
 \partial^*\widetilde\Omega=\partial^*\Omega
 \quad\Hh^{d-1}\text{-a.e.},
 \label{eq:null-cut-same-body}
\end{equation}
and, after the canonical identification of \(L^1\) classes,
\begin{equation}
 \mathcal G_{\eps,\widetilde\Omega}^{g}
 =\mathcal G_{\eps,\Omega}^{g}.
 \label{eq:null-cut-energy-invariance}
\end{equation}
Consequently a Euclidean finite-perimeter model cannot assign two exterior
traces or two contact costs to the two sides of \(K\).  Theorem
\ref{thm:lifted-cut-gamma-limit} below constructs a finite-scale pairing rule
on a genuine single-volume cut completion and derives both costs.
\end{proposition}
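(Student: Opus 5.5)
The argument has three parts: the measure-theoretic identities in \eqref{eq:null-cut-same-body}, the energy identity \eqref{eq:null-cut-energy-invariance}, and then the consequence.

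For \eqref{eq:null-cut-same-body}, note that \(\widetilde\Omega\) and \(\Omega\) differ only on \(K\), and \(\mathcal L^d(K)=0\). Hence \(\one_{\widetilde\Omega}=\one_\Omega\) almost everywhere, so the two sets define the same element of \(L^1_{\rm loc}\). In particular \(D\one_{\widetilde\Omega}=D\one_\Omega\) as distributions, and \(\widetilde\Omega\) has finite perimeter. The reduced boundary, its normal, and the densities of the two sets at every point are computed from this common \(L^1\) class, so \(\partial^*\widetilde\Omega=\partial^*\Omega\), and \(\nu_{\widetilde\Omega}=\nu_\Omega\) there.

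The hypothesis \(K\subset\Omega^1\) only serves to exclude a misreading. Since \(K\) lies in the density-one set, it adds nothing to the essential boundary, so no part of \(K\) can be mistaken for a piece of \(\partial^*\widetilde\Omega\). Consequently the boundary traces \(\operatorname{Tr}\) and the boundary datum \(g\) on \(\partial^*\) coincide for the two bodies, \(\Hh^{d-1}\)-a.e.

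For the energies, compare \eqref{eq:boundary-crossing-flow-energy} term by term on the same \(L^1\) class of \(u\).

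\emph{Potential term.} It is a Lebesgue integral over \(\Omega\) versus \(\widetilde\Omega\), and these differ on a null set, so the two potential terms agree.

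\emph{Glued state.} The states \(\bar u\) agree off \(K\), hence almost everywhere.

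\emph{Bond sets.} The two bond sets satisfy
\[
B_\Omega\setminus B_{\widetilde\Omega}\subset (K\times U)\cup(U\times K).
\]

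\emph{Pair-measure step.} The main point is that \(\widehat Q_{i,\eps,t}\) gives zero mass to \((K\times U)\cup(U\times K)\). Its first marginal is \(c_i\,\dd x\), which is absolutely continuous. By the marginal computation in the proof of Theorem~\ref{thm:flow-ray-gamma-limit}, its second marginal has the bounded density \(c_i(\Phi^{\eps t}_i(y))\,|\det D\Phi^{\eps t}_i(y)|\), so it is absolutely continuous as well. Both marginals therefore vanish on \(K\).

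It follows that the integrand \(|\bar u(x)-\bar u(y)|^2\) is changed only on a \(\widehat Q\)-null set, and the bond domains differ only on a \(\widehat Q\)-null set. Integrating in \(t\) then gives \eqref{eq:null-cut-energy-invariance} exactly, for every \(\eps\).

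For the consequence, observe that every functional built from Euclidean pair measures with absolutely continuous marginals, and from finite-perimeter data, factors through the triple \((\one_\Omega, u, g)\) modulo null sets. The two sides of \(K\) are not distinguished by this triple. In particular, at a point of \(K\cap\Omega^1\) both one-sided approximate limits of \(\one_\Omega\) equal one, so \(\one_\Omega\) has no jump on \(K\) that could carry a trace. Therefore no second exterior trace, and no second contact cost, can be assigned to the two sides of \(K\).

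I expect the only genuinely delicate point to be the second-marginal absolute continuity. I will cite the Liouville/Jacobian bound stated in the proof of Theorem~\ref{thm:flow-ray-gamma-limit} for it, rather than rederive it.
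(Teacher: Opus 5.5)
Your proof is correct and follows the paper's argument. The two bodies are the same $L^1$ class, so they have the same derivative and reduced boundary. The finite-$\eps$ energies agree because both marginals of $\widehat Q_{i,\eps,t}$ are absolutely continuous; the paper phrases this as the flow image of a null set being null under a bi-Lipschitz flow.

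For the interpretive consequence, the paper uses a concrete comparison: $u\equiv1$ has no interaction across $K$, whereas a doubled two-face model would cost $2\int_K h_{\rm fl}(x,\nu_K)\dd\Hh^{d-1}$. Your structural factorization remark makes the same point.
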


\begin{proof}
Removing \(K\) does not change the characteristic function as an
\(L^1\) function, hence does not change its distributional derivative or
reduced boundary.  Every finite-\(\eps\) pair integral is absolutely
continuous in its starting endpoint and is also unchanged; the flow graph of
a null set is null because the flow is bi-Lipschitz.  This proves
\eqref{eq:null-cut-energy-invariance}.  In particular, the constant state
\(u\equiv1\) has no interaction across \(K\), whereas a doubled crack with
two imposed zero exterior states would have the positive cost
\(2\int_Kh_{\rm fl}(x,\nu_K)\dd\Hh^{d-1}\).  The contradiction proves that
the two-face law cannot be encoded by changing an open representative on a
 null cut.
\end{proof}

\section{Direct Limits on Rectifiable Resolved Cuts}
\label{sec:rectifiable-direct-cuts}

The preceding boundary theorem has no internal cut.  We now add an arbitrary
fixed rectifiable cut without smoothing its geometry.  Let
\begin{equation}
 \mathfrak K=(K_j,\nu_j,g_j^+,g_j^-)_{j\in\mathcal J}
 \label{eq:rectifiable-cut-datum}
\end{equation}
be a finite or countable family such that each
\(K_j\subset\Omega\) is Borel and countably \((d-1)\)-rectifiable,
\begin{equation}
 \sum_{j\in\mathcal J}\Hh^{d-1}(K_j)<\infty,
 \qquad
 \Hh^{d-1}(K_j\cap K_k)=0\quad(j\ne k),
 \label{eq:rectifiable-cut-mass}
\end{equation}
and \(\nu_j\) is an \(\Hh^{d-1}\)-measurable choice of approximate unit
normal on \(K_j\).  The bank data
\(g_j^\pm:K_j\to\{0,1\}\) are only assumed measurable.  Set
\(K=\bigcup_jK_j\).  No condition is imposed on
\(\overline K\cap\partial\Omega\).

For an oriented countably rectifiable set \(A\Subset U\), a field \(v_i\),
and \(h>0\), define the backward crossing number
\begin{equation}
 N_{i,h}^A(x):=\#\{s\in(0,h):\Phi_i^{-s}(x)\in A\},
 \qquad P_{i,h}^A:=N_{i,h}^A\pmod 2.
 \label{eq:rectifiable-crossing-number}
\end{equation}
The value on the null set where the number is infinite is fixed to be zero.
The next lemma replaces every tubular-neighbourhood estimate used for a
regular cut.

\begin{lemma}[Exact crossing formula and first-order parity]
\label{lem:rectifiable-crossing-parity}
Let \(A\Subset U\) be Borel, countably \((d-1)\)-rectifiable, and of finite
surface measure, with an \(\Hh^{d-1}\)-measurable choice of approximate unit
normal \(\nu_A\).  If \(a\) is bounded,
Borel, and compactly supported in a common flow region, then
\begin{align}
 \int_Ua(x)N_{i,h}^A(x)\dd x
  =\int_A\int_0^h a(\Phi_i^sz)J\Phi_i^s(z)
       |v_i(z)\cdot\nu_A(z)|\dd s\dd\Hh^{d-1}(z).
 \label{eq:rectifiable-exact-crossing}
\end{align}
Moreover,
\begin{equation}
 \lim_{h\downarrow0}\frac1h\int_U|a(x)|
       \bigl(N_{i,h}^A(x)-P_{i,h}^A(x)\bigr)\dd x=0.
 \label{eq:rectifiable-parity-error}
\end{equation}
In particular, every uniformly bounded modification of the bond rule on arcs
with at least two crossings changes its surface-scaled integral by \(o(1)\).
\end{lemma}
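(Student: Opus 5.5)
The plan is to realize \(N_{i,h}^A\) as a multiplicity function of a Lipschitz map and apply the area formula for rectifiable sets. Consider the spacetime map
\[
 F:A\times(0,h)\to U,\qquad F(z,s)=\Phi_i^s(z).
\]
A point \(x\) has a preimage \((z,s)\) exactly when \(\Phi_i^{-s}(x)=z\in A\) with \(s\in(0,h)\). For fixed \(x\) the map \(s\mapsto\Phi_i^{-s}(x)\) is injective, so \(\#F^{-1}(x)=N_{i,h}^A(x)\). The set \(A\times(0,h)\) is countably \(d\)-rectifiable and \(F\) is Lipschitz. The area formula for rectifiable sets, applied with the Borel weight \(a\), therefore gives
\[
 \int_U a(x)N_{i,h}^A(x)\dd x=\int_{A\times(0,h)}a(F)\,J_{A\times\R}F\dd(\Hh^{d-1}\times\mathcal L^1).
\]
This step also shows that \(N<\infty\) almost everywhere, so the convention on the infinite set is harmless. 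Measurability of \(N\) comes from the same formula.

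It remains to compute the tangential Jacobian at \(\Hh^{d-1}\)-a.e. \(z\) where the approximate tangent plane \(T_z=\nu_A(z)^\perp\) exists.
\begin{itemize}
 \item The differential sends \((\tau,\sigma)\in T_z\times\R\) to \(D\Phi_i^s(z)\tau+\sigma v_i(\Phi_i^sz)\).
 \item By the group property, \(v_i(\Phi_i^sz)=D\Phi_i^s(z)v_i(z)\).
 \item Hence the differential equals \(D\Phi_i^s(z)\) composed with \((\tau,\sigma)\mapsto\tau+\sigma v_i(z)\).
 \item The second map has Jacobian \(|v_i(z)\cdot\nu_A(z)|\), since it is a unit-area parallelepiped tilted by \(v_i\).
\end{itemize}
Multiplying gives \(J\Phi_i^s(z)|v_i\cdot\nu_A|\), which is \eqref{eq:rectifiable-exact-crossing}. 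Only \(C^{1}\) regularity of the flow and a.e. tangent planes are used here; no reach or closedness enters.

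For the parity estimate, write \(N-P\le 2(N-1)_+\le N\one_{\{N\ge2\}}\). By the exact formula with weight \(|a|\), the left side of \eqref{eq:rectifiable-parity-error} is bounded by
\[
 \frac1h\int_A\int_0^h |a|(\Phi_i^sz)\,J\,|v_i\cdot\nu_A|\,\one_{\{N_{i,h}^A(\Phi_i^sz)\ge2\}}\dd s\dd\Hh^{d-1}(z).
\]
The integrand is dominated by \(C|v_i\cdot\nu_A|\), which is integrable on \(A\) because \(\Hh^{d-1}(A)<\infty\). By dominated convergence it then suffices to show, for \(\Hh^{d-1}\)-a.e. \(z\), that
\[
 \frac1h\,\bigl|\{s\in(0,h):\text{the arc }\Phi_i^{-r}(\Phi_i^sz),\ r\in(0,h),\text{ meets }A\text{ at a point other than }z\}\bigr|\to0.
\]
Equivalently, the flow line through \(z\) meets \(A\) again within time \(2h\) only on a vanishing fraction.

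This pointwise statement is the main obstacle. A direct pointwise claim fails at tangential points and at accumulation points, so I would argue through the measure instead. Cover \(A\), up to a small \(\Hh^{d-1}\)-mass, by finitely many compact pieces of \(C^1\) graphs \(G_k\) transversal to \(v_i\), where \(|v_i\cdot\nu|>0\); the tangential part contributes zero to the integrand anyway.

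\begin{itemize}
 \item \emph{Remainder.} Let \(R\) be the leftover with \(\Hh^{d-1}(R)<\delta\). Its crossings are bounded by \eqref{eq:rectifiable-exact-crossing} applied to \(R\), which gives at most \(Ch\delta\) after division by \(h\).
 \item \emph{Distinct graphs.} Two disjoint compact transversal graphs \(G_k,G_l\) are at positive flow distance, so for small \(h\) no short arc meets both.
 \item \emph{A single graph.} A short arc meets a single transversal \(C^1\) graph at most once, by the implicit function theorem.
 \item \emph{Overlaps.} Overlap of the pieces is \(\Hh^{d-1}\)-null up to the \(\delta\)-remainder.
\end{itemize}
Hence, for small \(h\), double crossings occur only on arcs that meet \(R\) and also meet another piece. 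Their total contribution is bounded by \(C\delta\) plus \(o(1)\). Letting \(\delta\downarrow0\) proves \eqref{eq:rectifiable-parity-error}.

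The final sentence of the lemma follows from this: a uniformly bounded change of the bond rule on arcs with \(N\ge2\) changes the surface-scaled integral by at most \(C h^{-1}\int|a|\one_{\{N\ge2\}}\le C h^{-1}\int|a|(N-P)\), which is \(o(1)\).
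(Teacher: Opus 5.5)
Your proof is correct. Its first half is the paper's argument: the same spacetime map $F(z,s)=\Phi_i^s(z)$ on $A\times(0,h)$, the same identification of the multiplicity with $N_{i,h}^A$, and the same Jacobian $J\Phi_i^s(z)|v_i(z)\cdot\nu_A(z)|$ coming from $D\Phi_i^s(z)v_i(z)=v_i(\Phi_i^sz)$.

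For the parity estimate you take a different route.
\begin{itemize}
\item \emph{The paper's route.} It works in flow boxes. Integrability of the slice count shows that almost every flow line meets $A$ in a finite set, so $N_{i,h}^A=P_{i,h}^A$ on that slice once $h$ is below its minimal gap. Dominated convergence in the transverse variable then gives the limit, after a separate cut-off to $\{|v_i|>\delta\}$, where flow boxes exist.
\item \emph{Your route.} You discard the tangential set $\{v_i\cdot\nu_A=0\}$, whose arcs are Lebesgue-null by the exact formula. You cut the rest into finitely many pairwise disjoint compact pieces of $C^1$ graphs with $|v_i\cdot\nu|\ge\delta'$, plus a remainder $R$ of small mass. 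Compactness then gives one $h_0$ below which no arc crosses the good part twice: the pieces are positively separated, and $(w,\tau)\mapsto\Phi_i^\tau(w)$ is locally injective near a transversal piece. Hence $N\one_{\{N\ge2\}}\le 2N^R$ for $h<h_0$, and the exact formula applied to $R$ bounds the normalized error by $C\|a\|_\infty\Hh^{d-1}(R)$.
\end{itemize}
Your argument is more elementary. It needs neither flow boxes nor a separate treatment of zeros of $v_i$, and it yields a uniform single-crossing statement rather than an eventual, slice-by-slice one. The paper's slicing route is natural there because the same almost-everywhere finite slices are reused in Step~2 of Theorem~\ref{thm:global-resolved-translation-contact}.

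A few small slips to fix:
\begin{itemize}
\item The map $s\mapsto\Phi_i^{-s}(x)$ need not be injective; it is constant at a zero of $v_i$. Injectivity is not needed, though: $z=\Phi_i^{-s}(x)$ is determined by $s$, so $\#F^{-1}(x)=N_{i,h}^A(x)$ holds regardless.
\item The middle inequality $2(N-1)_+\le N\one_{\{N\ge2\}}$ fails for $N\ge3$. The inequality you actually need, $N-P\le N\one_{\{N\ge2\}}$, holds directly.
\item Arcs with $N\ge2$ need only meet $R$, possibly twice; they need not meet $R$ and another piece. The bound $N\one_{\{N\ge2\}}\le2N^R$ covers both cases.
\item The at-most-one-crossing claim on a compact transversal piece needs a short compactness argument to get uniformity over the piece. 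The implicit function theorem at a single point is not enough by itself.
\end{itemize}
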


\begin{proof}
The product \(A\times(0,h)\) is countably \(d\)-rectifiable.  Apply the area
formula to
\(F(z,s)=\Phi_i^s(z)\).  If
\(\tau_1,\ldots,\tau_{d-1}\) spans the approximate tangent plane of \(A\),
then \(D\Phi_i^s(z)v_i(z)=v_i(\Phi_i^sz)\), and hence
\[
 J_dF(z,s)
 =\left|\det(D\Phi_i^s\tau_1,\ldots,D\Phi_i^s\tau_{d-1},
              v_i(\Phi_i^sz))\right|
 =J\Phi_i^s(z)|v_i(z)\cdot\nu_A(z)|.
\]
The multiplicity of \(F\) at \(x\) is exactly \(N_{i,h}^A(x)\), which proves
\eqref{eq:rectifiable-exact-crossing} and its integrability.

For parity, work first in a flow box
\(\Theta(y,r)\) with \(\partial_r\Theta=v_i\circ\Theta\).  Rectifiable
coarea shows that, for almost every \(y\), the set
\(\{r:\Theta(y,r)\in A\}\) is finite on each relevant compact interval.
It therefore has positive minimum separation, and every sufficiently short
interval on that slice contains at most one point.  Thus
\(N_{i,h}^A=P_{i,h}^A\) eventually on almost every slice.  The normalized
multiplicity is dominated by the integrable slice count furnished by
\eqref{eq:rectifiable-exact-crossing}; dominated convergence proves the claim
in each box.  A finite localization treats \(\{|v_i|>\delta\}\).  The
remaining flux is bounded by
\(C\delta\Hh^{d-1}(A)\); send \(h\downarrow0\) and then
\(\delta\downarrow0\).  Finally,
\(\one_{\{N\ge2\}}\le (N-P)/2\), proving the last assertion.
\end{proof}

The resolved derivative is defined directly on the physical body.  For
\(u\in BV(\Omega)\cap L^\infty(\Omega)\), put
\begin{equation}
 D_{\mathfrak K}u
 :=Du-\sum_j(T_j^+u-T_j^-u)\nu_j
       \Hh^{d-1}\mathbin{\llcorner}K_j.
 \label{eq:rectifiable-resolved-derivative}
\end{equation}
The series converges in total variation by
\eqref{eq:rectifiable-cut-mass}.  The standard decomposition of \(Du\) gives
\begin{equation}
 D_{\mathfrak K}u=Du\mathbin{\llcorner}(\Omega\setminus K),
 \qquad
 |Du|(\Omega)=|D_{\mathfrak K}u|(\Omega)
 +\sum_j\int_{K_j}|T_j^+u-T_j^-u|\dd\Hh^{d-1}.
 \label{eq:rectifiable-resolved-splitting}
\end{equation}
Indeed, the absolutely continuous part of \(Du\) vanishes on \(K\), the
Cantor part does not charge a sigma-finite \(\Hh^{d-1}\) set, and the jump
part has the displayed two traces at almost every point of \(K\).

We next isolate the one-dimensional lower bound.  Let
\(I=(q_0,q_{m+1})\), let
\(A=\{q_1<\cdots<q_m\}\), and put \(I_j=(q_j,q_{j+1})\).  Give every
endpoint of every \(I_j\) an independent binary ghost
\(\gamma_{j,L},\gamma_{j,R}\), and extend a binary state \(z\) by
\begin{equation}
 \widetilde z_j(s)=
 \begin{cases}
  \gamma_{j,L},&s\le q_j,\\
  z(s),&q_j<s<q_{j+1},\\
  \gamma_{j,R},&s\ge q_{j+1}.
 \end{cases}
 \label{eq:rectifiable-split-extension}
\end{equation}

\begin{lemma}[Weighted split-line lower bound]
\label{lem:rectifiable-split-line}
Let \(w_\eps\ge0\) converge locally uniformly to a continuous weight \(w\),
uniformly for \(t\in(a,b)\), and let binary \(z_\eps\to z\) in \(L^1(I)\).
Then
\begin{align}
 &\liminf_{\eps\downarrow0}\frac1\eps
  \int_a^b\sum_{j=0}^m\int_\R w_\eps(s,t)
  |\widetilde z_{j,\eps}(s)-\widetilde z_{j,\eps}(s-\eps t)|^2
  \dd s\lambda(\dd t)\notag\\
 &\quad\ge
 \sum_{j=0}^m\int_\R w(s)\dd|D\widetilde z_j|(s).
 \label{eq:rectifiable-split-line-liminf}
\end{align}
For every binary state with finite right-hand side, equality is attained by
the constant sequence.  No estimate depends on the minimum separation of
the points of \(A\).
\end{lemma}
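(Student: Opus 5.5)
The proof treats each split extension \(\widetilde z_j\) separately, because the left side is a sum over \(j\) of nonnegative terms and \(\liminf\) is superadditive. Fix \(j\). Since \(z_\eps\to z\) in \(L^1(I)\) and the ghosts \(\gamma_{j,L},\gamma_{j,R}\) are fixed binary constants, \(\widetilde z_{j,\eps}\to\widetilde z_j\) in \(L^1_{\rm loc}(\R)\). The extension \(\widetilde z_{j,\eps}\) is constant outside the bounded interval \(I_j\), so every bond with both endpoints outside \(\overline{I_j}\) on the same side contributes nothing. All terms therefore live in a fixed compact interval \(I_j^\delta\), a \(\delta\)-enlargement of \(I_j\), once \(\eps b<\delta\). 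The minimum separation of the \(q_k\) is never used, because each \(I_j\) is handled on its own line with its own ghosts.

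For a fixed \(t\in(a,b)\), the one-dimensional version of the weighted translation argument in Theorem~\ref{thm:heterogeneous-fixed-ray-gamma} applies. Binary values give \(|\cdot|^2=|\cdot|\), and the measures \(\mu_{\eps,t}=\eps^{-1}t^{-1}(\widetilde z_{j,\eps}-\widetilde z_{j,\eps}(\cdot-\eps t))\,\mathcal L^1\) converge distributionally to \(D\widetilde z_j\). This is checked by testing against \(C^1_c\) functions, using the \(L^1_{\rm loc}\) convergence and the uniform convergence of difference quotients of the test function. Since \(w_\eps(\cdot,t)\to w\) uniformly on \(I_j^\delta\), uniformly in \(t\), weighted lower semicontinuity of total variation (approximate \(w\) from below by continuous compactly supported weights) gives
\[
 \liminf_{\eps\downarrow0}\frac1\eps\int_\R w_\eps(s,t)|\widetilde z_{j,\eps}(s)-\widetilde z_{j,\eps}(s-\eps t)|\dd s\ge t\int_\R w\dd|D\widetilde z_j|.
\]
Fatou's lemma in \(t\) against \(\lambda\), the unit first moment \(\int t\lambda(\dd t)=1\), and summation over \(j\) then yield \eqref{eq:rectifiable-split-line-liminf}. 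Here \(w_\eps\ge0\) supplies the nonnegativity that Fatou requires. One point needs care: the weak convergence must capture the jumps at \(q_j,q_{j+1}\), where the ghosts meet the traces of \(z\). It does, because these jumps are interior points of \(I_j^\delta\) and the convergence is tested on all of \(\R\).

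For equality with the constant sequence, assume the right-hand side is finite, so that each \(\widetilde z_j\in BV_{\rm loc}(\R;\{0,1\})\) is a finite sum of jumps at points \(p_k\). This is the step I expect to need the most bookkeeping, though not any new idea. For each \(t\), the set where \(\widetilde z_j(s)\ne\widetilde z_j(s-\eps t)\) is, for small \(\eps\), the disjoint union of intervals \([p_k,p_k+\eps t)\). The normalized integral is then \(\eps^{-1}\sum_k\int_{p_k}^{p_k+\eps t}w_\eps(s,t)\dd s\), which tends to \(t\sum_k w(p_k)=t\int w\dd|D\widetilde z_j|\) by uniform convergence and continuity of \(w\). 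The BV translation bound \(\le t\,\sup|w_\eps|\,|D\widetilde z_j|(\R)\) holds for all \(\eps\), so dominated convergence in \(t\) applies. The first moment then gives the limit exactly, with \(\lim\) in place of \(\liminf\).
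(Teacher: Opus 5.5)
Your proposal follows the paper's proof. For the liminf you use fixed-\(t\) distributional convergence of the difference quotients of each ghosted extension \(\widetilde z_{j,\eps}\), weighted lower semicontinuity of total variation, and Fatou in \(t\) with \(\int_a^bt\lambda(\dd t)=1\); for the constant sequence you use the one-dimensional translation formula with its \(BV\) majorant and dominated convergence in \(t\).

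One caveat concerns your inference that a finite right-hand side gives \(\widetilde z_j\in BV\): this holds only where \(w>0\). A weight vanishing on a subinterval admits non-\(BV\) targets for which the constant sequence's energy diverges. So, as the paper tacitly does and as holds in its application, take \(\widetilde z_j\in BV\) as a hypothesis of the recovery claim rather than deducing it.
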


\begin{proof}
Fix \(t\in(a,b)\).  The signed difference measures
\[
 \frac{\widetilde z_{j,\eps}
       -\widetilde z_{j,\eps}(\,\cdot-\eps t)}{\eps t}\,\mathcal L^1
\]
converge distributionally to \(D\widetilde z_j\).  Since the states are
binary, the square equals the absolute value.  Weighted lower semicontinuity
of total variation gives \(t\int w\dd|D\widetilde z_j|\).  Sum over \(j\),
apply Fatou in \(t\), and use \(\int_a^bt\lambda(\dd t)=1\).  For a fixed
target, the one-dimensional \(BV\) translation formula and its variation
majorant give equality by dominated convergence.  Separation is used only to
identify the direct rule with the split extensions on a fixed slice; every
finite slice is eventually separated, while the estimate itself contains no
separation constant.
\end{proof}

We now define the finite-scale rule.  Put
\(\Sigma=K\cup\partial\Omega\).  For the directed representative
\(y=\Phi_i^{-\eps t}(x)\) of a pair in \(B_\Omega\), count intersections of
the flow arc with \(\Sigma\).  Up to tangential and endpoint null sets, define
the symmetric typed bond \(\Psi_{i,\eps,t}^{\Omega,\mathfrak K}[u](x,y)\) as
follows:
\begin{enumerate}
 \item with no crossing, use \(|u(x)-u(y)|^2\);
 \item with one crossing \(z\in K_j\), use
 \[
  |u(x)-g_j^\sigma(z)|^2+|u(y)-g_j^{-\sigma}(z)|^2,
  \qquad \sigma=\operatorname{sign}(v_i(z)\cdot\nu_j(z));
 \]
 \item with one crossing \(z\in\partial\Omega\), use
 \(|u(x_{\rm in})-g(z)|^2\), where \(x_{\rm in}\) is the interior endpoint;
 \item with at least two crossings, use zero.
\end{enumerate}
In the first line both endpoints lie in \(\Omega\).  The sum in the second
line makes the rule independent of the orientation chosen for the pair.  By
Lemma~\ref{lem:rectifiable-crossing-parity}, any uniformly bounded symmetric
choice in the fourth line has the same Gamma-limit.
We record the measurability of this rule, including the countable label case.
Enumerate \(\mathcal J\), if necessary, and replace \(K_j\) by the disjoint
Borel representative
\begin{equation}
 K_j^\circ:=K_j\setminus\bigcup_{k<j}K_k.
 \label{eq:rectifiable-disjoint-label-representatives}
\end{equation}
This changes neither a surface integral nor a flow-pair integral: the removed
sets are \(\Hh^{d-1}\)-null, and
\eqref{eq:rectifiable-exact-crossing} makes the corresponding family of arcs
Lebesgue-null.  Choose Borel representatives of the approximate normals, bank
data, and outer trace datum after changing them on surface-null sets.  Then
\(\Sigma=\partial\Omega\cup\bigcup_jK_j^\circ\) is Borel, and so is the
incidence relation
\begin{equation}
 \mathcal R_{i,h}:=\{(x,s)\in U\times(0,h):
                 \Phi_i^{-s}x\in\Sigma\}.
 \label{eq:rectifiable-crossing-relation}
\end{equation}
By \eqref{eq:rectifiable-exact-crossing} and
\eqref{eq:rectifiable-cut-mass}, its fibre is finite for Lebesgue-almost every
starting point in every compact flow region.  The area-formula multiplicity is
Lebesgue measurable, so we may choose a Borel full-measure set
\(X_{i,h}\subset U\) on which every fibre is finite.  Apply the
Lusin--Novikov theorem to
\(\mathcal R_{i,h}\cap(X_{i,h}\times(0,h))\); it enumerates the crossing times
by Borel partial functions.  Composition with the continuous flow gives Borel
crossing points, and membership in the disjoint sets \(K_j^\circ\) gives the
unique label measurably.  Hence the crossing number, the unique-crossing point
and label, the sign of \(v_i\cdot\nu_j\), and every case of \(\Psi\) are
measurable for the completed directed pair measure.  Arcs tangent at a
crossing, arcs with an endpoint on \(\Sigma\), fibres that are infinite, and
crossings in a removed label overlap form a null family by the same area
formula; define the bond arbitrarily there.  Symmetrization then gives a
measurable function for \(\widehat Q_{i,\eps,t}\).

We now prove the global translation formula used in every subsequent lower
bound and recovery argument.  For a physical flow time \(h>0\), set
\begin{align}
 Q_{i,h}(\dd x,\dd y)&:=c_i(x)\dd x\,\delta_{\Phi_i^{-h}(x)}(\dd y),
 &\widehat Q_{i,h}&:=\tfrac12(Q_{i,h}+Q_{i,h}^{\mathsf T}),\notag\\
 \mathcal B_h^\zeta(u)&:=\sum_i\iint_{B_\Omega}
 \frac{\zeta(x)+\zeta(y)}2\Psi_{i,h}^{\Omega,\mathfrak K}[u](x,y)
 \widehat Q_{i,h}(\dd x,\dd y),
 \label{eq:global-resolved-weighted-bond}
\end{align}
where \(\zeta\in C_c(U)\), and the typed bond is the rule above with
\(\eps t\) replaced by \(h\).  Lower-bound tests will always satisfy
\(\zeta\ge0\).  We write
\(\mathcal B_h=\mathcal B_h^1\), taking a continuous cutoff equal to one on
the common flow neighbourhood of \(\overline\Omega\).

For \(u\in BV(\Omega;\{0,1\})\), define the positive Radon measure on
\(\overline\Omega\)
\begin{align}
 \nu_{\mathfrak K,u}^{g}:={}&\sum_i c_i\,|v_i\cdot D_{\mathfrak K}u|\notag\\
 &+\sum_i c_i|v_i\cdot\nu_\Omega|\,|T_\partial u-g|
       \Hh^{d-1}\mathbin{\llcorner}\partial\Omega\notag\\
 &+\sum_{i,j}c_i|v_i\cdot\nu_j|
 \bigl(|T_j^+u-g_j^+|+|T_j^-u-g_j^-|\bigr)
       \Hh^{d-1}\mathbin{\llcorner}K_j.
\label{eq:global-resolved-limit-measure}
\end{align}
The total mass of this measure is the resolved variation and contact
functional appearing below in \eqref{eq:rectifiable-direct-limit}.

\begin{theorem}[Global resolved translation and contact formula]
\label{thm:global-resolved-translation-contact}
Assume \eqref{eq:boundary-ambient-frame} and
\eqref{eq:rectifiable-cut-mass}.  Let \(h_n\downarrow0\) and let
\(u_n\in L^1(\Omega;\{0,1\})\) converge to \(u\) in \(L^1(\Omega)\).
If
\begin{equation}
 \liminf_{n\to\infty}h_n^{-1}\mathcal B_{h_n}(u_n)<\infty,
 \label{eq:global-resolved-finite-budget}
\end{equation}
then \(u\in BV(\Omega;\{0,1\})\), and for every nonnegative
\(\zeta\in C_c(U)\),
\begin{equation}
 \liminf_{n\to\infty}\frac1{h_n}\mathcal B_{h_n}^{\zeta}(u_n)
 \ge\int_{\overline\Omega}\zeta\,\dd\nu_{\mathfrak K,u}^{g}.
 \label{eq:global-resolved-sequential-liminf}
\end{equation}
Conversely, for every \(u\in BV(\Omega;\{0,1\})\),
\begin{equation}
 \lim_{h\downarrow0}\frac1h\mathcal B_h^\zeta(u)
 =\int_{\overline\Omega}\zeta\,\dd\nu_{\mathfrak K,u}^{g}.
 \label{eq:global-resolved-translation-limit}
\end{equation}
There are \(h_0>0\) and a constant \(C\), depending only on the ambient flow
atlas and coefficient bounds, such that
\begin{equation}
 \sup_{0<h<h_0}\frac1h\mathcal B_h(u)
 \le C\left(|Du|(\Omega)+\Hh^{d-1}(\partial\Omega)
       +\sum_j\Hh^{d-1}(K_j)\right).
 \label{eq:global-resolved-translation-majorant}
\end{equation}
The same statements hold on \(\mathbb T^d\) after deleting the outer-boundary
term.
\end{theorem}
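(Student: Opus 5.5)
We will reduce everything to one-dimensional statements on flow slices, using Lemma~\ref{lem:rectifiable-crossing-parity} to discard multiply crossed arcs and Lemma~\ref{lem:rectifiable-split-line} for the sliced liminf and recovery. Fix \(i\). Cover the common flow neighbourhood of \(\overline\Omega\) by finitely many flow boxes \(\Theta(y,r)\) with \(\partial_r\Theta=v_i\circ\Theta\), transverse base \(y\), and a Lipschitz partition of unity. The region \(\{|v_i|\le\delta\}\) is treated separately: its contribution to \(h^{-1}\mathcal B_h\) is \(O(\delta)\) times the total mass in \eqref{eq:global-resolved-translation-majorant}, and its contribution to \(\nu^g_{\mathfrak K,u}\) is also \(O(\delta)\). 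In a box, the pair measure \(\widehat Q_{i,h}\) disintegrates, up to the Jacobian of \(\Theta\) and the weight \(c_i\), into translations by \(h\) on lines \(r\mapsto\Theta(y,r)\). Rectifiable coarea (already used for the parity lemma) gives, for a.e. \(y\), a slice on which:
\begin{itemize}
\item \(\Sigma=K\cup\partial\Omega\) meets the line in finitely many transversal points \(q_1<\dots<q_m\);
\item the slice of \(u\) is one-dimensional \(BV\) with one-sided limits equal to the traces \(T_j^\pm u\), \(T_\partial u\);
\item the bank data \(g_j^\pm\) and \(g\) are defined at the \(q_k\).
\end{itemize}

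On such a slice the typed bond \(\Psi\) with at most one crossing is exactly the split-extension bond of \eqref{eq:rectifiable-split-extension}. The ghosts \(\gamma_{j,L},\gamma_{j,R}\) are the bank values \(g^{\mp\sigma}\) (or \(g\) at a boundary point), with the orientation sign \(\sigma\) matching the side. A zero-crossing bond is the ordinary increment inside one interval \(I_j\). A one-crossing bond splits into the two endpoint-to-ghost terms, one on each adjacent interval. Bonds with two or more crossings are discarded at cost \(o(1)\) by \eqref{eq:rectifiable-parity-error}; slice by slice this is exact for \(h\) below the minimal separation. The per-slice total variations \(\sum_j|D\widetilde z_j|\) contribute three kinds of mass:
\begin{itemize}
\item the interior variation of \(u\) off \(\Sigma\);
\item the jumps \(|T_j^\pm u-g_j^\pm|\) at cut points;
\item \(|T_\partial u-g|\) at boundary points.
\end{itemize}
Reassembling by coarea, the \(|v_i\cdot\nu|\) factors appear through \eqref{eq:rectifiable-exact-crossing}, and \(D_{\mathfrak K}u=Du\llcorner(\Omega\setminus K)\) from \eqref{eq:rectifiable-resolved-splitting}. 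This yields exactly the three lines of \eqref{eq:global-resolved-limit-measure} weighted by \(\zeta\).

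\textbf{Order of steps.}
\begin{enumerate}
\item \emph{Majorant.} Prove \eqref{eq:global-resolved-translation-majorant} first. Bound the zero-crossing part by the \(BV\) translation estimate along Lipschitz flows applied to \(u\) extended by zero. Bound the one- and multi-crossing parts by \(\|\Psi\|_\infty\le2\) times \(h^{-1}\int N^{\Sigma}_{i,h}\), which \eqref{eq:rectifiable-exact-crossing} bounds by \(C(\Hh^{d-1}(\partial\Omega)+\sum_j\Hh^{d-1}(K_j))\).
\item \emph{Translation limit.} Prove \eqref{eq:global-resolved-translation-limit}. On good slices the one-dimensional translation formula gives convergence. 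Step~1 together with its sliced analogue supplies the dominating function, and dominated convergence in \(y\) finishes.
\item \emph{Liminf.} For the sequential liminf, pass to a subsequence realizing the liminf with bounded budget. Each directional term then controls a translation difference of \(u_n\) off \(\Sigma\). By Fatou in \(y\) and Lemma~\ref{lem:rectifiable-split-line} on each good slice, the slicewise liminf dominates the slice measure.
\item \emph{Regularity of the limit.} Obtain \(u\in BV\) from finiteness of the slice variations of \(u\) in each direction \(v_i\), together with the cut contribution, which is bounded since ghost jumps cost at most \(1\) per point and \(\Hh^{d-1}(\Sigma)<\infty\). Uniform spanning \eqref{eq:boundary-ambient-frame} turns these directional bounds into full \(BV\) via the flow-box characterization of \(BV\) by slices.
\end{enumerate}

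\textbf{Main obstacle.} The hardest point is the liminf. The \(L^1\) convergence \(u_n\to u\) must transfer to a.e. slice, which requires Fubini and a further subsequence. The weight \(\zeta c_i\) times the box Jacobian must converge uniformly, as Lemma~\ref{lem:rectifiable-split-line} demands. The symmetrization with weight \((\zeta(x)+\zeta(y))/2\) and the second-endpoint Jacobian differ from the frozen weight by \(O(h)\) uniformly. I would handle the subsequence issue by working with the full function \(y\mapsto\liminf\) and Fatou, and extract the slice-wise \(L^1\) convergence from \(\int\|u_n-u\|_{L^1(\text{slice})}dy\to0\) along a subsequence chosen before applying Fatou. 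The torus case is identical with \(\partial\Omega\) removed from \(\Sigma\).
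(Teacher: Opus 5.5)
Your proposal is correct and follows essentially the paper's own proof. The shared steps are flow-box slicing with rectifiable coarea, exact identification of the typed bond with the split-extension increments once $h_n$ is below the minimal separation of the finitely many slice crossings, the weighted split-line lemma with Fatou in $y$, coarea identification of the $|v_i\cdot\nu|$ factors, and dominated convergence with a slice majorant for the fixed state.

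The differences are minor. You derive $u\in BV$ from integrable slice variations, whereas the paper uses the uncut comparison on $\Omega'\Subset\Omega$ and tests difference quotients, which makes each $v_i\cdot Du$ a finite measure even where $v_i$ vanishes. You treat $\{|v_i|\le\delta\}$ by an explicit $O(\delta)$ bound instead of monotone exhaustion. Finally, the subsequence realizing the $\zeta$-weighted liminf need not keep the unweighted budget bounded; this is harmless, since that budget is needed only for the regularity of $u$.
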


\begin{proof}
\emph{Step 1: directed representation and compactness of the limit.}
The pair set \(B_\Omega\), the weight
\((\zeta(x)+\zeta(y))/2\), and the typed bond are symmetric.  Consequently
\eqref{eq:global-resolved-weighted-bond} has the exact directed form
\begin{equation}
 \mathcal B_h^\zeta(u)=\sum_i\int_{D_{i,h}}c_i(x)
 \frac{\zeta(x)+\zeta(\Phi_i^{-h}x)}2
 \Psi_{i,h}^{\Omega,\mathfrak K}[u]
       (x,\Phi_i^{-h}x)\,\dd x,
 \label{eq:global-resolved-directed-form}
\end{equation}
where
\(D_{i,h}:=\{x\in U:x\in\Omega\ \text{or}\ \Phi_i^{-h}x\in\Omega\}\).

Suppose \eqref{eq:global-resolved-finite-budget} holds and first pass to a
subsequence, not relabelled, on which the unweighted quotient is bounded.  On every
\(\Omega'\Subset\Omega\), the ordinary binary increment is bounded by the
typed increment plus \(C\one_{\{N_{i,h_n}^K\ge1\}}\).  This follows from the
triangle inequality on a singly crossed arc; on the multiple class both sides
are bounded and the same indicator is used.  The exact crossing formula gives
\begin{equation}
 \sup_n\frac1{h_n}\sum_i\int_{\Omega'}c_i
 |u_n-u_n\circ\Phi_i^{-h_n}|\,\dd x<\infty
 \label{eq:global-resolved-interior-increment-bound}
\end{equation}
Testing the signed difference quotients against
\(C_c^1(\Omega')\) and passing to the limit shows that every
\(v_i\cdot Du\) is a finite measure there.  On each member of the finite
spanning atlas, choose \(d\) fields forming an invertible frame and reconstruct
the coordinate distributions \(D_k u\) as bounded continuous linear
combinations of these directional measures.  Hence \(Du\) is a vector measure.
The spanning inequality \eqref{eq:boundary-ambient-frame}, now applied to its
polar, gives
\(|Du|(\Omega')\le\kappa^{-1}\sum_i c_i|v_i\cdot Du|(\Omega')\).
Exhausting \(\Omega\) proves \(u\in BV(\Omega;\{0,1\})\).

\emph{Step 2: one flow box.}
Fix \(\zeta\ge0\).  If the left-hand side of
\eqref{eq:global-resolved-sequential-liminf} is infinite there is nothing to
prove.  Otherwise pass to a new subsequence, again not relabelled, which
attains that liminf.
Fix \(i\), a relatively compact flow box
\(\Theta:Y\times I\to U\) satisfying
\(\partial_r\Theta(y,r)=v_i(\Theta(y,r))\), and a nonnegative cutoff
\(\eta\in C_c(\Theta(Y\times I))\).  Put
\(J_\Theta=|\det D\Theta|\).  After extracting the subsequence realizing the
liminf, Fubini gives
\begin{equation}
 u_n\circ\Theta(y,\cdot)\longrightarrow u\circ\Theta(y,\cdot)
 \quad\text{in }L^1(I)\quad\text{for a.e. }y\in Y.
 \label{eq:global-resolved-slice-convergence}
\end{equation}
Rectifiable coarea applied to
\(\Sigma=K\cup\partial\Omega\) gives, for almost every \(y\), a finite set
\begin{equation}
 A_y:=\{r\in I:\Theta(y,r)\in\Sigma\}\cap
       \operatorname{spt}(\eta\circ\Theta).
 \label{eq:global-resolved-finite-slice}
\end{equation}
Tangential intersections form a set of zero transverse measure.  Surface-null
overlaps of the \(K_j\) also have zero transverse measure, so every internal
point of \(A_y\) has one label \(j\) and two bank values.  The components of
\(\{r:\Theta(y,r)\in\Omega\}\) cut by \(A_y\) are finitely many intervals on
the support of the cutoff.  Extend the state on each such interval by its
prescribed bank value at an internal endpoint and by \(g\) at an outer
endpoint, exactly as in \eqref{eq:rectifiable-split-extension}.

For this fixed slice, the points in \(A_y\) have positive minimum separation.
Hence, for all sufficiently large \(n\), the direct rule at distance \(h_n\)
equals the sum of the split-extension increments wherever \(\eta\ne0\).
Artificial endpoints of the flow box contribute nothing because \(\eta\)
vanishes near them.  The corresponding one-dimensional weight is
\begin{equation}
 w_{n,y}(r):=c_i(\Theta(y,r))J_\Theta(y,r)
 \frac{\eta(\Theta(y,r))+\eta(\Theta(y,r-h_n))}{2},
 \label{eq:global-resolved-slice-weight}
\end{equation}
and it converges locally uniformly to
\(w_y(r)=c_i(\Theta(y,r))J_\Theta(y,r)\eta(\Theta(y,r))\).
The fixed-shift proof of Lemma~\ref{lem:rectifiable-split-line}, with
\(\eps t=h_n\), therefore yields the complete sliced lower bound
\begin{equation}
 \liminf_{n\to\infty}\frac1{h_n}\mathcal B_{i,h_n}^{\eta,y}(u_n)
 \ge\sum_{J\in\mathcal I_y}\int_{\mathbb R}w_y\,\dd|D\widetilde u_{y,J}|.
\label{eq:global-resolved-sliced-liminf}
\end{equation}
Here \(\mathcal B_{i,h_n}^{\eta,y}\) denotes the \(y\)-slice of the
\(i\)-th localized integral in
\eqref{eq:global-resolved-directed-form}, \(\mathcal I_y\) is the finite
family of physical slice intervals, and
\(\widetilde u_{y,J}\) is its independently ghosted extension.  This formula
contains, on the same footing, the variation inside the intervals, two
endpoint mismatches at an internal cut, and one mismatch at an outer
boundary.

\emph{Step 3: transverse integration and geometric identification.}
Fatou's lemma in \(y\) applies to the nonnegative localized interaction.
The slicing theorem for \(BV\) functions gives
\begin{equation}
 \int_Y\sum_{J\in\mathcal I_y}\int_Jw_y\,\dd|D(u\circ\Theta)_y|\dd y
 =\int_{\Theta(Y\times I)\cap\Omega}
 \eta c_i\,\dd|v_i\cdot D_{\mathfrak K}u|.
 \label{eq:global-resolved-bulk-slicing}
\end{equation}
At a rectifiable endpoint, the coarea factor is identified by
\begin{equation}
 J_\Theta(y,r)\,\dd y
 =|v_i(z)\cdot\nu_A(z)|\,\dd\Hh^{d-1}(z),\qquad
 z=\Theta(y,r),\quad z\in A,
 \label{eq:global-resolved-coarea-factor}
\end{equation}
which is the cofactor identity underlying
\eqref{eq:rectifiable-exact-crossing}.  Therefore the internal endpoints in
\eqref{eq:global-resolved-sliced-liminf} integrate to
\begin{equation}
 \sum_j\int_{K_j}\eta c_i|v_i\cdot\nu_j|
 \bigl(|T_j^+u-g_j^+|+|T_j^-u-g_j^-|\bigr)\dd\Hh^{d-1},
 \label{eq:global-resolved-bank-identification}
\end{equation}
and the outer endpoints integrate to
\begin{equation}
 \int_{\partial\Omega}\eta c_i|v_i\cdot\nu_\Omega|
 |T_\partial u-g|\dd\Hh^{d-1}.
 \label{eq:global-resolved-boundary-identification}
\end{equation}

The countable cut causes no hidden limiting operation.  Indeed,
\eqref{eq:global-resolved-finite-slice} contains only finitely many active
labels on almost every localized slice.  After transverse integration the
positive endpoint terms are the sum over all \(j\) in
\eqref{eq:global-resolved-bank-identification}; equivalently, first retain
\(j\le m\) and then let \(m\uparrow\infty\).  Tonelli's theorem and
\eqref{eq:rectifiable-cut-mass} justify this monotone exhaustion.

\emph{Step 4: exhaustion of the flow atlas.}
Fix a nondecreasing \(\vartheta\in C([0,\infty);[0,1])\) that vanishes on
\([0,1/2]\) and equals one on \([1,\infty)\), and put
\(\zeta_\delta=\zeta\vartheta(|v_i|/\delta)\).  The compact support of
\(\zeta_\delta\) is contained in \(\{|v_i|>\delta/2\}\) and is covered by
finitely many protected flow boxes.  Choose a partition of unity
\(\{\theta_k\}\) equal to one in sum on a neighbourhood of that support, and
use the local weights \(\eta_k=\zeta_\delta\theta_k\) in Steps 2--3.  Write
\(\mathcal B_{i,h}^\eta\) for the \(i\)-th summand of
\eqref{eq:global-resolved-weighted-bond}.  Linearity of the endpoint average
gives the exact identity
\begin{equation}
 \sum_k\mathcal B_{i,h}^{\eta_k}(u)
 =\mathcal B_{i,h}^{\zeta_\delta}(u)
 \quad\text{for every }h>0.
 \label{eq:global-resolved-atlas-partition}
\end{equation}
Thus summing the local inequalities gives the directional limit measure tested
against \(\zeta_\delta\).  As \(\delta\downarrow0\), these tests increase to
\(\zeta\) on \(\{|v_i|>0\}\).  Monotone convergence recovers the whole
directional measure because every one of its bulk, bank, and boundary
densities contains the factor \(|v_i\cdot\nu|\) and hence vanishes on
\(\{v_i=0\}\).  Summing over the finite direction family proves
\eqref{eq:global-resolved-sequential-liminf} and identifies the right-hand
side with \eqref{eq:global-resolved-limit-measure}.

\emph{Step 5: fixed-state convergence and domination.}
Let \(u\in BV(\Omega;\{0,1\})\) be fixed.  On almost every slice, the
one-dimensional \(BV\) translation formula applied to every split extension
turns \eqref{eq:global-resolved-sliced-liminf} into an equality in the limit.
For \(0<h<h_0\), the translation inequality gives the slice majorant
\begin{equation}
 \frac1h\sum_{J\in\mathcal I_y}\int
 |\widetilde u_{y,J}(r)-\widetilde u_{y,J}(r-h)|\dd r
 \le\sum_{J\in\mathcal I_y}|D\widetilde u_{y,J}|(\mathbb R).
 \label{eq:global-resolved-slice-majorant}
\end{equation}
The right-hand side is transversely integrable: its physical part is bounded
by \(C|Du|(\Omega)\), each internal endpoint by two, and each outer endpoint
by one.  Coarea and \eqref{eq:rectifiable-cut-mass} therefore bound its
integral by the right-hand side of
\eqref{eq:global-resolved-translation-majorant}.  Multiply crossed arcs are
dominated by the same slice count and converge to zero by
\eqref{eq:rectifiable-parity-error}.  Dominated convergence first in \(y\),
then over the finite atlas and the finite direction family, proves
\eqref{eq:global-resolved-translation-limit} and
\eqref{eq:global-resolved-translation-majorant}.  The periodic proof is the
same, with no outer endpoints.
\end{proof}

Define
\begin{align}
 \mathcal H_{\eps,\Omega,\mathfrak K}^{g}(u)
 :={}&\frac M\eps\int_\Omega d_*(u)\dd x
 +\frac1\eps\sum_i\int_a^b\iint_{B_\Omega}
 \Psi_{i,\eps,t}^{\Omega,\mathfrak K}[u](x,y)
 \widehat Q_{i,\eps,t}(\dd x,\dd y)\lambda(\dd t),
 \label{eq:rectifiable-direct-energy}
\end{align}
with value \(+\infty\) outside \(0\le u\le1\).  The candidate limit is
\begin{align}
 \mathcal H_{0,\Omega,\mathfrak K}^{g}(u)
 :={}&\int_\Omega h_{\rm fl}(x,\sigma_{D_{\mathfrak K}u})
       \dd|D_{\mathfrak K}u|\notag\\
 &+\int_{\partial\Omega}h_{\rm fl}(x,\nu_\Omega)
       |T_\partial u-g|\dd\Hh^{d-1}\notag\\
 &+\sum_j\int_{K_j}h_{\rm fl}(x,\nu_j)
 \bigl(|T_j^+u-g_j^+|+|T_j^-u-g_j^-|\bigr)\dd\Hh^{d-1},
 \label{eq:rectifiable-direct-limit}
\end{align}
with value \(+\infty\) outside \(BV(\Omega;\{0,1\})\).

\begin{theorem}[Master Gamma-limit on a rectifiable resolved cut]
\label{thm:rectifiable-cut-master-gamma}
Assume \eqref{eq:ray-radial-law}, \eqref{eq:boundary-ambient-frame},
\eqref{eq:flow-ray-steepness}, and \eqref{eq:rectifiable-cut-mass}.  Then
\(\mathcal H_{\eps,\Omega,\mathfrak K}^{g}\) is equicoercive in strong
\(L^1(\Omega)\) and
\begin{equation}
 \mathcal H_{\eps,\Omega,\mathfrak K}^{g}
 \mathop{\longrightarrow}^{\Gamma}
 \mathcal H_{0,\Omega,\mathfrak K}^{g}.
 \label{eq:rectifiable-master-gamma}
\end{equation}
Every binary state of finite limit energy is recovered by the constant
sequence.  Consequently the Gamma-limit contains no compulsory term on a
surface-null branch or tip set, or on
\(\overline K\cap\partial\Omega\); this does not exclude excess concentration
for non-recovery sequences.
\end{theorem}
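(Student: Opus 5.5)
The plan is to reduce everything to Theorem~\ref{thm:global-resolved-translation-contact}, with the radial parameter playing the role of the physical flow time \(h=\eps t\). There are three steps: rounding, compactness, and the liminf/recovery identification.

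\emph{Step 1: rounding.} The typed bond \(\Psi\) is symmetric. Each of its cases is a sum of squared differences in which every endpoint value of \(u\) appears at most once per half-orientation, either against another endpoint value or against a binary ghost \(g_j^\pm\) or \(g\). Hence the pointwise estimate \eqref{eq:diffuse-rounding-pointwise}, together with its one-sided analogue \(|\chi(a)-\gamma|^2-|a-\gamma|^2\le 2d_*(a)\) for binary \(\gamma\), bounds the increase caused by threshold rounding by the marginals of \(\widehat Q_{i,\eps,t}\) restricted to \(B_\Omega\). These marginals are no larger than the ambient ones, so \eqref{eq:flow-ray-steepness} gives
\[
 \mathcal H^g_{\eps,\Omega,\mathfrak K}(\chi\circ u)\le\mathcal H^g_{\eps,\Omega,\mathfrak K}(u),
 \qquad
 \|u-\chi\circ u\|_1\le(\eps/M)\,\mathcal H^g_{\eps,\Omega,\mathfrak K}(u).
\]
It therefore suffices to treat binary \(z_\eps\). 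For these, squares equal absolute values, and the energy becomes
\[
 \int_a^b(\eps t)^{-1}\,\mathcal B_{\eps t}(z_\eps)\,t\,\lambda(\dd t).
\]

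\emph{Step 2: compactness.} On each \(\Omega'\Subset\Omega\), the ordinary increment is bounded by the typed increment plus \(C\one_{\{N^K_{i,\eps t}\ge1\}}\). By \eqref{eq:rectifiable-exact-crossing}, this indicator costs at most \(C\eps t\sum_j\Hh^{d-1}(K_j)\). Consequently \eqref{eq:flow-ray-averaged-bound} holds locally, with a constant depending only on the energy bound and the cut mass. The ordered-flow smoothing of Lemma~\ref{lem:ordered-flow-smoothing} is local: it needs only an atlas of protected charts contained in \(\Omega'\), and the lower density bound \(w\ge w_*\) lets us pass from \(\rho\,\dd t\) to \(\lambda\). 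It yields BV approximants at vanishing \(L^1\) distance, with variation bounded on \(\Omega'\). A diagonal extraction over an exhaustion then gives convergence in \(L^1_{\rm loc}\). Since the functions are bounded and \(\Omega\) is bounded, this is convergence in \(L^1(\Omega)\). Step~1 of Theorem~\ref{thm:global-resolved-translation-contact} shows that the limit lies in \(BV(\Omega;\{0,1\})\); there the uniform interior bounds come from the finite budget.

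\emph{Step 3: liminf and recovery.} This is the main obstacle. The sequential liminf of Theorem~\ref{thm:global-resolved-translation-contact} is stated along a single sequence \(h_n\to0\), whereas here the shift \(h=\eps_n t\) varies with \(t\). Fix \(t\in(a,b)\) and apply \eqref{eq:global-resolved-sequential-liminf} with \(\zeta=1\) and \(h_n=\eps_nt\), after passing to a subsequence realizing the liminf of the full energy. The finite-budget hypothesis \eqref{eq:global-resolved-finite-budget} may fail for some individual \(t\), but in that case the inequality holds trivially. This gives \(\liminf_n(\eps_nt)^{-1}\mathcal B_{\eps_nt}(z_n)\ge\nu^g_{\mathfrak K,u}(\overline\Omega)\) for every \(t\). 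Fatou's lemma in \(t\) and \(\int t\,\lambda(\dd t)=1\) then give
\[
 \liminf_n \mathcal H^g_{\eps_n,\Omega,\mathfrak K}(z_n)\ge\nu^g_{\mathfrak K,u}(\overline\Omega).
\]
By \eqref{eq:global-resolved-limit-measure}, this total mass is exactly \(\mathcal H^g_{0,\Omega,\mathfrak K}(u)\). Indeed, the bulk term \(\sum_ic_i|v_i\cdot D_{\mathfrak K}u|\) has polar density \(h_{\rm fl}(x,\sigma_{D_{\mathfrak K}u})\) with respect to \(|D_{\mathfrak K}u|\) by one-homogeneity, and the boundary and bank terms sum to \(h_{\rm fl}(x,\nu)\). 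For recovery, take \(u_\eps=u\) binary, so the potential vanishes. The fixed-state limit \eqref{eq:global-resolved-translation-limit} holds for each \(t\), and the majorant \eqref{eq:global-resolved-translation-majorant} is uniform for \(\eps t<h_0\). Dominated convergence in \(t\) then gives equality.

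The final assertions require no further work. The limit \eqref{eq:rectifiable-direct-limit} involves only \(\Hh^{d-1}\)-integrals over \(K_j\) and \(\partial\Omega\), together with \(|D_{\mathfrak K}u|\), which does not charge surface-null sets. Hence neither branch or tip sets nor \(\overline K\cap\partial\Omega\) can appear in it, and the constant recovery produces no concentration there.
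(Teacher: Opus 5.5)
Your proof is correct. It follows the paper's overall architecture: threshold rounding, reduction to Theorem~\ref{thm:global-resolved-translation-contact} at each fixed radius with \(h_n=\eps_nt\), Fatou in \(t\) using \(\int t\,\lambda(\dd t)=1\), and dominated convergence with the majorant \eqref{eq:global-resolved-translation-majorant} for constant recovery. It departs from the paper in two places.

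\textbf{Equicoercivity.} The paper bounds the glued imposed-exterior energy by the direct one, \(\mathcal G^{g}_{\eps,\Omega}\le\mathcal H^{g}_{\eps,\Omega,\mathfrak K}+C_\Sigma\) in \eqref{eq:rectifiable-glued-compactness-comparison}. It then quotes the compactness part of Theorem~\ref{thm:boundary-flow-gamma-limit}, which controls the glued field in \(V\) up to \(\partial\Omega\). Your route is more elementary and never needs the exterior extension \(g^{\rm e}\): local ordered-flow smoothing on each \(\Omega'\Subset\Omega\), a diagonal over an exhaustion, and the \(L^\infty\) bound to absorb \(\Omega\setminus\Omega'\). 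The price is that your \(BV\) bounds are only interior, with \(\Omega'\)-dependent constants. So \(u\in BV(\Omega;\{0,1\})\) must come from Step~1 of Theorem~\ref{thm:global-resolved-translation-contact}. You should say explicitly that this first requires selecting, by Fatou, a radius \(t\) with \(\liminf_n(\eps_nt)^{-1}\mathcal B_{\eps_nt}(z_n)<\infty\), which holds for \(\lambda\)-a.e.\ \(t\). Only then is \(\nu^g_{\mathfrak K,u}\) defined, so that the radii you call trivial in your Step~3 make sense.

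\textbf{Liminf.} The paper introduces the localized interaction measures \(\mu_n\), extracts \(\mu_n\stackrel{*}{\rightharpoonup}\mu\), and proves \(\mu\ge\nu^g_{\mathfrak K,u}\) by testing against every \(\zeta\ge0\). Your total-mass argument with \(\zeta\equiv1\) is shorter and suffices for the Gamma-liminf. However, it does not give the localized defect statement \(\mu-\nu^g_{\mathfrak K,u}\ge0\) behind the paper's remarks on excess concentration. Similarly, your closing claim that constant recovery creates no concentration on null strata needs one more line: apply \eqref{eq:global-resolved-translation-limit} with arbitrary \(\zeta\) to get weak-star convergence of the recovery interaction measures to \(\nu^g_{\mathfrak K,u}\). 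By contrast, the absence of a compulsory term in the Gamma-functional does follow directly from the form of \eqref{eq:rectifiable-direct-limit}, as you say.
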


\begin{proof}
\emph{Step 1: rounding and equicoercivity.}
The threshold map \(\chi(r)=\one_{\{r\ge1/2\}}\) satisfies
\begin{align}
 |\chi(r)-\chi(s)|^2-|r-s|^2
 &\le2\{d_*(r)+d_*(s)\},\notag\\
 |\chi(r)-\gamma|^2-|r-\gamma|^2
 &\le2d_*(r),\qquad \gamma\in\{0,1\}.
 \label{eq:rectifiable-rounding-inequalities}
\end{align}
The two marginals of \(\widehat Q_{i,\eps,t}\) obey the Liouville bounds used
in \eqref{eq:flow-ray-steepness}.  Integrating
\eqref{eq:rectifiable-rounding-inequalities} over ordinary and typed bonds,
and then using \eqref{eq:flow-ray-steepness}, gives
\begin{equation}
 \mathcal H_{\eps,\Omega,\mathfrak K}^{g}(\chi\circ u)
 \le \mathcal H_{\eps,\Omega,\mathfrak K}^{g}(u),
 \qquad
 \|u-\chi\circ u\|_{L^1(\Omega)}
 \le\frac\eps M\mathcal H_{\eps,\Omega,\mathfrak K}^{g}(u).
 \label{eq:rectifiable-rounding-global}
\end{equation}

Put \(\Sigma=K\cup\partial\Omega\).  Since
\(\one_{\{N\ge1\}}\le N\), the exact area formula
\eqref{eq:rectifiable-exact-crossing}, first for every \(K_j\) and then for
\(\partial\Omega\), yields
\begin{align}
 &\sup_{0<\eps<\eps_0}\frac1\eps
 \sum_i\int_a^b\int_U c_i(x)
 \one_{\{N_{i,\eps t}^{\Sigma}(x)\ge1\}}\dd x\lambda(\dd t)\notag\\
 &\qquad\le C\left(\Hh^{d-1}(\partial\Omega)
             +\sum_j\Hh^{d-1}(K_j)\right)<\infty.
 \label{eq:rectifiable-crossing-budget}
\end{align}
Here \(N^{\Sigma}\le N^{\partial\Omega}+\sum_jN^{K_j}\) outside the null
set where a crossing has more than one label, and Tonelli justifies the
countable sum.  For binary states the glued exterior bond in
\eqref{eq:boundary-crossing-flow-energy} is bounded by the typed bond plus a
fixed multiple of \(\one_{\{N_{i,\eps t}^{\Sigma}\ge1\}}\).  Therefore
\begin{equation}
 \mathcal G_{\eps,\Omega}^{g}(u)
 \le \mathcal H_{\eps,\Omega,\mathfrak K}^{g}(u)+C_{\Sigma}
 \quad\text{for every binary }u,
 \label{eq:rectifiable-glued-compactness-comparison}
\end{equation}
with \(C_\Sigma\) independent of \(u\) and \(\eps\).  Notice that this
comparison uses only boundedness of the binary mismatch; it does not assert a
uniform \(o(1)\) comparison between the two boundary rules.  Equations
\eqref{eq:rectifiable-rounding-global}--
\eqref{eq:rectifiable-glued-compactness-comparison} and the compactness part
of Theorem~\ref{thm:boundary-flow-gamma-limit} prove equicoercivity.

\emph{Step 2: localized interaction measures.}
Let \(\eps_n\downarrow0\), let \(u_n\to u\) in \(L^1(\Omega)\), and suppose
that the liminf of the energies is finite.  Pass to a subsequence attaining
that liminf and put \(z_n=\chi\circ u_n\).  By
\eqref{eq:rectifiable-rounding-global}, \(z_n\to u\), the sequence remains
energy bounded, and \(u\) is binary.  For \(\zeta\in C_c(U)\), set
\begin{equation}
 L_n(\zeta):=\frac1{\eps_n}\int_a^b
       \mathcal B_{\eps_nt}^{\zeta}(z_n)\lambda(\dd t).
 \label{eq:rectifiable-interaction-measure}
\end{equation}
This is a positive linear functional on \(C_c(U)\).  By the Riesz
representation theorem there is a unique positive Radon measure \(\mu_n\)
such that \(L_n(\zeta)=\int_U\zeta\dd\mu_n\).  Equivalently, \(\mu_n\) is the
radial average of the two endpoint pushforwards of the positive typed pair
measure.  Its support lies in one fixed compact flow neighbourhood of
\(\overline\Omega\), and its mass is bounded by the interaction part of the
energy.  After a further subsequence,
\begin{equation}
 \mu_n\stackrel{*}{\rightharpoonup}\mu
 \quad\text{in }\mathcal M^+(U).
 \label{eq:rectifiable-interaction-measure-limit}
\end{equation}

\emph{Step 3: fixed-radius localization and slicing.}
For fixed \(t\in(a,b)\), set \(h_n=\eps_nt\) and
\begin{equation}
 f_n(t):=\frac1{h_n}\mathcal B_{h_n}(z_n),
 \qquad
 f_n^\zeta(t):=\frac1{h_n}\mathcal B_{h_n}^\zeta(z_n).
 \label{eq:rectifiable-radial-quotients}
\end{equation}
The interaction bound says
\(\sup_n\int_a^b t f_n(t)\lambda(\dd t)<\infty\).  Fatou's lemma therefore
implies \(\liminf_nf_n(t)<\infty\) for \(\lambda\)-almost every \(t\).
Selecting one such radius and applying
Theorem~\ref{thm:global-resolved-translation-contact} first shows that
\(u\in BV(\Omega;\{0,1\})\), so
\(\nu_{\mathfrak K,u}^{g}\) is well defined.
For each such \(t\), Theorem~\ref{thm:global-resolved-translation-contact}
applies to the single sequence \(h_n\downarrow0\) and gives
\begin{equation}
 \liminf_{n\to\infty}f_n^\zeta(t)
 \ge\int_{\overline\Omega}\zeta\,
       \dd\nu_{\mathfrak K,u}^{g}.
 \label{eq:rectifiable-fixed-radius-local-liminf}
\end{equation}
For completeness, the geometric content of this invocation is the following.
In every relatively compact flow box, Fubini gives strong \(L^1\) convergence
on almost every one-dimensional slice.  Rectifiable coarea makes
\(\Sigma\) a finite labelled set on that localized slice.  The components of
the physical slice are independently extended by their two bank ghosts or by
the single outer ghost, and the weighted split-line inequality
\eqref{eq:global-resolved-sliced-liminf} gives the lower bound before
transverse integration.  The cofactor identity
\eqref{eq:global-resolved-coarea-factor} converts the endpoint counts into the
bank and outer-boundary fluxes, while
\eqref{eq:global-resolved-bulk-slicing} gives the resolved bulk variation.
A finite partition of unity on \(\{|v_i|>\delta\}\), followed by
\(\delta\downarrow0\), exhausts the flow boxes.  Thus
\eqref{eq:rectifiable-fixed-radius-local-liminf} is a global localized
inequality, not a fixed-chart assertion.

\emph{Step 4: radial Fatou and domination of Radon measures.}
Using \eqref{eq:rectifiable-interaction-measure},
\eqref{eq:rectifiable-interaction-measure-limit}, and Fatou once more gives
for every \(0\le\zeta\in C_c(U)\)
\begin{align}
 \int_U\zeta\,\dd\mu
 &=\lim_{n\to\infty}\int_a^b t f_n^\zeta(t)\lambda(\dd t)\notag\\
 &\ge\int_a^b t\liminf_{n\to\infty}f_n^\zeta(t)\lambda(\dd t)\notag\\
 &\ge\left(\int_a^bt\lambda(\dd t)\right)
       \int_{\overline\Omega}\zeta\,
       \dd\nu_{\mathfrak K,u}^{g}
 =\int_{\overline\Omega}\zeta\,\dd\nu_{\mathfrak K,u}^{g}.
 \label{eq:rectifiable-measure-domination-tests}
\end{align}
The positive-test characterization of Radon measures now yields
\begin{equation}
 \mu\ge\nu_{\mathfrak K,u}^{g}.
 \label{eq:rectifiable-measure-domination}
\end{equation}
Choosing a compactly supported cutoff equal to one on the common flow
neighbourhood and using the polar identity
\(\sum_i c_i|v_i\cdot D_{\mathfrak K}u|
=h_{\rm fl}(x,\sigma_{D_{\mathfrak K}u})|D_{\mathfrak K}u|\), we obtain
\begin{equation}
 \liminf_{n\to\infty}
 \mathcal H_{\eps_n,\Omega,\mathfrak K}^{g}(u_n)
 \ge\nu_{\mathfrak K,u}^{g}(\overline\Omega)
 =\mathcal H_{0,\Omega,\mathfrak K}^{g}(u).
 \label{eq:rectifiable-master-liminf-expanded}
\end{equation}

\emph{Step 5: countable exhaustion.}
If \(\mathcal J\) is countable, fix an enumeration and let \(\beta_m\) be
the sum of the two-bank measures in
\eqref{eq:global-resolved-limit-measure} over the first \(m\) labels.  On
almost every localized slice only finitely many points of \(\Sigma\) occur,
so its endpoint lower bound contains a finite sum even when the global label
set is countable.  After transverse integration, Tonelli's theorem gives
\begin{equation}
 \beta_m\uparrow\beta:=\sum_{j\in\mathcal J}\sum_i
 c_i|v_i\cdot\nu_j|
 \bigl(|T_j^+u-g_j^+|+|T_j^-u-g_j^-|\bigr)
 \Hh^{d-1}\mathbin{\llcorner}K_j.
 \label{eq:rectifiable-countable-bank-exhaustion}
\end{equation}
The total masses are bounded by a constant times
\(\sum_j\Hh^{d-1}(K_j)\).  Monotone convergence therefore passes from every
finite label family to the full bank measure in
\eqref{eq:rectifiable-measure-domination-tests}.  No diagonal depending on
the label number, the transverse slice, or the radial parameter is taken.

\emph{Step 6: constant recovery and absence of compulsory lower-dimensional
cost.}
Let \(u\in BV(\Omega;\{0,1\})\) have finite limit energy.  Its potential
vanishes.  For every fixed \(t\in(a,b)\), the fixed-state part of
Theorem~\ref{thm:global-resolved-translation-contact} gives
\begin{equation}
 \frac1{\eps}\mathcal B_{\eps t}(u)
 =t\frac1{\eps t}\mathcal B_{\eps t}(u)
 \longrightarrow t\nu_{\mathfrak K,u}^{g}(\overline\Omega).
 \label{eq:rectifiable-fixed-radius-recovery}
\end{equation}
For all sufficiently small \(\eps\),
\eqref{eq:global-resolved-translation-majorant} bounds the left-hand side by
\begin{equation}
 Ct\left(|Du|(\Omega)+\Hh^{d-1}(\partial\Omega)
               +\sum_j\Hh^{d-1}(K_j)\right),
 \label{eq:rectifiable-radial-recovery-majorant}
\end{equation}
which is integrable against \(\lambda\).  Dominated convergence and
\(\int_a^bt\lambda(\dd t)=1\) prove
\begin{equation}
 \mathcal H_{\eps,\Omega,\mathfrak K}^{g}(u)
 \longrightarrow\mathcal H_{0,\Omega,\mathfrak K}^{g}(u).
 \label{eq:rectifiable-constant-recovery-expanded}
\end{equation}
The same argument with an arbitrary test \(\zeta\) proves weak-star
convergence of the recovery interaction measures to
\(\nu_{\mathfrak K,u}^{g}\).  Surface-null branch, tip, and
cut--boundary incidence strata therefore carry no compulsory term in the
Gamma-functional and no mass for this recovery sequence.  This conclusion
does not preclude a nonnegative excess measure along other bounded-energy
sequences.
\end{proof}

\begin{corollary}[Periodic rectifiable resolved cut]
\label{cor:periodic-rectifiable-cut}
Let \(\Omega=\mathbb T^d\), assume
\eqref{eq:ray-radial-law}, \eqref{eq:flow-ray-regularity},
\eqref{eq:flow-ray-spanning}, and \eqref{eq:flow-ray-steepness}, and let
\(\mathfrak K\) satisfy \eqref{eq:rectifiable-cut-mass}.  Define the direct
energy by the first two cases of the typed rule, with no outer boundary.  Then
it is equicoercive and Gamma-converges to the boundary-free restriction of
\eqref{eq:rectifiable-direct-limit}, with constant recovery for every binary
finite-energy state.
\end{corollary}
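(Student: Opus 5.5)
The plan is to rerun the proof of Theorem~\ref{thm:rectifiable-cut-master-gamma} on \(\mathbb T^d\), with \(\Sigma=K\) in place of \(K\cup\partial\Omega\), and check that each step either survives unchanged or simplifies. The energy consists of the steep potential \(\frac M\eps\int_{\mathbb T^d}d_*(u)\) and the typed bonds \(\Psi\) in their first two cases (no crossing, or one labelled crossing of some \(K_j\)); arcs with two or more crossings get an arbitrary bounded symmetric value. The measurability construction before Theorem~\ref{thm:global-resolved-translation-contact} goes through verbatim: the disjoint representatives \(K_j^\circ\), the Lusin--Novikov enumeration, and the null exceptional families all come from Lemma~\ref{lem:rectifiable-crossing-parity} applied to \(\Sigma=K\). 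Compactness of the torus replaces the compact flow neighbourhood of \(\overline\Omega\).

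\emph{Rounding and compactness.} The two inequalities in \eqref{eq:rectifiable-rounding-inequalities}, combined with the Liouville bounds on the marginals of \(\widehat Q_{i,\eps,t}\) and the steepness \eqref{eq:flow-ray-steepness}, give the rounding estimate \eqref{eq:rectifiable-rounding-global} on \(\mathbb T^d\). So we may pass to binary states at vanishing \(L^1\) cost. For binary \(z\), each ordinary bond on \(\mathbb T^d\) is bounded by the typed bond plus \(C\one_{\{N^K_{i,\eps t}\ge1\}}\). The exact crossing formula \eqref{eq:rectifiable-exact-crossing}, summed over \(j\) by Tonelli, bounds the resulting crossing budget by \(C\sum_j\Hh^{d-1}(K_j)\). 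This gives \(\mathcal G^{c,v}_\eps(z)\le\mathcal H_\eps(z)+C_K\), where \(\mathcal H_\eps\) is the typed energy of the statement. Equicoercivity then follows from Theorem~\ref{thm:flow-ray-gamma-limit} (equivalently Lemma~\ref{lem:flow-ray-compactness}), which needs only \eqref{eq:flow-ray-regularity} and \eqref{eq:flow-ray-spanning}.

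\emph{Liminf and recovery.} Theorem~\ref{thm:global-resolved-translation-contact} explicitly includes the periodic case, with the outer-boundary term deleted. Given this, Steps 2--6 of the master proof carry over word for word. We define the interaction measures \(\mu_n\) by Riesz representation, with \(U=\mathbb T^d\). For \(\lambda\)-almost every radius \(t\), Fatou gives \(\liminf_n f_n(t)<\infty\), and the sequential liminf \eqref{eq:global-resolved-sequential-liminf} at that radius shows \(u\in BV(\mathbb T^d;\{0,1\})\). A second radial Fatou, together with \(\int t\,\dd\lambda=1\), yields \(\mu\ge\nu^{g}_{\mathfrak K,u}\) without its boundary line. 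Testing with \(\zeta\equiv1\), which is admissible on the compact torus, and applying the polar identity for \(h_{\rm fl}\) gives the liminf inequality. The countable-label exhaustion of Step 5 is unchanged. For recovery, the fixed-state limit \eqref{eq:global-resolved-translation-limit} at each fixed \(t\) is dominated by the majorant \eqref{eq:global-resolved-translation-majorant} without \(\Hh^{d-1}(\partial\Omega)\), so dominated convergence in \(t\) proves that the constant sequence is a recovery sequence.

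\emph{Main obstacle.} The one point that needs care is confirming that the periodic version of Theorem~\ref{thm:global-resolved-translation-contact} really needs no boundary-free adaptation. Its Step 1 reconstructs \(Du\) from directional measures on a finite spanning flow atlas, which exists on \(\mathbb T^d\) by Lemma~\ref{lem:ordered-flow-local-coordinates}. Its Step 2 uses flow boxes: here a flow line may wrap around and revisit \(K\). However, on a fixed compact slice interval, rectifiable coarea still gives a finite crossing set \(A_y\), so the split-extension argument of Lemma~\ref{lem:rectifiable-split-line} applies with only the internal ghost types, and the physical slice intervals are simply the components of the slice minus \(A_y\). No outer endpoints arise, so the term \eqref{eq:global-resolved-boundary-identification} is absent. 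With this checked, the corollary follows.
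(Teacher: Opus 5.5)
Your proposal is correct and follows the paper's own route. The paper's proof simply reruns Theorem~\ref{thm:rectifiable-cut-master-gamma} on periodic flow boxes, with no outer endpoints in the split-line argument and no boundary-ghost comparison. Your steps spell this out in more detail: the compactness step compares with the uncut torus energy $\mathcal G^{c,v}_\eps$ via the crossing budget and invokes Lemma~\ref{lem:flow-ray-compactness}, and the liminf and recovery steps use the periodic case of Theorem~\ref{thm:global-resolved-translation-contact}. Your worry about flow lines wrapping around the torus is harmless, since every flow box used in the slicing is an injective local chart over a short parameter interval.
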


\begin{proof}
Repeat the proof of Theorem~\ref{thm:rectifiable-cut-master-gamma} on the
periodic flow boxes.  There are no outer endpoints in the split-line theorem
and no boundary ghost comparison.  All remaining compactness, liminf, and
recovery estimates are unchanged.
\end{proof}

\begin{remark}[Why rectifiability remains]
\label{rem:rectifiability-structural}
The theorem removes smoothness, not geometry altogether.  Rectifiability
provides the approximate normal in \(h_{\rm fl}(x,\nu_j)\), the two one-sided
\(BV\) traces, and almost-everywhere discrete flow slices.  On a purely
\((d-1)\)-unrectifiable set these objects may all fail, so the candidate limit
would be a capacity or another nonlocal content rather than the Griffith
surface integral in \eqref{eq:rectifiable-direct-limit}.
\end{remark}

\subsection{BV on a finite resolved cut complex}
\label{subsec:resolved-cut-bv}

The slit and fan limits below live on completions that have one copy of
physical volume and several traces over a cut.  We now define their \(BV\)
space without referring to a preferred atlas.  This also records the precise
compactness and gluing facts used later.  Abstract metric definitions of
variation are available under broad hypotheses
\citep{miranda2003functions,ambrosio2014equivalent}; the finite resolved
geometry considered here permits the more concrete construction below and,
in particular, keeps track of which codimension-one faces are glued.

\begin{definition}[Finite resolved presentation]
\label{def:finite-resolved-presentation}
A finite resolved presentation of a measured completion \((X,m,p)\) consists
of bounded Lipschitz chambers \(C_1,\ldots,C_A\subset\Omega\) and canonical
injective interior embeddings \(\iota_a:C_a\to X\), with
\(p\circ\iota_a(x)=x\), such that:
\begin{enumerate}
 \item the sets \(p(\iota_a(C_a))\) are pairwise disjoint up to Lebesgue-null
 sets and cover the physical body up to a null set;
 \item every relatively open codimension-one chamber face is, up to an
 \(\Hh^{d-1}\)-null set, either an outer face, a resolved face, or one member
 of a paired family of transparent seams;
 \item the two members of each transparent pair are identified by the
 physical projection with opposite measure-theoretic normals, whereas no
 identification is imposed between the banks of a resolved face.
\end{enumerate}
Artificial subdivisions of a chamber create transparent seams.  A refinement
is obtained by finitely subdividing chambers and declaring every new internal
face transparent.
 We call the presentation \emph{regular} if its faces belong to a finite
piecewise-\(C^{1,1}\) stratification compatible with the prescribed physical
cut stratification.  The physical faces retain their designated slit or fan
incidence, including the meeting of arbitrarily many labelled fan faces along
one codimension-two front.  Auxiliary faces must meet one another and every
physical stratum transversely, with the expected codimension, and all nonzero
auxiliary incidence angles and all sector angles are bounded below.  Thus its
chambers are Lipschitz domains with corners, with constants uniform over the
finite family.  In particular, regularity does not place the physical sheets
of a fan in general position.
\end{definition}

\begin{lemma}[Regular presentations and transverse refinement]
\label{lem:resolved-common-refinement}
The curved-slit and finite sectorial completions introduced below admit
regular finite resolved presentations.  Two such presentations possess a
common regular finite Lipschitz refinement whenever their distinct auxiliary
strata meet transversely.  Every fixed regular presentation admits arbitrary
finite transverse refinements, which introduce only transparent faces.
\end{lemma}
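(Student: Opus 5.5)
The three assertions will be proved in the order they are stated.

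\emph{Existence for a single completion.} For a curved slit we use the fact that it is a compact \(C^{1,1}\) hypersurface with boundary of positive reach. For a finite sectorial (fan) completion we use that its labelled faces are \(C^{1,1}\) sheets meeting along one codimension-two front, with sector angles bounded below. Around the cut we fix a tubular collar of width \(r\) smaller than the reach. We then cut each collar by finitely many auxiliary \(C^{1,1}\) hypersurfaces: level sets of the signed distance to the slit, and for a fan, radial half-planes emanating from the front inside each sector together with spheres around the front. Each piece is chosen to be a Lipschitz graph over a coordinate plane with uniform constants. The complement of the collar is subdivided by a fixed cubical grid in general position with respect to the collar boundary. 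Every new face is declared transparent, and the two banks of each physical face are declared resolved. Conditions 1--3 of Definition~\ref{def:finite-resolved-presentation} then hold by construction, and \(p\circ\iota_a=\mathrm{id}\) is just the inclusion. Lower bounds on the angles come from two facts: the auxiliary hypersurfaces are orthogonal to the physical sheets at the sheets, and there are finitely many strata, so compactness gives uniform constants.

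\emph{Common refinement.} Given two regular presentations, take all intersections \(C_a\cap C'_b\) of their chambers. Every face of such an intersection lies in a face of one of the two presentations. Physical faces already coincide, because both presentations are compatible with the same physical stratification. By hypothesis, distinct auxiliary strata meet transversely, so every intersection of auxiliary faces is a \(C^{1,1}\) stratum of the expected codimension. A quantitative transversality angle, obtained by compactness over finitely many strata, makes each component a Lipschitz domain with corners and uniform constants. Components of these intersections that have empty interior or are disconnected are split into their connected components, and the new faces are declared transparent. The pairing of transparent seams is inherited, and no identification is ever introduced across a resolved face.

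\emph{Arbitrary transverse refinement.} A refinement is obtained by intersecting with a finite family of \(C^{1,1}\) hypersurfaces transverse to all existing strata. This is the common-refinement argument with the second presentation taken to be the grid generated by these hypersurfaces, and it creates only transparent faces.

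The main obstacle is the uniform Lipschitz-with-corners control at the codimension-two fan front. There, arbitrarily many physical sheets meet and the auxiliary cuts must pass through or near the front. To handle this I would first blow up in polar coordinates around the front, where the sectors become products of an interval of angles with a half-line. In those coordinates the auxiliary radial and spherical cuts become coordinate hyperplanes, and I would then check that the \(C^{1,1}\) straightening changes angles by only \(O(r)\).
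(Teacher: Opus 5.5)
Your two refinement steps agree with the paper's overlay argument. You keep every physical face with its label, declare all other faces transparent, and obtain a uniform angle from transversality and compactness of the finitely many stratum closures. The gap is in the existence step for the curved slit.

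\textbf{The slit chambers are not Lipschitz.} You cut the collar only by level sets of the signed distance to \(K\). These hypersurfaces are parallel to \(K\) and never meet it. Since \(\operatorname{dist}(\cdot,K)>0\) off \(K\), the only zero level set is \(K\) itself, so nothing issues from \(\Gamma\) into the region beyond the front. The level sets \(\{s=c\}\) and \(\{s=-c\}\) close up around \(\Gamma\) into \(\partial\{\operatorname{dist}(\cdot,K)<c\}\). Hence the innermost chamber \(\{\operatorname{dist}(\cdot,K)<c\}\setminus K\) still contains a neighbourhood of \(\Gamma\) minus the slit. In the front coordinates \((z,\varrho,\theta)\) it is the whole region \(-\pi<\theta<\pi\), of angular opening \(2\pi\). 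It lies on both sides of each bank and is not a Lipschitz domain. No parallel layering can repair this, because the two banks are connected around the front.

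\textbf{The missing idea.} The paper cuts every front chart along an auxiliary radial face issuing from \(\Gamma\), e.g.\ \(\theta=0\), a continuation of \(K\) past its front, and declares it transparent. The two resulting angular chambers have opening \(\pi\). Positive reach then gives uniform Lipschitz constants once the tangential variable \(z\) is subdivided into boxes. You did use radial faces for the fan, whose uncut sectors already have opening at most \(2\pi-2\theta_*\), but not for the slit, which is exactly where they are indispensable. For the same reason, your angle argument (auxiliary faces orthogonal to the sheets) does not apply to the level sets you actually chose.

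\textbf{Two smaller points.}
\begin{itemize}
\item A fan has zero reach. For \(N\ge3\) some sector has opening less than \(\pi\), and points of its medial ridge arbitrarily close to \(\Gamma\) have two nearest points on \(\mathcal K\). So a tubular collar of width below the reach does not exist there, and the distance level sets have ridges. You should work directly in the fan charts, as your polar blow-up paragraph in effect proposes and as the paper does.
\item The cubical grid on the complement of the collar is unnecessary. In the slit case that complement is bounded by \(\partial\Omega\) and the \(C^{1,1}\) hypersurface \(\{\operatorname{dist}(\cdot,K)=r\}\), so it is already Lipschitz. Placing a grid in general position is a genericity selection you would still have to justify, with uniform angles, for merely \(C^{1,1}\) surfaces; for \(d\ge4\) this does not follow directly from Sard's theorem. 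The paper explicitly avoids any such selection.
\end{itemize}
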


\begin{proof}
In an ordinary chart there is no physical cut.  In a slit chart the prescribed
physical strata are the two banks and their common front, and one chooses an
auxiliary radial face; in a fan chart they are the finitely many labelled rays
and their common axis, and one chooses an auxiliary face in the interior of
each sector.  Subdivide the
tangential chart variables into finitely many boxes compactly contained in
the larger chart and use the radial or sectorial variables for the remaining
faces.  Positive reach in the slit case and the angle gap
\eqref{eq:fan-angle-gap} in the fan case give uniform Lipschitz constants.
This constructs a regular presentation.

For two presentations satisfying the stated transversality condition, overlay
their finite auxiliary families while keeping the prescribed physical
stratification and every labelled physical face fixed.  A finite transverse piecewise-
\(C^{1,1}\) arrangement has finitely many strata; its connected top-dimensional
components are Lipschitz domains with corners.  Compactness of the finitely
many stratum closures gives one positive lower bound for all nonzero incidence
angles.  Label every overlaid physical face by its original outer or resolved
label and every other face transparent.  The connected components, with
these labels, form a common regular finite Lipschitz refinement.  Auxiliary
faces introduced in a further subdivision are transparent by definition.  If
a proposed finite auxiliary family already satisfies the transversality
requirement in the statement, overlay its members one at a time.  At every
step the same finite-arrangement argument applies, and a finite induction
produces the asserted transverse refinement.  No generic perturbation or
unproved transversality selection is used.
\end{proof}

Let \(D\subset\Omega\) be the physical open set before metric completion and
let \(\jmath:D\to X^\circ\) be its canonical interior embedding.  Thus
\(D=\Omega\setminus\mathfrak C\) for a resolved cut complex.  We make the
atlas-free definitions
\begin{equation}
 BV_{\rm res}(X):=\{u\in L^1(X,m):u\circ\jmath\in BV(D)\},
 \qquad
 D_Xu:=\jmath_\#D(u\circ\jmath).
 \label{eq:resolved-bv-definition}
\end{equation}
For a finite presentation, write \(u_a=u\circ\iota_a\).  Let
\(\mathcal T\) contain each transparent pair once, written
\((a,b,S)\), and orient its physical face by the outer normal \(\nu_a\)
of \(C_a\).  Repeated \(BV\) gluing gives the equivalent chamber formula
\begin{align}
 D_Xu={}&\sum_{a=1}^A(\iota_a)_\#
       (Du_a\mathbin{\llcorner}C_a)\notag\\
 &+\sum_{(a,b,S)\in\mathcal T}
   (T_au_a-T_bu_b)\nu_a\,
   \Hh^{d-1}\mathbin{\llcorner}S .
 \label{eq:resolved-derivative-definition}
\end{align}
In the second line the two transparent copies of \(S\) are identified in
\(X\); equivalently, the measure is pushed to their common copy.  Thus an
actual jump across an artificial subdivision is retained.  No analogous term
is inserted between the two banks of a resolved face.
The one-sided trace on an outer or resolved face is the ordinary Lipschitz
trace of the corresponding chamber representative.  In particular, opposite
banks of one resolved sheet have independent traces.  We use
\(BV_{\rm res}(X;\{0,1\})\) for the binary subspace.

\begin{proposition}[Presentation independence, traces, compactness, and gluing]
\label{prop:resolved-bv-package}
Let \((X,m,p)\) arise by resolving a finite cut geometry and admit a finite
resolved presentation.  Then the intrinsic definition
\eqref{eq:resolved-bv-definition} is equivalent to the chamber condition
\(u_a\in BV(C_a)\) for every \(a\), and every finite presentation computes
the same vector measure through
\eqref{eq:resolved-derivative-definition}.  No common-refinement hypothesis
is required.  More precisely:
\begin{enumerate}
 \item every outer or resolved face has a bounded trace map
 \[
  T_F:BV_{\rm res}(X)\longrightarrow L^1(F),\qquad
  \|T_Fu\|_{L^1(F)}\le C_F(\|u\|_{L^1(X)}+|D_Xu|(X));
 \]
  \item if \(u_n\) is bounded in \(L^1(X)\) and in resolved variation, then a
 subsequence converges in \(L^1(X)\) to some \(u\in BV_{\rm res}(X)\), and
 \[
  |D_Xu|(X)\le\liminf_n|D_Xu_n|(X);
 \]
 \item if \(h:\overline\Omega\times\R^d\to[0,\infty)\) is continuous,
 convex and positively one-homogeneous in its second variable, then
 \begin{equation}
 \int_X h(p,\sigma_{D_Xu})\dd|D_Xu|
  \le\liminf_n\int_Xh(p,\sigma_{D_Xu_n})\dd|D_Xu_n|;
  \label{eq:resolved-anisotropic-lsc}
 \end{equation}
 \item if, in addition, \(h(x,\xi)=h(x,-\xi)\), \(F\) is a fixed outer or
 resolved face, and \(g_F\) is the trace of a fixed \(BV\) ghost on the
 opposite collar, then the bulk-plus-contact
 functional
 \begin{equation}
  u\longmapsto \int_Xh(p,\sigma_{D_Xu})\dd|D_Xu|
   +\int_Fh(x,\nu_F)|T_Fu-g_F|\dd\Hh^{d-1}
  \label{eq:resolved-contact-lsc}
 \end{equation}
 is lower semicontinuous under \(L^1\) convergence with bounded resolved
 variation;
 \item for two adjacent chambers \(C_a,C_b\) with common Lipschitz face
 \(S\), the physical gluing \(w=u_a\one_{C_a}+u_b\one_{C_b}\) satisfies
 \begin{equation}
  Dw=Du_a\mathbin{\llcorner}C_a+Du_b\mathbin{\llcorner}C_b
   +(T_au_a-T_bu_b)\nu_a\,
     \Hh^{d-1}\mathbin{\llcorner}S.
  \label{eq:resolved-bv-gluing}
 \end{equation}
  Hence a transparent seam contributes no derivative exactly when its two
  traces agree; otherwise its trace jump is part of \(D_Xu\).  A resolved face
  is not glued and contributes no intrinsic jump even when its two bank traces
  differ.
\end{enumerate}
The same conclusions hold locally and for finite unions of face patches.
Under chamberwise strict \(BV\) convergence, all fixed-face traces converge in
\(L^1\).
\end{proposition}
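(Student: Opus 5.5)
The plan is to reduce every assertion to ordinary Euclidean \(BV\) theory on the finitely many Lipschitz chambers and on the physical open set \(D\).

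\textbf{Equivalence of definitions.} The first step is the equivalence between \eqref{eq:resolved-bv-definition} and the chamber condition. If \(u\circ\jmath\in BV(D)\), then restriction gives \(u_a\in BV(C_a)\), because each \(C_a\) is an open subset of \(D\) up to its faces. For the converse, physical glue the chamber representatives across the transparent seams only. Each resolved face is removed from \(D\), so \(D\) is covered, up to an \(\Hh^{d-1}\)-null set, by the open chambers together with the relative interiors of the transparent faces. Repeated use of the Lipschitz gluing formula \eqref{eq:resolved-bv-gluing}, which is item~5 and the standard Gauss--Green product rule for two Lipschitz domains sharing a face, then shows that \(u\circ\jmath\in BV(D)\). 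The same computation proves formula \eqref{eq:resolved-derivative-definition}.

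\textbf{Presentation independence.} This is immediate once the equivalence holds: the left side \(\jmath_\#D(u\circ\jmath)\) makes no reference to a presentation, and \eqref{eq:resolved-derivative-definition} has just been shown to equal it. This is why no common refinement is needed.

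\textbf{Traces.} For item~1, a face \(F\) is a face of a single chamber \(C_a\). I take \(T_F u\) to be the Lipschitz trace of \(u_a\) and use the bound \(\|T u_a\|_{L^1(\partial C_a)}\le C(\|u_a\|_{L^1}+|Du_a|(C_a))\). It then remains to check that \(|Du_a|(C_a)\le|D_Xu|(X)\), which holds because the first line of \eqref{eq:resolved-derivative-definition} is a restriction of \(D_Xu\) to disjoint sets.

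\textbf{Compactness and anisotropic semicontinuity.} For item~2, on each chamber the sequence is bounded in \(BV(C_a)\). Rellich compactness on Lipschitz domains gives a chamberwise \(L^1\) limit, and the chamber condition places it in \(BV_{\rm res}\). Lower semicontinuity of total variation on the open set \(D\) then gives the stated inequality, since \(u_n\circ\jmath\to u\circ\jmath\) in \(L^1(D)\) with bounded variation. Item~3 follows in the same way from the Reshetnyak lower semicontinuity theorem for continuous, convex, positively one-homogeneous integrands applied on \(D\), with \(D_Xu_n\) pulled back to \(D\).

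\textbf{Contact semicontinuity (main obstacle).} Item~4 is the delicate step, because traces are not continuous under weak convergence. I would absorb the contact term into a bulk term. Glue \(u_a\) on \(C_a\) to the fixed ghost \(g^{\rm e}\) on a thin collar on the opposite side of \(F\), exactly as in Lemma~\ref{lem:boundary-ambient-completion}. By \eqref{eq:resolved-bv-gluing}, the glued variation contains \(h(x,\nu_F)|T_Fu-g_F|\) on \(F\); evenness of \(h\) is used here, since the polar there is \(\pm\nu_F\). The ghost contribution is a fixed finite term independent of \(n\). Applying item~3 on the enlarged open set, with the collar kept disjoint from the other chambers (this can be arranged locally and then summed over face patches), and subtracting the fixed ghost term proves the claim.

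\textbf{Trace convergence under strict convergence.} The final sentence follows from the continuity of the trace operator on a Lipschitz domain under strict \(BV\) convergence \citep{ambrosio2000functions}, applied to each \(C_a\). The localized versions are the same arguments carried out on subsets.
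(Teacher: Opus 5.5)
Your proposal is correct and follows essentially the same route as the paper's proof. Both use chamberwise Euclidean trace and compactness theorems, and iterated Lipschitz gluing across the transparent seams to obtain both the equivalence and formula \eqref{eq:resolved-derivative-definition}. Both use uniqueness of the distributional derivative on \(D\) for presentation independence, Reshetnyak lower semicontinuity on \(D\), gluing to the fixed ghost with subtraction of its fixed contribution for the contact term, and strict-convergence continuity of Lipschitz traces. Your extra remarks only make explicit what the paper leaves implicit: the bound \(|Du_a|(C_a)\le|D_Xu|(X)\), the use of evenness of \(h\) at the glued face, and the need to localize the ghost collar to face patches.
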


\begin{proof}
On one presentation, the trace estimate is the standard trace theorem on
each Lipschitz chamber.  Formula \eqref{eq:resolved-bv-gluing} is the ordinary
\(BV\) product and gluing formula, first for smooth functions and then by
strict approximation \citep{ambrosio2000functions}.  Iterating it over the
finite transparent adjacency graph shows both implications
\[
 u\circ\jmath\in BV(D)
 \quad\Longleftrightarrow\quad
 u_a\in BV(C_a)\ \text{for every }a,
\]
and identifies \(\jmath_\#D(u\circ\jmath)\) with the right-hand side of
\eqref{eq:resolved-derivative-definition}.  Since the distributional
derivative on the fixed open set \(D\) is unique, this proves presentation
independence even when two auxiliary stratifications have no common
Lipschitz refinement.
For completeness, consider one refinement step which divides \(C\) into
\(C^+\) and \(C^-\) along \(S\), and orient \(S\) by the outer normal
\(\nu_+\) of \(C^+\).  The exact identity
\begin{equation}
 Du\mathbin{\llcorner}C
 =Du^+\mathbin{\llcorner}C^+
  +Du^-\mathbin{\llcorner}C^-
  +(T_+u^+-T_-u^-)\nu_+\,
    \Hh^{d-1}\mathbin{\llcorner}S
 \label{eq:one-step-transparent-refinement}
\end{equation}
holds after all three measures are pushed to the old chamber.  The last term
is exactly the transparent-seam term inserted by
\eqref{eq:resolved-derivative-definition}.  Hence the refined and unrefined
vector measures are equal, not merely equal in total mass.  In particular,
for every continuous convex positively one-homogeneous \(h\),
\begin{equation}
 \int h(p,\sigma_{D_Xu})\dd|D_Xu|
 =\int h(p,\sigma_{D_X^{\rm ref}u})\dd|D_X^{\rm ref}u|.
 \label{eq:anisotropic-refinement-invariance}
\end{equation}
Thus \eqref{eq:anisotropic-refinement-invariance} also gives the explicit
one-step verification of the intrinsic argument.

For compactness, the first line of
\eqref{eq:resolved-derivative-definition} bounds the variation in every
chamber.  Apply Euclidean \(BV\) compactness chamber by chamber and use one
finite diagonal.  The trace-jump measures on transparent seams are exactly
the boundary parts in the glued representatives and hence part of the unique
distributional derivative on \(D\).  Euclidean lower semicontinuity therefore
retains, rather than deletes, their limiting contribution and proves the
scalar lower bound.  Applying Reshetnyak lower semicontinuity on \(D\) gives
\eqref{eq:resolved-anisotropic-lsc}, including every transparent-seam polar.
To prove \eqref{eq:resolved-contact-lsc}, glue \(u_n\) to the fixed ghost across
\(F\).  Formula \eqref{eq:resolved-bv-gluing} identifies the jump part of the
glued anisotropic variation with the displayed contact term; ordinary lower
semicontinuity and subtraction of the fixed ghost contribution give the
claim.  Finally, chamberwise strict convergence and the Lipschitz trace
theorem give the stated trace convergence.
\end{proof}

\begin{remark}[The slit and fan presentations]
\label{rem:slit-fan-resolved-presentations}
For a curved slit, cut each front chart along one auxiliary radial ray; the
two resulting angular chambers are Lipschitz, the auxiliary ray is
transparent, and the two banks are resolved faces.  For a sectorial junction,
 subdivide each sector wedge along one interior bisector.  The bisectors and
 ordinary chart boundaries are transparent, while the sheet banks are
 resolved.  Positive reach, the finite atlas, and the angle gap verify the
 hypotheses of Lemma~\ref{lem:resolved-common-refinement}.  Transverse choices
have a common regular refinement, while arbitrary choices still compute the
same distributional derivative on the physical open set by
Proposition~\ref{prop:resolved-bv-package}.  Thus both completions below are
independent of every auxiliary radial or chart seam.
\end{remark}

\begin{lemma}[Finite disjoint resolved-cell localization]
\label{lem:resolved-cell-localization}
Let \(X\) be either regular resolved completion in
Remark~\ref{rem:slit-fan-resolved-presentations}.  Denote by \(\mathcal F\)
its finite labelled family of outer and resolved faces, by \(\Sigma\) the
codimension-two physical strata, and by \(\mathcal S\) the auxiliary
transparent seams.  For \(u\in BV_{\rm res}(X;\{0,1\})\), set
\begin{align}
 \mathfrak m_u:={}&
 h_{\rm fl}(p,\sigma_{D_Xu})|D_Xu|\notag\\
 &+\sum_{F\in\mathcal F}
 h_{\rm fl}(\cdot,\nu_F)|T_Fu-g_F|\,
 \Hh^{d-1}\mathbin{\llcorner}F ,
 \label{eq:resolved-sharp-measure}
\end{align}
where the face measures and the tubular neighbourhoods are understood on
their labelled copies.  For almost every \(r,\eta>0\) and every
\(\delta>0\), there is a finite family of
pairwise disjoint resolved cells, each compactly contained in a slightly
larger flow box, with the following properties.  Here
\(\mathcal E_\eps\) may be any nonnegative flow-bond energy whose detector is
the ordinary endpoint rule when no resolved crossing occurs and the
endpoint-to-own-face rule on a stable resolved crossing, and whose ordinary and
ambient-completed cell energies are those treated above.
\begin{enumerate}
 \item Every cell is of exactly one type: an interior cell, an outer-boundary
 cell, or a one-sided cell meeting one labelled resolved face and its fixed
 opposite-collar ghost.  The larger boxes avoid
  \(T_r(\Sigma)\cup T_\eta(\mathcal S)\).
 \item The cells exhaust the retained sharp measure:
 \begin{equation}
  \mathfrak m_u\!\left(
   (X\sqcup\mathcal F)\setminus
   \bigl(T_r(\Sigma)\cup T_\eta(\mathcal S)
         \cup\textstyle\bigcup_j C_j\bigr)\right)<\delta .
  \label{eq:resolved-cell-exhaustion}
 \end{equation}
 \item For all sufficiently small \(\eps\), the interaction pairs whose full
 flow arcs remain in the corresponding larger box can be assigned to
 pairwise disjoint measurable pair sets \(\mathcal P_{j,\eps}\).  On an
 interior cell the restricted interaction is the ordinary ambient
 interaction.  On a face cell, after adding the fixed ghost--ghost
 interaction, it is the ambient crossing-bond interaction.
 \item Consequently, every bounded-energy sequence \(u_\eps\to u\) satisfies
 \begin{equation}
  \liminf_{\eps\downarrow0}\mathcal E_\eps(u_\eps)
  \ge\sum_j\mathfrak m_u(C_j),
  \label{eq:resolved-cell-finite-liminf}
 \end{equation}
 for either the slit or fan detector on the retained region.  For the constant
 binary recovery sequence, the interaction on the same region converges to
 \(\mathfrak m_u\) there; only the quantified singular tubes and auxiliary
 seam strips remain.
\end{enumerate}
\end{lemma}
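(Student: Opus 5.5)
The plan is to build the cells by a Vitali-type covering of the retained part of \(\mathfrak m_u\), organized by stratum type, and then to localize the energy through the pair sets. The whole construction is carried out away from the singular tubes. First fix \(r,\eta\) so that \(\mathfrak m_u\) gives zero mass to \(\partial T_r(\Sigma)\) and \(\partial T_\eta(\mathcal S)\); this holds for all but countably many values, hence almost every. On the compact complement \(X\setminus(T_r(\Sigma)\cup T_\eta(\mathcal S))\), every point has one of three types. It is an interior point of one chamber, at positive distance from all faces; or it lies on an outer face; or it lies on exactly one labelled resolved face, with a definite bank. By positive reach in the slit case, and by the angle gap in the fan case, there is a uniform scale \(\rho_0(r,\eta)>0\) such that such a point has a flow box of size \(\rho_0\). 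For a face point this box is a Lipschitz graph chart over a single face, transverse to every \(v_i\) with \(v_i\cdot\nu_F\ne0\). Its larger box still avoids \(T_{r/2}(\Sigma)\cup T_{\eta/2}(\mathcal S)\). On the labelled copy one then takes the one-sided half of that box.

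Next I select the cells. Cubes or half-cylinders centred at these points, with radii below \(\rho_0\), form a fine cover. I choose the radii so that \(\mathfrak m_u\) does not charge their boundaries, which is possible for almost every radius. The Vitali covering theorem for the Radon measure \(\mathfrak m_u\) on \((X\sqcup\mathcal F)\) minus the tubes then gives a countable disjoint family covering all but an \(\mathfrak m_u\)-null set. Truncating this family to finitely many cells gives \eqref{eq:resolved-cell-exhaustion} with error \(<\delta\). I shrink the chosen cells slightly inside their enlarged boxes so that the enlarged boxes of distinct cells are still disjoint, with a fixed gap \(g>0\). Each cell is of exactly one type, which proves item~1.

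For item 3, take \(\eps\) so small that \(\eps b\sup|v_i|<g/2\). An arc whose full trajectory lies in a given enlarged box then lies in no other box. Assigning it to that box gives disjoint measurable sets \(\mathcal P_{j,\eps}\); measurability comes from the Lusin--Novikov description of the incidence relation established before Theorem~\ref{thm:global-resolved-translation-contact}. Inside an interior box no face is met, so the detector is the ordinary endpoint rule and the restricted interaction is the ambient one. Inside a one-sided face box, the only crossings are with the single face \(F\). Every stable crossing then uses the endpoint-to-own-face rule, so adding the fixed ghost--ghost bonds reproduces exactly the ambient crossing-bond energy \eqref{eq:boundary-crossing-flow-energy} relative to that face. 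This is the same bookkeeping as in Lemma~\ref{lem:boundary-ambient-completion}. Multiply crossed arcs are negligible by Lemma~\ref{lem:rectifiable-crossing-parity}.

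Item 4 follows from this. By nonnegativity, \(\mathcal E_\eps(u_\eps)\ge\sum_j\) (energy restricted to \(\mathcal P_{j,\eps}\)). On each cell, the localized lower bound of Theorem~\ref{thm:global-resolved-translation-contact}, or of Theorem~\ref{thm:boundary-flow-gamma-limit} for face cells, is applied with a cutoff \(\zeta_j\) that equals one on \(C_j\) and is supported in the box. This gives \(\liminf\ge\mathfrak m_u(C_j)\). The fixed ghost term converges, by \eqref{eq:boundary-fixed-exterior-limit}, and so can be subtracted. Summing the finitely many cells is superadditive for the liminf. For the constant sequence, the fixed-state identity \eqref{eq:global-resolved-translation-limit} gives convergence of each localized interaction to \(\mathfrak m_u\), because \(\mathfrak m_u(\partial C_j)=0\).

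I expect the main obstacle to be the uniform scale \(\rho_0\). It must guarantee that a box centred near a fan face, but outside \(T_r(\Sigma)\), meets no second sheet, and also that arcs of length \(O(\eps)\) starting in the cell cannot reach another face before leaving the box. I would first extract \(\rho_0\) from the angle gap \eqref{eq:fan-angle-gap}, which gives the separation distance \(\gtrsim r\sin\alpha_{\min}\) between distinct rays at distance \(r\) from the axis. I would then shrink \(\rho_0\) by the Lipschitz constant of the flow.
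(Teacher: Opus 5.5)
The route you take differs from the paper's only in how the cells are selected. The paper uses almost-every-translated rectangular grids in each regular chart: a tangential grid, and a one-sided normal grid at a face. It refines these grids on overlaps by Lemma~\ref{lem:resolved-common-refinement}, gives each cell to the first chart in a fixed order, and removes thin collars. You instead use a fine cover by boundary-null cubes and half-cylinders.

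Your covering can be made to work, and it avoids the overlap refinement and the priority ordering. However, \(\mathfrak m_u\) is not doubling, so you need the Besicovitch form of Vitali with \emph{centred} cubes in a flattened chart. Realize each half-cylinder as a centred cube intersected with the closed half-space, with the face copy carried on the boundary hyperplane. Then select chart by chart in the complement of the closed cells already chosen. Also take cells inside the open retained set, so that the larger boxes avoid \(T_r(\Sigma)\cup T_\eta(\mathcal S)\) itself, as the statement requires, and not only the half-tubes.

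The genuine gap is item 3 at a resolved face.

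\emph{Why the disjoint-box rule fails.} To exhaust both contact measures \(h_{\rm fl}|T_F^\pm u-g_F^\pm|\,\Hh^{d-1}\), you need one-sided cells on \emph{both} banks over the same patch of \(K\). A face cell's larger box must contain the full crossing arcs; otherwise its restricted energy has no resolved bond and its liminf sees no contact term. So that box reaches physically across \(K\) into the opposite collar, which is exactly where the other bank's cell and the interior cells on that side live. Consequently:
\begin{itemize}
 \item a resolved arc from \(x\) in the \(+\)-cell to \(y\) in the \(-\)-cell lies in both larger boxes;
 \item an ordinary arc on the \(-\) side near \(K\) lies in a \(-\)-side box and also in the \(+\)-cell's ghost collar.
\end{itemize}
Hence your claim that an arc in one enlarged box lies in no other is false. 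A physical gap \(g\) between the boxes cannot be arranged without giving up one bank's contact measure, which breaks item 2. Any per-arc assignment then either double counts or deprives one bank of its crossing bonds, and so of \(\int h_{\rm fl}|T^\mp u-g^\mp|\).

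\emph{The fix.} Assign by endpoint and bank rather than by physical box. Split the resolved integrand \(|u(x)-G^{\sigma_x}(y)|^2+|u(y)-G^{\sigma_y}(x)|^2\) into its two summands, and give each summand to the cell on the bank of its own physical endpoint. Give an ordinary bond, including an even-crossing one, to the cell containing its endpoints on their common side. These pieces are disjoint and their sum is at most \(\mathcal E_\eps\). On a face cell, the assigned pieces plus the fixed ghost--ghost bonds are exactly the ambient crossing-bond energy of the physical field glued to that bank's ghost. The paper's ``first cell'' rule for unordered arcs needs the same per-summand reading.

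\emph{A correction in the recovery step.} For the recovery statement, use the ambient glued \(BV\) flow-translation formula behind Lemma~\ref{lem:boundary-ambient-completion}, and then subtract the ghost--ghost term. The identity \eqref{eq:global-resolved-translation-limit} that you cite concerns the direct rule, which evaluates traces at the crossing point rather than collar ghosts at the endpoint.
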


\begin{proof}
Choose \(r\) and \(\eta\) so that the boundaries of the two deleted
neighbourhoods have zero \(\mathfrak m_u\)-mass; only countably many radii are
excluded.  In every regular chart, take a rectangular grid in tangential
coordinates and a one-sided rectangular grid in the normal coordinate at a
face.  Almost every translation of each grid has zero
\(\mathfrak m_u\)-mass on all grid faces, by Fubini.  Refine the finitely many
grids on chart overlaps using Lemma~\ref{lem:resolved-common-refinement}.
Give a cell to the first chart in a fixed finite ordering and remove a thin
collar of the resulting cell boundaries.  The cells are then pairwise
disjoint, their enlarged boxes remain in the retained region, and each
enlarged box meets at most one labelled face.  First choosing the grids fine
enough and then shrinking the collars, Radon inner regularity and continuity
from above give \eqref{eq:resolved-cell-exhaustion}.  At a resolved face the
two labelled banks are treated as different copies, so the construction never
identifies their traces.

The flow displacement is at most
\(\eps bV_0e^{\eps bV_1}\).  Hence, for fixed cells and sufficiently small
\(\eps\), every arc issued from the inner part of a cell either remains in
its enlarged box or belongs to the discarded boundary collar.  Regard the
symmetrized interaction as a measure on unordered arcs.  Assign an arc to
the first cell whose inner part contains one endpoint and whose enlarged box
contains the full arc.  The resulting measurable sets
\(\mathcal P_{j,\eps}\) are disjoint.  Restricting a nonnegative energy to
their union can only decrease it.

In an interior box there is no resolved crossing.  In a one-sided face box
there is exactly one stable resolved sheet and the detector is the endpoint-to-own-face
rule.  Gluing the physical field to its prescribed fixed ghost and adding the
ghost--ghost bonds therefore gives exactly the ambient energy in the larger
box.  The ordinary interior Gamma-liminf or
Theorem~\ref{thm:boundary-flow-gamma-limit}, followed by subtraction of the
convergent fixed ghost contribution, yields \(\mathfrak m_u(C_j)\).
There are finitely many cells, so the liminf of their sum is at least the sum
of their liminfs; this proves
\eqref{eq:resolved-cell-finite-liminf}.  Letting \(\delta\downarrow0\), and
then using continuity from below as \(\eta\downarrow0\) and \(r\downarrow0\),
recovers the complete sharp measure because
\(\mathfrak m_u(\Sigma)=0\).

For the upper bound, glue chamberwise to the prescribed neighbouring chamber
or ghost and apply the \(BV\) flow-translation formula to these finitely many
global glued fields.  Formula
\eqref{eq:one-step-transparent-refinement} recombines all auxiliary-seam
interior and trace-jump terms into \(D_Xu\) exactly, while subtraction of the
fixed ghost--ghost terms leaves the labelled contact measures.  Thus the retained interactions
converge to \(\mathfrak m_u\) without a patchwise overlap error.  The only
unaccounted interactions have an endpoint or arc in the deleted singular
tubes or seam strips, as asserted.
\end{proof}

\subsection{A two-face theorem on an intrinsic curved-slit completion}
\label{subsec:lifted-cut-domain}

We next remove the global graph and planar-chord assumptions.  Let
\(\Omega\subset\R^d\), \(d\ge2\), be a bounded connected \(C^{1,1}\) domain.
Let \(K\Subset\Omega\) be a compact, embedded, orientable \(C^{1,1}\)
hypersurface-with-boundary, with nonempty \(C^{1,1}\) front
\(\Gamma:=\partial K\).  Fix the continuous orientation \(\nu_K\).
We assume that there is a geometric scale \(\rho_*>0\) such that
\begin{equation}
 \operatorname{dist}(K,\partial\Omega)\ge4\rho_*,
 \qquad \operatorname{reach}(K),\operatorname{reach}(\Gamma)\ge4\rho_*,
 \qquad \operatorname{Lip}(\nu_K)\le\rho_*^{-1}.
 \label{eq:graph-slit-geometry}
\end{equation}
The reach inequalities are understood in the ambient Euclidean metric.
They exclude self-contact and give unique nearest points wherever they are
used below.  These hypotheses, rather than a hidden global graph, are the
proved compatibility class.

\paragraph{Intrinsic completion.}
Give \(\Omega\setminus K\) its intrinsic length distance \(d_{\Omega\setminus
K}\), and set
\begin{equation}
 X_K:=\overline{(\Omega\setminus K,d_{\Omega\setminus K})},\qquad
 p:X_K\longrightarrow\overline\Omega,\qquad
 m_K:=\jmath_\#\bigl(\mathcal L^d\mathbin{\llcorner}(\Omega\setminus K)\bigr),
 \label{eq:slit-single-volume-measure}
\end{equation}
where \(\jmath:\Omega\setminus K\to X_K\) is the canonical isometric
embedding.  Equivalently, \(p\) is one-to-one \(m_K\)-a.e. and
\(p_\#m_K=\mathcal L^d\llcorner\Omega\); all completion points have zero
\(m_K\)-mass.  The projection has
one preimage over \(\Omega\setminus K\), two preimages \(q^\pm\) over
\(q\in K^\circ:=K\setminus\Gamma\), and one preimage over \(q\in\Gamma\).
Thus volume is not doubled, the two open faces are, and the faces merge at
the front.

For completeness, this description is intrinsic.  On
\(K_{2r}:=\{q\in K:\operatorname{dist}_K(q,\Gamma)>2r\}\), the normal map
\begin{equation}
 F(q,s)=q+s\nu_K(q),\qquad |s|<2r<\rho_*,
 \label{eq:slit-tip-chart}
\end{equation}
is injective, \(s\) is the oriented signed tubular coordinate, and
\[
 |\det DF(q,s)-1|\le C|s|/\rho_*.
\]
At a front point, a \(C^{1,1}\) chart identifies \(X_K\) with
\((z,\varrho,\theta)\), \(0\le\varrho<r\), \(-\pi\le\theta\le\pi\):
\(\theta=\pm\pi\), \(\varrho>0\), are the banks and all
\((z,0,\theta)\) are one point.  Its metric and volume Jacobians differ
from the flat half-plane model by at most \(C\varrho/\rho_*\).
The positive reach and compactness supply finitely many such oriented
charts with a common radius.  On an overlap their signed coordinates have
the same sign, because both orientations induce \(\nu_K\); hence the bank
labels and the quotient are independent of the atlas.

Let \(g_\partial\in BV(\partial\Omega;\{0,1\})\) have the fixed ambient
extension \(g^{\rm e}\) used in Lemma~\ref{lem:lipschitz-boundary-package}.
Let \(g^\pm\in BV(K;\{0,1\})\).  In the two normal collars choose binary
ghost fields \(G^\pm\) with traces \(g^\pm\), where \(G^+\) is defined on
the negative collar and \(G^-\) on the positive collar.  They are prescribed
ghosts, not positive-volume copies of the unknown.
Such ghosts always exist in the present class: positive reach makes each
truncated one-sided tubular collar a bounded Lipschitz domain.  Apply the
surjectivity and binary level-set construction of
Lemma~\ref{lem:lipschitz-boundary-package} with datum \(g^\pm\) on its base
and zero on the remaining collar boundary.  No compatibility is required at
\(\Gamma\), which is \(\Hh^{d-1}\)-null and is removed by the shrinking gate.

\paragraph{Intrinsic finite-\(\eps\) crossing detector.}
For a flow bond \(y=\Phi_i^{-\eps t}(x)\), write
\[
 \gamma_{i,x,t}^\eps(s):=\Phi_i^{-s}(x),\qquad0\le s\le\eps t,
 \qquad r_\eps:=\sqrt\eps .
 \label{eq:slit-tip-gate-width}
\]
For \(\eps b<\rho_*/8\), any such arc meeting \(K\) and staying at distance
\(r_\eps\) from \(\Gamma\) lies in one oriented tubular chart.  Define
\begin{equation}
 \mathfrak c_{\eps,i,t}(x):=
 \begin{cases}
  1,&\operatorname{dist}(\gamma_{i,x,t}^\eps,\Gamma)>r_\eps
  \ \text{and}\ I_2(\gamma_{i,x,t}^\eps,K)=1,\\
  0,&\text{otherwise}.
 \end{cases}
 \label{eq:slit-crossing-detector}
\end{equation}
Here \(I_2\) is the mod-two intersection number relative to the endpoints.
In any oriented tubular chart it is exactly
\[
 I_2(\gamma,K)=
 \frac{1-\operatorname{sgn}s(\gamma(0))
             \operatorname{sgn}s(\gamma(\eps t))}{2}.
\]
This formula also defines the value at a tangential contact by stable
one-sided approximation.  It is unchanged on chart overlaps and under
perturbations of the flow arc that avoid \(K\cup T_{r_\eps}(\Gamma)\).
In particular it detects an odd crossing, ignores a tangential touch or an
even recrossing, and never uses a global chord intersection.  If
\(\mathfrak c_{\eps,i,t}(x)=1\), let
\(\sigma_x,\sigma_y\in\{+,-\}\) be the oriented banks containing \(x,y\);
then \(\sigma_y=-\sigma_x\).

For the single physical field \(u\in L^1(X_K,m_K)=L^1(\Omega)\), glued to
\(g^{\rm e}\) outside \(\Omega\) and denoted by \(\bar u^{\rm p}\), define
on the support of the \(i,t\) flow pair
\begin{equation}
 \Psi_{\eps,K}^{i,t}[u](x,y):=
 \begin{cases}
 |u(x)-G^{\sigma_x}(y)|^2+
 |u(y)-G^{\sigma_y}(x)|^2,
 &x,y\in\Omega,\ \mathfrak c_{\eps,i,t}(x)=1,\\
 |\bar u^{\rm p}(x)-\bar u^{\rm p}(y)|^2,
 &\text{otherwise}.
 \end{cases}
 \label{eq:slit-finite-epsilon-pairing}
\end{equation}
The transposed half of \(\widehat Q_{i,\eps,t}\) uses the transposed
expression, so \(\Psi\) is pair-symmetric.  The first line replaces one
physical odd crossing by the two endpoint-to-own-face contacts.  The second
line retains ordinary physical pairing around the unresolved front and
along every noncrossing or even-crossing arc.

With \(B_\Omega\) from \eqref{eq:crossing-bond-set}, set
\begin{align}
 \mathcal G_{\eps,K}^{g_\partial,g^+,g^-}(u)
 :={}&\frac M\eps\int_\Omega d_*(u)\dd x\notag\\
 &+\frac1\eps\sum_i\int_a^b
 \iint_{B_\Omega}\Psi_{\eps,K}^{i,t}[u](x,y)
 \widehat Q_{i,\eps,t}(\dd x,\dd y)\lambda(\dd t).
 \label{eq:slit-prelimit-energy}
\end{align}

\begin{lemma}[Detector consistency, two face factors, and front estimate]
\label{lem:slit-face-tip-estimate}
Let \(K_r=\{q\in K:\operatorname{dist}_K(q,\Gamma)>r\}\).
There are \(C,r_0>0\), depending only on \(d,\rho_*,K,\Gamma\) and the
fixed flow bounds, such that, for \(0<r<r_0\),
\begin{align}
 |T_r(\Gamma)|&\le Cr^2\Hh^{d-2}(\Gamma),
 \label{eq:slit-full-tip-tube-volume}\\
 \mathcal L^d(A_{i,\eps,t}(r))
 &\le C\eps t\bigl(r+\eps t\bigr)\Hh^{d-2}(\Gamma),
 \label{eq:slit-active-tip-volume}\\
 \frac1\eps\widehat Q_{i,\eps,t}
 \bigl(A_{i,\eps,t}(r)\times U\cup U\times A_{i,\eps,t}(r)\bigr)
 &\le Ct(r+\eps t)\Hh^{d-2}(\Gamma).
 \label{eq:slit-normalized-tip-bound}
\end{align}
Here \(A_{i,\eps,t}(r)\) consists of endpoints whose flow arc can change
classification when crossings within intrinsic distance \(r\) of \(\Gamma\)
are left unresolved.  After integration in \(t\), the normalized bound is
\(C(r+\eps)\Hh^{d-2}(\Gamma)\).  On \(K_r\), each ghost summand has contact
coefficient one.

Moreover, in a tubular box of diameter \(\ell\le\rho_*/8\),
\begin{align}
 \left|s(\Phi_i^{-\tau}(x))
 -s(x)+\tau v_i(\pi_Kx)\cdot\nu_K(\pi_Kx)\right|
 &\le C\tau^2\bigl(\rho_*^{-1}+\operatorname{Lip}v_i\bigr),\notag\\
 |J\Phi_i^{-\tau}(x)-1|
 &\le C\tau\operatorname{Lip}v_i ,
 \label{eq:slit-curvature-flow-estimate}
\end{align}
and the normalized freezing error on \(K_r\) is at most
\[
 C\eps\int_a^b(t^2+t^3)\lambda(\dd t)
 \bigl(\rho_*^{-1}+\max_i\operatorname{Lip}v_i\bigr)
 \Hh^{d-1}(K_r).
\]
If a proof uses a finite hard partition of the oriented atlas and
\(\Sigma\) is the union of its \(C^{1,1}\) seams, deleting an
\(\eta\)-strip around \(\Sigma\) costs at most
\begin{equation}
 C(\eta+\eps)\Hh^{d-2}(\Sigma).
 \label{eq:slit-chart-seam-estimate}
\end{equation}
The detector itself has zero seam discrepancy; \eqref{eq:slit-chart-seam-estimate}
is only a localization estimate.  These quantitative bounds concern the
geometry of bounded binary detector integrands.  No \(O(\eps)\) rate is
asserted for the \(BV\) flow-translation limit itself.
\end{lemma}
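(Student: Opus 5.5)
The plan is to work entirely in the finite atlas of oriented charts supplied by \eqref{eq:graph-slit-geometry}: tubular charts \(F(q,s)=q+s\nu_K(q)\) over \(K_{2r}\) and front charts in the coordinates \((z,\varrho,\theta)\) near \(\Gamma\). All bounds are uniform because the displacement of a flow arc is at most \(\eps bV_0e^{\eps bV_1}\ll\rho_*\). Estimate \eqref{eq:slit-full-tip-tube-volume} comes first. Since \(\operatorname{reach}(\Gamma)\ge4\rho_*\), the Steiner--Federer tube formula for sets of positive reach shows that \(|T_r(\Gamma)|\) is a polynomial in \(r\) whose lowest term is \(\omega_2r^2\Hh^{d-2}(\Gamma)\). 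The higher terms are controlled by the curvature measures, which in turn are bounded by \(\rho_*^{-k}\Hh^{d-2}(\Gamma)\). Hence \(|T_r(\Gamma)|\le Cr^2\Hh^{d-2}(\Gamma)\) for \(r<r_0\).

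For \eqref{eq:slit-active-tip-volume} I would characterize \(A_{i,\eps,t}(r)\). An endpoint can change classification only if its arc meets \(K\) within intrinsic distance \(r\) of \(\Gamma\), or passes within \(r\) of \(\Gamma\). Either way, the arc meets the strip \(K\setminus K_{r}\) or the tube \(T_{r}(\Gamma)\). The first case is handled by the exact crossing formula \eqref{eq:rectifiable-exact-crossing} of Lemma~\ref{lem:rectifiable-crossing-parity}, applied with \(A=K\setminus K_r\) and \(h=\eps t\). It bounds the volume by \(C\eps t\,\Hh^{d-1}(K\setminus K_r)\le C\eps t\,r\,\Hh^{d-2}(\Gamma)\), using the one-sided collar formula for \(K\) near its \(C^{1,1}\) boundary. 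In the second case the endpoint lies in \(T_{r+C\eps t}(\Gamma)\setminus T_r(\Gamma)\) or in \(T_r(\Gamma)\). Restricting to arcs that actually cross the strip gives a slab of normal thickness \(\eps t\) over a \((d-1)\)-dimensional set of size \(r+\eps t\) times \(\Hh^{d-2}(\Gamma)\). Together these give \(C\eps t(r+\eps t)\Hh^{d-2}(\Gamma)\). Inequality \eqref{eq:slit-normalized-tip-bound} then follows from the marginal (Liouville) bounds on \(\widehat Q_{i,\eps,t}\): both marginals have density at most \(c^*e^{\eps_0bD_i}\). Dividing by \(\eps\) and integrating against \(\lambda\) on \((a,b)\) yields \(C(r+\eps)\Hh^{d-2}(\Gamma)\).

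The factor one on \(K_r\) is Lemma~\ref{lem:boundary-halfspace-cell}, applied after freezing in a tubular chart, with one copy for each ghost summand in \eqref{eq:slit-finite-epsilon-pairing}. Each oriented bank sees the contact \(ct|v\cdot n|\,|A-B|\). In the first line of \eqref{eq:slit-curvature-flow-estimate}, Taylor-expand \(\tau\mapsto s(\Phi_i^{-\tau}x)\). Its derivative is \(-\nabla s\cdot v_i=-\nu_K(\pi_K\cdot)\cdot v_i\), and the second derivative is bounded by \(\|D^2s\|\,|v_i|^2+|\nabla s|\,|Dv_i|\,|v_i|\). The \(C^{1,1}\) reach bound gives \(\|D^2s\|\le C\rho_*^{-1}\) in the tube. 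The Jacobian estimate is Liouville's formula \(J\Phi^{-\tau}=\exp(-\int_0^\tau\operatorname{div}v_i)\). Using these expansions to compare the actual slab with the frozen slab, together with coefficient freezing, costs \(O(\tau^2)\) in thickness over area \(\Hh^{d-1}(K_r)\). After division by \(\eps\) this produces the \(t^2\) and \(t^3\) moments. The seam bound \eqref{eq:slit-chart-seam-estimate} repeats the tube-plus-slab argument with \(\Sigma\) in place of \(\Gamma\), where \(T_\eta(\Sigma)\) is a codimension-one strip. The volume of endpoints within \(\eta\) of \(\Sigma\) is \(\le C\eta\,\eps\)-scaled, since only arcs meeting \(K\) matter. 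Zero seam discrepancy holds because the displayed chart formula for \(I_2\) agrees on overlaps with consistent orientation.

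I expect the main obstacle to be pinning down \(A_{i,\eps,t}(r)\) precisely enough to obtain the product structure \(\eps t(r+\eps t)\) rather than just \(r^2\). Arcs passing near the front without crossing \(K\) must be shown to be either not reclassified or confined to a thin set. My first approach would be to argue that classification changes only if \(I_2\) is computed differently. That can happen only when the arc meets \(K\cap T_{r+\eps t}(\Gamma)\). The exact crossing formula over the strip \(K\cap T_{r+C\eps t}(\Gamma)\) then delivers the desired bound directly.
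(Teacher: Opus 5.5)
Your proposal is correct and follows essentially the paper's route. It uses a normal-coordinate volume bound for \(T_r(\Gamma)\); the Steiner--Federer formula is a heavier form of the paper's Jacobian \(1+O(r/\rho_*)\) computation. It then counts the arcs whose crossing point lies in the near-front strip of \(K\): your final-paragraph reduction to the exact crossing formula on \(K\setminus K_{C(r+\eps t)}\) is the paper's ``prescribe the crossing point'' Fubini argument. The separate tube case in your second paragraph can be dropped, because an arc without an odd crossing is ordinary under every gate. The remaining steps are the same as the paper's: marginal bounds, the half-space cell, the chain rule and Liouville formula, the same Fubini argument on codimension-two seams, and sign agreement of oriented defining functions on overlaps.

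One caveat applies to your argument and the paper's alike. Replacing \(v_i(x)\) by \(v_i(\pi_Kx)\) in the first line of \eqref{eq:slit-curvature-flow-estimate} costs \(\tau\operatorname{Lip}(v_i)\,|s(x)|\). That line therefore holds as written only for \(|s(x)|\le C\tau\), which covers the crossing endpoints where it is actually used.
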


\begin{proof}
The reach bound gives front normal coordinates with two transverse
variables and Jacobian \(1+O(r/\rho_*)\), proving
\eqref{eq:slit-full-tip-tube-volume}.  In a flat front box, prescribe the
crossing point on \(K\).  The admissible starting interval along the flow
has length at most \(C\eps t\), while the strip of crossing points within
distance \(r\) of \(\Gamma\) has area at most
\(Cr\Hh^{d-2}(\Gamma)\).  The signed-coordinate expansion in
\eqref{eq:slit-curvature-flow-estimate} moves the crossing point by
\(C\eps t\) and thickens the normal slab by the same amount.  The tubular
Jacobian is \(1+O(\eps t/\rho_*)\).  Fubini therefore gives
\eqref{eq:slit-active-tip-volume}; bounded \(c_i\), the two symmetric
marginals, and the finite \(t\)-moments give
\eqref{eq:slit-normalized-tip-bound}.

The first line of \eqref{eq:slit-curvature-flow-estimate} follows by applying
the \(C^{1,1}\) chain rule to \(s\circ\Phi_i^{-\tau}\); its derivative is
\(-\nabla s\cdot v_i\), and both \(\nabla s\) and \(v_i\) are Lipschitz.
The second is the standard flow-Jacobian estimate.  Division by \(\eps\)
leaves the displayed \(O(\eps/\rho_*)\) energy error.

On oriented overlaps, two signed defining functions equal a positive
Lipschitz factor times one another.  Their endpoint signs and hence \(I_2\)
agree exactly.  For a hard partition, the same two-normal-variable Fubini
argument applied to the codimension-two seam gives
\eqref{eq:slit-chart-seam-estimate}; first \(\eps\downarrow0\), then
\(\eta\downarrow0\).  Finally, away from the front, flattening leaves one
endpoint and its own-face ghost on opposite half-spaces.  Lemma
\ref{lem:boundary-halfspace-cell} gives \(1/2+1/2=1\) for that summand.
The other endpoint gives a second, independent coefficient one; the two
face contacts are not averaged.
\end{proof}

Use the finite resolved presentation in
Remark~\ref{rem:slit-fan-resolved-presentations} and write
\(BV(X_K^\circ):=BV_{\rm res}(X_K^\circ)\).  By
Proposition~\ref{prop:resolved-bv-package}, this space, its derivative
\(D_{X_K}u\), and its one-sided traces are independent of the front atlas and
of every auxiliary radial seam.  The derivative includes every physical jump
in \(\Omega\setminus K\) and the continuation around \(\Gamma\), but it does
not count a difference between \(T_K^+u\) and \(T_K^-u\).  Define
\begin{align}
 \mathcal G_{0,K}^{g_\partial,g^+,g^-}(u)
 :={}&\int_{X_K^\circ}h_{\rm fl}(p(x),\sigma_{D_{X_K}u})
 \dd|D_{X_K}u|(x)\notag\\
 &+\int_{\partial\Omega}h_{\rm fl}(x,\nu_\Omega)
 |T_\partial u-g_\partial|\dd\Hh^{d-1}\notag\\
 &+\int_Kh_{\rm fl}(x,\nu_K)
 \bigl(|T_K^+u-g^+|+|T_K^-u-g^-|\bigr)\dd\Hh^{d-1},
 \label{eq:slit-limit-energy}
\end{align}
with value \(+\infty\) outside \(BV(X_K^\circ;\{0,1\})\).  Since
\(h_{\rm fl}(x,\xi)=h_{\rm fl}(x,-\xi)\), the chosen orientation only names
the faces.  The candidate Gamma-functional contains no compulsory term on
\(\Gamma\); this statement does not exclude excess concentration along a
non-recovery sequence.

\begin{theorem}[Gamma-limit on a curved intrinsic slit]
\label{thm:lifted-cut-gamma-limit}
Assume \eqref{eq:ray-radial-law}, \eqref{eq:boundary-ambient-frame},
\eqref{eq:flow-ray-steepness}, and the geometric class
\eqref{eq:graph-slit-geometry}.  Then
\(\mathcal G_{\eps,K}^{g_\partial,g^+,g^-}\) Gamma-converges in
\begin{equation}
 L^1(X_K,m_K)=L^1(\Omega)
 \label{eq:slit-product-topology}
\end{equation}
to \(\mathcal G_{0,K}^{g_\partial,g^+,g^-}\).  It is equicoercive: every
bounded-energy sequence has, after threshold rounding and extraction,
\begin{equation}
 u_\eps\to u\quad\text{strongly in }L^1(X_K,m_K),\qquad
 u\in BV(X_K^\circ;\{0,1\}).
 \label{eq:slit-product-compactness}
\end{equation}
Every admissible binary \(u\) is recovered by the constant sequence.
\end{theorem}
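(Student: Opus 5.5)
The strategy is to reduce the curved-slit theorem to the master rectifiable theorem, Theorem~\ref{thm:rectifiable-cut-master-gamma}. We apply it with the single labelled cut \(K_1=K^\circ\), normal \(\nu_K\), and bank data \(g_1^\pm=g^\pm\). The slit detector \eqref{eq:slit-finite-epsilon-pairing} differs from the direct crossing rule \(\Psi_\eps\) only on three families of bonds:
\begin{itemize}
\item[(i)] arcs meeting the gate \(T_{r_\eps}(\Gamma)\), \(r_\eps=\sqrt\eps\);
\item[(ii)] arcs with an odd number \(\ge3\), or an even number \(\ge2\), of crossings of \(K\);
\item[(iii)] singly crossed arcs, where the slit rule compares \(u(x)\) with the ghost value \(G^{\sigma_x}(y)\) rather than with the trace datum \(g^{\sigma_x}(z)\) at the crossing point.
\end{itemize}
Family (i) costs at most \(C(r_\eps+\eps)\Hh^{d-2}(\Gamma)\to0\) by \eqref{eq:slit-normalized-tip-bound}. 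Family (ii) is \(o(1)\) by the parity estimate \eqref{eq:rectifiable-parity-error} of Lemma~\ref{lem:rectifiable-crossing-parity}, since both integrands are bounded and binary. These two estimates are uniform in \(u\).

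Family (iii) is not uniformly small pointwise, so here I would compare only in the \(\Gamma\)-sense. For the \(\liminf\), I would rerun the slicing proof of Theorem~\ref{thm:global-resolved-translation-contact} with the ghost-extended split lines. On a.e.\ flow slice through a tubular box, the singly crossed bond becomes \(|\tilde u_J(r)-\tilde u_J(r-h)|\), where the interval \(J\) is extended by the \emph{ghost field} restricted to the slice, not by a constant. Since \(G^\pm\in BV\) of the collars with trace \(g^\pm\), the extended slice function \(\tilde u_J\) is \(BV\) on the line. Lemma~\ref{lem:rectifiable-split-line} then gives the lower bound \(\int w\,\dd|D\tilde u_J|\). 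This quantity dominates the endpoint jump \(|T u-g^\pm|\), plus a ghost-internal variation that I discard for the lower bound.

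For recovery, the constant sequence's ghost-internal contribution is the main obstacle. A naive count would add \(\int h_{\rm fl}\,\dd|DG^\pm|\) over the collar. I would handle this by noticing that only the \(O(\eps)\)-neighbourhood of \(K\) in the collar is sampled. The normalized interaction of a fixed \(BV\) field over a shrinking strip tends to \(|DG^\pm|\) of that strip, which vanishes as \(\eps\downarrow0\), except for the trace jump at \(K\) itself. This is the ambient-completion argument of Lemma~\ref{lem:boundary-ambient-completion}, localized to a shrinking collar; the flow translation majorant supplies domination. The curvature and Jacobian corrections are \(O(\eps)\) by \eqref{eq:slit-curvature-flow-estimate}.

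Next I would identify the limit functional. For binary \(u\in BV(\Omega)\), \eqref{eq:rectifiable-resolved-splitting} gives \(D_{\mathfrak K}u=Du\llcorner(\Omega\setminus K)\). Proposition~\ref{prop:resolved-bv-package}, with the slit presentation of Remark~\ref{rem:slit-fan-resolved-presentations}, identifies this with \(p_\#D_{X_K}u\). The transparent radial seam near \(\Gamma\) contributes only its own trace jump, and \(\Gamma\) is \(\Hh^{d-1}\)-null. The two traces \(T_K^\pm\) are the chamber traces. Hence \eqref{eq:slit-limit-energy} coincides with \eqref{eq:rectifiable-direct-limit}.

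Equicoercivity follows from threshold rounding \eqref{eq:rectifiable-rounding-global}, with ghosts playing the role of binary \(\gamma\). Ghost bonds are then bounded by ordinary bonds plus a multiple of \(\one_{\{N^K\ge1\}}\), whose integral is bounded by \eqref{eq:rectifiable-crossing-budget}. Theorem~\ref{thm:boundary-flow-gamma-limit} then yields \eqref{eq:slit-product-compactness}, and the limit lies in \(BV_{\rm res}\) by Proposition~\ref{prop:resolved-bv-package}. The main obstacle is family (iii): ensuring that the ghost-versus-trace discrepancy vanishes on the constant recovery sequence without a uniform rate, and that the \(\liminf\) slicing still sees exactly the coefficient-one contact of Lemma~\ref{lem:slit-face-tip-estimate}.
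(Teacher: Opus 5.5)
Your route differs from the paper's. The paper proves this theorem directly on the slit energy. Lemma~\ref{lem:resolved-cell-localization} produces finitely many disjoint ordinary and one-sided face cells. Each face cell is glued to its fixed opposite-collar ghost, so that after adding the fixed ghost--ghost bonds it becomes the ambient crossing-bond energy of Theorem~\ref{thm:boundary-flow-gamma-limit}; the convergent ghost term is then subtracted. The front is removed through \(T_r(\Gamma)\) and \eqref{eq:slit-normalized-tip-bound}. You instead compare with the direct rule and import Theorem~\ref{thm:rectifiable-cut-master-gamma}. That theorem is proved earlier and independently of the slit, so the reduction is legitimate; it is essentially the mechanism of Theorem~\ref{thm:regular-detector-equivalence-main}. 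Your route buys economy, while the paper's keeps the positive-reach proof self-contained with explicit front and curvature remainders.

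You do not need a \(\Gamma\)-sense treatment of family (iii). Since \(\bigl||r-p|^2-|r-q|^2\bigr|\le2|p-q|\) on \([0,1]\), the discrepancy on a singly crossed arc is at most \(2|G^{\sigma_x}(y)-g^{\sigma_x}(z)|+2|G^{\sigma_y}(x)-g^{\sigma_y}(z)|\), which does not involve \(u\). By \eqref{eq:rectifiable-exact-crossing} its normalized integral is at most \(C\int_a^bt\,\omega(\eps t)\lambda(\dd t)\), where \(\omega(h)=h^{-1}\int_0^h\int_K|G^\pm(\Phi_i^{\pm\tau}z)-g^\pm(z)|\,|v_i\cdot\nu_K|\dd\Hh^{d-1}\dd\tau\), with the time sign placing the point in that ghost's collar. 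Then \(\omega(h)\to0\) by the strong one-sided trace property of the fixed ghosts (compare \eqref{eq:regular-direct-trace-modulus}). Combined with your uniform bounds for (i) and (ii), the two energies differ by \(o(1)\) uniformly in \(0\le u\le1\). Equicoercivity, the liminf and constant recovery then all transfer at once, and the obstacle you single out disappears.

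As written, your \(\Gamma\)-sense liminf for (iii) contains a false step. With a nonconstant ghost extension, Lemma~\ref{lem:rectifiable-split-line} controls the full translation energy of \(\tilde u_J\), including pairs with both points in the ghost region. Those ghost--ghost pairs are not part of the slit energy. Hence the slit slice energy is not bounded below by \(\int w\,\dd|D\tilde u_J|\): for the constant sequence this would contradict your own recovery computation whenever the ghost varies along the slice. Discarding the ghost variation afterwards does not repair the inequality. You must either subtract the ghost--ghost translation term, which is allowed because it is a fixed sequence converging to the ghost variation (the paper's ambient completion), or use the uniform comparison above.

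There are two smaller omissions. First, arcs crossing \(\partial\Omega\) are also treated differently: the slit energy uses \(|u(x_{\rm in})-g^{\rm e}(y)|^2\), the direct rule \(|u(x_{\rm in})-g_\partial(z)|^2\). These are handled like (iii), with \(g^{\rm e}\) in place of the collar ghosts. Second, matching the finiteness domains requires \(BV(\Omega\setminus K)\cap L^\infty\subset BV(\Omega)\) for the compact slit, since the splitting \eqref{eq:rectifiable-resolved-splitting} you cite presupposes \(u\in BV(\Omega)\).
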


\begin{proof}
\emph{Compactness.}
Threshold rounding is unchanged because the potential is integrated once
over physical volume and the pair measure has the same marginal bound.
Choose a finite collection of ordinary boxes, one-sided face boxes, and
front boxes with radii below \(\rho_*/8\).  On boxes compactly contained in
one bank, sufficiently short bonds are ordinary bonds; ordered-flow
smoothing and spanning give uniform \(BV\) bounds.  In a one-sided face box
whose closure avoids \(\Gamma\), glue the rounded physical field to the
fixed opposite-collar ghost across \(K\).  The resolved-crossing detector and pair
symmetry identify its energy, after adding the fixed ghost--ghost bonds, with
the ambient crossing-bond energy of Lemma
\ref{lem:boundary-ambient-completion}.  The added term is independent of the
sequence and uniformly bounded.  Theorem
\ref{thm:boundary-flow-gamma-limit} therefore gives compactness of the whole
glued field in the enlarged Lipschitz box.  We do not invoke compactness of the
trace operator.  The trace appearing in the limit is identified instead by
the \(BV\) gluing formula and lower semicontinuity of the glued derivative.
Applying this on a countable exhaustion of every face box and then taking one
diagonal closes the possible boundary layer at \(K\).

In a front box, cut
along one radial ray, apply the same estimate on the resulting Lipschitz
box, and let the omitted angular strip shrink.  Overlap signs agree, so the
local limits glue to one \(BV(X_K^\circ)\) function.  Finite-atlas
compactness and the vanishing volume of the omitted strips prove
\eqref{eq:slit-product-compactness}.  Equivalently, after projection the
only uncontrolled Euclidean jump lies on the fixed \(K\), whose variation
is bounded by \(2\Hh^{d-1}(K)\); ordinary \(BV(\Omega)\) compactness then
gives the same strong \(L^1\) conclusion.

\emph{Liminf.}
Fix one regular presentation, denote its auxiliary seam family by
\(\mathcal S\), choose regular \(r,\eta>0\) and \(\delta>0\), and apply
Lemma~\ref{lem:resolved-cell-localization}.  On its finite disjoint family,
ordinary cells give the intrinsic variation and the outer contact, while each
one-sided face cell gives one bank contact after ambient completion and
subtraction of the convergent fixed ghost term.  Thus
\eqref{eq:resolved-cell-finite-liminf} and
\eqref{eq:resolved-cell-exhaustion} yield
\begin{align}
 \liminf_{\eps\downarrow0}
 \mathcal G_{\eps,K}^{g_\partial,g^+,g^-}(u_\eps)
 \ge{}&
 \mathfrak m_u\bigl(
 (X_K^\circ\sqcup\partial\Omega\sqcup K^+\sqcup K^-)
 \setminus(T_r(\Gamma)\cup T_\eta(\mathcal S))\bigr)-\delta .
 \label{eq:slit-localized-liminf}
\end{align}
The coefficient of each bank is one by
Lemma~\ref{lem:boundary-halfspace-cell}; the two labelled copies are never
averaged.  The curvature and flow-freezing error is
\(O(\eps/\rho_*)\) by \eqref{eq:slit-curvature-flow-estimate}, and the seam
collar error is \(O(\eta+\eps)\) by
\eqref{eq:slit-chart-seam-estimate}.  Let first \(\eps\downarrow0\), then
\(\delta\downarrow0\), \(\eta\downarrow0\), and \(r\downarrow0\).
Continuity from below gives the full lower bound.  Indeed
\(|D_{X_K}u|\ll\Hh^{d-1}\) in the \(BV\)-null-set sense and
\(\Hh^{d-1}(\Gamma)=0\), while the face area in \(T_r(\Gamma)\) tends to zero.

\emph{Recovery.}
For binary \(u\in BV(X_K^\circ)\), use \(u_\eps=u\).  The upper-bound part of
Lemma~\ref{lem:resolved-cell-localization}, equivalently the global \(BV\)
flow-translation formula for the finitely many glued physical/ghost fields,
gives the intrinsic bulk, outer contact, and both face contacts outside
\(T_r(\Gamma)\) and the optional seam strips.  Identity
\eqref{eq:one-step-transparent-refinement} recombines their interior and
trace-jump terms into \(D_{X_K}u\), and subtracting the fixed ghost--ghost
translations leaves exactly the two own-face terms.  Formula
\eqref{eq:slit-curvature-flow-estimate} bounds all curvature, coefficient,
and Jacobian remainders by \(C\eps/\rho_*\).  The only change caused by the
front gate is supported on \(A_{i,\eps,t}(r)\); binary values and
\eqref{eq:slit-normalized-tip-bound} give
\begin{equation}
 \limsup_{\eps\downarrow0}|R_{\eps,r}|
 \le Cr\Hh^{d-2}(\Gamma).
 \label{eq:slit-recovery-tip-remainder}
\end{equation}
The intrinsic variation in \(T_r(\Gamma)\) tends to zero, and the two face
areas there are \(O(r\Hh^{d-2}(\Gamma))\).  Optional seams contribute
\(O(\eta)\).  Letting \(\eps\downarrow0\), then \(\eta\downarrow0\), then
\(r\downarrow0\), proves the limsup.  For the actual detector,
\(r_\eps=\sqrt\eps\), the front remainder is directly
\(O(\sqrt\eps+\eps)\).  The potential vanishes identically.
\end{proof}

\begin{corollary}[A Euclidean null slit remains invisible]
\label{cor:lifted-versus-null-cut}
Take \(g_\partial\equiv1\), \(g^+\equiv g^-\equiv0\), and \(u\equiv1\).
Then
\begin{equation}
 \lim_{\eps\downarrow0}\mathcal G_{\eps,K}^{1,0,0}(1)
 =2\int_Kh_{\rm fl}(x,\nu_K)\dd\Hh^{d-1}>0.
 \label{eq:slit-double-face-cost}
\end{equation}
The same constant state on the Euclidean null-slit representative
\(\Omega\setminus K\) has zero potential and zero pair energy for every
\(\eps\).  Thus the two coefficient-one face laws come from the intrinsic
finite-scale crossing rule, not from deleting a null set or appending a
sharp term after the limit.
\end{corollary}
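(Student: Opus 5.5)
The plan is to evaluate both sides of the comparison through the constant-recovery statement of Theorem~\ref{thm:lifted-cut-gamma-limit} and then check the Euclidean representative directly. First I check that \(u\equiv1\) is an admissible binary state in \(BV(X_K^\circ;\{0,1\})\) of finite limit energy: it is constant on the physical open set \(\Omega\setminus K\), so \(u\circ\jmath\in BV\) with \(D(u\circ\jmath)=0\), and hence \(D_{X_K}u=0\). Its potential \(M\eps^{-1}\int d_*(1)\) vanishes identically. By the recovery part of Theorem~\ref{thm:lifted-cut-gamma-limit}, the constant sequence satisfies \(\mathcal G_{\eps,K}^{1,0,0}(1)\to\mathcal G_{0,K}^{1,0,0}(1)\). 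So it suffices to evaluate \eqref{eq:slit-limit-energy} term by term.

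In \eqref{eq:slit-limit-energy} the bulk term vanishes because \(D_{X_K}u=0\). The outer contact vanishes because \(T_\partial u=1=g_\partial\). Each bank trace \(T_K^\pm u\) equals \(1\) while \(g^\pm=0\), so each bank contributes \(\int_Kh_{\rm fl}(x,\nu_K)\dd\Hh^{d-1}\). This gives the factor \(2\) in \eqref{eq:slit-double-face-cost}. The coefficient one on each bank is the content of Lemma~\ref{lem:slit-face-tip-estimate} and Lemma~\ref{lem:boundary-halfspace-cell}, and it is already built into the limit functional. Positivity follows from the spanning inequality \eqref{eq:boundary-ambient-frame}, which gives \(h_{\rm fl}(x,\nu_K)\ge\kappa\). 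Since \(K\) is a compact embedded hypersurface with nonempty front and positive reach, \(\Hh^{d-1}(K)>0\), so the integral is at least \(2\kappa\Hh^{d-1}(K)>0\).

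For the Euclidean comparison I use Proposition~\ref{prop:null-cut-invisible}. On \(\Omega\setminus K\) with the ordinary pair rule \(\mathcal G_{\eps,\Omega\setminus K}^{g}\), the constant state \(1\) glued to the boundary datum \(g_\partial\equiv1\) has zero potential. Its interior bonds vanish because \(|1-1|^2=0\). The only remaining bonds are those crossing \(\partial\Omega\). For these I choose the exterior extension \(g^{\rm e}\) equal to \(1\) on the collar, which is allowed since the two traces agree; then those bonds vanish too, at least for \(\eps\) small. The one point needing care is that the zero-outer-trace convention for \(g^{\rm e}\) on \(\partial V\) could leave a fixed exterior term. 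If so, it is identical in both models, so I would note that it cancels in the comparison. This nuance is the only likely obstacle.

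The remaining step is to stress that \(\Omega\setminus K\) and \(\Omega\) give the same energy for every \(\eps\), by \eqref{eq:null-cut-energy-invariance}. So the positive limit \(2\int_Kh_{\rm fl}\) cannot come from deleting the null set \(K\). It arises only from the first line of \eqref{eq:slit-finite-epsilon-pairing}, which replaces each odd crossing by two endpoint-to-ghost contacts with \(G^\pm=0\).
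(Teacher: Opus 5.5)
Your proposal is correct and follows the paper's own argument: constant recovery in Theorem~\ref{thm:lifted-cut-gamma-limit} evaluated at \(u\equiv1\), positivity from uniform spanning, and Proposition~\ref{prop:null-cut-invisible} for the Euclidean zero. Your care about \(g^{\rm e}\) is a useful refinement that the paper glosses over: choosing \(g^{\rm e}=1\) on a fixed inner collar makes the crossing bonds vanish exactly for small \(\eps\). The fixed exterior term you feared does not arise at all, since \(B_\Omega\) contains no exterior--exterior pairs.
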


\begin{proof}
Theorem~\ref{thm:lifted-cut-gamma-limit} gives the displayed limit, and
uniform spanning makes it positive.  Proposition
\ref{prop:null-cut-invisible} gives exact zero energy in the Euclidean
model because removing \(K\) changes neither Lebesgue classes nor the
absolutely continuous flow-pair measure.
\end{proof}

\subsection{A fixed finite sectorial junction}
\label{subsec:sectorial-junction}

The intrinsic slit theorem does not by itself determine how several faces
are to be completed at a common front.  We next establish the result for a
finite, fixed sectorial class.  This includes the complete \(Y\)-junction,
but not arbitrary branch graphs, moving junctions, or accumulating sheets.
Here \(K_\alpha\) denotes one sheet and
\(\mathcal K=\bigcup_\alpha K_\alpha\) the fan.  Accordingly,
\(\Psi_{\eps,K}^{i,t}\) is reserved for the single-slit detector, whereas
\(\Psi_{\eps,\mathcal K}^{i,t}\) denotes its gated sector extension and
agrees with it whenever the arc has one resolved crossing.

\paragraph{Sectorial geometry and the cut completion.}
Let \(N\ge3\), let \(\Gamma\Subset\Omega\) be a compact embedded
\(C^{1,1}\) submanifold of codimension two, and let
\(K_1,\ldots,K_N\Subset\Omega\) be compact embedded orientable
\(C^{1,1}\) hypersurfaces-with-boundary such that
\begin{equation}
 \partial K_\alpha=\Gamma,\qquad
 K_\alpha^\circ\cap K_\beta^\circ=\varnothing\quad(\alpha\ne\beta),
 \qquad \mathcal K:=\bigcup_{\alpha=1}^N K_\alpha .
 \label{eq:fan-incidence}
\end{equation}
Assume the outer separation in \eqref{eq:graph-slit-geometry} and a common
scale \(\rho_*>0\).  More precisely, a finite family of orientation-preserving
\(C^{1,1}\) charts, with chart and inverse \(C^{1,1}\) norms bounded by
\(C/\rho_*\), identifies a neighbourhood of every \(q\in\Gamma\) with
\((z,w)\in\R^{d-2}\times\R^2\), sends \(\Gamma\) to \(\{w=0\}\), and sends
the sheets to the labelled rays
\begin{equation}
 K_\alpha=\{(z,\varrho e_\alpha(z)):0\le\varrho<4\rho_*\},
 \qquad e_\alpha(z)=(\cos\vartheta_\alpha(z),\sin\vartheta_\alpha(z)).
 \label{eq:fan-ray-chart}
\end{equation}
The cyclic labels are compatible on overlaps and, with
\(\vartheta_{N+1}=\vartheta_1+2\pi\),
\begin{equation}
 \vartheta_{\alpha+1}(z)-\vartheta_\alpha(z)\ge\theta_*>0,
 \qquad \operatorname{Lip}e_\alpha\le\rho_*^{-1}.
 \label{eq:fan-angle-gap}
\end{equation}
There are no other intersections or higher-order strata.  No bound on the
number of raw intersections of a flow arc with one sheet is assumed.  The
detector below uses the gate \(r_\eps=\sqrt\eps\): outside that gate the angle
estimate proves that a short arc can meet at most one distinct sheet, and all
same-sheet recrossings are determined by its two endpoint sectors.  Inside
the gate the detector uses a bounded endpoint rule.  Thus no crossing-word
regularity hypothesis is needed.

Let \(S_\alpha(z)\) be the open sector between rays \(e_\alpha(z)\) and
\(e_{\alpha+1}(z)\).  Define
\begin{equation}
 X_{\mathcal K}:=\overline{(\Omega\setminus\mathcal K,
 d_{\Omega\setminus\mathcal K})},\qquad
 p:X_{\mathcal K}\to\overline\Omega,\qquad
 m_{\mathcal K}:=\jmath_\#
 \bigl(\mathcal L^d\mathbin{\llcorner}(\Omega\setminus\mathcal K)\bigr),
 \qquad p_\#m_{\mathcal K}=\mathcal L^d\mathbin{\llcorner}\Omega .
 \label{eq:fan-cut-completion}
\end{equation}
Here \(\jmath:\Omega\setminus\mathcal K\to X_{\mathcal K}\) is the canonical
isometric embedding.
This is a cut completion, not a global \(N\)-sheet cover.  The projection is
one-to-one over \(\Omega\setminus\mathcal K\).  Over
\(K_\alpha^\circ\) it has two face points, one for each adjacent sector.  Over
\(q\in\Gamma\) it has the \(N\) distinct points
\(q^{[1]},\ldots,q^{[N]}\), where the two boundary rays of
\(S_\alpha\) meet at \(q^{[\alpha]}\).  Thus every sector carries one copy of
its physical volume, and only the boundary topology is completed.  The
uniform angle gap makes each closed sector a Lipschitz wedge with constant
at most \(C\csc(\theta_*/2)\), so this description is independent of the
chosen fan chart.

For each sheet choose its two adjacent-sector labels
\(s_\alpha^-\) and \(s_\alpha^+\), traces
\(g_\alpha^- ,g_\alpha^+\in BV(K_\alpha;\{0,1\})\), and fixed opposite-collar
ghosts \(G_\alpha^- ,G_\alpha^+\).  The superscript names the physical sector
whose face datum is imposed; the ghost itself is evaluated in the collar on
the other side of that face.  We use the typed notation
\(G_\alpha^{s_\alpha^\pm}:=G_\alpha^\pm\) below.
The same half-collar construction just given supplies these binary ghosts
sheet by sheet; overlaps between prescribed ghost collars do not identify the
fields, and the junction gate removes their common codimension-two core.

\paragraph{Gated endpoint-sector crossing rule.}
Let \(\gamma=\gamma_{i,x,t}^\eps\) be the actual flow arc and retain the gate
\(r_\eps=\sqrt\eps\) from \eqref{eq:slit-tip-gate-width}.  We say that
\(\gamma\) has a \emph{resolved crossing} if \(x,y\in\Omega\),
\(\operatorname{dist}(\gamma,\Gamma)>r_\eps\), and its endpoints lie in the
two sectors adjacent to one unique labelled sheet \(K_{\alpha(\gamma)}\).
Write \(s_x,s_y\) for those endpoint sectors.  On the null set where an
endpoint lies on a sheet, use either stable one-sided value.  For all
sufficiently small \(\eps\), Lemma~\ref{lem:fan-junction-estimate} proves that
the labelled sheet is unique.  Multiple crossings of that same sheet have odd
parity exactly when the endpoint sectors differ, so no intermediate
intersection time or value of an \(L^1\) representative is used.  Denote the
unique label by \(\mathfrak a_\eps(\gamma)=\alpha(\gamma)\), and put
\(\mathfrak a_\eps(\gamma)=\varnothing\) otherwise.  This map is Borel:
distance from a continuous arc to the closed front is continuous in its
starting point, and the finitely many endpoint-sector membership tests are
Borel.  Set
\begin{equation}
 \Psi_{\eps,\mathcal K}^{i,t}[u](x,y)
 :=\begin{cases}
 |u(x)-G_{\alpha(\gamma)}^{s_x}(y)|^2+
 |u(y)-G_{\alpha(\gamma)}^{s_y}(x)|^2,
 &\gamma\text{ has a resolved crossing},\\
 |\bar u^{\rm p}(x)-\bar u^{\rm p}(y)|^2,&\text{otherwise}.
 \end{cases}
 \label{eq:fan-gated-pairing}
\end{equation}
Path reversal preserves the gate and the unique sheet and exchanges
\((x,s_x)\) with \((y,s_y)\).  Hence \eqref{eq:fan-gated-pairing} is
pair-symmetric and its first branch is exactly the two endpoint-to-own-face
rule \eqref{eq:slit-finite-epsilon-pairing}.  Every arc entering the gate
receives the bounded, representative-independent ordinary endpoint rule.
Any other bounded symmetric core rule has the same Gamma-limit by
Lemma~\ref{lem:fan-junction-estimate}.

Define \(\mathcal G_{\eps,\mathcal K}^{g}\) by
\eqref{eq:slit-prelimit-energy}, with
\(\Psi_{\eps,K}^{i,t}\) replaced by
\(\Psi_{\eps,\mathcal K}^{i,t}\).  No intermediate value and no
positive-volume sheet has been added.

To make the junction constants explicit, fix the finite fan atlas
\(\{\Upsilon_\gamma\}\) used above and set
\begin{align}
 m_{\rm fan}&:=\sup_{x\in\Omega}
 \#\{\gamma:x\in\operatorname{im}\Upsilon_\gamma\},\notag\\
 J_{\rm fan}&:=\max_\gamma\left(
 \|\det D\Upsilon_\gamma\|_\infty
 +\|\det D\Upsilon_\gamma^{-1}\|_\infty\right),\notag\\
 Q_{\rm fan}&:=\max_\gamma\rho_*\left(
 \|D^2\Upsilon_\gamma\|_\infty
 +\|D^2\Upsilon_\gamma^{-1}\|_\infty\right),\notag\\
 C_{\rm fan}&:=C_d\,m_{\rm fan}Nc^*
 \bigl(1+e^{\eps_0bD_0}\bigr)
 (1+J_{\rm fan})(1+Q_{\rm fan})
 (1+bV_0+bV_1).
 \label{eq:fan-explicit-constant}
\end{align}
Here \(C_d\) is the same type of dimensional constant as above.  Enlarging
it once also covers the finite seam-atlas multiplicity.  Dependence on the
angle and tubular radius remains visible in the estimates through
\(\csc(\theta_*/2)\) and \(\rho_*^{-1}\), rather than being hidden in
\(C_{\rm fan}\).

\begin{lemma}[Angle separation, face cells, and the junction estimate]
\label{lem:fan-junction-estimate}
There is \(r_0>0\), determined by the fixed atlas, such that, if
\(q_\alpha(z,\varrho)\) is the physical point represented
by \((z,\varrho e_\alpha(z))\), then for \(0<\varrho<r_0\) and
\(\alpha\ne\beta\),
\begin{equation}
 |q_\alpha(z,\varrho)-q_\beta(z,\varrho)|
 \ge2\varrho\sin(\theta_*/2)-C_{\rm fan}\varrho^2/\rho_*
 \label{eq:fan-angle-separation}
\end{equation}
on the transverse circle of radius \(\varrho\).  If
\(A_{i,\eps,t}^{\rm jun}(r)\) denotes the set of starting points whose arc
crosses a sheet at intrinsic sheet distance at most \(r\) from \(\Gamma\), or
meets two distinct labelled sheets, then
\begin{align}
 |A_{i,\eps,t}^{\rm jun}(r)|
 &\le C_{\rm fan}\eps t\left(r+\eps t\csc(\theta_*/2)\right)
      \Hh^{d-2}(\Gamma),
 \label{eq:fan-active-junction-volume}\\
 \frac1\eps\widehat Q_{i,\eps,t}
 \bigl(A_{i,\eps,t}^{\rm jun}(r)\times U
       \cup U\times A_{i,\eps,t}^{\rm jun}(r)\bigr)
 &\le C_{\rm fan}t\left(r+\eps t\csc(\theta_*/2)\right)
      \Hh^{d-2}(\Gamma).
 \label{eq:fan-normalized-junction-bound}
\end{align}
Every open face of every \(K_\alpha\) has coefficient one.  Moreover the
possible junction density, defined rather than suppressed by
\begin{equation}
 j_\Gamma[u,g]:=\lim_{r\downarrow0}\limsup_{\eps\downarrow0}
 \frac1{\Hh^{d-2}(\Gamma)}
 \sum_i\int_a^b\frac1\eps
 \iint_{(A_{i,\eps,t}^{\rm jun}(r)\times U)\cup
 (U\times A_{i,\eps,t}^{\rm jun}(r))}
 \Psi_{\eps,\mathcal K}^{i,t}\dd\widehat Q_{i,\eps,t}\lambda(\dd t),
 \label{eq:fan-junction-cell}
\end{equation}
satisfies \(j_\Gamma[u,g]=0\) for binary fields and ghosts.  With
\(r_\eps=\sqrt\eps\), its normalized remainder is
\(O(\sqrt\eps+\eps\csc(\theta_*/2))\).

In addition, curved fan coordinates and flow freezing cost
\begin{equation}
 C_{\rm fan}\eps\bigl(\rho_*^{-1}+V_1\bigr)
 \sum_{\alpha=1}^N\Hh^{d-1}(K_\alpha),
 \label{eq:fan-curvature-estimate}
\end{equation}
while deleting an \(\eta\)-strip around the finite hard-atlas seam union
\(\Sigma_{\mathcal K}\) costs
\begin{equation}
 C_{\rm fan}(\eta+\eps)\Hh^{d-2}(\Sigma_{\mathcal K}).
 \label{eq:fan-seam-estimate}
\end{equation}
Thus every constant is bounded by \eqref{eq:fan-explicit-constant}; the only
additional degeneracies are the displayed factors
\(\csc(\theta_*/2)\) and \(\rho_*^{-1}\).  No constant depends on
\(\eps,r,\eta\).
\end{lemma}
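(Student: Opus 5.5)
The proof works entirely in the fixed fan charts \eqref{eq:fan-ray-chart}. Each sheet is a labelled ray \(\varrho e_\alpha(z)\) in the transverse plane \(w\in\R^2\), and the chart has bounded \(C^{1,1}\) distortion of size \(C/\rho_*\).

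\emph{Angle separation.} For \eqref{eq:fan-angle-separation}, first compute in flat coordinates. The chord between \(\varrho e_\alpha(z)\) and \(\varrho e_\beta(z)\) equals \(2\varrho\sin(|\vartheta_\alpha-\vartheta_\beta|/2)\). Since \(\vartheta_\alpha-\vartheta_\beta\) lies in \([\theta_*,2\pi-\theta_*]\) modulo \(2\pi\) by \eqref{eq:fan-angle-gap}, this is at least \(2\varrho\sin(\theta_*/2)\). Pushing forward by \(\Upsilon_\gamma\), a Taylor expansion at \((z,0)\) gives \(\Upsilon(z,w)-\Upsilon(z,w')=D\Upsilon(z,0)(w-w')+O(Q_{\rm fan}\varrho^2/\rho_*)\). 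For the linear part I will use the lower singular value of \(D\Upsilon\), absorbing its normalization into the chart choice (I take the charts normalized so that \(D\Upsilon(z,0)\) is an isometry on the transverse plane). If that normalization is unavailable, the bound holds with a fixed multiplicative constant, and I expect this is what \(C_{\rm fan}\) is meant to absorb.

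\emph{Junction volume and pair bound.} For \eqref{eq:fan-active-junction-volume}, split \(A^{\rm jun}\) into two sets.
\begin{itemize}
\item[(a)] Arcs crossing some \(K_\alpha\) within sheet distance \(r\) of \(\Gamma\). As in Lemma~\ref{lem:slit-face-tip-estimate}, the crossing point lies in a strip of area \(\le Cr\Hh^{d-2}(\Gamma)\) on each of the \(N\) sheets. The starting points are then confined to a flow-length interval of size \(\le C\eps t\), with Jacobian bounded by \(J_{\rm fan}e^{\eps_0bD_0}\). Fubini together with the exact crossing formula \eqref{eq:rectifiable-exact-crossing} (using \(N\ge\one_{\{N\ge1\}}\)) gives \(C_{\rm fan}\eps t\,r\,\Hh^{d-2}(\Gamma)\).
\item[(b)] Arcs meeting two distinct sheets at sheet distance \(>r\). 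An arc has length \(\le\eps tV_0e^{\eps bV_1}\). By the angle separation, two points on different sheets at transverse radius \(\varrho\) are at distance \(\gtrsim\varrho\sin(\theta_*/2)\). The same holds for different radii: their distance is at least \(\varrho\sin(\theta_*/2)\), because the points then lie at angular separation \(\ge\theta_*\) in the cone picture. Hence both meeting points have \(\varrho\le C\eps t\csc(\theta_*/2)\). This reduces (b) to case (a) with \(r\) replaced by \(C\eps t\csc(\theta_*/2)\).
\end{itemize}
The normalized pair bound \eqref{eq:fan-normalized-junction-bound} then follows as in the slit lemma. The two marginals of \(\widehat Q_{i,\eps,t}\) have densities bounded by \(c^*(1+e^{\eps_0bD_0})/2\), so the pair mass of sets with one endpoint in \(A\) is at most that constant times \(|A|\); dividing by \(\eps\) gives the claim.

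\emph{Vanishing junction density.} For \(j_\Gamma=0\), note that \(\Psi\le 2\) for binary fields and ghosts. Inserting \eqref{eq:fan-normalized-junction-bound} and integrating in \(t\) bounds the inner limsup by \(C_{\rm fan}r\int t\,\dd\lambda=C_{\rm fan}r\), which vanishes as \(r\downarrow0\). With the gate \(r=r_\eps=\sqrt\eps\), the same bound gives the rate \(O(\sqrt\eps+\eps\csc(\theta_*/2))\).

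\emph{Coefficient one on faces.} Outside \(T_{r}(\Gamma)\), step (b) shows that a short arc meets at most one sheet. On such an arc the gated rule \eqref{eq:fan-gated-pairing} coincides with the slit rule, so flattening and Lemma~\ref{lem:boundary-halfspace-cell} give coefficient one per bank, exactly as in Lemma~\ref{lem:slit-face-tip-estimate}.

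\emph{Curvature and seam costs.} The freezing cost \eqref{eq:fan-curvature-estimate} is the estimate \eqref{eq:slit-curvature-flow-estimate} applied sheet by sheet in fan coordinates, with \(Q_{\rm fan}\) controlling the chart curvature, and summed over \(\alpha\). The seam cost \eqref{eq:fan-seam-estimate} comes from the same two-normal-variable Fubini argument used for \eqref{eq:slit-chart-seam-estimate}, now with multiplicity \(m_{\rm fan}\).

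\emph{Main obstacle.} The hardest step is (b): a uniform quantitative statement that an arc meeting two distinct sheets must pass within \(O(\eps t\csc(\theta_*/2))\) of \(\Gamma\) in curved coordinates. This needs the arc-in-chart containment for \(\eps b<\rho_*/8\) and a lower bound on the distance between distinct sheets at all pairs of radii, not only equal radii. I would prove that distance bound in flat coordinates, using that the sectors are wedges of opening \(\ge\theta_*\), and then transfer it through the bi-Lipschitz chart.
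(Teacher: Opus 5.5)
Your proposal is correct and is essentially the paper's own argument: the flat chord \(2\varrho\sin(\theta_*/2)\) plus a \(C^{1,1}\) remainder, reduction of the two-sheet case to a crossing within \(C\eps t\csc(\theta_*/2)\) of \(\Gamma\), and a crossing-strip-times-flow-interval volume bound (your use of \eqref{eq:rectifiable-exact-crossing} is a clean form of the paper's Fubini step), followed by the marginal bound, the integrand bound two for \(j_\Gamma\), the half-space cell for coefficient one, and the sheetwise curvature and seam estimates. Your chart caveat is apt, since the paper also tacitly needs the transverse differential at \(\Gamma\) to be isometric to get the exact leading term; however, your fallback would not literally yield the displayed inequality, because \(C_{\rm fan}\) multiplies only \(\varrho^2/\rho_*\), although a multiplicative loss there is harmless for the cylinder-radius and volume bounds that follow.
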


\begin{proof}
In a flat transverse plane, two rays with angle at least \(\theta_*\) are
separated on the radius-\(\varrho\) circle by
\(2\varrho\sin(\theta_*/2)\).  The \(C^{1,1}\) chart changes this by at most
\(C_{\rm fan}\varrho^2/\rho_*\), proving
\eqref{eq:fan-angle-separation}.  A flow arc of length \(\eps t\) that crosses
two distinct sheets must enter the transverse cylinder
\(\varrho\le C_{\rm fan}\eps t\csc(\theta_*/2)\).  Indeed, the two crossing
points differ by at most \(C\eps t\); replacing their tangential and radial
coordinates by one common \((z,\varrho)\) changes the ray separation by at
most \(C_{\rm fan}\eps t\) through the Lipschitz fan frame.  The angle estimate
then yields the displayed cylinder after enlarging \(C_{\rm fan}\).
Since \(\eps b=o(r_\eps)\), for all sufficiently small \(\eps\) an arc
avoiding the gate \(T_{r_\eps}(\Gamma)\) cannot meet two distinct sheets.
This proves that \(\mathfrak a_\eps\) is well-defined without counting raw
same-sheet intersections.
Prescribing the first crossing gives an interval of starting points of length
\(C_{\rm fan}\eps t\); the admissible crossing strip along a sheet has width
\(C_{\rm fan}(r+\eps t\csc(\theta_*/2))\).  Fubini, the fan-chart Jacobian, and
the finite sheet and atlas multiplicities prove
\eqref{eq:fan-active-junction-volume}.  The pair marginal gives
\eqref{eq:fan-normalized-junction-bound}.
If the full arc enters \(T_r(\Gamma)\) while a crossing occurs outside it,
the crossing point lies within \(C\eps t\) of that tube.  Hence every bond
whose classification is changed by the gate is contained in
\(A_{i,\eps,t}^{\rm jun}(r+C\eps t)\), which obeys the same bound after
enlarging \(C_{\rm fan}\).

 Away from this set an arc meets at most one distinct sheet.  Its endpoint
sectors decide the same-sheet intersection parity, so the resolved branch of
\eqref{eq:fan-gated-pairing} is used exactly for an odd crossing.  Flattening
that sheet leaves one physical endpoint and its own-face ghost on opposite
half-spaces.  Lemma
\ref{lem:boundary-halfspace-cell} gives \(1/2+1/2=1\) for the directed and
transposed halves.  Repeating on the other side proves coefficient one for
 each of the \(2N\) open faces.  On the active set the canonical core
 integrand is at most one; allowing either the ordinary or resolved branch
 gives the uniform bound two.  Integrating
\eqref{eq:fan-normalized-junction-bound}, first sending \(\eps\downarrow0\)
and then \(r\downarrow0\), gives \(j_\Gamma[u,g]=0\).  Thus zero junction cost is
a cell computation, not a codimension assertion imposed in advance.  The
choice \(r=\sqrt\eps\) gives the stated direct rate.

The signed-distance and Jacobian expansions used in
\eqref{eq:slit-curvature-flow-estimate} apply sheet by sheet and sum to
\eqref{eq:fan-curvature-estimate}.  Exact label compatibility removes overlap
error.  If hard cells are used, the same two-normal-variable argument as in
\eqref{eq:slit-chart-seam-estimate}, summed over the finite atlas, proves
\eqref{eq:fan-seam-estimate}.
\end{proof}

Use the sector presentation in
Remark~\ref{rem:slit-fan-resolved-presentations} and write
\(BV(X_{\mathcal K}^\circ):=BV_{\rm res}(X_{\mathcal K}^\circ)\).  At
\(\Gamma\), transparent traces are glued only within the same sector vertex;
differences across open sheet faces are not part of
\(D_{X_{\mathcal K}}u\).  Proposition~\ref{prop:resolved-bv-package} gives
atlas independence, compactness, anisotropic lower semicontinuity, and all
outer and face traces used in the energy below.  Define
\begin{align}
 \mathcal G_{0,\mathcal K}^{g}(u)
 :={}&\int_{X_{\mathcal K}^\circ}
 h_{\rm fl}(p(x),\sigma_{D_{X_{\mathcal K}}u})
 \dd|D_{X_{\mathcal K}}u|(x)\notag\\
 &+\int_{\partial\Omega}h_{\rm fl}(x,\nu_\Omega)
 |T_\partial u-g_\partial|\dd\Hh^{d-1}\notag\\
 &+\sum_{\alpha=1}^N\int_{K_\alpha}h_{\rm fl}(x,\nu_\alpha)
 \left(|T_\alpha^-u-g_\alpha^-|+|T_\alpha^+u-g_\alpha^+|\right)
 \dd\Hh^{d-1},
 \label{eq:fan-limit-energy}
\end{align}
with value \(+\infty\) outside that binary \(BV\) space.  There is no
compulsory term on \(\Gamma\), and the constant recovery has no junction
concentration, by Lemma~\ref{lem:fan-junction-estimate}.

\begin{theorem}[Gamma-limit for a fixed finite sectorial junction]
\label{thm:fan-junction-gamma-limit}
Assume \eqref{eq:ray-radial-law}, \eqref{eq:boundary-ambient-frame},
\eqref{eq:flow-ray-steepness}, and the fixed sectorial hypotheses
\eqref{eq:fan-incidence}--\eqref{eq:fan-angle-gap}.  Then
\(\mathcal G_{\eps,\mathcal K}^{g}\) Gamma-converges to
\(\mathcal G_{0,\mathcal K}^{g}\) in
\(L^1(X_{\mathcal K},m_{\mathcal K})=L^1(\Omega)\).  Every bounded-energy
sequence is, after threshold rounding and extraction, strongly compact in
that space with a limit in
\(BV(X_{\mathcal K}^\circ;\{0,1\})\).  Every admissible binary state is
recovered by the constant sequence.
\end{theorem}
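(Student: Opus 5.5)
The plan is to mirror the proof of Theorem~\ref{thm:lifted-cut-gamma-limit} and to isolate every junction effect in Lemma~\ref{lem:fan-junction-estimate}. Equicoercivity, the liminf and constant recovery will all be obtained on the region away from the gate and from the auxiliary seams. The gated core will then be shown to contribute nothing in the limit.

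\emph{Step 1: rounding and compactness.} The potential is integrated once over physical volume. The gated pair rule in \eqref{eq:fan-gated-pairing} is symmetric and has the same two marginals as \(\widehat Q_{i,\eps,t}\). Hence \eqref{eq:rectifiable-rounding-inequalities} and \eqref{eq:flow-ray-steepness} show that threshold rounding does not increase the energy and costs only \(O(\eps)\) in \(L^1\).

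For binary fields, the glued Euclidean bond \(|\bar u^{\rm p}(x)-\bar u^{\rm p}(y)|^2\) is at most the typed bond plus a multiple of the indicator that the arc meets \(\mathcal K\cup\partial\Omega\). The area formula \eqref{eq:rectifiable-exact-crossing} bounds that indicator budget by \(C(\Hh^{d-1}(\partial\Omega)+\sum_\alpha\Hh^{d-1}(K_\alpha))\). This is the argument of \eqref{eq:rectifiable-glued-compactness-comparison}. Theorem~\ref{thm:boundary-flow-gamma-limit} then gives strong \(L^1(\Omega)\) compactness with a limit \(u\in BV(\Omega;\{0,1\})\).

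Since the projected Euclidean jump across each sheet is bounded by \(2\Hh^{d-1}(K_\alpha)\), the restriction to every Lipschitz sector chamber is \(BV\). Proposition~\ref{prop:resolved-bv-package} therefore places \(u\) in \(BV(X_{\mathcal K}^\circ;\{0,1\})\), independently of the bisector presentation of Remark~\ref{rem:slit-fan-resolved-presentations}.

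\emph{Step 2: liminf.} Fix the sector presentation with bisector seams \(\mathcal S\), regular radii \(r,\eta\), and \(\delta>0\), and apply Lemma~\ref{lem:resolved-cell-localization} to the fan detector. Its hypotheses require that the detector be the endpoint-to-own-face rule on stable resolved crossings and the ordinary rule otherwise, outside \(T_r(\Gamma)\). This is exactly what Lemma~\ref{lem:fan-junction-estimate} supplies. For small \(\eps\), an arc avoiding \(T_{r}(\Gamma)\supset T_{r_\eps}(\Gamma)\) meets at most one sheet, and its endpoint sectors decide the parity.

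Interior cells give the intrinsic variation and outer cells give the outer contact. Each one-sided face cell is handled by gluing to the fixed ghost \(G_\alpha^{\pm}\), invoking Theorem~\ref{thm:boundary-flow-gamma-limit}, and subtracting the convergent ghost--ghost term. This yields one coefficient-one contact per open face, which is the half-space cell computation of Lemma~\ref{lem:boundary-halfspace-cell}.

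One then sends \(\eps\downarrow0\), \(\delta\downarrow0\), \(\eta\downarrow0\) and \(r\downarrow0\). Continuity from below, together with \(\Hh^{d-1}(\Gamma)=0\) and the vanishing face area in \(T_r(\Gamma)\), gives \(\liminf\mathcal G^{g}_{\eps,\mathcal K}(u_\eps)\ge\mathcal G^{g}_{0,\mathcal K}(u)\). The curvature and seam errors \eqref{eq:fan-curvature-estimate}--\eqref{eq:fan-seam-estimate} vanish in this order of limits.

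\emph{Step 3: constant recovery.} For binary \(u\in BV(X_{\mathcal K}^\circ)\), use \(u_\eps=u\). The potential vanishes. On the retained region, the upper-bound part of Lemma~\ref{lem:resolved-cell-localization} applies: the BV flow-translation formula for the finitely many chamber/ghost gluings, with \eqref{eq:one-step-transparent-refinement} recombining bisector-seam terms into \(D_{X_{\mathcal K}}u\), gives convergence to \(\mathfrak m_u\) outside \(T_r(\Gamma)\cup T_\eta(\mathcal S)\). Every bond whose classification involves the core lies in \(A^{\rm jun}_{i,\eps,t}(r+C\eps t)\). Its integrand is bounded by two, so \eqref{eq:fan-normalized-junction-bound} bounds its normalized mass by \(C_{\rm fan}(r+\eps\csc(\theta_*/2))\Hh^{d-2}(\Gamma)\). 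This is exactly the statement \(j_\Gamma[u,g]=0\). The bulk and face measures of \(T_r(\Gamma)\) are \(O(r)\) plus \(|D_{X_{\mathcal K}}u|(T_r(\Gamma))\to0\). Taking the limits in the same order proves the limsup.

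\emph{Main obstacle.} The delicate point is Step 2: verifying that the gated detector satisfies the cell-localization hypotheses uniformly on face cells that approach the front as \(r\downarrow0\). Each face cell must see exactly one stable sheet, with both banks kept as distinct labelled copies and never averaged. I would first handle it by choosing the cells at fixed \(r\) with enlarged boxes outside \(T_{r/2}(\Gamma)\). The angle separation \eqref{eq:fan-angle-separation} then gives a sheet separation \(\gtrsim r\sin(\theta_*/2)\) there. For \(\eps b\ll r\), this makes uniqueness of the crossed sheet, and hence the own-face rule, exact inside every retained box.
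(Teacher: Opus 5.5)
Your proposal is correct and follows the paper's proof essentially step by step: threshold rounding with the single physical-volume potential, compactness, the liminf through Lemma~\ref{lem:resolved-cell-localization} with one-sided ghost gluing and the ordered limits \(\eps,\delta,\eta,r\downarrow0\), and constant recovery with the junction remainder controlled by \eqref{eq:fan-normalized-junction-bound}. The only variation is that you get compactness from the global glued-Euclidean comparison \eqref{eq:rectifiable-glued-compactness-comparison}, which the paper gives as its alternative to box-by-box ghost gluing and bisector wedges; note that \(u\in BV(\Omega)\) already restricts to each chamber, so the \(2\Hh^{d-1}(K_\alpha)\) jump bound is not needed at that point.
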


\begin{proof}
\emph{Compactness.}
Every branch of \eqref{eq:fan-gated-pairing} is either one ordinary endpoint
pair or two endpoint-to-binary-ghost terms.  In both cases its rounding excess
is bounded by \(2d_*(u(x))+2d_*(u(y))\).  The two endpoint marginals are
therefore exactly those already estimated in
\eqref{eq:flow-ray-steepness}.  Threshold rounding
still uses one physical-volume potential.  Ordinary and
one-sided face boxes give local \(BV\) bounds as follows.  On each face box
away from \(\Gamma\), the detector has one stable resolved sheet.  Glue the physical
field in either adjacent sector to that face's fixed opposite-collar ghost.
After adding the fixed ghost--ghost bonds, the local energy is the ambient
crossing-bond energy, so Theorem~\ref{thm:boundary-flow-gamma-limit} gives
compactness of the glued \(BV\) fields.  The limiting one-sided traces are
then read from their gluing measures; no compactness of the trace operator is
used.  A diagonal over finite face-box
exhaustions treats all \(2N\) face copies.  In a junction box,
cut every sector along a bisector and exhaust the resulting Lipschitz wedges;
their extension and Poincare constants are bounded by
\(C\csc(\theta_*/2)\).  The sector labels agree on chart overlaps, so the
local limits glue at exactly the vertices prescribed by
\eqref{eq:fan-cut-completion}.  Alternatively, after projection to
\(\Omega\), the only uncontrolled Euclidean jumps lie on the fixed union
\(\mathcal K\), and their total variation is at most
\(2\sum_\alpha\Hh^{d-1}(K_\alpha)\).  Ordinary \(BV(\Omega)\) compactness and
the finite sector atlas give the claimed strong \(L^1\) convergence.

\emph{Liminf.}
Fix one regular presentation and call its auxiliary seam family
\(\mathcal S\).  Choose regular \(r,\eta>0\) and \(\delta>0\).  Outside
\(T_r(\Gamma)\cup T_\eta(\mathcal S)\), every sufficiently short arc has at
most one resolved sheet.  Lemma~\ref{lem:resolved-cell-localization} supplies a finite
disjoint family of ordinary and one-sided face cells and gives
\begin{align}
 \liminf_{\eps\downarrow0}\mathcal G_{\eps,\mathcal K}^{g}(u_\eps)
 \ge{}&
 \mathfrak m_u\bigl(
 (X_{\mathcal K}^\circ\sqcup\partial\Omega
  \sqcup\textstyle\bigsqcup_{\alpha}(K_\alpha^-\sqcup K_\alpha^+))
 \setminus(T_r(\Gamma)\cup T_\eta(\mathcal S))\bigr)-\delta .
 \label{eq:fan-localized-liminf}
\end{align}
Ambient completion identifies each one-sided trace, and
Lemma~\ref{lem:boundary-halfspace-cell} fixes every face coefficient at one.
The losses are \(O(\eps/\rho_*)\) from
\eqref{eq:fan-curvature-estimate}, \(O(\eta+\eps)\) from
\eqref{eq:fan-seam-estimate}, and
\(O(r+\eps\csc(\theta_*/2))\) from
\eqref{eq:fan-normalized-junction-bound}.  Send \(\eps\downarrow0\), then
\(\delta\downarrow0\), \(\eta\downarrow0\), and \(r\downarrow0\).
Continuity from below and \(\mathfrak m_u(\Gamma)=0\) prove
\eqref{eq:fan-limit-energy}.

\emph{Recovery.}
For \(u\in BV(X_{\mathcal K}^\circ;\{0,1\})\), take \(u_\eps=u\).
The upper-bound part of Lemma~\ref{lem:resolved-cell-localization} applies
sheetwise to the finitely many physical/ghost glued fields.  Identity
\eqref{eq:one-step-transparent-refinement} recombines the auxiliary-seam
interior and trace-jump terms into \(D_{X_{\mathcal K}}u\), and subtraction
of the fixed ghost--ghost terms gives the bulk, outer contact, and all \(2N\)
labelled face contacts off the deleted sets.  The same three
estimates bound the difference
on the deleted sets.  The intrinsic variation of \(u\) in
\(T_r(\Gamma)\) tends to zero and the sheet areas there are
\(O(r\Hh^{d-2}(\Gamma))\).  The same ordered limit proves the limsup; for
\(r_\eps=\sqrt\eps\) the junction remainder vanishes directly.  The
potential is zero.
\end{proof}

\begin{corollary}[The complete \(Y\)-junction and Euclidean invisibility]
\label{cor:fan-null-set-invisible}
For \(N=3\), Theorem~\ref{thm:fan-junction-gamma-limit} gives the full fixed
\(Y\)-junction theorem.  More generally, take \(g_\partial\equiv1\), all
\(g_\alpha^\pm\equiv0\), and \(u\equiv1\).  Then
\begin{equation}
 \lim_{\eps\downarrow0}\mathcal G_{\eps,\mathcal K}^{g}(1)
 =2\sum_{\alpha=1}^N\int_{K_\alpha}
 h_{\rm fl}(x,\nu_\alpha)\dd\Hh^{d-1}>0,
 \label{eq:fan-double-face-cost}
\end{equation}
whereas the ordinary Euclidean state on
\(\Omega\setminus\mathcal K\) has zero potential and zero pair energy for
every \(\eps\).  The Euclidean branched null set cannot see the sector
vertices, any of the \(2N\) face copies, or their coefficient-one costs.
\end{corollary}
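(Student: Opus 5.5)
The \(Y\)-junction statement needs no separate argument: it is the case \(N=3\) of Theorem~\ref{thm:fan-junction-gamma-limit}. The fan hypotheses \eqref{eq:fan-incidence}--\eqref{eq:fan-angle-gap} are stated for every \(N\ge3\), and they place no restriction on the sheets beyond the angle gap and the chart regularity. So the content of the corollary is the explicit double-face identity \eqref{eq:fan-double-face-cost} and its comparison with the Euclidean model.

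For \eqref{eq:fan-double-face-cost}, I would apply the recovery half of Theorem~\ref{thm:fan-junction-gamma-limit} to the binary state \(u\equiv1\). This state lies in \(BV(X_{\mathcal K}^\circ;\{0,1\})\) with \(D_{X_{\mathcal K}}u=0\); there are no transparent-seam jumps, because all chamber traces equal one. Since the constant sequence recovers, \(\mathcal G_{\eps,\mathcal K}^{g}(1)\to\mathcal G_{0,\mathcal K}^{g}(1)\), and I then evaluate \eqref{eq:fan-limit-energy} term by term:
\begin{itemize}
\item the bulk term vanishes;
\item the outer contact vanishes because \(T_\partial u=1=g_\partial\);
\item each of the \(2N\) face terms has integrand \(h_{\rm fl}(x,\nu_\alpha)\cdot|1-0|\).
\end{itemize}
Summing the two faces of each sheet gives \(2\sum_\alpha\int_{K_\alpha}h_{\rm fl}(x,\nu_\alpha)\dd\Hh^{d-1}\). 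Positivity follows from the spanning bound \(h_{\rm fl}\ge\kappa|\xi|\) in \eqref{eq:boundary-ambient-frame} together with \(\Hh^{d-1}(K_\alpha)>0\) for an embedded \(C^{1,1}\) hypersurface-with-boundary. One point needs care here: the absence of any junction contribution is exactly the statement \(j_\Gamma[u,g]=0\) from Lemma~\ref{lem:fan-junction-estimate}, which the recovery proof already used, so nothing new is needed on \(\Gamma\).

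For the Euclidean side, I would invoke Proposition~\ref{prop:null-cut-invisible} with \(K=\mathcal K\). The fan \(\mathcal K\) is a finite union of \(C^{1,1}\) hypersurfaces, hence Lebesgue-null, and it is compactly contained in \(\Omega\). Therefore \(\Omega\setminus\mathcal K\) defines the same \(L^1\) class and the same flow-pair integrals as \(\Omega\). The constant state \(u\equiv1\) has zero potential, since \(d_*(1)=0\). Its interior bonds vanish identically. Its crossing bonds into the exterior compare \(1\) with \(g^{\rm e}\), whose trace on \(\partial\Omega\) is \(g_\partial\equiv1\). I would take \(g^{\rm e}\) equal to one on a collar of \(\partial\Omega\); otherwise I would have to make the "zero pair energy" claim precise as zero interior pair energy, with a boundary contribution that vanishes in the limit.

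The step I expect to need the most attention is reconciling the phrase "zero pair energy for every \(\eps\)" with the ambient exterior extension. The cleanest route is to read the claim, as in Corollary~\ref{cor:lifted-versus-null-cut}, with the constant exterior datum \(g^{\rm e}\equiv1\) near \(\partial\Omega\); this choice is admissible by Lemma~\ref{lem:lipschitz-boundary-package}. With that choice, every bond involves only the value one and the energy is exactly zero. The final sentence of the corollary, that the Euclidean model sees neither the sector vertices nor the face copies, then follows directly: its energy depends only on \(L^1\) classes and on an absolutely continuous pair measure, so it cannot distinguish the \(N\) vertex preimages or the \(2N\) banks.
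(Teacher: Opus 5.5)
Your proposal is correct and is essentially the paper's own proof. The identity comes from constant recovery in Theorem~\ref{thm:fan-junction-gamma-limit} at \(u\equiv1\), with zero bulk and outer contact and a coefficient-one mismatch on each of the \(2N\) faces; positivity comes from uniform spanning; and the Euclidean zero is Proposition~\ref{prop:null-cut-invisible} applied to the Lebesgue-null union \(\mathcal K\). Your remark on the exterior extension makes explicit a point the paper leaves implicit: exact zero pair energy needs \(g^{\rm e}\equiv1\) on a collar of \(\partial\Omega\) wider than the maximal bond length, and then it holds for all sufficiently small \(\eps\). For a general binary extension with trace one, only the interaction across \(\mathcal K\) and in the interior vanishes identically, and the boundary-crossing part vanishes only in the limit.
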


\begin{proof}
The intrinsic identity is Theorem~\ref{thm:fan-junction-gamma-limit} applied
to the constant state; positivity follows from uniform spanning.  Since
\(\mathcal L^d(\mathcal K)=0\), Proposition~\ref{prop:null-cut-invisible}
applies to the finite union and gives exact Euclidean zero.  The junction
contributes neither side of the comparison by
\eqref{eq:fan-junction-cell}.
\end{proof}

\subsection{The master theorem for finite resolved cut complexes}
\label{subsec:finite-cut-master-theorem}

The preceding slit and fan geometries can coexist.  We record the resulting
statement because it is the form stable under finite localization and the one
used to organize the theory.

\begin{definition}[Admissible finite resolved cut complex]
\label{def:admissible-finite-cut-complex}
An admissible finite resolved cut complex \(\mathfrak C\Subset\Omega\) is a
finite union of compact orientable \(C^{1,1}\) hypersurfaces-with-boundary for
which the singular set is a finite disjoint union
\(\Sigma=\bigcup_{r=1}^R\Gamma_r\) of compact embedded \(C^{1,1}\)
codimension-two manifolds.  Every point of \(\Sigma\) has a uniform chart of
one of the following two types:
\begin{enumerate}
 \item one terminated sheet with the curved-slit model
 \eqref{eq:graph-slit-geometry};
 \item finitely many labelled sheets with the sectorial model
 \eqref{eq:fan-incidence}--\eqref{eq:fan-angle-gap}.
\end{enumerate}
The reach, chart, and angle bounds are uniform over the finite
complex, distinct singular components have positive separation, and there are
no higher-order strata.  The completion \(X_{\mathfrak C}\) resolves every
open sheet bank and uses the finite chamber gluing of
Definition~\ref{def:finite-resolved-presentation}.  A bond is lifted by the
mod-two bank rule in a slit chart, by the gated endpoint-sector rule
\eqref{eq:fan-gated-pairing} in a fan chart, and by ordinary physical pairing
elsewhere.  These rules agree on overlaps.
\end{definition}

For each open sheet \(K_a\) choose the two labelled data \(g_a^\pm\) and let
\(\mathcal G_{\eps,\mathfrak C}^g\) be the resulting single-volume lifted-bond
energy.  Set
\begin{align}
 \mathcal G_{0,\mathfrak C}^{g}(u):={}&
 \int_{X_{\mathfrak C}^\circ}
 h_{\rm fl}(p,\sigma_{D_{X_{\mathfrak C}}u})
 \dd|D_{X_{\mathfrak C}}u|
 +\int_{\partial\Omega}h_{\rm fl}(x,\nu_\Omega)
 |T_\partial u-g_\partial|\dd\Hh^{d-1}\notag\\
 &+\sum_a\int_{K_a}h_{\rm fl}(x,\nu_a)
 \bigl(|T_a^-u-g_a^-|+|T_a^+u-g_a^+|\bigr)\dd\Hh^{d-1},
 \label{eq:finite-cut-master-limit}
\end{align}
with value \(+\infty\) outside
\(BV_{\rm res}(X_{\mathfrak C}^\circ;\{0,1\})\).

\begin{theorem}[Master Gamma-limit on a finite resolved cut complex]
\label{thm:finite-cut-master-gamma}
Assume the flow-frame hypotheses
\eqref{eq:ray-radial-law}, \eqref{eq:boundary-ambient-frame}, and
\eqref{eq:flow-ray-steepness}, and let \(\mathfrak C\) be admissible in the
sense of Definition~\ref{def:admissible-finite-cut-complex}.  Then
\(\mathcal G_{\eps,\mathfrak C}^g\) is equicoercive and Gamma-converges in
\(L^1(X_{\mathfrak C},m_{\mathfrak C})=L^1(\Omega)\) to
\(\mathcal G_{0,\mathfrak C}^g\).  Every admissible binary state is recovered
by the constant sequence.  Each open bank has contact coefficient one, and
the Gamma-limit has no compulsory codimension-two term supported on
\(\Sigma\).  Constant recovery has no concentration there.
\end{theorem}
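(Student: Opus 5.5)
The proof combines the two preceding local theorems through a finite localization, rather than proving anything new at the singular strata. Since \(\Sigma=\bigcup_r\Gamma_r\) is a finite disjoint union with positive mutual separation, I would fix a radius \(r_1\) below the uniform chart scales and below a quarter of the separation of distinct singular components. I would then cover \(\overline\Omega\) by four kinds of sets:
\begin{enumerate}
 \item ordinary boxes avoiding \(\mathfrak C\);
 \item one-sided face boxes meeting a single open sheet away from \(\Sigma\);
 \item slit front boxes around components of slit type;
 \item fan boxes around components of sectorial type;
\end{enumerate}
together with the outer-boundary collar already treated in Theorem~\ref{thm:boundary-flow-gamma-limit}. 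By Definition~\ref{def:admissible-finite-cut-complex}, the lifting rules agree on overlaps. The global energy \(\mathcal G_{\eps,\mathfrak C}^g\) therefore coincides, on every arc whose support lies in one such box, with the energy of the corresponding slit or fan model. For \(\eps b\) small, every arc lies in some box. The completion \(X_{\mathfrak C}\) admits a regular finite resolved presentation built from the slit and fan presentations of Remark~\ref{rem:slit-fan-resolved-presentations}. Lemma~\ref{lem:resolved-common-refinement} gives a common refinement on overlaps, and Proposition~\ref{prop:resolved-bv-package} makes \(BV_{\rm res}\), \(D_{X_{\mathfrak C}}\) and all bank traces independent of these choices.

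\emph{Compactness.} Threshold rounding works verbatim: every branch of the lifted rule is either an ordinary pair or two endpoint-to-binary-ghost terms. Each therefore has rounding excess at most \(2d_*(u(x))+2d_*(u(y))\), and \eqref{eq:flow-ray-steepness} absorbs this. For binary fields, I would use the projection argument already recorded in the slit and fan proofs. The ordinary Euclidean energy \(\mathcal G_{\eps,\Omega}^g\) is bounded by the lifted energy plus a constant multiple of the normalized crossing budget of \(\mathfrak C\cup\partial\Omega\). By \eqref{eq:rectifiable-crossing-budget}, this budget is bounded by \(C(\Hh^{d-1}(\partial\Omega)+\sum_a\Hh^{d-1}(K_a))\). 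Theorem~\ref{thm:boundary-flow-gamma-limit} then gives strong \(L^1(\Omega)\) compactness. It remains to see that the limit lies in \(BV_{\rm res}(X_{\mathfrak C}^\circ;\{0,1\})\). Chamberwise this follows from the local compactness statements in the proofs of Theorems~\ref{thm:lifted-cut-gamma-limit} and~\ref{thm:fan-junction-gamma-limit}, glued through Proposition~\ref{prop:resolved-bv-package}.

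\emph{Liminf.} I would apply Lemma~\ref{lem:resolved-cell-localization} to \(X_{\mathfrak C}\). Its proof uses only a regular presentation, the ordinary and endpoint-to-own-face detector structure away from \(T_r(\Sigma)\cup T_\eta(\mathcal S)\), and ambient completion of face cells. All of these are available here, since off \(T_r(\Sigma)\) every short arc meets at most one sheet:
\begin{enumerate}
 \item in slit charts by the reach bound;
 \item in fan charts by \eqref{eq:fan-angle-separation};
 \item across different components by their positive separation.
\end{enumerate}
This yields the analogue of \eqref{eq:fan-localized-liminf}. The losses are estimated component by component, using \eqref{eq:slit-normalized-tip-bound} for slit components and \eqref{eq:fan-normalized-junction-bound} for fan components. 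Their sum is \(O(r+\eps)\sum_r\Hh^{d-2}(\Gamma_r)\) plus the seam and curvature errors \eqref{eq:slit-chart-seam-estimate}, \eqref{eq:fan-seam-estimate}, \eqref{eq:fan-curvature-estimate}. I would send \(\eps\downarrow0\), then \(\delta,\eta,r\downarrow0\), and use \(\mathfrak m_u(\Sigma)=0\) to recover the full \(\mathcal G_{0,\mathfrak C}^g(u)\). Coefficient one on each bank comes from Lemma~\ref{lem:boundary-halfspace-cell}.

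\emph{Recovery and absence of a \(\Sigma\)-term.} For admissible binary \(u\), I would take \(u_\eps=u\) and apply the upper-bound part of Lemma~\ref{lem:resolved-cell-localization} to the finitely many glued physical/ghost fields. The identity \eqref{eq:one-step-transparent-refinement} recombines the seam terms into \(D_{X_{\mathfrak C}}u\). The interactions near \(\Sigma\) are bounded by the same tube estimates, so they contribute \(O(r)\) as \(\eps\downarrow0\). Running the same estimate with arbitrary test functions shows that the recovery interaction measures converge weak-star to \(\mathfrak m_u\), which gives no mass to \(\Sigma\); this proves the statement about concentration.

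The main obstacle is the bookkeeping at overlaps between slit and fan charts and the seams of the combined presentation. There one must check that the localization lemma, stated for "either" completion, genuinely applies to the mixed complex. I would handle this by choosing \(r_1\) so small that each singular component has its own chart family disjoint from all others. Every cell then sits in a purely slit, purely fan, or purely face/ordinary region, where the earlier proofs apply unchanged.
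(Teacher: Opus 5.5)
Your proposal is correct and follows essentially the same route as the paper's proof. Both fix a regular presentation, apply Lemma~\ref{lem:resolved-cell-localization} off \(T_r(\Sigma)\cup T_\eta(\mathcal S)\), use the slit and fan tube estimates component by component, take the ordered limit (\(\eps\downarrow0\), then \(\delta,\eta,r\downarrow0\)), and obtain constant recovery from the upper-bound part of that lemma. The only difference is in compactness: you compare with the Euclidean crossing-bond energy plus the crossing budget, which is the projection alternative the paper records in its slit and fan proofs, rather than the chamberwise argument cited here; on that route, \(u\in BV(\Omega)\) already gives \(u\in BV_{\rm res}\) by restriction to \(\Omega\setminus\mathfrak C\).
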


\begin{proof}
Fix one regular presentation, with auxiliary seam family \(\mathcal S\).
Choose regular \(r,\eta>0\), pairwise disjoint tubes
\(T_r(\Gamma_j)\), and \(\delta>0\).  Apply
Lemma~\ref{lem:resolved-cell-localization} to obtain a finite disjoint family
of ordinary and one-sided face boxes which exhausts the sharp measure outside
\(T_r(\Sigma)\cup T_\eta(\mathcal S)\) up to \(\delta\).  On each
terminated-sheet tube use Lemma~\ref{lem:slit-face-tip-estimate}; on each
sectorial tube use Lemma~\ref{lem:fan-junction-estimate}.  The constants can be
replaced by their maximum because the complex is finite.  Away from the
singular tubes, every sufficiently short lifted bond has either no letter or
one face letter, so the boundary half-space cell gives coefficient one.

Threshold compactness is chamberwise.  Proposition
\ref{prop:resolved-bv-package} glues transparent seams, preserves independent
bank traces, and gives the anisotropic lower semicontinuity needed after the
finite disjoint localization.  Since every interaction is nonnegative,
\eqref{eq:resolved-cell-finite-liminf} gives the complete sharp measure on the
retained cells minus \(\delta\).  Send first \(\eps\downarrow0\), then
\(\delta\downarrow0\), \(\eta\downarrow0\), and finally \(r\downarrow0\).
The seam terms recombine into the intrinsic derivative by
\eqref{eq:anisotropic-refinement-invariance}, while continuity from below and
the fact that the sharp Radon measure does not charge \(\Sigma\) prove the
lower bound.  For
the constant recovery, the sum of all terminated-sheet remainders is
\(O(r+\eps)\sum_j\Hh^{d-2}(\Gamma_j)\); the sectorial remainders are
\(O(r+\eps\csc(\theta_*/2))\sum_j\Hh^{d-2}(\Gamma_j)\).  Curvature and hard
seam errors are the finite sums of
\eqref{eq:slit-curvature-flow-estimate},
\eqref{eq:slit-chart-seam-estimate},
\eqref{eq:fan-curvature-estimate}, and \eqref{eq:fan-seam-estimate}.
Use the upper-bound part of Lemma~\ref{lem:resolved-cell-localization} on the
same retained cells and take the same ordered limit.  The displayed error
estimates vanish, proving recovery and also proving, rather than assuming, the
absence of a compulsory codimension-two term in the Gamma-functional.  The
argument concerns the constructed recovery sequence and does not rule out
positive excess defects for other bounded-energy sequences.
\end{proof}

\begin{theorem}[Uniform equivalence of direct and regular detectors]
\label{thm:regular-detector-equivalence-main}
Let \(\mathfrak C\) be an admissible finite resolved cut complex, and let
\(\mathcal H_{\eps,\Omega,\mathfrak C}^{g}\) be the direct energy
\eqref{eq:rectifiable-direct-energy} on its labelled open sheets.  Assume that
every collar ghost used in the slit and fan detector has the prescribed
one-sided trace \(g_a^\pm\) on its bank.  Then
\begin{equation}
 \lim_{\eps\downarrow0}\sup_{0\le u\le1}
 \left|
  \mathcal G_{\eps,\mathfrak C}^{g}(u)
  -\mathcal H_{\eps,\Omega,\mathfrak C}^{g}(u)
 \right|=0.
 \label{eq:regular-direct-uniform-equivalence}
\end{equation}
Consequently Theorems~\ref{thm:finite-cut-master-gamma} and
\ref{thm:rectifiable-cut-master-gamma} give the same Gamma-limit on every
regular complex.  The slit and fan construction is therefore a geometric
realization of the direct crossing rule, not a different finite-scale model.
\end{theorem}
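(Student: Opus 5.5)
The plan is to compare the two energies bond by bond and to show that their integrands differ only by a bounded quantity supported on a family of pairs that does not depend on \(u\). The surface-scaled measure of that family tends to zero, which gives the required uniformity in \(u\).

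Both energies have the same potential \(\frac M\eps\int_\Omega d_*(u)\), the same pair measure \(\widehat Q_{i,\eps,t}\) restricted to \(B_\Omega\), and the same radial law. The difference is therefore
\[
 \frac1\eps\sum_i\int_a^b\iint_{B_\Omega}
 \bigl(\Psi^{i,t}_{\eps,\mathfrak C}[u]-\Psi^{\Omega,\mathfrak C}_{i,\eps,t}[u]\bigr)
 \dd\widehat Q_{i,\eps,t}\lambda(\dd t).
\]
Using the directed form \eqref{eq:global-resolved-directed-form}, I split the starting points \(x\) into five classes.
\begin{enumerate}
 \item Arcs with no crossing of \(\Sigma=\mathfrak C\cup\partial\Omega\). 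Both rules use \(|u(x)-u(y)|^2\), so the difference is zero.
 \item Arcs with at least two crossings, and arcs meeting the front or junction gate \(T_{r_\eps}(\Gamma)\) or the set \(A^{\rm jun}_{i,\eps,t}\).
 \item Arcs with a single transverse crossing of an open sheet \(K_a\), away from the gate.
 \item Arcs with a single crossing of \(\partial\Omega\).
 \item Null families: tangential arcs, arcs with an endpoint on \(\Sigma\), and label overlaps, handled by the measurability discussion.
\end{enumerate}

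On the second class both integrands lie in \([0,2]\). By the parity estimate \eqref{eq:rectifiable-parity-error} with \(a=c_i\), together with the tip and junction bounds \eqref{eq:slit-normalized-tip-bound} and \eqref{eq:fan-normalized-junction-bound}, the normalized pair mass of this class is \(o(1)+O(\sqrt\eps+\eps\csc(\theta_*/2))\). This bound is independent of \(u\). For even same-sheet recrossings in a regular chart, \(G\) uses the ordinary rule and \(H\) uses zero, but these arcs belong to the multiply crossed class.

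On the third class, the regular detector uses \(|u(x)-G_a^{\sigma_x}(y)|^2+|u(y)-G_a^{\sigma_y}(x)|^2\) and the direct rule uses \(|u(x)-g_a^{\sigma}(z)|^2+|u(y)-g_a^{-\sigma}(z)|^2\). I will check that the bank labels match, using \(\sigma=\operatorname{sign}(v_i\cdot\nu_a)\) and the chart orientation. Then the elementary bound \(\bigl||\alpha-\beta|^2-|\alpha-\gamma|^2\bigr|\le2|\beta-\gamma|\) for values in \([0,1]\) bounds the integrand difference by
\[
 2|G_a^{\sigma_x}(y)-g_a^{\sigma}(z)|+2|G_a^{\sigma_y}(x)-g_a^{-\sigma}(z)|,
\]
which does not involve \(u\). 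The fourth class is treated the same way, with \(g^{\rm e}(y)\) against \(g_\partial(z)\).

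It remains to prove the ghost-trace estimate
\[
 \frac1\eps\int_{\{N^{K_a}_{i,\eps t}=1\}}c_i(x)\,|G_a^{\pm}(\Phi_i^{-\eps t}x)-g_a^{\pm}(z(x))|\dd x\longrightarrow0 ,
\]
and its analogue for the other endpoint and for \(\partial\Omega\). I expect this to be the main obstacle. It is the reason the paper earlier declined to assert a uniform comparison, and the trace hypothesis on the collar ghosts is exactly what is needed here.

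To prove it, I would first use the area formula \eqref{eq:rectifiable-exact-crossing} to rewrite the left side as
\[
 \int_{K_a}\int_0^{\eps t}\frac1\eps\,\bigl|G_a(\Phi_i^{s}z)-g_a(z)\bigr|\,J\Phi_i^s(z)\,|v_i\cdot\nu_a|\dd s\dd\Hh^{d-1}(z),
\]
after matching the endpoint with the correct side of the crossing. Next, in a flow box transverse to the sheet, the \(BV\) slicing theorem gives for almost every slice a one-dimensional \(BV\) function \(s\mapsto G_a(\Phi^s_i z)\). Its one-sided limit at \(s=0\) equals the collar trace \(g_a(z)\), because the trace of a Lipschitz-domain \(BV\) function agrees with the slice limits for almost every line. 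Hence the inner average tends to zero for \(\Hh^{d-1}\)-a.e. \(z\). Since the inner average is bounded by \(t\,\sup J\), dominated convergence over \(K_a\), which has finite area, finishes the estimate. The same argument applies to \(\partial\Omega\) with \(g^{\rm e}\), using the extension of Lemma~\ref{lem:lipschitz-boundary-package}.

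Summing the finitely many directions and sheets, and integrating in \(t\) with the majorant \(Ct\), gives \eqref{eq:regular-direct-uniform-equivalence}. The equality of the Gamma-limits then follows, because two families that differ uniformly by \(o(1)\) have the same Gamma-limit and the same equicoercivity.
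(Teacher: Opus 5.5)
Your proposal is correct and follows the paper's architecture. In both, the potentials cancel, the integrand difference is bounded independently of \(u\), and the error splits into three parts. These are a face term (one crossing outside the gates), a gate term, and a multiple-crossing term. Both use the elementary inequality \(\bigl||r-p|^2-|r-q|^2\bigr|\le2|p-q|\) and the parity estimate \eqref{eq:rectifiable-parity-error} for the last class.

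\textbf{Face term.} This is where the two arguments differ. The paper defines the collar modulus \(\omega_{a,\sigma}(r)=\sup_{q\le r}q^{-1}\int_{C^\sigma_{a,q}}|G^\sigma_a-g^\sigma_a\circ\pi_a|\). It then invokes the strong one-sided \(BV\) trace theorem to get \(\omega_{a,\sigma}(r)\to0\), and changes variables to collar coordinates. The result is the single bound \(C\sum_{a,\sigma}\omega_{a,\sigma}(C\eps b)\), uniform in \(t\).

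You instead parametrize by crossing point and flow time through the exact area formula. You then use that, on almost every flow slice, the one-sided limit of the \(BV\) ghost at the crossing equals its trace, and finish by dominated convergence on \(K_a\) and in \(t\).

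What each route buys:
\begin{itemize}
\item Your route compares the ghost with \(g^\sigma_a\) at the actual crossing point, which is what the direct rule uses. The paper's change of variables implicitly puts \(g^\sigma_a(\pi_a y)\) in place of \(g^\sigma_a(z)\); this is harmless but needs a small extra \(L^1\)-continuity step for \(g^\sigma_a\) under near-identity maps of the sheet.
\item Your route also applies unchanged to the Lipschitz outer boundary, with \(g^{\rm e}\) against \(g_\partial\). You treat that case explicitly; the paper's face estimate does not single it out.
\item The paper's route yields an explicit modulus that depends only on collar width.
\end{itemize}
To make the slice identification rigorous, straighten \(v_i\) in flow boxes and apply \(BV\) slicing there. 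Only transversal crossings matter, since the integrand carries the factor \(|v_i\cdot\nu_a|\).

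\textbf{Gate term.} You use the quantitative tip and junction bounds, which give the rate \(O(\sqrt\eps+\eps\csc(\theta_*/2))\). The paper uses only the exact crossing formula and continuity from above on the shrinking sheet neighbourhoods \(K_a[Cr_\eps]\) of the \(\Hh^{d-1}\)-null front set.

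One point of precision: your second class must consist of arcs that enter the gate \emph{and} cross a sheet. Arcs that merely enter \(T_{r_\eps}(\Gamma)\) have normalized mass of order \(\Hh^{d-2}(\Gamma)\), not \(o(1)\). On such arcs both rules are ordinary and the difference vanishes. This is consistent with the sets \(A_{i,\eps,t}(r)\) and \(A^{\rm jun}_{i,\eps,t}(r)\) you cite, but the class should be defined that way.
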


\begin{proof}
Set \(K_{\mathfrak C}:=\bigcup_aK_a\) and
\(\Sigma_{\rm cr}:=K_{\mathfrak C}\cup\partial\Omega\).  Write the
absolute difference of the two interaction terms as
\begin{equation}
 R_\eps(u)\le R_\eps^{\rm face}(u)+R_\eps^{\rm gate}(u)
                   +R_\eps^{\rm mult}(u),
 \label{eq:regular-direct-error-decomposition}
\end{equation}
according as the relevant arc has one crossing outside the gates, enters a
front or junction gate, or has at least two crossings.  The potential terms
are identical.  We estimate the three terms uniformly for \(0\le u\le1\).

On a uniquely crossed sheet outside the gates, the regular detector evaluates
an opposite-collar ghost at the endpoint, whereas the direct rule evaluates
its trace at the crossing point.  For \(0\le r,p,q\le1\),
\[
 \bigl||r-p|^2-|r-q|^2\bigr|\le2|p-q|.
\]
For a labelled bank \((a,\sigma)\), let \(C_{a,r}^\sigma\) be its one-sided
collar of width \(r\), let \(\pi_a\) be the collar projection, and set
\begin{equation}
 \omega_{a,\sigma}(r):=
 \sup_{0<q\le r}\frac1q
 \int_{C_{a,q}^\sigma}
 |G_a^\sigma-g_a^\sigma\circ\pi_a|\,\dd x .
 \label{eq:regular-direct-trace-modulus}
\end{equation}
The strong one-sided \(BV\) trace theorem gives
\(\omega_{a,\sigma}(r)\to0\) as \(r\downarrow0\).  The collar maps have
uniformly bounded Jacobians, and every endpoint of a bond of flow time
\(\eps t\), \(t\in[a,b]\), lies in a collar of width at most \(C\eps b\).
Changing variables from endpoints to the crossing point and collar distance,
or equivalently applying the area formula
\eqref{eq:rectifiable-exact-crossing}, therefore gives
\begin{equation}
 \sup_{0\le u\le1}R_\eps^{\rm face}(u)
 \le C\sum_{a,\sigma}\omega_{a,\sigma}(C\eps b)
 \longrightarrow0.
 \label{eq:regular-direct-face-error}
\end{equation}
The same formula supplies the integrable majorant needed for radial
integration.

The crossing points of arcs entering a shrinking gate belong to a decreasing
surface neighbourhood of the front and junction set.  That set is
\(\Hh^{d-1}\)-null.  Let \(K_a[r]\) denote its intrinsic
\(r\)-neighbourhood in the sheet \(K_a\).  Then the
gate widths \(r_\eps\downarrow0\) and the exact crossing formula yield
\begin{equation}
 \sup_{0\le u\le1}R_\eps^{\rm gate}(u)
 \le C\sum_{i,a}\int_{K_a[Cr_\eps]}
 c_i|v_i\cdot\nu_a|\,\dd\Hh^{d-1}\longrightarrow0
 \label{eq:regular-direct-gate-error}
\end{equation}
by continuity from above.  Finally, boundedness of both detectors gives
\begin{equation}
 \sup_{0\le u\le1}R_\eps^{\rm mult}(u)
 \le \frac C\eps\sum_i\int_a^b
 \bigl|\{x:N_{i,\eps t}^{\Sigma_{\rm cr}}(x)\ge2\}\bigr|
 \lambda(\dd t)
 \longrightarrow0.
 \label{eq:regular-direct-multiple-error}
\end{equation}
Indeed, \(\one_{\{N\ge2\}}\le(N-P)/2\), and
\eqref{eq:rectifiable-parity-error}, dominated in \(t\) by the exact crossing
formula, proves the limit.  Combining
\eqref{eq:regular-direct-error-decomposition}--
\eqref{eq:regular-direct-multiple-error} proves
\eqref{eq:regular-direct-uniform-equivalence}.
\end{proof}

\begin{remark}[Scope of the path rule]
\label{rem:master-path-rule-scope}
The theorem is universal over the finite geometric class just defined, but it
does not quantify over arbitrary path-lifting algorithms.  Its detector is
the canonical homotopy-stable parity rule at a terminated sheet and the gated
endpoint-sector rule at a junction.  A broader axiomatic classification would
require proving that every admissible lifting rule reduces to these local
rules; that statement is neither used nor claimed here.
\end{remark}

\section{Noncompact Radial Laws and Their Thresholds}
\label{sec:long-range-radial-laws}

We now remove compact radial support.  To avoid mixing remote flow times with
an ambient extension of a bounded body, this section is stated on
\(\mathbb T^d\), with the boundary term omitted from
\eqref{eq:rectifiable-direct-limit}.  The cut datum still satisfies
\eqref{eq:rectifiable-cut-mass}.  Every flow is globally defined.  For a
binary state and physical time \(h>0\), let
\begin{equation}
 B_{i,\mathfrak K}^g(u;h)
 :=\int_{\mathbb T^d}c_i(x)
 \Psi_{i,h}^{\mathfrak K}[u]
       \bigl(x,\Phi_i^{-h}x\bigr)\dd x,
 \label{eq:long-range-physical-bond}
\end{equation}
where \(\Psi_{i,h}^{\mathfrak K}\) is the direct rule of
Section~\ref{sec:rectifiable-direct-cuts}: ordinary interaction for no
crossing, two bank contacts for one crossing, and zero for two or more
crossings.  Denote the boundary-free restriction of
\eqref{eq:rectifiable-direct-limit} by
\begin{align}
 \mathcal H_{0,\mathfrak K}^{g}(u)
 :={}&\int_{\mathbb T^d}h_{\rm fl}(x,\sigma_{D_{\mathfrak K}u})
       \dd|D_{\mathfrak K}u|\notag\\
 &+\sum_j\int_{K_j}h_{\rm fl}(x,\nu_j)
 \bigl(|T_j^+u-g_j^+|+|T_j^-u-g_j^-|\bigr)\dd\Hh^{d-1}.
 \label{eq:long-range-local-limit}
\end{align}

Let \(\lambda\) be a finite positive Borel measure on \((0,\infty)\).  Define
\begin{align}
 \mathcal H_{\eps,\mathfrak K}^{g,\lambda}(u)
 :={}&\frac M\eps\int_{\mathbb T^d}d_*(u)\dd x
 +\frac1\eps\sum_i\int_0^\infty
 B_{i,\mathfrak K}^g(u;\eps t)\lambda(\dd t),
 \label{eq:long-range-energy}
\end{align}
with the direct bond extended to \(0\le u\le1\) by the same quadratic
endpoint and bank expressions.  Assume
\begin{equation}
 m_1:=\int_0^\infty t\lambda(\dd t)\in(0,\infty),
 \label{eq:long-range-first-moment-main}
\end{equation}
and that for some \(0<a<b<\infty\) and \(w_*>0\),
\begin{equation}
 \lambda(A)\ge w_*\mathcal L^1(A\cap(a,b))
 \quad\text{for every Borel }A\subset(0,\infty).
 \label{eq:long-range-spreading-main}
\end{equation}
For diffuse-state rounding at remote times we also assume bounded compression,
\begin{equation}
 C_J:=\max_i\sup_{s\in\R}\|J\Phi_i^s\|_{L^\infty}<\infty,
 \qquad
 M\ge2\lambda((0,\infty))\sum_i c^*(1+C_J).
 \label{eq:long-range-compression-main}
\end{equation}
Volume-preserving flows have \(C_J=1\).

\begin{proposition}[Finite radial-window Gamma-limit]
\label{prop:finite-radial-window-main}
Let \(\rho\) be a finite positive Borel measure supported in
\([r_-,r_+]\Subset(0,\infty)\), and put
\begin{equation}
 m_\rho:=\int_0^\infty t\rho(\dd t)>0.
 \label{eq:finite-window-first-moment}
\end{equation}
Assume that for some \(0<a_0<b_0<\infty\) and \(w_0>0\),
\begin{equation}
 \rho(A)\ge w_0\mathcal L^1(A\cap(a_0,b_0))
 \quad\text{for every Borel }A\subset(0,\infty).
 \label{eq:finite-window-spreading}
\end{equation}
Under \eqref{eq:rectifiable-cut-mass} and
\eqref{eq:flow-ray-regularity}--\eqref{eq:flow-ray-spanning}, suppose the
flows are global and
\begin{equation}
 \begin{aligned}
 C_J&:=\max_i\sup_{s\in\mathbb R}\|J\Phi_i^s\|_{L^\infty}<\infty,\\
 M&\ge2\rho((0,\infty))\sum_i c^*(1+C_J).
 \end{aligned}
 \label{eq:finite-window-compression-steepness}
\end{equation}
Then \(\mathcal H_{\eps,\mathfrak K}^{g,\rho}\) is equicoercive in
\(L^1(\mathbb T^d)\) and
\begin{equation}
 \mathcal H_{\eps,\mathfrak K}^{g,\rho}
 \mathop{\longrightarrow}^{\Gamma}
 m_\rho\mathcal H_{0,\mathfrak K}^{g}.
 \label{eq:finite-window-gamma}
\end{equation}
Every binary state of finite limit energy is recovered by the constant
sequence.
\end{proposition}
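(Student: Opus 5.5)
We follow the proof of Theorem~\ref{thm:rectifiable-cut-master-gamma} in its periodic form (Corollary~\ref{cor:periodic-rectifiable-cut}), with the normalized law \(\lambda\) replaced by the general window measure \(\rho\). The only changes are three: the compactness now rests on the spreading interval \((a_0,b_0)\); the radial Fatou argument produces the factor \(m_\rho\) in place of \(1\); and remote flow times \(t\le r_+\) are controlled by the global compression constant \(C_J\) instead of the short-time Liouville bound.

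\emph{Step 1: rounding and compactness.} The direct bond is either an ordinary endpoint pair, two endpoint-to-bank terms, or zero. By \eqref{eq:rectifiable-rounding-inequalities}, its rounding excess is at most \(2d_*(u(x))+2d_*(u(y))\). The first marginal of the directed pair measure has density \(c_i\le c^*\), and the second has density at most \(c^*C_J\). Integrating over \(\rho\) and using \eqref{eq:finite-window-compression-steepness} gives the analogue of \eqref{eq:rectifiable-rounding-global}, so bounded-energy sequences may be replaced by binary ones at \(L^1\) cost \(O(\eps)\).

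For equicoercivity, restrict the radial integral to \((a_0,b_0)\), where \(\rho\ge w_0\,\dd t\). On that window the ordinary increment is bounded by the typed bond plus \(C\one_{\{N^K_{i,\eps t}\ge1\}}\). The exact crossing formula \eqref{eq:rectifiable-exact-crossing} bounds the normalized crossing budget by \(C\sum_j\Hh^{d-1}(K_j)\), as in \eqref{eq:rectifiable-crossing-budget}. This yields the averaged translation bound \eqref{eq:flow-ray-averaged-bound} with radial weight \(w_0\) on \((a_0,b_0)\). Lemma~\ref{lem:ordered-flow-smoothing} and Lemma~\ref{lem:flow-ray-compactness} only need a density bounded below on a nondegenerate interval, since \(R_\rho\) uses \(\rho/w\) on a subinterval. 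Applying them with \((a_0,b_0)\) in place of \((a,b)\) gives compactness, with limit in \(BV(\mathbb T^d;\{0,1\})\).

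\emph{Step 2: liminf.} Let \(z_n\to u\) be binary with bounded energy, and let \(f_n^\zeta(t)=h_n^{-1}\mathcal B^\zeta_{h_n}(z_n)\) with \(h_n=\eps_n t\). The interaction is \(\int t f_n(t)\rho(\dd t)\). For every \(t\in[r_-,r_+]\) with \(\liminf_n f_n(t)<\infty\), the periodic version of Theorem~\ref{thm:global-resolved-translation-contact} applies to the single sequence \(h_n\downarrow0\). Its flow-box slicing is purely local and uses \(h_n\to0\), which holds because \(t\le r_+\). It gives \(\liminf_n f_n^\zeta(t)\ge\int\zeta\dd\nu^g_{\mathfrak K,u}\). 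Where the liminf is infinite the inequality is trivial. Fatou in \(t\) against \(t\rho(\dd t)\), together with the Radon-measure domination argument of \eqref{eq:rectifiable-measure-domination-tests}, then gives \(\liminf\ge m_\rho\,\mathcal H^g_{0,\mathfrak K}(u)\). Countably many labels are handled by the monotone exhaustion in Step 5 of that proof.

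\emph{Step 3: constant recovery.} For binary \(u\) of finite limit energy, the fixed-state formula \eqref{eq:global-resolved-translation-limit} gives \(\eps^{-1}\mathcal B_{\eps t}(u)\to t\,\mathcal H^g_{0,\mathfrak K}(u)\) for each \(t\in[r_-,r_+]\). The majorant \eqref{eq:global-resolved-translation-majorant} is uniform for \(\eps t<h_0\), and this holds for all \(t\le r_+\) once \(\eps<h_0/r_+\). Dominated convergence against the finite measure \(t\rho(\dd t)\) gives the limit \(m_\rho\,\mathcal H^g_{0,\mathfrak K}(u)\).

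\emph{Main obstacle.} The delicate point is that Theorem~\ref{thm:global-resolved-translation-contact} and its majorant were proved with flow times confined below \(b\eps_0\), using short-time flow boxes. We must check that the constants \(h_0\) and \(C\) depend only on the atlas and coefficient data, not on \((a,b)\). The plan is to restate that theorem purely in terms of the physical time \(h\), which it already is, and to absorb \(r_+\) into the smallness of \(\eps\). The compression bound \(C_J\) is needed only for rounding; it is not needed for the limit identification.
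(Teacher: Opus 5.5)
Your proposal is correct and follows the paper's proof essentially step for step. The shared steps are threshold rounding via the two flow marginals and the compression-steepness bound, and equicoercivity from the spreading window $(a_0,b_0)$ with the crossing-corrected increment bound fed into Lemma~\ref{lem:ordered-flow-smoothing}. They continue with the fixed-radius liminf from Theorem~\ref{thm:global-resolved-translation-contact} followed by Fatou against $t\rho(\dd t)$, and constant recovery by dominated convergence using the uniform majorant once $\eps r_+<h_0$. The only cosmetic difference is that the paper applies the fixed-radius liminf directly with $\zeta\equiv1$, which is admissible on $\mathbb T^d$, so your detour through localized tests and Radon-measure domination is unnecessary though harmless.
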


\begin{proof}
The pointwise rounding inequalities
\eqref{eq:rectifiable-rounding-inequalities}, the two flow-marginal bounds,
and \eqref{eq:finite-window-compression-steepness} show that thresholding decreases the
energy and has \(L^1\) error at most \(\eps\mathcal H/M\).  It is therefore
enough to work with binary states.

For a binary state \(z\), the triangle inequality across a singly crossed arc
and boundedness on the multiple-crossing class give
\begin{equation}
 c_*\|z-z\circ\Phi_i^{-h}\|_1
 \le B_{i,\mathfrak K}^g(z;h)
 +C\int_{\mathbb T^d}c_i(x)
       \one_{\{N_{i,h}^{K}(x)\ge1\}}\dd x.
 \label{eq:finite-window-uncut-comparison}
\end{equation}
Integrate this inequality over the spreading submeasure
\(w_0\one_{(a_0,b_0)}(t)\dd t\), put \(h=\eps t\), and divide by \(\eps\).
The exact crossing formula bounds the second term uniformly by
\(C\sum_j\Hh^{d-1}(K_j)\).  Thus every bounded-energy binary sequence
satisfies the averaged increment hypothesis of
Lemma~\ref{lem:ordered-flow-smoothing}.  That lemma and
\eqref{eq:flow-ray-spanning} prove equicoercivity.

Let now \(z_n\to z\) in \(L^1\) and \(\eps_n\downarrow0\).  For every fixed
\(t\in[r_-,r_+]\), the periodic part of
Theorem~\ref{thm:global-resolved-translation-contact}, applied with
\(h_n=\eps_nt\), gives
\begin{equation}
 \liminf_{n\to\infty}\frac1{\eps_nt}
 \sum_iB_{i,\mathfrak K}^g(z_n;\eps_nt)
 \ge\mathcal H_{0,\mathfrak K}^{g}(z).
 \label{eq:finite-window-fixed-radius-liminf}
\end{equation}
If the unweighted quotient has infinite liminf the inequality is automatic;
otherwise it is precisely the sequential statement of that theorem.  Fatou's
lemma, positivity of the potential, and
\eqref{eq:finite-window-first-moment} yield
\begin{align}
 \liminf_{n\to\infty}
 \mathcal H_{\eps_n,\mathfrak K}^{g,\rho}(z_n)
 &\ge\int_0^\infty t
 \liminf_{n\to\infty}\frac1{\eps_nt}
 \sum_iB_{i,\mathfrak K}^g(z_n;\eps_nt)\rho(\dd t)\notag\\
 &\ge m_\rho\mathcal H_{0,\mathfrak K}^{g}(z).
 \label{eq:finite-window-liminf}
\end{align}

Finally fix binary \(z\) with finite limit energy.  The fixed-state part of
Theorem~\ref{thm:global-resolved-translation-contact} gives pointwise
convergence of the quotient in
\eqref{eq:finite-window-fixed-radius-liminf} to
\(\mathcal H_{0,\mathfrak K}^{g}(z)\).  Its translation majorant is uniform
for \(0<h<h_0\).  Since \(t\) belongs to a fixed compact interval, for all
small \(\eps\)
\begin{equation}
 \frac1\eps\sum_iB_{i,\mathfrak K}^g(z;\eps t)
 \le C_z t,
 \label{eq:finite-window-recovery-majorant}
\end{equation}
and \(t\in L^1(\rho)\).  Dominated convergence proves constant recovery and
completes the Gamma-convergence proof.
\end{proof}

\begin{theorem}[Finite-first-moment long-range limit]
\label{thm:long-range-first-moment-main}
Under \eqref{eq:rectifiable-cut-mass},
\eqref{eq:flow-ray-regularity}--\eqref{eq:flow-ray-spanning}, global
existence of the flows, and \eqref{eq:long-range-first-moment-main}--
\eqref{eq:long-range-compression-main}, the family
\(\mathcal H_{\eps,\mathfrak K}^{g,\lambda}\) is equicoercive and
\begin{equation}
 \mathcal H_{\eps,\mathfrak K}^{g,\lambda}
 \mathop{\longrightarrow}^{\Gamma}
 m_1\mathcal H_{0,\mathfrak K}^{g}
 \quad\text{in }L^1(\mathbb T^d).
 \label{eq:long-range-first-moment-gamma}
\end{equation}
Every binary finite-energy state is recovered by the constant sequence.
Thus compact radial support is not part of the local Griffith mechanism.
\end{theorem}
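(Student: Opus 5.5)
The plan is to reduce to Proposition~\ref{prop:finite-radial-window-main} by radial truncation. For \(R>b\), split \(\lambda=\lambda_R+\lambda_R^c\) with \(\lambda_R:=\lambda\mathbin{\llcorner}(0,R]\) and \(\lambda_R^c:=\lambda\mathbin{\llcorner}(R,\infty)\). The first piece may still charge small radii, so I would truncate below as well: \(\rho_{\delta,R}:=\lambda\mathbin{\llcorner}[\delta,R]\). This is supported in a compact subinterval of \((0,\infty)\), it inherits the spreading bound \eqref{eq:long-range-spreading-main} once \(\delta<a\), and its total mass is at most \(\lambda((0,\infty))\). Hence the steepness condition \eqref{eq:finite-window-compression-steepness} follows from \eqref{eq:long-range-compression-main}, and the proposition gives Gamma-convergence of \(\mathcal H^{g,\rho_{\delta,R}}_{\eps,\mathfrak K}\) to \(m_{\rho_{\delta,R}}\mathcal H^g_{0,\mathfrak K}\). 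Here \(m_{\rho_{\delta,R}}\uparrow m_1\) as \(\delta\downarrow0\) and \(R\uparrow\infty\).

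Rounding and equicoercivity come first. The marginal bounds with \(C_J\) and the steepness condition \eqref{eq:long-range-compression-main} make thresholding energy-nonincreasing for the whole measure \(\lambda\), exactly as in \eqref{eq:rectifiable-rounding-global}. Since every direct bond is nonnegative, \(\mathcal H^{g,\lambda}_\eps\ge\mathcal H^{g,\rho_{\delta,R}}_\eps\) up to the common potential. Equicoercivity therefore follows from the window proposition, or directly from the averaged increment bound \eqref{eq:finite-window-uncut-comparison} on \((a,b)\) combined with Lemma~\ref{lem:ordered-flow-smoothing}.

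The liminf also follows from monotonicity: \(\liminf\mathcal H^{g,\lambda}_{\eps_n}(u_n)\ge m_{\rho_{\delta,R}}\mathcal H^g_{0,\mathfrak K}(u)\). Letting \(\delta\downarrow0\) and \(R\uparrow\infty\) gives \(m_1\mathcal H^g_{0,\mathfrak K}(u)\) by monotone convergence of the first moment. Alternatively, one can apply Fatou in \(t\) directly, using the fixed-radius inequality \eqref{eq:finite-window-fixed-radius-liminf}.

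The main obstacle is the recovery, namely dominated convergence in \(t\) over a noncompact range. For a fixed binary \(u\) with finite limit energy, I need a majorant \(\eps^{-1}\sum_iB^g_{i,\mathfrak K}(u;\eps t)\le C_u t\) valid for all \(t>0\) and all small \(\eps\), not only for \(\eps t<h_0\). I would obtain it as follows.

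\emph{Small times.} For \(h=\eps t<h_0\), use the translation majorant \eqref{eq:global-resolved-translation-majorant}, which is linear in \(h\).

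\emph{Large times.} For \(h\ge h_0\), use boundedness of the bond together with \(\|c_i\|_\infty\le c^*\). This gives \(B\le C\le Ch/h_0\), so \(\eps^{-1}B\le Ct/h_0\).

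Thus \(\eps^{-1}B(u;\eps t)\le C_u t\) for all \(t\), and \(t\in L^1(\lambda)\) by \eqref{eq:long-range-first-moment-main}. For each fixed \(t\), \(\eps t\to0\), so pointwise convergence of \(\eps^{-1}B(u;\eps t)\) to \(t\,\mathcal H^g_{0,\mathfrak K}(u)\) comes from \eqref{eq:global-resolved-translation-limit} on the torus. Dominated convergence then yields \(\mathcal H^{g,\lambda}_\eps(u)\to m_1\mathcal H^g_{0,\mathfrak K}(u)\).

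The delicate point is that the global flows on \(\mathbb T^d\) need no ambient extension at remote times, and that the bounded-bond estimate is uniform. This is precisely where the first moment, rather than compact support, is used: it supplies the integrable majorant \(t\) near infinity, while near zero the BV bound is linear in \(t\).
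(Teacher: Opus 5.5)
Your proposal is correct and follows essentially the paper's route: truncation to a compact radial window containing \((a,b)\), then Proposition~\ref{prop:finite-radial-window-main} together with monotonicity of the nonnegative direct bonds for equicoercivity and the liminf, then monotone convergence of the truncated first moments to \(m_1\), and finally the two-regime bond majorant \eqref{eq:long-range-bond-majorant} for recovery. The only difference is in how the recovery step is packaged: you turn that majorant into the single bound \(\eps^{-1}\sum_iB^g_{i,\mathfrak K}(u;\eps t)\le C_ut\) on all of \((0,\infty)\) and apply dominated convergence once, whereas the paper combines the window limit with a separate tail estimate and then lets \(R\uparrow\infty\) (its Markov step on \(\{\eps t>h_0\}\) is exactly your inequality \(1/\eps\le t/h_0\)), so your version is slightly more direct.
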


\begin{proof}
For \(R>\max\{b,a^{-1}\}\), let
\(\lambda_R=\lambda\mathbin{\llcorner}(R^{-1},R)\) and
\(m_R=\int t\lambda_R(\dd t)\).  Because
\((a,b)\subset(R^{-1},R)\), the measure \(\lambda_R\) satisfies
\eqref{eq:finite-window-spreading}; the steepness bound for \(\lambda\)
implies \eqref{eq:finite-window-compression-steepness} for every truncation.  Proposition
\ref{prop:finite-radial-window-main} therefore gives
\begin{equation}
 \mathcal H_{\eps,\mathfrak K}^{g,\lambda_R}
 \mathop{\longrightarrow}^{\Gamma}
 m_R\mathcal H_{0,\mathfrak K}^{g}.
 \label{eq:long-range-window-convergence}
\end{equation}
Equicoercivity of the full family follows already from the fixed compact
submeasure \(\lambda\mathbin{\llcorner}(a,b)\): the full energy dominates its
finite-window energy, with the same potential term.

Positivity therefore yields, for every convergent sequence,
\[
 \liminf_{\eps\downarrow0}
 \mathcal H_{\eps,\mathfrak K}^{g,\lambda}(u_\eps)
 \ge m_R\mathcal H_{0,\mathfrak K}^g(u).
\]
Let \(R\uparrow\infty\); monotone convergence gives
\(m_R\uparrow m_1\).

For recovery, fix binary \(u\) with finite limit energy.  The ordinary
\(BV\) translation estimate, the exact crossing formula, and
\eqref{eq:rectifiable-cut-mass} give constants \(h_0,C_u>0\) such that
\begin{equation}
 \sum_iB_{i,\mathfrak K}^g(u;h)\le C_u h
 \quad(0<h<h_0),
 \qquad
 \sum_iB_{i,\mathfrak K}^g(u;h)\le C_u
 \quad(h\ge h_0).
 \label{eq:long-range-bond-majorant}
\end{equation}
The radial complement of \((R^{-1},R)\) with \(\eps t\le h_0\) is bounded
by \(C_u\int t\lambda(\dd t)\) over that complement.  On
\(\{\eps t>h_0\}\), Markov's inequality gives
\[
 \frac{C_u}{\eps}\lambda((h_0/\eps,\infty))
 \le\frac{C_u}{h_0}\int_{h_0/\eps}^\infty t\lambda(\dd t)
 \longrightarrow0.
\]
First send \(\eps\downarrow0\) and then \(R\uparrow\infty\).  The window
limit and \eqref{eq:long-range-bond-majorant} prove constant recovery.
\end{proof}

The spreading assumption is a compactness condition, not a disguised moment
condition.

\begin{proposition}[Exact-shift resonance]
\label{prop:long-range-resonance-main}
On \(\mathbb T^d\), \(d\ge2\), take the empty cut, the constant spanning
frame \(v_i=e_i\), \(c_i=1\), \(i=1,\ldots,d\), and
\(\lambda=t_0^{-1}\delta_{t_0}\).  There are \(\eps_m\downarrow0\) and
binary states of zero interaction energy with no strongly convergent
\(L^1\) subsequence.  Consequently a finite first moment and uniform spanning
do not imply equicoercivity without radial anti-resonance.
\end{proposition}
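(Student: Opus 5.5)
The plan is to build an explicit counterexample out of exact-shift resonance: with a single radial atom at \(t_0\), every bond uses the same displacement \(\eps t_0 e_i\). The unit first moment is just \(\int t\,\lambda(\dd t)=t_0^{-1}t_0=1\). With the empty cut, the typed rule \(\Psi\) reduces to the ordinary increment, the flows are the translations \(\Phi_i^s(x)=x+se_i\), and uniform spanning holds with \(\kappa=1\). The energy of a binary state \(z\) is therefore
\[
 \mathcal H_{\eps}(z)=\frac1{\eps t_0}\sum_{i=1}^d\int_{\mathbb T^d}|z(x)-z(x-\eps t_0e_i)|\dd x ,
\]
and the potential vanishes identically. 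So the task is to produce binary fields that are exactly invariant under translation by \(\eps_m t_0 e_i\) for each \(i\), yet oscillate at a fixed amplitude.

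The construction fixes a checkerboard period. I will take \(\eps_m:=1/(2m t_0)\), so that \(\eps_m t_0=1/(2m)\). I will define \(z_m(x):=\one_{\{\sin(2\pi m x_1)>0\}}\), or any field depending on \(x_1\) and having period \(1/m\)... I need to check that period. Translation by \(1/(2m)\) in \(e_1\) flips the sign of \(\sin(2\pi m x_1)\), so that choice is wrong.

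The fix is to choose \(\eps_m t_0=1/m\) and \(z_m(x)=\one_{\{\sin(2\pi m^2 x_1)>0\}}\). This field has period \(1/m^2\) in \(x_1\), and \(1/m\) is a multiple of \(1/m^2\). It is also well defined on \(\mathbb T^d\) because \(m^2\in\mathbb N\), and it is constant in \(x_2,\dots,x_d\). Hence \(z_m(x-\eps_m t_0e_i)=z_m(x)\) for every \(i\) and a.e. \(x\), and the interaction energy is exactly zero for every \(m\), with \(\eps_m=1/(mt_0)\downarrow0\).

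Non-compactness follows from standard oscillation facts. By Riemann--Lebesgue, \(z_m\rightharpoonup\frac12\) weakly-\(*\) in \(L^\infty\), since the mean over each period is \(1/2\). If a subsequence converged strongly in \(L^1\), its limit would have to equal the weak limit \(\frac12\). But \(|z_m-\tfrac12|=\tfrac12\) a.e., so \(\|z_m-\tfrac12\|_{L^1}=\tfrac12\) for all \(m\), which is a contradiction. Since the energies are bounded (indeed zero), equicoercivity fails even though \(m_1=1<\infty\) and the frame spans.

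The only delicate point I expect is the compatibility between the shift and the period. Before writing the proof I will double-check that the shift \(\eps_m t_0\) is an integer multiple of the oscillation period and that the oscillation frequency is an integer, so that the field lives on the torus. The remaining steps are routine.
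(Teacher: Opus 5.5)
Your construction is correct and is essentially the paper's proof. Both take \(\eps_m=(mt_0)^{-1}\) and a binary stripe in \(x_1\) that is invariant under the shift \(e_1/m\) and independent of \(x_2,\dots,x_d\), and both rule out a strongly convergent subsequence using the weak limit \(\tfrac12\). The only differences are cosmetic. The paper uses stripes of period exactly \(1/m\), so your frequency \(m^2\) is unnecessary though harmless. It argues that a strong limit of binary functions must be binary, where you use \(\|z_m-\tfrac12\|_{L^1}=\tfrac12\); both work.
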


\begin{proof}
Set \(\eps_m=(mt_0)^{-1}\) and let \(z_m(x)\) be the period-
\(m^{-1}\) binary stripe in the first coordinate that equals one on the first
half of each period.  Translation by \(\eps_mt_0e_1=e_1/m\) fixes \(z_m\),
and translation in every \(e_i\), \(i\ge2\), also fixes it because the state
is independent of those coordinates.  Thus every directional interaction and
the well term vanish.  The stripes converge weakly to \(1/2\).  A strongly
convergent subsequence of binary functions would have a binary almost-
everywhere limit, which is impossible.
\end{proof}

We next record the sharp threshold when the first moment diverges.  Assume for
the rest of the section that every flow preserves volume, and put
\begin{equation}
 \lambda_s(\dd t)=a_s\one_{[1,\infty)}(t)t^{-1-2s}\dd t,
 \qquad 0<s<1,\quad a_s>0.
 \label{eq:power-tail-main}
\end{equation}
For binary \(u\), define the one-scale quotient
\begin{equation}
 Q_{\mathfrak K}^g(u;h):=\frac1h\sum_iB_{i,\mathfrak K}^g(u;h).
 \label{eq:one-scale-quotient-main}
\end{equation}

\begin{lemma}[Uniform one-scale resolved lower bound]
\label{lem:one-scale-resolved-main}
If binary \(u_n\to u\) in \(L^1(\mathbb T^d)\) and \(h_n\downarrow0\), then
\begin{equation}
 \liminf_{n\to\infty}Q_{\mathfrak K}^g(u_n;h_n)
 \ge\mathcal H_{0,\mathfrak K}^g(u).
 \label{eq:one-scale-resolved-liminf-main}
\end{equation}
For every binary finite-energy \(u\),
\begin{equation}
 Q_{\mathfrak K}^g(u;h)\longrightarrow
 \mathcal H_{0,\mathfrak K}^g(u),
 \qquad
 \sup_{0<h<h_0}Q_{\mathfrak K}^g(u;h)<\infty.
 \label{eq:one-scale-resolved-recovery-main}
\end{equation}
\end{lemma}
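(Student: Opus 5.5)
The lemma is essentially the periodic, unweighted case of Theorem~\ref{thm:global-resolved-translation-contact}, and I would reduce it to that theorem directly. On \(\mathbb T^d\) there is no outer boundary, and the flows are global and volume preserving. Take the cutoff \(\zeta\equiv1\). Then the symmetric weighted bond \(\mathcal B_h^1\) of \eqref{eq:global-resolved-weighted-bond} coincides with \(\sum_iB_{i,\mathfrak K}^g(u;h)\) of \eqref{eq:long-range-physical-bond}. This uses the directed representation \eqref{eq:global-resolved-directed-form}, together with symmetry of the typed bond and the unit Jacobian. Hence \(Q_{\mathfrak K}^g(u;h)=h^{-1}\mathcal B_h(u)\), and the limit measure \(\nu_{\mathfrak K,u}^g\) has total mass \(\mathcal H_{0,\mathfrak K}^g(u)\) by the polar identity \(\sum_ic_i|v_i\cdot D_{\mathfrak K}u|=h_{\rm fl}(x,\sigma_{D_{\mathfrak K}u})|D_{\mathfrak K}u|\) and \(\sum_ic_i|v_i\cdot\nu_j|=h_{\rm fl}(x,\nu_j)\).

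For the liminf \eqref{eq:one-scale-resolved-liminf-main}, I would first dispose of the trivial case where the left side is infinite. Otherwise, pass to a subsequence attaining the liminf; along it \(h_n^{-1}\mathcal B_{h_n}(u_n)\) is bounded, so \eqref{eq:global-resolved-finite-budget} holds. The periodic version of Theorem~\ref{thm:global-resolved-translation-contact} then gives \(u\in BV(\mathbb T^d;\{0,1\})\) and \eqref{eq:global-resolved-sequential-liminf} with \(\zeta\equiv1\), which is exactly the claim. The word ``uniform'' in the title would be justified by remarking that the proof there uses no relation between \(h_n\) and any radial parameter. The only input is a single sequence \(h_n\downarrow0\), sliced on a.e.\ flow line with the split-line bound, Lemma~\ref{lem:rectifiable-split-line}, at fixed shift.

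For \eqref{eq:one-scale-resolved-recovery-main}, finite energy means \(u\in BV(\mathbb T^d;\{0,1\})\). The fixed-state convergence \eqref{eq:global-resolved-translation-limit} with \(\zeta\equiv1\) gives \(Q_{\mathfrak K}^g(u;h)\to\mathcal H_{0,\mathfrak K}^g(u)\) along the full continuum \(h\downarrow0\), not merely sequences. The majorant \eqref{eq:global-resolved-translation-majorant}, with the outer-boundary term deleted, bounds \(\sup_{0<h<h_0}Q\) by \(C(|Du|(\mathbb T^d)+\sum_j\Hh^{d-1}(K_j))\). This is finite because \(|Du|\) is controlled by \(|D_{\mathfrak K}u|\) plus at most \(\sum_j\Hh^{d-1}(K_j)\), via \eqref{eq:rectifiable-resolved-splitting}.

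The main obstacle I expect is checking that the periodic specialization of Theorem~\ref{thm:global-resolved-translation-contact} is genuinely available without the bounded-domain ambient atlas. Specifically, three things need checking: the flow boxes on \(\{|v_i|>\delta\}\) cover the compact torus finitely; the exact crossing formula \eqref{eq:rectifiable-exact-crossing} applies with \(A=K_j\) in a global flow region; and the multiple-crossing class vanishes by \eqref{eq:rectifiable-parity-error}. I would verify these by rerunning Steps 2--5 of that proof with no outer endpoints, as the theorem's last sentence asserts. I would also note that volume preservation makes the two marginals equal, so no factor is lost in identifying \(\mathcal B_h\) with the directed sum.
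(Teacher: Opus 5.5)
Your proposal is correct and follows the paper's proof: the paper likewise applies the periodic case of Theorem~\ref{thm:global-resolved-translation-contact} with \(\zeta=1\), reads the liminf off its sequential statement, and reads the convergence and the uniform bound off its fixed-state translation formula and majorant. One small correction: identifying \(\mathcal B_h\) with \(\sum_iB_{i,\mathfrak K}^g(u;h)\) uses only the symmetry of \(\widehat Q_{i,h}\) and of the typed bond, so the unit Jacobian and volume preservation you invoke are not needed for this step.
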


\begin{proof}
Apply the periodic part of
Theorem~\ref{thm:global-resolved-translation-contact} with \(\zeta=1\).
Its sequential statement is exactly
\eqref{eq:one-scale-resolved-liminf-main}, while its fixed-state translation
formula and majorant are exactly
\eqref{eq:one-scale-resolved-recovery-main}.
\end{proof}

\begin{lemma}[Two-scale fractional smoothing by ordered flows]
\label{lem:two-scale-fractional-smoothing}
Assume \eqref{eq:flow-ray-regularity}--
\eqref{eq:flow-ray-spanning}, and assume that the flows preserve volume.  Let
\(0<\alpha<1\), let \(z_\eps:\mathbb T^d\to[0,1]\), and suppose that, for
some \(A<\infty\) independent of \(\eps\),
\begin{equation}
 \|z_\eps-z_\eps\circ\Phi_i^{-q}\|_{L^1}
 \le A\bigl(|q|^\alpha+\eps^\alpha\bigr)
 \quad\text{for every }i\text{ and }0<|q|<q_0.
 \label{eq:two-scale-fractional-modulus}
\end{equation}
There are \(r_0,C>0\), depending only on the flow atlas, the frame bounds,
and \(\alpha\), such that for every \(0<\eps\le r<r_0\) one can construct
\(Z_{\eps,r}\in BV(\mathbb T^d)\) satisfying
\begin{align}
 \|Z_{\eps,r}-z_\eps\|_{L^1}
 &\le CA\bigl(r^\alpha+\eps^\alpha\bigr),
 \label{eq:two-scale-fractional-distance}\\
 |DZ_{\eps,r}|(\mathbb T^d)
 &\le CA\left(r^{\alpha-1}+\frac{\eps^\alpha}{r}\right).
 \label{eq:two-scale-fractional-variation}
\end{align}
In particular, if \(\eps_n\downarrow0\) and
\eqref{eq:two-scale-fractional-modulus} holds with a common \(A\), then
\((z_{\eps_n})_n\) is relatively compact in \(L^1(\mathbb T^d)\).
\end{lemma}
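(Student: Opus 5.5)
We will reuse the parameter-space integration-by-parts construction from the proof of Lemma~\ref{lem:ordered-flow-smoothing}, but at the mesoscopic scale \(r\) in place of the microscopic scale \(\eps\). On each protected chart \(U_\alpha^-\) of the fixed atlas we set
\[
 A_{\alpha,r}z(x):=\int_{\R^d}\eta_d(s)\,
 z\bigl(\Phi_{i_{\alpha,d}}^{-rs_d}\circ\cdots\circ\Phi_{i_{\alpha,1}}^{-rs_1}(x)\bigr)\dd s,
\]
with the same product bump \(\eta_d\) supported in \((-c_\rho,c_\rho)^d\). We then glue the pieces with the partition of unity, \(Z_{\eps,r}:=\sum_\alpha\chi_\alpha A_{\alpha,r}z_\eps\). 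The matrix estimates \eqref{eq:ordered-parameter-freezing}--\eqref{eq:ordered-parameter-coefficients} hold verbatim with \(\eps\) replaced by \(r\), once \(r<r_0\) is small enough for the Neumann-series bound on \(\mathsf B_{\alpha,r}^{-1}\). In particular the coefficients \(a_{\alpha,\ell}\) and their \(s\)-derivatives are bounded by \(C_{\rm par}\).

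The distance bound comes from the telescoping identity \eqref{eq:ordered-path-telescoping}. Because the flows preserve volume, each change of variables along a partial composition has Jacobian exactly one. This gives
\[
 \|A_{\alpha,r}z_\eps-z_\eps\|_{L^1(U_\alpha^-)}\le\sum_{j=1}^d\int\eta_d(s)\,
 \|z_\eps-z_\eps\circ\Phi_{i_{\alpha,j}}^{-rs_j}\|_{L^1}\dd s
 \le dA\bigl((c_\rho r)^\alpha+\eps^\alpha\bigr),
\]
where we invoke \eqref{eq:two-scale-fractional-modulus} with \(|q|=r|s_j|<q_0\). Summing over the finite atlas yields \eqref{eq:two-scale-fractional-distance}.

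For the variation we use the integration-by-parts formula \eqref{eq:ordered-parameter-integration-by-parts}, which carries the prefactor \(1/r\) in place of \(1/\eps\). We first prove it for mollified \(z\) and then pass to \(z_\eps\) by lower semicontinuity, exactly as before. Its right-hand side is bounded, through \eqref{eq:ordered-parameter-weight-bound} and the same telescoping, by
\[
 \frac{C_{\rm par}}{r}\sum_j\int(|\eta'|+\eta)(s_j)\,A\bigl(r^\alpha|s_j|^\alpha+\eps^\alpha\bigr)\dd s_j
 \le CA\bigl(r^{\alpha-1}+\eps^\alpha r^{-1}\bigr).
\]
The gluing identity \eqref{eq:partition-cancellation-identity} adds a term \(G_{\rm atl}\|A_{\alpha,r}z_\eps-z_\eps\|_1\le CA(r^\alpha+\eps^\alpha)\). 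This term is dominated by the right-hand side of \eqref{eq:two-scale-fractional-variation} because \(r<r_0\) and \(\eps\le r\). That proves \eqref{eq:two-scale-fractional-variation}.

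For compactness we take \(\eps_n\downarrow0\) and \(\delta>0\), and choose a fixed \(r\) with \(CA\,r^\alpha<\delta\). For all large \(n\) we have \(\eps_n\le r\) and \(CA\eps_n^\alpha<\delta\). Then the family \(Z_{\eps_n,r}\) is bounded in \(BV\), since its variation is at most \(CA(r^{\alpha-1}+r^{\alpha-1})\). It is also bounded in \(L^\infty\), so it is precompact in \(L^1\). It lies within \(2\delta\) of \(z_{\eps_n}\) in \(L^1\). Hence \((z_{\eps_n})\) is totally bounded in \(L^1\), and since \(L^1\) is complete the family is relatively compact.

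The main obstacle is to check that every constant in \eqref{eq:ordered-parameter-matrix-bounds}--\eqref{eq:ordered-parameter-coefficients} depends only on \(r_0\) and not on the shift scale. The \(s\)-derivative bounds there gain a factor \(r\), which is harmless; what must be verified is that the inverse bound on \(\mathsf B_{\alpha,r}\) stays uniform for every \(r<r_0\). This uniformity is where the choice of \(r_0\), via \(K_{\rm fr}\) and \(V_1\), enters. The requirement \(r c_\rho<q_0\) also constrains \(r_0\).
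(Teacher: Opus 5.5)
Your proposal is correct and follows the paper's proof essentially step for step. Both use the ordered-flow average at the mesoscopic radius $r$, the telescoping estimate (volume preservation makes each change of variables an $L^1$ isometry), parameter-space integration by parts with prefactor $1/r$, partition-of-unity gluing, and total boundedness obtained by fixing $r$ and using $\eps_n\le r$. One cosmetic point: absorbing the gluing term $CA(r^\alpha+\eps^\alpha)$ into $CA(r^{\alpha-1}+\eps^\alpha/r)$ relies on $r<r_0\le 1$, as the paper arranges, rather than on $\eps\le r$.
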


\begin{proof}
We repeat the parameter-space part of the ordered-flow construction, but the
averaging radius is now \(r\), not the microscopic cutoff \(\eps\).  Retain
the finite atlas \(U_\beta^-\Subset U_\beta^+\), the selected ordered frames
\(I_\beta=(i_{\beta,1},\ldots,i_{\beta,d})\), the partition
\(\{\chi_\beta\}\), and the kernel
\(\eta\in C_c^1((-c_\rho,c_\rho))\) from the proof of
Lemma~\ref{lem:ordered-flow-smoothing}.  Decrease \(r_0\), if necessary, so
that \(r_0c_\rho<q_0\) and all ordered flow paths remain in the larger chart.
Set
\begin{align}
 T_{\beta,r}(x,s)
 &:={\Phi}_{i_{\beta,d}}^{-rs_d}\circ\cdots\circ
 {\Phi}_{i_{\beta,1}}^{-rs_1}(x),\notag\\
 A_{\beta,r}z(x)&:=\int_{\mathbb R^d}\eta_d(s)
 z(T_{\beta,r}(x,s))\dd s,
 \qquad \eta_d(s):=\prod_{k=1}^d\eta(s_k).
 \label{eq:two-scale-ordered-average}
\end{align}

The matrices \(\mathsf A_{\beta,r}=D_xT_{\beta,r}\) and
\(\mathsf B_{\beta,r}=-r^{-1}D_sT_{\beta,r}\) obey
\eqref{eq:ordered-parameter-freezing}--
\eqref{eq:ordered-parameter-matrix-bounds} with \(\eps\) replaced by \(r\).
Thus \(\mathsf B_{\beta,r}\) is uniformly invertible for \(r<r_0\), and the
coefficients
\(a_{\beta,\ell}=\mathsf B_{\beta,r}^{-1}
\mathsf A_{\beta,r}e_\ell\) have uniform \(W^{1,\infty}\) bounds in the
parameter variable.  Parameter-space integration by parts gives the exact
identity
\begin{equation}
 \partial_\ell A_{\beta,r}z(x)
 =\frac1r\sum_{k=1}^d\int
 \partial_{s_k}(\eta_da_{\beta,\ell k})(x,s)
 [z(T_{\beta,r}(x,s))-z(x)]\dd s.
 \label{eq:two-scale-parameter-ibp}
\end{equation}
As in \eqref{eq:ordered-path-telescoping}, insert the intermediate states
after the first \(k\) ordered flows.  Volume preservation makes every change
of the physical variable an \(L^1\) isometry, and hence
\begin{equation}
 \int_{U_\beta^-}|z_\eps(T_{\beta,r}(x,s))-z_\eps(x)|\dd x
 \le\sum_{k=1}^d
 \|z_\eps-z_\eps\circ\Phi_{i_{\beta,k}}^{-rs_k}\|_{L^1}.
 \label{eq:two-scale-telescoping}
\end{equation}
The right-hand side is bounded by
\(CA(r^\alpha\sum_k|s_k|^\alpha+d\eps^\alpha)\) by
\eqref{eq:two-scale-fractional-modulus}.  Integrating first with weight
\(\eta_d\), and then with the derivative weights in
\eqref{eq:two-scale-parameter-ibp}, proves
\begin{align}
 \|A_{\beta,r}z_\eps-z_\eps\|_{L^1(U_\beta^-)}
 &\le CA(r^\alpha+\eps^\alpha),
 \label{eq:two-scale-local-distance}\\
 |D(A_{\beta,r}z_\eps)|(U_\beta^-)
 &\le CA\left(r^{\alpha-1}+\frac{\eps^\alpha}{r}\right).
 \label{eq:two-scale-local-variation}
\end{align}
The computation is first valid for smooth \(z_\eps\).  Physical convolution,
continuity of composition by a bi-Lipschitz flow in \(L^1\), and lower
semicontinuity of variation remove that temporary regularization.

Define
\(Z_{\eps,r}=\sum_\beta\chi_\beta A_{\beta,r}z_\eps\).  The partition
identity \eqref{eq:partition-cancellation-identity}, finite atlas
multiplicity, and \eqref{eq:two-scale-local-distance}--
\eqref{eq:two-scale-local-variation} prove
\eqref{eq:two-scale-fractional-distance}--
\eqref{eq:two-scale-fractional-variation}.  The partition-gradient term is
bounded by the right-hand side of
\eqref{eq:two-scale-fractional-variation} after decreasing \(r_0\le1\).

It remains to justify the compactness conclusion, because the variation bound
is not uniform as \(r\downarrow0\).  Given \(\delta>0\), choose one fixed
\(r\in(0,r_0)\) so small that \(2CAr^\alpha<\delta/2\).  For all sufficiently
small \(\eps\le r\),
\eqref{eq:two-scale-fractional-distance} places \(z_\eps\) within
\(\delta/2\) of a set bounded in \(BV(\mathbb T^d)\) by the finite constant
\(2CAr^{\alpha-1}\).  That set has a finite \(\delta/2\)-net in \(L^1\) by
BV compactness.  Adding the finitely many initial members of the sequence
proves total boundedness, hence relative compactness in \(L^1\).
\end{proof}

For \(0<s<1/2\), define the flow-fractional resolved-cut functional
\begin{equation}
 \mathcal F_{s,\mathfrak K}^{g}(u)
 :=a_s\sum_i\int_0^\infty
 B_{i,\mathfrak K}^g(u;h)h^{-1-2s}\dd h
 \label{eq:fractional-resolved-limit-main}
\end{equation}
on binary states and give it value \(+\infty\) otherwise.

\begin{theorem}[Power-tail threshold trichotomy]
\label{thm:power-tail-main}
Assume \eqref{eq:rectifiable-cut-mass},
\eqref{eq:flow-ray-regularity}--\eqref{eq:flow-ray-spanning}, and that every
flow preserves volume.  Choose the steep potential so that
\begin{equation}
 M\ge4\lambda_s((0,\infty))\sum_i c^*.
 \label{eq:power-tail-steepness-main}
\end{equation}
\begin{enumerate}
 \item If \(s>1/2\), then
 \[
  \mathcal H_{\eps,\mathfrak K}^{g,\lambda_s}
  \mathop{\longrightarrow}^{\Gamma}
  \frac{a_s}{2s-1}\mathcal H_{0,\mathfrak K}^g.
 \]
 \item If \(s=1/2\), then
 \[
  \frac1{|\log\eps|}
  \mathcal H_{\eps,\mathfrak K}^{g,\lambda_{1/2}}
  \mathop{\longrightarrow}^{\Gamma}
  a_{1/2}\mathcal H_{0,\mathfrak K}^g.
 \]
 \item If \(0<s<1/2\), then
 \[
  \eps^{1-2s}\mathcal H_{\eps,\mathfrak K}^{g,\lambda_s}
  \mathop{\longrightarrow}^{\Gamma}
  \mathcal F_{s,\mathfrak K}^{g}.
 \]
\end{enumerate}
All three families are equicoercive in their displayed normalization.  Every
binary finite-energy target is recovered by the constant sequence.  The third
limit is nonlocal: it retains the complete physical-time crossing rule.
\end{theorem}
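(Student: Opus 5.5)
We treat the three regimes by splitting the radial integral at scale-dependent thresholds and reducing each part either to Theorem~\ref{thm:long-range-first-moment-main} and Proposition~\ref{prop:finite-radial-window-main}, or to the one-scale Lemma~\ref{lem:one-scale-resolved-main} and the fractional smoothing Lemma~\ref{lem:two-scale-fractional-smoothing}. Rounding is handled first, uniformly in all three cases. Volume preservation gives $C_J=1$, so the steepness condition \eqref{eq:power-tail-steepness-main} is exactly \eqref{eq:long-range-compression-main}. The normalizing factors $|\log\eps|^{-1}$ and $\eps^{1-2s}$ are at most one for small $\eps$. Hence thresholding does not increase the normalized energy, and the $L^1$ rounding error is at most $\eps\mathcal H/M$ times the inverse normalization. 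This error is still $o(1)$ in cases 2 and 3, since $\eps|\log\eps|\to0$ and $\eps^{2s}\to0$. From then on we work only with binary states.

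\emph{Case $s>1/2$.} Here $m_1=a_s\int_1^\infty t^{-2s}\dd t=a_s/(2s-1)<\infty$, and $\lambda_s$ satisfies \eqref{eq:long-range-spreading-main} on $(1,2)$. Theorem~\ref{thm:long-range-first-moment-main} then applies verbatim, and nothing else is needed.

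\emph{Case $s=1/2$.} Equicoercivity follows from the unnormalized window $\lambda\llcorner(1,2)$, whose energy is bounded by $|\log\eps|$ times a constant. That bound alone is not enough, so instead we use a logarithmic shell. In $h=\eps t$, write $\frac1\eps\int B(\eps t)\lambda(\dd t)=a\int_\eps^\infty Q(h)\,h^{-1}\dd h$ with $Q=Q^g_{\mathfrak K}$. The weight $h^{-1}\dd h$ on $(\eps,\eps^{\beta})$ has mass $(1-\beta)|\log\eps|$, while the tail $h>\eps^\beta$ contributes $O(\beta|\log\eps|)$, because $B\le C$ and the interaction part is small. The normalized energy is therefore essentially an average of $Q(u_\eps;h)$ over $h$ in the shell with respect to $dh/(h|\log\eps|)$. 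A bound on this average yields many scales $h\in(\eps,\eps^\beta)$ with $Q(u_\eps;h)$ bounded. The comparison \eqref{eq:finite-window-uncut-comparison} converts this into an averaged increment bound at some mesoscopic scale $\delta_\eps\to0$. Applying Lemma~\ref{lem:ordered-flow-smoothing} at that scale, with $\eps$ replaced by $\delta_\eps$ and the window $\lambda(\dd t)=\dd t$ on a spread interval, gives compactness.

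For the liminf we apply Fatou to the shell average. Lemma~\ref{lem:one-scale-resolved-main} holds along any $h_n\to0$, so $\liminf_n\inf_{h\le\eps_n^\beta}Q\ge\mathcal H_0$ by contradiction along a bad sequence of scales. This gives the bound $(1-\beta)a_{1/2}\mathcal H_0$, and we then let $\beta\to0$. For recovery, the convergence $Q(u;h)\to\mathcal H_0$ as $h\to0$ and the uniform bound from \eqref{eq:one-scale-resolved-recovery-main} give the logarithmic average limit by the Cesàro property of $dh/h$. The tail is $O(1/|\log\eps|)$ after normalization.

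\emph{Case $s<1/2$.} Changing variables, we get $\eps^{1-2s}\mathcal H_\eps^{\rm int}=a_s\sum_i\int_\eps^\infty B(u;h)h^{-1-2s}\dd h$. The liminf follows from Fatou in $h$, together with pointwise lower semicontinuity of $u\mapsto B(u;h)$ for fixed $h$ under $L^1$ convergence of binary states. For this lower semicontinuity we argue as follows. The crossing configuration at a fixed $h$ is independent of $u$, and the typed integrand is continuous in the endpoint values. Composition with the bi-Lipschitz flow is continuous in $L^1$, so $B(\cdot;h)$ is in fact continuous. Recovery is by monotone convergence with the constant sequence. The bound $B(u;h)\le C_uh$ from \eqref{eq:long-range-bond-majorant} makes the limit finite, since $s<1/2$.

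Compactness is the main obstacle. Bounded energy only controls an integral of $B(u_\eps;h)h^{-1-2s}$, and we must extract from it a modulus of the form \eqref{eq:two-scale-fractional-modulus} with $\alpha=2s$. We would first bound the ordinary increment by $B$ plus the crossing indicator, giving $\|z-z\circ\Phi^{-h}\|_1\le B+Ch$. Next, we would use the triangle inequality $d(q)\le d(q-h)+d(h)$ for flow increments, which is exact by volume preservation, averaged over $h\in(q,2q)$, to pass from the integrated control to a pointwise bound $d(q)\lesssim A(q^{2s}+\eps^{2s})$. The $\eps^{2s}$ term accounts for scales below $\eps$, which are reached by averaging over $(\eps,2\eps)$. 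Lemma~\ref{lem:two-scale-fractional-smoothing} then gives compactness.
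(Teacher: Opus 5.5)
Your outline follows the paper's proof in all three regimes:
\begin{itemize}
 \item for $s>1/2$, the reduction to Theorem~\ref{thm:long-range-first-moment-main};
 \item at $s=1/2$, a pigeonholed dyadic shell for compactness, Lemma~\ref{lem:one-scale-resolved-main} for the liminf with the cutoff $\eps_n^\beta$ in place of the paper's fixed $\delta$, and logarithmic Ces\`aro averaging for recovery;
 \item for $s<1/2$, Fatou plus $L^1$-continuity of $B(\cdot;h)$, monotone recovery, and subadditivity feeding Lemma~\ref{lem:two-scale-fractional-smoothing}.
\end{itemize}
Your shell in $(\eps,\eps^\beta)$, treated on thresholded binary states via \eqref{eq:finite-window-uncut-comparison} and Lemma~\ref{lem:ordered-flow-smoothing}, is a fine substitute for the paper's shell in $[\eps^{3/4},\eps^{1/4}]$ and Corollary~\ref{cor:periodic-rectifiable-cut}.

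The one step that fails as written is the derivation of the modulus \eqref{eq:two-scale-fractional-modulus} for $s<1/2$. Averaging $d(q)\le d(q-h)+d(h)$ over $h\in(q,2q)$ produces $d(h-q)$ with $h-q\in(0,q)$. So you need the term $\frac1q\int_0^q d(r)\,\dd r$, which contains the scales $r<\eps$. On those scales the budget $\int_\eps^\infty d(h)h^{-1-2s}\,\dd h\le A$ says nothing. The trivial bound $d\lesssim1$ then only gives $d(q)\lesssim Aq^{2s}+\eps/q$. That is $O(1)$ at $q\approx\eps$ and not of the form the lemma requires. Your small-scale step has the same defect: for $q<\eps$ and $h\in(\eps,2\eps)$, the scale $h-q$ ranges over $(\eps-q,2\eps-q)$, which lies partly below $\eps$.

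The fix is to shift in the other direction, as the paper does. From $q=(q+r)-r$ and $f(-r)=f(r)$ one gets $f(q)\le f(r)+f(q+r)$. Average over $r\in[q,2q]$ when $q\ge\eps$, and over $r\in[\eps,2\eps]$ when $q<\eps$. This puts both scales in $[q,3q]$, respectively $[\eps,3\eps]$, and yields \eqref{eq:fractional-modulus-large-q}--\eqref{eq:fractional-modulus-small-q}. Equivalently, keep your difference form but average over $h\in(2q,3q)$, respectively $h\in(q+\eps,q+2\eps)$.

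Two smaller corrections:
\begin{itemize}
 \item The cut term in your increment bound should be $C\min\{h,1\}$ rather than $Ch$ if it is integrated against $h^{-1-2s}$ up to $\infty$. This is harmless here, since the modulus only uses bounded ranges of $h$.
 \item At $s=1/2$, the claim that the tail $h>\eps^\beta$ is $O(\beta|\log\eps|)$ ``because $B\le C$'' is false for a general bounded-energy sequence; $B\le C$ only gives $C\eps^{-\beta}$. You never need it, though: the liminf and compactness discard the tail, and for the fixed recovery target \eqref{eq:long-range-bond-majorant} supplies the bound.
\end{itemize}
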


\begin{proof}
For binary states the substitution \(h=\eps t\) gives
\begin{equation}
 \frac{a_s}{\eps}\int_1^\infty
 B_{i,\mathfrak K}^g(u;\eps t)t^{-1-2s}\dd t
 =a_s\eps^{2s-1}\int_\eps^\infty
 B_{i,\mathfrak K}^g(u;h)h^{-1-2s}\dd h.
 \label{eq:power-tail-rescaling-main}
\end{equation}
If \(s>1/2\), the first moment of \(\lambda_s\) is
\(a_s/(2s-1)\), so Theorem~\ref{thm:long-range-first-moment-main} applies.

Let \(s=1/2\) and \(L_\eps=|\log\eps|\).  Threshold rounding is still
energy decreasing, and bounded normalized energy makes the thresholding
distance at most \(C\eps L_\eps\).  For compactness, partition
\([\eps^{3/4},\eps^{1/4}]\) into dyadic shells
\([r_k,2r_k]\).  There are at least \(cL_\eps\) such shells, while
\eqref{eq:power-tail-rescaling-main} bounds their total unnormalized energy by
\(CL_\eps\).  One shell therefore satisfies
\[
 \frac1{r_k}\sum_i\int_1^2
 B_{i,\mathfrak K}^g(u_\eps;r_kt)t^{-2}\dd t\le C.
\]
It is a finite-range direct energy at the vanishing scale \(r_k\), with a
radially spread law.  Moreover the well bound gives
\[
 \frac1{r_k}\int d_*(u_\eps)\dd x
 \le C\frac{\eps L_\eps}{r_k}
 \le C\eps^{1/4}L_\eps=o(1).
\]
After multiplying the shell law by a fixed normalization constant,
Corollary~\ref{cor:periodic-rectifiable-cut} gives compactness.  If
\(u_n\to u\) and \(\eps_n\downarrow0\), then for every
\(0<\delta<1\),
\[
 \frac{a_{1/2}}{L_{\eps_n}}
 \int_{\eps_n}^{\delta}Q_{\mathfrak K}^g(u_n;h)\frac{\dd h}{h}
 \ge a_{1/2}\frac{\log(\delta/\eps_n)}{L_{\eps_n}}
 \inf_{\eps_n\le h\le\delta}Q_{\mathfrak K}^g(u_n;h).
\]
Lemma~\ref{lem:one-scale-resolved-main} gives
\[
 \lim_{\delta\downarrow0}\liminf_{n\to\infty}
 \inf_{\eps_n\le h\le\delta}Q_{\mathfrak K}^g(u_n;h)
 \ge \mathcal H_{0,\mathfrak K}^g(u).
\]
Indeed, a failure would select a diagonal \(h_n\le\delta_n\downarrow0\)
contradicting the lemma.  Since
\(\log(\delta/\eps_n)/L_{\eps_n}\to1\) for fixed \(\delta\), this proves the
critical liminf.  For a fixed target,
\eqref{eq:one-scale-resolved-recovery-main} and logarithmic Cesaro convergence
give constant recovery.

Finally let \(0<s<1/2\).  Write
\(E_{\eps,s}:=\eps^{1-2s}
\mathcal H_{\eps,\mathfrak K}^{g,\lambda_s}\).  Thresholding is energy
decreasing and
\[
 \|u_\eps-\chi(u_\eps)\|_1
 \le M^{-1}\eps^{2s}E_{\eps,s}(u_\eps)\longrightarrow0.
\]
After thresholding, a bounded normalized direct energy controls
\begin{equation}
 \sum_i\int_\eps^\infty
 \|u-u\circ\Phi_i^{-h}\|_1h^{-1-2s}\dd h
 \label{eq:fractional-truncated-budget-main}
\end{equation}
up to a uniform constant.  Indeed, direct and uncut bonds differ only on arcs
meeting \(K\), and their difference is bounded by
\(C\min\{h,1\}\), which is integrable against \(h^{-1-2s}\dd h\).
Write the thresholded state as \(z_\eps\), set
\(f_{i,\eps}(q)=\|z_\eps-z_\eps\circ\Phi_i^{-q}\|_1\), and denote the
uniform bound in \eqref{eq:fractional-truncated-budget-main} by \(A\).
Volume preservation and the group property give
\begin{equation}
 f_{i,\eps}(q)\le f_{i,\eps}(r)+f_{i,\eps}(q+r)
 \qquad(q,r>0).
 \label{eq:fractional-flow-subadditivity}
\end{equation}
If \(q\ge\eps\), average \eqref{eq:fractional-flow-subadditivity} over
\(r\in[q,2q]\).  Since both resulting intervals lie in \([q,3q]\),
\begin{align}
 qf_{i,\eps}(q)
 &\le\int_q^{3q}f_{i,\eps}(h)\dd h\notag\\
 &\le(3q)^{1+2s}
 \int_q^{3q}f_{i,\eps}(h)h^{-1-2s}\dd h
 \le C A q^{1+2s}.
 \label{eq:fractional-modulus-large-q}
\end{align}
If \(0<q<\eps\), average over \(r\in[\eps,2\eps]\); both terms are then
contained in \([\eps,3\eps]\), with multiplicity at most two, and
\begin{equation}
 \eps f_{i,\eps}(q)
 \le2(3\eps)^{1+2s}
 \int_\eps^{3\eps}f_{i,\eps}(h)h^{-1-2s}\dd h
 \le CA\eps^{1+2s}.
 \label{eq:fractional-modulus-small-q}
\end{equation}
Consequently
\begin{equation}
 \|z_\eps-z_\eps\circ\Phi_i^{-q}\|_1
 \le CA(q^{2s}+\eps^{2s})
 \quad(0<q<q_0),
 \label{eq:fractional-uniform-flow-modulus}
\end{equation}
after fixing any sufficiently small \(q_0\).  Volume preservation also gives
\(f_{i,\eps}(-q)=f_{i,\eps}(q)\), so the same estimate holds for
\(0<|q|<q_0\).  Apply
Lemma~\ref{lem:two-scale-fractional-smoothing} with \(\alpha=2s\).  It gives
relative \(L^1\) compactness of every thresholded sequence
\(z_{\eps_n}\), \(\eps_n\downarrow0\); the vanishing threshold distance then
gives equicoercivity of the original normalized family.

For the lower bound, restrict
\eqref{eq:power-tail-rescaling-main}, after multiplication by
\(\eps^{1-2s}\), to \([\delta,H]\).  The direct bond is uniformly
\(L^1\)-continuous in the state on this interval because the crossing
classification is state independent and both flow marginals preserve volume.
Pass to the limit and then send \(\delta\downarrow0\), \(H\uparrow\infty\).
Monotone convergence gives \(\mathcal F_{s,\mathfrak K}^g\).  For a fixed
binary target, the same rescaling identity and monotone convergence give the
constant recovery sequence.
\end{proof}

\begin{remark}[Scope of the tail classification]
\label{rem:tail-classification-scope}
The theorem classifies the local first-moment regime and the complete power
family.  More general slowly varying or Tauberian tails require an explicit
anti-resonance condition: Proposition~\ref{prop:long-range-resonance-main}
shows that scalar moment growth alone cannot provide compactness.  No claim
for an arbitrary radius-direction coupled tail is made here.
\end{remark}

\section{Conclusion}

The sharp-interface limit of a nonlocal energy on a cut body is determined
only after the finite-scale model records how bonds cross the cut.  The direct
crossing construction does so on every finite-mass countably rectifiable cut,
without duplicating physical volume or introducing a smooth carrier.  Its
Gamma-limit is the resolved flow-frame variation plus the outer contact and
one coefficient-one contact on each labelled bank.  It contains no compulsory
term on surface-null branches, tips, or crack-boundary contact sets, and the
constant recovery sequence creates no concentration there.  This does not
exclude nonnegative excess defects along arbitrary bounded-energy sequences.

The proof separates the analytic mechanisms responsible for this conclusion.
The spacetime area formula controls the complete crossing multiplicity, while
rectifiable coarea makes multiple crossings negligible at first order.  The
split-line lower bound then supplies the sequential Gamma-liminf and the two
independent bank traces.  Ordered-flow smoothing supplies compactness for
noncommuting directions.  The original positive-reach slit and fan cells remain
useful as explicit realizations, and their detectors are uniformly equivalent
to the direct rule at Gamma scale.

Compact radial support is also unnecessary.  A finite first moment is enough
to recover the local resolved Griffith functional.  The power family shows
that this condition is sharp for surface scaling: the critical tail requires
logarithmic renormalization, and heavier tails converge to a genuinely
fractional functional that remembers crossings at every physical scale.
Exact-shift resonance explains why a scalar moment assumption alone cannot
replace radial spreading in the compactness argument.

In Griffith terminology, these limits identify the anisotropic surface and
bank-contact cost that can enter a total free energy.  They do not by
themselves compute a bulk configurational energy-release rate: that quantity
requires a coupled elastic energy and an admissible class of irreversible cut
variations.  Thus the result supplies the rigorous surface side of a Griffith
comparison, while crack evolution and propagation equality remain separate
questions.

The scope remains precise.  The cut is fixed, rectifiable, and of finite total
surface mass; its bank labels are compatible on surface overlaps.  Purely
unrectifiable sets require a different limiting object because classical
normals and two-sided \(BV\) traces may not exist.  Arbitrary path-lifting
algorithms, radius-direction coupled remote tails, and topology-changing
motions are not classified here.  Appendix
\ref{app:moving-sectorial-junction} proves stability under uniform
label-preserving transport with fixed incidence and identifies the geometric
degeneration that leaves that class.

\appendix

\section{Uniform transport of resolved junctions}
\label{app:moving-sectorial-junction}
\label{subsec:moving-sectorial-junction}

We next allow the preceding fan to move on a macroscopic time interval while
keeping its incidence graph fixed.  To distinguish evolution time from the
bond-length variable in the flow pairs, we write
\(\tau\in[0,T]\) for evolution time and \(\ell\in[a,b]\) for the latter.
The result below is a fixed-topology transport theorem.  It does
not cover nucleation, merging, splitting, or disappearance of faces.

\paragraph{Route through this subsection.}
First, the maps \(P_\tau\) identify every moving cut completion with one
reference state space.  Second, the gated detector and pair measures are
transported by exact conjugacy, so resolved-sheet and sector labels are independent of
time after pullback.  Third, uniform angle, reach, chart, and coefficient
bounds yield a time-seam estimate for the pulled-back energies
\(\widetilde{\mathcal G}_\eps^\tau\).  The fixed-time fan theorem and this
seam estimate then give uniform Gamma-convergence for arbitrary joint
sequences \((\eps_n,\tau_n)\), including
\(\eps_n\ll|\tau_n-\tau|\).  The closing corollary constructs one recovery
operator, and the final proposition identifies the obstruction to topology
change within this transport class.

\paragraph{Uniform transport class and common state space.}
Start with the fan \(\mathcal K(0)=\bigcup_{\alpha=1}^NK_\alpha(0)\) and
\(\Gamma(0)\) of \eqref{eq:fan-incidence}--\eqref{eq:fan-angle-gap}.  Let
\(\Phi_\tau:\overline\Omega\to\overline\Omega\) be orientation-preserving
bi-\(C^{1,1}\) homeomorphisms such that
\begin{align}
 &\Phi_0=\operatorname{id},\qquad
 \Phi_\tau(\partial\Omega)=\partial\Omega,\qquad
 K_\alpha(\tau)=\Phi_\tau(K_\alpha(0)),\quad
 \Gamma(\tau)=\Phi_\tau(\Gamma(0)),
 \label{eq:moving-fan-transport}\\
 &\sup_{\tau\in[0,T]}
 \left(\|\Phi_\tau\|_{C^{1,1}}+\|\Phi_\tau^{-1}\|_{C^{1,1}}\right)
 \le M_*,\qquad
 \Phi\in W^{1,\infty}(0,T;C^1(\overline\Omega)).
 \label{eq:moving-fan-uniform-flow}
\end{align}
Equivalently, one may take \(\Phi_\tau\) to be the flow of a field
\(b\in L^\infty(0,T;C^{1,1})\), tangent to \(\partial\Omega\), whose flow and
inverse satisfy \eqref{eq:moving-fan-uniform-flow}.  Put
\begin{equation}
 \omega_\Phi(\delta):=
 \sup_{|\tau-\sigma|\le\delta}
 \left(\|\Phi_\tau-\Phi_\sigma\|_{C^1}
 +\|\Phi_\tau^{-1}-\Phi_\sigma^{-1}\|_{C^1}\right).
 \label{eq:moving-fan-time-modulus}
\end{equation}
Thus \(\omega_\Phi(\delta)\le C\delta\).  The weaker assumption that the
right-hand side merely tends to zero gives the same conclusions without a
linear rate.

Transport the finite fan charts by \(\Phi_\tau\).  Their number, overlap
multiplicity, labels, and sector incidence are unchanged.  After decreasing
the reference radius once, their radii are bounded below by one
\(\rho_{\rm mov}>0\), their direct and inverse \(C^{1,1}\) norms are bounded
by \(C(M_*,\rho_*)\), and
\begin{equation}
 \inf_\tau\operatorname{reach}\Gamma(\tau)\ge r_\Gamma>0,
 \qquad
 \inf_{\tau,z,\alpha}
 \bigl(\vartheta_{\alpha+1}(\tau,z)-\vartheta_\alpha(\tau,z)\bigr)
 \ge\theta_{\rm mov}>0.
 \label{eq:moving-fan-reach-angle}
\end{equation}
These constants can be read quantitatively from the reference constants and
\(M_*\); for example
\(\sin(\theta_{\rm mov}/2)\ge C(M_*)^{-1}
\sin(\theta_*/2)\).  In a transported chart,
\begin{equation}
 |q_\alpha^\tau(z,\varrho)-q_\beta^\tau(z,\varrho)|
 \ge 2c_*\varrho\sin(\theta_{\rm mov}/2)
      -C_*\varrho^2/\rho_{\rm mov},
 \label{eq:moving-fan-uniform-angle-separation}
\end{equation}
uniformly in \(\tau\).  The linear term is the bi-Lipschitz image of the
reference ray separation, and the quadratic term is the common chart
remainder.

The transport lifts uniquely to a label-preserving bi-Lipschitz map
\begin{equation}
 P_\tau:X_{\mathcal K(0)}\longrightarrow X_{\mathcal K(\tau)},
 \qquad p_\tau\circ P_\tau=\Phi_\tau\circ p_0.
 \label{eq:moving-fan-lift}
\end{equation}
It sends each reference face copy and sector vertex to the copy with the same
label.  We therefore use the fixed reference state space
\(L^1(X_{\mathcal K(0)},m_0)\): a physical state \(u^\tau\) is represented by
\(v^\tau=u^\tau\circ P_\tau\), with the transported Jacobian included in the
energy.  Uniform upper and lower Jacobian bounds make this norm uniformly
equivalent to every physical \(L^1\) norm.

\paragraph{A time-uniform finite-\(\eps\) detector.}
The detector must be transported with the geometry.  If \(\gamma_{i,x,\ell}
^\eps\) is a reference interaction arc, define
\begin{equation}
 \gamma_{i,\Phi_\tau(x),\ell}^{\eps,\tau}(s)
 :=\Phi_\tau\bigl(\gamma_{i,x,\ell}^\eps(s)\bigr),
 \qquad 0\le s\le\eps\ell,
 \label{eq:moving-fan-conjugated-arc}
\end{equation}
This is the flow arc of the pushed field
\begin{equation}
 v_i^\tau(\Phi_\tau(q)):=D\Phi_\tau(q)v_i(q).
 \label{eq:moving-fan-pushed-field}
\end{equation}
and push forward the symmetric pair measure and its transpose by
\(\Phi_\tau\times\Phi_\tau\).  The reference gate
\(T_{r_\eps}(\Gamma(0))\) and its endpoint-sector labels are pushed forward
with the arc; in particular, the moving core gate is
\(\Phi_\tau(T_{r_\eps}(\Gamma(0)))\), rather than a newly selected Euclidean
tube.  Bounded positive time-dependent amplitudes may
be included as pulled-back multipliers \(\alpha_i^\tau\) of the directed bond
measures and \(\beta^\tau\) of the potential, provided
\begin{align}
 0<a_*&\le\alpha_i^\tau,\beta^\tau\le a^*<\infty,\qquad
 \sup_{\tau,i}\bigl(
 \|\alpha_i^\tau\|_{W^{1,\infty}}+
 \|\beta^\tau\|_{W^{1,\infty}}\bigr)\le A_*,
 \notag\\
 \sup_i\left\|
 \frac{\alpha_i^\tau}{\alpha_i^\sigma}-1
 \right\|_{L^\infty}
 &\le\omega_c(|\tau-\sigma|),\notag\\
 \left\|\frac{\beta^\tau}{\beta^\sigma}-1
 \right\|_{L^\infty}
 &\le\omega_c(|\tau-\sigma|),
 \label{eq:moving-fan-relative-amplitude-modulus}
\end{align}
uniformly in \(i\), where \(\omega_c(\delta)\to0\).  Fixed amplitudes
correspond to \(\omega_c=0\).  Write
\(J_\tau(q):=|\det D\Phi_\tau(q)|\).  After pullback, the potential density is
\(M\beta^\tau J_\tau d_*(v)/\eps\), while the directed pair measures carry
the multipliers \(\alpha_i^\tau\).  Threshold rounding is therefore uniform
only under the explicit steepness condition
\begin{equation}
 M a_*\inf_{\tau,q}J_\tau(q)
 \ge 2a^*\Lambda\sum_i c^*
       \bigl(1+e^{\eps_0bD_i}\bigr).
 \label{eq:moving-fan-uniform-steepness}
\end{equation}
This condition reduces to \eqref{eq:flow-ray-steepness} for unit amplitudes
and volume-preserving transport.  The resolved-sheet detector satisfies
\begin{equation}
 \mathfrak a_\eps^\tau
 \bigl(\Phi_\tau\circ\gamma^\eps\bigr)
 =\mathfrak a_\eps^0(\gamma^\eps),
 \label{eq:moving-fan-detector-conjugacy}
\end{equation}
including the value \(\varnothing\), and preserves both endpoint sector
labels when the value is nonempty.  Consequently the resolved branch is the
transported own-face rule and pair reversal exchanges its two endpoint
summands.  This exact conjugacy is stronger than pointwise stability and
is what permits an arbitrary common limit \(\tau_n\to\tau\),
\(\eps_n\to0\), without a condition relating the two rates.

Write \(\widetilde{\mathcal G}_\eps^\tau\) for the transported version of
\(\mathcal G_{\eps,\mathcal K(\tau)}^{g(\tau)}\), pulled back by \(P_\tau\),
and \(\widetilde{\mathcal G}_0^\tau\) for the pullback of
\eqref{eq:fan-limit-energy}.  At \(\tau\ne0\), ``transported'' means that the
gate is the pushed reference gate just defined; it need not equal a freshly
chosen Euclidean distance tube.  Uniform bi-Lipschitz bounds give constants
\(0<c<C<\infty\) such that
\[
 T_{c r_\eps}(\Gamma(\tau))
 \subset\Phi_\tau(T_{r_\eps}(\Gamma(0)))
 \subset T_{C r_\eps}(\Gamma(\tau)).
\]
Therefore the fixed-time proof and its junction estimate apply unchanged,
while exact pullback conjugacy is retained.  We impose the following explicit datum
hypothesis: the outer datum, every labelled face datum, and every chosen
opposite-collar ghost extension are the pushforwards by \(\Phi_\tau\) of
fixed binary \(BV\) data on the reference geometry.  Thus all prescribed
fields are independent of \(\tau\) after pullback.  This label-preserving
transport is essential for the arbitrary joint-diagonal statement below;
mere \(L^1\) continuity of binary ghosts does not control a
\(1/\eps\)-scaled contact layer uniformly in \(\eps\).  For a reference
normal \(n\), the pushed starting-point coefficient is
\begin{equation}
 c_i^\tau(\Phi_\tau(q))
 :=\frac{\alpha_i^\tau(q)c_i(q)}{J_\tau(q)},
 \qquad
 h_{\rm fl}^\tau(y,\xi):=
 \sum_i c_i^\tau(y)|v_i^\tau(y)\cdot\xi|.
 \label{eq:moving-fan-physical-density}
\end{equation}
Set
\begin{align}
 n_\tau(q,n)&:=
 \frac{\operatorname{cof}D\Phi_\tau(q)n}
 {|\operatorname{cof}D\Phi_\tau(q)n|},\notag\\
 \overline h_\tau(q,n)&:=
 |\operatorname{cof}D\Phi_\tau(q)n|\,
 h_{\rm fl}^\tau(\Phi_\tau(q),n_\tau(q,n)).
 \label{eq:moving-fan-pulled-density}
\end{align}
Because the physical pair measure is the pushforward of the reference
directed measure, the area and volume Jacobians cancel exactly:
\begin{equation}
 \overline h_\tau(q,n)
 =\sum_i\alpha_i^\tau(q)c_i(q)|v_i(q)\cdot n|.
 \label{eq:moving-fan-pulled-density-cancellation}
\end{equation}
Uniform spanning is preserved, and the coefficient/geometry assumptions give
\begin{equation}
 \|\overline h_\tau-\overline h_\sigma\|_
 {C(\overline\Omega\times\Sph^{d-1})}
 \le C\bigl(\omega_\Phi(|\tau-\sigma|)
             +\omega_c(|\tau-\sigma|)\bigr).
 \label{eq:moving-fan-density-time-continuity}
\end{equation}
Every transported face still has the directed-plus-transposed coefficient
\(1/2+1/2=1\).  Hence the numerical face coefficient is constant in time,
while its complete surface density \(\overline h_\tau\) depends continuously
on time by \eqref{eq:moving-fan-density-time-continuity}.

We collect the uniform constants used below.  Let
\begin{align}
 J_-&:=\inf_{\tau,q}|\det D\Phi_\tau(q)|,&
 J_+&:=\sup_{\tau,q}|\det D\Phi_\tau(q)|,\notag\\
 V_{0,{\rm mov}}&:=\sup_{\tau,i}\|v_i^\tau\|_\infty,&
 V_{1,{\rm mov}}&:=\sup_{\tau,i}\|Dv_i^\tau\|_\infty,&
 D_{0,{\rm mov}}&:=\sup_{\tau,i}\|\operatorname{div}v_i^\tau\|_\infty,
 \label{eq:moving-fan-constant-data}
\end{align}
and let \(D_g\) be one plus the sum of the \(BV\) norms of the fixed
reference outer datum and all labelled opposite-collar ghost extensions.
All are finite by the hypotheses above.  If \(m_{\rm fan}\) is the reference
atlas multiplicity, define the non-optimal but explicit bound
\begin{align}
 C_{\rm mov}:={}&C_{\rm fan}
 \frac{a^*}{a_*c_*}\left(1+\frac{J_+}{J_-}\right)
 (1+M_*)^{4d+6}
 \bigl(1+e^{\eps_0bD_{0,{\rm mov}}}\bigr)\notag\\
 &\times(1+A_*+V_{0,{\rm mov}}+V_{1,{\rm mov}})
 (1+D_g).
 \label{eq:moving-fan-explicit-constant}
\end{align}
The angle, reach, front area, sheet area, and seam area remain displayed
separately in the estimates below.  Thus \(C_{\rm mov}\) does not conceal
any additional degeneration as
\(\theta_{\rm mov}\downarrow0\) or \(\rho_{\rm mov}\downarrow0\).

\begin{lemma}[Uniform junction and time-seam estimates]
\label{lem:moving-fan-uniform-estimates}
The following estimates hold with \(C=C_{\rm mov}\), independently of
\(\tau,\eps,\varrho,\eta\).  Here
\(E_{\rm chart}^\tau\) denotes the absolute interaction difference produced
by flattening the transported fan charts and freezing the transported flows,
and \(E_{\rm atlas}^\tau\) the interaction discarded in the hard-atlas
\(\eta\)-strips.  Both are evaluated on binary states, so their integrands are
bounded by two:
\begin{equation}
 B_{i,\eps,\ell}^{\rm jun}(\varrho;\tau)
 :=(A_{i,\eps,\ell}^{\rm jun}(\varrho;\tau)\times U)
 \cup(U\times A_{i,\eps,\ell}^{\rm jun}(\varrho;\tau)).
 \label{eq:moving-fan-active-pair-set}
\end{equation}
\begin{align}
 \sup_{\tau\in[0,T]}|A_{i,\eps,\ell}^{\rm jun}
       (\varrho;\tau)|
 &\le C_{\rm mov}\eps\ell\left(\varrho+\eps\ell
       \csc(\theta_{\rm mov}/2)\right)
       \sup_\tau\Hh^{d-2}(\Gamma(\tau)),
 \label{eq:moving-fan-active-volume}\\
 \sup_\tau\frac1\eps\widehat Q_{i,\eps,\ell}^\tau
 \bigl(B_{i,\eps,\ell}^{\rm jun}(\varrho;\tau)\bigr)
 &\le C_{\rm mov}\ell\left(\varrho+\eps\ell
       \csc(\theta_{\rm mov}/2)\right).
 \label{eq:moving-fan-normalized-junction}\\
 \sup_\tau E_{\rm chart}^\tau(\eps)
 &\le C_{\rm mov}\eps\left(\rho_{\rm mov}^{-1}
       +\sup_{\tau,i}\operatorname{Lip}v_i^\tau\right)
       \sup_\tau\sum_\alpha\Hh^{d-1}(K_\alpha(\tau)),
 \label{eq:moving-fan-chart-estimate}\\
 \sup_\tau E_{\rm atlas}^\tau(\eps,\eta)
 &\le C_{\rm mov}(\eta+\eps)
       \sup_\tau\Hh^{d-2}(\Sigma_{\mathcal K(\tau)}).
 \label{eq:moving-fan-atlas-estimate}
\end{align}
For binary \(v\) with finite energy, the finite-scale time seam satisfies
\begin{equation}
 |\widetilde{\mathcal G}_\eps^\tau(v)-
   \widetilde{\mathcal G}_\eps^\sigma(v)|
 \le C_{\rm mov}\omega_*(|\tau-\sigma|)
       \bigl(1+\widetilde{\mathcal G}_\eps^\sigma(v)\bigr),
 \quad
 \omega_*:=\omega_\Phi+\omega_c,
 \label{eq:moving-fan-time-seam}
\end{equation}
uniformly in \(\eps\).  Finally, the uniform junction cell
uses the localized interaction
\begin{equation}
 E_{\eps,\tau}^{\rm jun}(\varrho;v):=
 \sum_i\int_a^b\frac1\eps
 \iint_{B_{i,\eps,\ell}^{\rm jun}(\varrho;\tau)}
 \Psi_{\eps,\mathcal K(\tau)}^{i,\ell}[v](x,y)
 \widehat Q_{i,\eps,\ell}^\tau(\dd x,\dd y)\lambda(\dd\ell).
 \label{eq:moving-fan-localized-energy}
\end{equation}
For binary states the potential vanishes, so this is the whole contribution
that can be forced by the moving junction detector.  Define
\begin{equation}
 j_{\rm mov}[v,g]:=
 \lim_{\varrho\downarrow0}\limsup_{\eps\downarrow0}
 \sup_{\tau\in[0,T]}E_{\eps,\tau}^{\rm jun}(\varrho;v)
 \label{eq:moving-fan-junction-cell}
\end{equation}
is zero.  The direct choice \(\varrho_\eps=\sqrt\eps\) gives the uniform
remainder
\begin{equation}
 C_{\rm mov}\left(\sqrt\eps+\eps\csc(\theta_{\rm mov}/2)
 +\eps/\rho_{\rm mov}+\eta+\eps\right).
 \label{eq:moving-fan-total-estimate}
\end{equation}
\end{lemma}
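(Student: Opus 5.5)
The plan is to reduce every estimate to its reference-time counterpart in Lemma~\ref{lem:fan-junction-estimate} by exact conjugacy, and to track only how the constants change under \(\Phi_\tau\). Since the transported arcs, gate and endpoint-sector labels are images under \(\Phi_\tau\) of the reference ones, the detector identity \eqref{eq:moving-fan-detector-conjugacy} gives
\[
A_{i,\eps,\ell}^{\rm jun}(\varrho;\tau)\subset\Phi_\tau\bigl(A_{i,\eps,\ell}^{\rm jun}(C\varrho;0)\bigr)
\]
once the sandwich \(T_{cr}(\Gamma(\tau))\subset\Phi_\tau(T_r(\Gamma(0)))\subset T_{Cr}(\Gamma(\tau))\) is applied to intrinsic sheet distances. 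Then \eqref{eq:moving-fan-active-volume} follows from \eqref{eq:fan-active-junction-volume}. The volume changes by at most \(J_+\). The angle factor \(\csc(\theta_*/2)\) is replaced by \(\csc(\theta_{\rm mov}/2)\), either through \eqref{eq:moving-fan-uniform-angle-separation} or by rerunning the two-sheet cylinder argument directly with the transported charts. The front area changes by at most a power of \(M_*\). All of these losses are absorbed into \(C_{\rm mov}\).

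For \eqref{eq:moving-fan-normalized-junction}, the pushed pair measure has marginals bounded by \(a^*c^*(1+e^{\eps_0bD_{0,\rm mov}})/J_-\). Multiplying the volume bound by this and dividing by \(\eps\) gives the claim. The chart and atlas estimates \eqref{eq:moving-fan-chart-estimate}--\eqref{eq:moving-fan-atlas-estimate} are obtained the same way from \eqref{eq:fan-curvature-estimate}--\eqref{eq:fan-seam-estimate}, applied to the transported atlas. That atlas has the uniform radius \(\rho_{\rm mov}\), \(C^{1,1}\) norms bounded by \(C(M_*,\rho_*)\), and \(\operatorname{Lip}v_i^\tau\le V_{1,\rm mov}\). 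In each case the constant depends only on the quantities listed in \eqref{eq:moving-fan-explicit-constant}.

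For the junction cell, the integrand \(\Psi\) is bounded by two on binary states. Hence \(E_{\eps,\tau}^{\rm jun}(\varrho;v)\le 2C_{\rm mov}\int_a^b\ell(\varrho+\eps\ell\csc(\theta_{\rm mov}/2))\lambda(\dd\ell)\), uniformly in \(\tau\). Sending \(\eps\downarrow0\) and then \(\varrho\downarrow0\) gives \(j_{\rm mov}=0\). Choosing \(\varrho_\eps=\sqrt\eps\) and adding the chart and atlas errors gives \eqref{eq:moving-fan-total-estimate}.

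The main obstacle is the time seam \eqref{eq:moving-fan-time-seam}. After pullback by \(P_\tau\), the detector, the labels and all prescribed data are independent of \(\tau\). So the difference between the two energies comes only from the Radon--Nikodym factors: the amplitudes \(\alpha_i^\tau\), the potential density \(\beta^\tau J_\tau\), and the pulled-back flow \(\Phi_\tau^{-1}\circ\Phi_i^{\tau,-s}\circ\Phi_\tau\). By construction the last of these is exactly the reference flow \(\Phi_i^{-s}\), so the bond endpoints coincide at the two times.

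With the endpoints equal, the directed measures at \(\tau\) and \(\sigma\) differ by the density ratio \(\alpha_i^\tau/\alpha_i^\sigma\). The potential densities differ by \(\beta^\tau J_\tau/(\beta^\sigma J_\sigma)\). Both ratios are of the form \(1+O(\omega_c+\omega_\Phi)\) in \(L^\infty\), by \eqref{eq:moving-fan-relative-amplitude-modulus} and \eqref{eq:moving-fan-time-modulus}. The integrands are nonnegative, so each term changes by at most that relative factor times the \(\sigma\)-energy. This gives the multiplicative bound \(C\omega_*(1+\widetilde{\mathcal G}^\sigma_\eps)\) uniformly in \(\eps\), and shows that no rate condition relating \(\eps\) and \(|\tau-\sigma|\) is needed.

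The point I expect to need care is that the physical coefficients \(c_i^\tau\) in \eqref{eq:moving-fan-physical-density} carry \(J_\tau^{-1}\), which cancels the pushforward Jacobian exactly. This must be checked explicitly so that no \(\omega_\Phi\)-dependent endpoint displacement, of size \(1/\eps\) after scaling, survives.
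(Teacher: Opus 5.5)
Your proposal is correct and follows essentially the paper's route. The junction, chart and atlas bounds come from the fixed-time fan lemma with transported constants: you pull the junction set back by conjugacy and pay a factor \(J_+\), whereas the paper reruns the Fubini argument in the transported charts using the uniform angle separation. The time seam comes from exact detector conjugacy, which leaves only positive weights varying by a relative factor \(1+O(\omega_*)\). Your observation that the Jacobians cancel in the pulled-back interaction is a slightly sharper form of the paper's bundled cell weights \(a_c^\tau\). It leaves \(\alpha_i^\tau/\alpha_i^\sigma\) varying in the interaction and \(\beta^\tau J_\tau/(\beta^\sigma J_\sigma)\) in the potential, and bounding \(J_\tau/J_\sigma-1\) needs only the determinant inequality together with \(J_->0\).
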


\begin{proof}
Estimate \eqref{eq:moving-fan-uniform-angle-separation} replaces
\eqref{eq:fan-angle-separation}; the Fubini argument of Lemma
\ref{lem:fan-junction-estimate} then has uniform chart Jacobians and gives
\eqref{eq:moving-fan-active-volume}--
\eqref{eq:moving-fan-normalized-junction}.  The resolved branch and the
canonical endpoint core rule are both bounded by two.
Taking
first \(\eps\downarrow0\), uniformly in \(\tau\), and then
\(\varrho\downarrow0\) proves \eqref{eq:moving-fan-junction-cell}.

The signed-distance, Jacobian, and flow expansions are uniform under
\eqref{eq:moving-fan-uniform-flow} and give
\eqref{eq:moving-fan-chart-estimate}; finite atlas multiplicity gives
\eqref{eq:moving-fan-atlas-estimate}.

For the time seam, set
\(\delta_{\tau\sigma}:=\omega_*(|\tau-\sigma|)\).  The determinant
inequality
\begin{equation}
 |\det A-\det B|
 \le d\max\{\|A\|,\|B\|\}^{d-1}\|A-B\|
 \label{eq:moving-fan-determinant-difference}
\end{equation}
and the lower bound \(J_->0\) imply
\begin{equation}
 \left\|\frac{J_\tau}{J_\sigma}-1\right\|_\infty
 +\sup_{|n|=1}\left\|
 \frac{|\operatorname{cof}D\Phi_\tau n|}
 {|\operatorname{cof}D\Phi_\sigma n|}-1
 \right\|_\infty
 \le C_J\omega_\Phi(|\tau-\sigma|),
 \label{eq:moving-fan-relative-jacobian}
\end{equation}
where one may take
\begin{equation}
 C_J:=C_d\left(1+\frac{J_+}{J_-}\right)(1+M_*)^{2d+2}.
 \label{eq:moving-fan-jacobian-constant}
\end{equation}

Index the finitely many pulled-back ordinary, face, and junction bond types
by \(c\).  Exact detector conjugacy in
\eqref{eq:moving-fan-detector-conjugacy} gives a reference pair measure
\(\mu_{\eps,c}\) and a nonnegative binary integrand
\(\Xi_{\eps,c}(v)\), both independent of time.  The corresponding cell has
the form
\begin{equation}
 E_{\eps,c}^\tau(v)
 =\frac1\eps\int a_c^\tau(q)
 \Xi_{\eps,c}(v;q)\,\mu_{\eps,c}(\dd q),
 \label{eq:moving-fan-reference-cell}
\end{equation}
with all geometry and amplitudes contained in the positive weight
\(a_c^\tau\).  Equations
\eqref{eq:moving-fan-relative-amplitude-modulus} and
\eqref{eq:moving-fan-relative-jacobian}, together with
\(c_i\ge c_*\), yield
\begin{equation}
 \left\|\frac{a_c^\tau}{a_c^\sigma}-1\right\|_\infty
 \le C_{\rm rel}\delta_{\tau\sigma},
 \qquad
 C_{\rm rel}:=C_d\frac{a^*c^*}{a_*c_*}
 \left(1+\frac{J_+}{J_-}\right)(1+M_*)^{2d+2}.
 \label{eq:moving-fan-relative-cell-weight}
\end{equation}
Consequently,
\begin{equation}
 |E_{\eps,c}^\tau(v)-E_{\eps,c}^\sigma(v)|
 \le C_{\rm rel}\delta_{\tau\sigma}
 E_{\eps,c}^\sigma(v).
 \label{eq:moving-fan-cell-time-seam}
\end{equation}
The same calculation applies to the potential weight.  The pulled-back
binary data are time independent.  Any fixed-data terms introduced in the
local ambient-completion comparison are bounded by \(D_g\) using the \(BV\)
flow-translation estimate.
Summing \eqref{eq:moving-fan-cell-time-seam} over at most
\(m_{\rm fan}N\) active labelled cell types therefore gives
\begin{equation}
 |\widetilde{\mathcal G}_\eps^\tau(v)-
   \widetilde{\mathcal G}_\eps^\sigma(v)|
 \le C_{\rm mov}\delta_{\tau\sigma}
 \bigl(1+\widetilde{\mathcal G}_\eps^\sigma(v)\bigr),
 \label{eq:moving-fan-time-seam-summed}
\end{equation}
which is \eqref{eq:moving-fan-time-seam}.
The factor \(1/\eps\) in
\eqref{eq:moving-fan-reference-cell} multiplies the same
\(\Xi_{\eps,c}\) at both times.  No state, detector label, or node is shifted
when \(\tau\) is compared with \(\sigma\), so no term of size
\(|\tau-\sigma|/\eps\) is created.  Adding the four geometric estimates and
choosing \(\varrho=\sqrt\eps\) proves
\eqref{eq:moving-fan-total-estimate}.
\end{proof}

\paragraph{Why no ratio between time and interaction scales is needed.}
Let \(\delta_n:=|\tau_n-\tau|\) and suppose that
the binary states \(v_n\) satisfy
\(\widetilde{\mathcal G}_{\eps_n}^{\tau_n}(v_n)\le M\).  Taking
\(\sigma=\tau_n\) in \eqref{eq:moving-fan-time-seam} gives the quantitative
comparison
\begin{equation}
 \left|\widetilde{\mathcal G}_{\eps_n}^{\tau}(v_n)
       -\widetilde{\mathcal G}_{\eps_n}^{\tau_n}(v_n)\right|
 \le C_{\rm mov}(1+M)\omega_*(\delta_n)\longrightarrow0,
 \label{eq:moving-fan-rate-free-diagonal}
\end{equation}
with no occurrence of \(\delta_n/\eps_n\).  Hence the estimate permits
\(\eps_n/\delta_n\to0\), \(\delta_n/\eps_n\to0\), or no limiting ratio at
all.  This rate independence uses all three transported structures above:
the detector and pair measure are exactly conjugated, prescribed data are
fixed after pullback, and only positive weights vary relatively by
\(O(\omega_*(\delta_n))\).

The conclusion would generally fail for a detector or contact layer moved in
physical coordinates without conjugation.  The one-dimensional layer
\(I_\eps=(0,\eps)\) already gives
\begin{equation}
 \frac1\eps\int_{\R}
 \left|\one_{I_\eps}(x-\delta)-\one_{I_\eps}(x)\right|\dd x
 =2\min\left\{\frac{|\delta|}{\eps},1\right\}.
 \label{eq:unconjugated-time-layer-counterexample}
\end{equation}
Thus \(\delta_n\to0\) alone does not control an unconjugated layer when
\(\eps_n\ll\delta_n\).  In that different model one would need a scale
condition such as \(\delta_n=o(\eps_n)\), or additional trace regularity that
removes the shifted layer.  The rate-free theorem below is therefore a
consequence of exact transport, not of time continuity alone.

\begin{theorem}[Uniform equicoercivity and Gamma-convergence in time]
\label{thm:moving-fan-uniform-gamma}
Under \eqref{eq:moving-fan-transport}--
\eqref{eq:moving-fan-reach-angle}, the conjugated detector
\eqref{eq:moving-fan-conjugated-arc}, and the coefficient and datum hypotheses
above, including \eqref{eq:moving-fan-uniform-steepness}, the family
\(\{\widetilde{\mathcal G}_\eps^\tau:\eps>0,\tau\in[0,T]\}\) is
equicoercive on \(L^1(X_{\mathcal K(0)},m_0)\).  Moreover it
Gamma-converges to \(\widetilde{\mathcal G}_0^\tau\) uniformly in the
sequential sense:
for every \(\tau_n\to\tau\), \(\eps_n\downarrow0\), and \(v_n\to v\) in
\(L^1\),
\begin{equation}
 \widetilde{\mathcal G}_0^\tau(v)
 \le\liminf_{n\to\infty}
 \widetilde{\mathcal G}_{\eps_n}^{\tau_n}(v_n),
 \label{eq:moving-fan-common-liminf}
\end{equation}
and every binary \(v\in BV(X_{\mathcal K(0)}^\circ)\) has the common
recovery \(v_n=v\), for which
\begin{equation}
 \lim_{n\to\infty}\widetilde{\mathcal G}_{\eps_n}^{\tau_n}(v)
 =\widetilde{\mathcal G}_0^\tau(v).
 \label{eq:moving-fan-common-recovery}
\end{equation}
In particular, the conclusion holds for arbitrary joint diagonals; no rate
condition between \(|\tau_n-\tau|\) and \(\eps_n\) is needed.  This includes
both \(\eps_n/|\tau_n-\tau|\to0\) and
\(|\tau_n-\tau|/\eps_n\to0\), whenever the quotients are defined.  At every time
the limit has the bulk and outer terms of \eqref{eq:fan-limit-energy}, the
two coefficient-one face terms on every \(K_\alpha(\tau)\), and no junction
term.
\end{theorem}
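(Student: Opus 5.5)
The plan is to reduce everything to the fixed-time fan theorem at the limit time \(\tau\), Theorem~\ref{thm:fan-junction-gamma-limit}, transported to the reference space by \(P_\tau\). The time-seam estimate \eqref{eq:moving-fan-time-seam} of Lemma~\ref{lem:moving-fan-uniform-estimates} then absorbs the diagonal discrepancy between \(\tau_n\) and \(\tau\).

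\emph{Step 1: fixed time after pullback.} For each fixed \(\tau\), the pulled-back energy \(\widetilde{\mathcal G}_\eps^\tau\) is the fan energy for the transported geometry \(\mathcal K(\tau)\), expressed through the bi-Lipschitz lift \(P_\tau\). By \eqref{eq:moving-fan-reach-angle}, the transported geometry satisfies the sectorial hypotheses with uniform constants \(\theta_{\rm mov},\rho_{\rm mov}\). The pushed fields \(v_i^\tau\) and coefficients \(c_i^\tau\) satisfy \eqref{eq:boundary-ambient-frame}, and \eqref{eq:moving-fan-uniform-steepness} replaces \eqref{eq:flow-ray-steepness}. The gate \(\Phi_\tau(T_{r_\eps}(\Gamma(0)))\) is sandwiched between Euclidean tubes of comparable width, so Lemma~\ref{lem:fan-junction-estimate} and the proof of Theorem~\ref{thm:fan-junction-gamma-limit} apply verbatim. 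The limit density pulls back to \(\overline h_\tau\) via \eqref{eq:moving-fan-pulled-density-cancellation}. Since \(P_\tau\) preserves labels, it maps \(BV(X^\circ_{\mathcal K(0)})\) onto \(BV(X^\circ_{\mathcal K(\tau)})\) with comparable norms. Consequently \(\widetilde{\mathcal G}_\eps^\tau\to\widetilde{\mathcal G}_0^\tau\) for each fixed \(\tau\), with constant recovery.

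\emph{Step 2: equicoercivity uniform in \(\tau\).} First apply threshold rounding, which is uniform by \eqref{eq:moving-fan-uniform-steepness}. Rounding changes the state in \(L^1\) by at most \(\eps\,\widetilde{\mathcal G}/(Ma_*J_-)\). For binary \(v\) of bounded energy at \(\tau_n\), apply \eqref{eq:moving-fan-time-seam} with \(\sigma=\tau_n\) and compare against the fixed reference time \(0\). This gives \(\widetilde{\mathcal G}_{\eps_n}^0(v_n)\le C(1+M)\), since \(\omega_*\) is bounded on \([0,T]\). Iterating the seam estimate over finitely many time steps of size \(\delta\) with \(C_{\rm mov}\omega_*(\delta)\le 1\) keeps the constant finite. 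The compactness part of Theorem~\ref{thm:fan-junction-gamma-limit} at \(\tau=0\) then gives a strongly convergent subsequence.

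\emph{Step 3: common liminf and recovery.} Let \(v_n\to v\) with bounded energy, after rounding. By \eqref{eq:moving-fan-rate-free-diagonal},
\[
 \widetilde{\mathcal G}^{\tau_n}_{\eps_n}(v_n)\ge\widetilde{\mathcal G}^{\tau}_{\eps_n}(v_n)-C_{\rm mov}(1+M)\omega_*(|\tau_n-\tau|).
\]
The Step 1 liminf at the fixed time \(\tau\) then gives \eqref{eq:moving-fan-common-liminf}. If the energies are unbounded, the inequality is trivial. For recovery, take \(v_n=v\). The constant sequence has \(\widetilde{\mathcal G}^\tau_{\eps_n}(v)\to\widetilde{\mathcal G}^\tau_0(v)\) by Step 1, so these energies are bounded. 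Applying \eqref{eq:moving-fan-time-seam} with \(\sigma=\tau\) gives
\[
 |\widetilde{\mathcal G}^{\tau_n}_{\eps_n}(v)-\widetilde{\mathcal G}^{\tau}_{\eps_n}(v)|\le C\omega_*(|\tau_n-\tau|)\to0,
\]
which is \eqref{eq:moving-fan-common-recovery}. At no point does a ratio \(|\tau_n-\tau|/\eps_n\) enter, because the seam bound is uniform in \(\eps\). The description of the limit, with coefficient-one faces and no junction term, follows from Step 1 together with \(j_{\rm mov}=0\).

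The main obstacle is justifying that the time seam is genuinely \(\eps\)-uniform and multiplicative in the energy, so that it is not merely an additive \(O(\delta/\eps)\) bound. This requires checking that every \(\tau\)-dependence sits in a positive weight \(a^\tau_c\) multiplying a time-independent integrand \(\Xi_{\eps,c}\) and pair measure. Two points need verification. First, the ghost and ambient-completion terms must be pulled-back fixed data, bounded by \(D_g\) rather than by \(\widetilde{\mathcal G}\). Second, the uniform steepness must make rounding commute with the comparison. Before trusting Step 2, I would first verify this weight factorization for the resolved branch, since it has two endpoint-to-ghost summands.
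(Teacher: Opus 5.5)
Your Steps 2 and 3 follow the paper's route. You threshold-round at \(\tau_n\) and then use the multiplicative, \(\eps\)-uniform seam estimate \eqref{eq:moving-fan-time-seam} to replace \(\tau_n\) by the limit time \(\tau\), so no ratio \(|\tau_n-\tau|/\eps_n\) appears. Your compactness reduction to \(\tau=0\) is a legitimate variant of the paper's direct uniform box estimates. It works because \(\omega_*\) is bounded on \([0,T]\) and \(\Phi_0=\operatorname{id}\), and no iteration is needed.

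The gap is in Step 1, which Step 3 needs at an arbitrary limit time \(\tau\ne0\). You assert that the pushed fields \(v_i^\tau\) satisfy \eqref{eq:boundary-ambient-frame}, and then apply Theorem~\ref{thm:fan-junction-gamma-limit} verbatim in physical coordinates. But by \eqref{eq:moving-fan-pushed-field}, \(v_i^\tau=(D\Phi_\tau v_i)\circ\Phi_\tau^{-1}\). Since \(\Phi_\tau\) is only bi-\(C^{1,1}\), \(D\Phi_\tau\) is merely Lipschitz, so \(v_i^\tau\) is in general only \(W^{1,\infty}\), not \(W^{2,\infty}\). The fixed-time theorem is proved only for \(W^{2,\infty}\) fields. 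Its compactness passes through Lemma~\ref{lem:ordered-flow-smoothing}, and \(W^{2,\infty}\) enters exactly through the second-variation bound in \eqref{eq:ordered-flow-first-second-variation}. For the conjugated flow that \(O(|r|)\) bound is not available, because \(D^2\Phi_\tau\) is only bounded, not continuous. So Step 1 invokes the theorem outside its hypotheses. You also cannot repair it by comparing with \(\tau=0\) through the seam estimate, since \(\omega_*(\tau)\) does not vanish and would leave a non-vanishing error in the liminf.

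The missing idea is to run the fixed-time argument on the reference completion, not on the physical geometry \(\mathcal K(\tau)\).
\begin{itemize}
\item By the exact conjugacy \eqref{eq:moving-fan-detector-conjugacy} and the pushforward of the pair measures, \(\widetilde{\mathcal G}_\eps^\tau\) uses the original reference arcs of the \(W^{2,\infty}\) fields \(v_i\).
\item The detector labels and ghost data are time independent after pullback.
\item Time enters only through the positive \(W^{1,\infty}\) weights \(\alpha_i^\tau c_i\) on the bonds and \(\beta^\tau J_\tau\) on the potential.
\end{itemize}
Rounding, ordered-flow smoothing, the weighted directional liminf, and constant recovery from Theorem~\ref{thm:fan-junction-gamma-limit} then apply with these weights. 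The limiting density is directly \(\overline h_\tau\) from \eqref{eq:moving-fan-pulled-density-cancellation}, so you never need to transfer BV or the fan hypotheses through \(P_\tau\). With Step 1 replaced this way, the rest of your argument goes through. That includes your recovery via the seam estimate at \(\sigma=\tau\), which is a slightly cleaner alternative to the paper's use of \eqref{eq:moving-fan-density-time-continuity} and \eqref{eq:moving-fan-total-estimate}.
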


\begin{proof}
Uniform ordinary-box and sector-wedge estimates use the common angle,
reach, atlas multiplicity, and Jacobian bounds.  Threshold rounding therefore
does not increase the energy by
\eqref{eq:moving-fan-uniform-steepness} and gives one \(BV\) bound on the
fixed reference completion; alternatively, the
projected uncontrolled jumps are bounded by
\(C\sup_\tau\sum_\alpha\Hh^{d-1}(K_\alpha(\tau))\).  This proves
equicoercivity.

For the lower bound it is enough to treat a subsequence with bounded energy.
Threshold-round \(v_n\) at time \(\tau_n\), obtaining binary \(z_n\to v\)
without increasing that energy.  For large \(n\), the time-seam estimate and
\(C_{\rm mov}\omega_*(|\tau_n-\tau|)<1/2\) first imply
\[
 \widetilde{\mathcal G}_{\eps_n}^{\tau}(z_n)
 \le 2\widetilde{\mathcal G}_{\eps_n}^{\tau_n}(z_n)
      +2C_{\rm mov}\omega_*(|\tau_n-\tau|),
\]
so the fixed-time energies are bounded.  Applying the seam estimate once
more shows
\(\widetilde{\mathcal G}_{\eps_n}^{\tau}(z_n)
-\widetilde{\mathcal G}_{\eps_n}^{\tau_n}(z_n)\to0\).
 Work entirely on the reference completion.  Exact conjugacy leaves the
reference \(W^{2,\infty}\) arcs unchanged; only the positive
\(W^{1,\infty}\) weights \(\alpha_i^\tau c_i\) and
\(\beta^\tau J_\tau\) occur.  The rounding, ordered-smoothing, weighted
directional-liminf, and constant-recovery proof of
Theorem~\ref{thm:fan-junction-gamma-limit} therefore applies with these
weights.  This reference-space argument is essential: a bi-\(C^{1,1}\)
change of coordinates need not preserve \(W^{2,\infty}\) regularity of the
pushed-forward vector fields.  The uniform estimates of
Lemma~\ref{lem:moving-fan-uniform-estimates} permit the same ordered deletion limit
for every sequence of times.  This proves
\eqref{eq:moving-fan-common-liminf}.  The constant binary recovery from the
fixed theorem, the continuous density estimate
\eqref{eq:moving-fan-density-time-continuity}, and the uniform total estimate
\eqref{eq:moving-fan-total-estimate} give
\eqref{eq:moving-fan-common-recovery}.  The uniform junction cell is zero,
so no time-dependent line density is left unaccounted for.
\end{proof}

\begin{lemma}[Joint recovery under chamberwise strict convergence]
\label{lem:moving-fan-joint-strict-recovery}
Let \(\tau_n\to\tau\), \(\eps_n\downarrow0\), and let
\(v_n,v\in BV_{\rm res}(X_{\mathcal K(0)}^\circ;\{0,1\})\).  Assume that,
on one fixed finite resolved presentation,
\begin{equation}
 v_n\to v\quad\text{in }L^1,\qquad
 |Dv_{n,a}|(C_a)\to|Dv_a|(C_a)\quad\text{for every chamber }C_a,
 \label{eq:moving-fan-chamberwise-strict}
\end{equation}
and that every fixed outer and resolved-face trace converges in \(L^1\).
Then
\begin{equation}
 \widetilde{\mathcal G}_{\eps_n}^{\tau_n}(v_n)
 \longrightarrow \widetilde{\mathcal G}_0^\tau(v).
 \label{eq:moving-fan-joint-strict-energy}
\end{equation}
\end{lemma}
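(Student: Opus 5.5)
The plan is to reduce the joint limit to the fixed-time theorem in three steps. First remove the time variable with the time-seam estimate. Then split the fixed-time energy into a part controlled by a modulus that is uniform along strictly convergent sequences and a part handled by the constant recovery. The second step uses the structure of \eqref{eq:moving-fan-reference-cell}.

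\emph{Step 1 (time reduction).} Strict convergence bounds \(|D_Xv_n|\) uniformly, and the traces converge. The majorant of Theorem~\ref{thm:global-resolved-translation-contact} (equivalently the \(BV\) flow-translation majorant applied to the glued physical/ghost fields) then gives \(\sup_n\widetilde{\mathcal G}_{\eps_n}^{\tau}(v_n)<\infty\). Applying \eqref{eq:moving-fan-time-seam} with \(\sigma=\tau\) yields
\[
 \bigl|\widetilde{\mathcal G}_{\eps_n}^{\tau_n}(v_n)-\widetilde{\mathcal G}_{\eps_n}^{\tau}(v_n)\bigr|\le C_{\rm mov}\omega_*(|\tau_n-\tau|)\bigl(1+\sup_m\widetilde{\mathcal G}_{\eps_m}^{\tau}(v_m)\bigr)\to0,
\]
with no ratio condition. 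It therefore suffices to prove \(\widetilde{\mathcal G}_{\eps_n}^{\tau}(v_n)\to\widetilde{\mathcal G}_0^\tau(v)\) at the fixed time \(\tau\).

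\emph{Step 2 (lower bound).} Theorem~\ref{thm:moving-fan-uniform-gamma}, applied with \(\tau_n\equiv\tau\), gives the liminf inequality \eqref{eq:moving-fan-common-liminf} along \(v_n\to v\). So only the limsup remains.

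\emph{Step 3 (upper bound via equicontinuity).} Fix regular \(r,\eta\) and use the cell localization of Lemma~\ref{lem:resolved-cell-localization}. The energy splits into the retained ordinary and face cells and the deleted junction, seam and chart parts. The deleted parts are bounded uniformly in the state, for binary fields, by \eqref{eq:moving-fan-normalized-junction}--\eqref{eq:moving-fan-atlas-estimate}. They contribute \(O(r+\eta+\eps)\) uniformly in \(n\).

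On each retained cell, glue \(v_n\) chamberwise to the fixed neighbouring chamber or ghost. The retained energy is then a finite sum of weighted flow-translation quotients \(h^{-1}\int a\,|w_n-w_n\circ\Phi^{-h}|\) with \(h=\eps_n\ell\), minus fixed ghost terms. Chamberwise strict convergence together with \(L^1\) convergence of the face traces makes each glued field \(w_n\) converge strictly in \(BV\) on the enlarged box: the gluing formula \eqref{eq:resolved-bv-gluing} turns trace convergence into convergence of the jump masses.

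The key claim is the following. If \(w_n\to w\) strictly in \(BV\), \(h_n\downarrow0\) and the weight \(a\) is continuous, then
\[
 h_n^{-1}\int a\,|w_n-w_n\circ\Phi^{-h_n}|\dd x\;\to\;\int a\,\dd|v\cdot Dw|.
\]
To prove it, combine the liminf from \eqref{eq:flow-ray-liminf} with an upper bound. The translation inequality gives
\[
 h^{-1}\int a\,|w-w\circ\Phi^{-h}|\le\int\widetilde a_h\,\dd|v\cdot Dw|,
\]
where \(\widetilde a_h\) is the flow average of \(a\) over the arc, up to Jacobian factors \(1+O(h)\). The right side converges along strictly convergent sequences by Reshetnyak continuity, because \(|v\cdot\sigma|\) is continuous and one-homogeneous and \(\widetilde a_h\to a\) uniformly. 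For binary states squares equal absolute values. Integrating in \(\ell\) with the dominated majorant and subtracting the convergent ghost terms gives
\[
 \limsup_n\widetilde{\mathcal G}_{\eps_n}^{\tau}(v_n)\le\mathfrak m_v(\text{retained})+C(r+\eta).
\]
Sending \(r,\eta\downarrow0\) finishes the proof.

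The main obstacle is Step 3 near the front, junction and boundary. There the gated detector is not a plain translation quotient, and strict convergence alone does not control concentration of \(|D_Xv_n|\) in \(T_r(\Gamma)\). I would handle this by using that strict convergence forbids mass concentration in the limit:
\[
 \limsup_n|D_Xv_n|\bigl(\overline{T_r(\Gamma)}\bigr)\le|D_Xv|\bigl(\overline{T_r(\Gamma)}\bigr)\to0 \quad\text{as } r\downarrow0.
\]
The junction bond count itself is bounded independently of the state by \eqref{eq:moving-fan-normalized-junction}.
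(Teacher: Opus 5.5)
Your proof is correct, and its spatial core is the paper's own argument. The lower bound is \eqref{eq:moving-fan-common-liminf}. For the upper bound you glue each retained chamber to its neighbouring chamber or fixed ghost, and you turn chamberwise strict convergence plus trace convergence into strict $BV$ convergence of the glued fields via \eqref{eq:resolved-bv-gluing}. The translation limit is then proved exactly as in \eqref{eq:joint-strict-flow-translation}--\eqref{eq:joint-strict-flow-upper}.

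The genuine difference is how you handle time.

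\emph{Your route.} You first freeze time with the multiplicative seam estimate \eqref{eq:moving-fan-time-seam} at $\sigma=\tau$. This needs the a priori bound $\sup_n\widetilde{\mathcal G}_{\eps_n}^{\tau}(v_n)<\infty$, which does hold: ordinary bonds are controlled by the $BV$ translation majorant, and crossing bonds by the state-independent crossing budget. What remains is a purely fixed-time continuity statement.

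\emph{The paper's route.} The paper does not use Lemma~\ref{lem:moving-fan-uniform-estimates} at this point. By exact conjugacy only the weights $\alpha_i^{\tau_n}c_i$ depend on $n$. These converge uniformly, and the translation limit is proved directly with varying weights $a_n\to a$.

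\emph{Comparison.} Your route is more modular, since the fixed-fan continuity statement does not care how the weights depend on time. The paper's route is more self-contained, needing neither the seam lemma nor an a priori energy bound.

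Your portmanteau step, $\limsup_n|D_Xv_n|(F)\le|D_Xv|(F)$ for $F$ relatively closed in the physical open set, is more careful than the paper. The errors it cites in \eqref{eq:moving-fan-total-estimate} are state-independent only for crossing and junction bonds. Ordinary bonds inside $T_r(\Gamma)$ are controlled only by $|D_Xv_n|$ near $\Gamma$, which is exactly what your step handles.

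You should say explicitly that this step uses global strict convergence $|D_Xv_n|(X^\circ)\to|D_Xv|(X^\circ)$. That follows from chamberwise strict convergence together with convergence of the transparent-seam traces, which Proposition~\ref{prop:resolved-bv-package} supplies under chamberwise strict convergence.

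One sentence of yours is too strong: that all deleted seam parts are $O(\eta+\eps)$ uniformly in the state. An ordinary bond across a codimension-one auxiliary seam can carry a genuine jump of $v$, and $|D_Xv|(\mathcal S)$ need not vanish. Such bonds must not be deleted. Keep them by gluing across the seam to the neighbouring chamber, as you also propose, so that \eqref{eq:one-step-transparent-refinement} recombines them into $D_Xv$. The portmanteau argument only disposes of deleted sets that shrink to $|D_Xv|$-null sets, such as $\Gamma$.
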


\begin{proof}
The lower bound is \eqref{eq:moving-fan-common-liminf}.  For the upper bound,
delete a junction tube and the finite atlas seams as in the proof of
Theorem~\ref{thm:fan-junction-gamma-limit}.  On every retained chamber, glue
\(v_{n,a}\) to the fixed neighbouring chamber or ghost prescribed by that
cell.  Chamberwise strict convergence and convergence of the fixed-face
traces make these glued fields converge strictly in ordinary \(BV\), by
\eqref{eq:resolved-bv-gluing}.  Reshetnyak continuity therefore gives
convergence of their anisotropic variations.

We record the required joint translation step explicitly.  Let \(w_n\to w\)
strictly in \(BV(U)\), let \(a_n\to a\) uniformly with uniformly bounded
positive \(W^{1,\infty}\) norms, and let \(s_n\downarrow0\).  For the flow of
one reference field \(v_i\),
\begin{equation}
 \frac1{s_n}\int_U a_n(x)
 |w_n(x)-w_n(\Phi_i^{-s_n}(x))|\dd x
 \longrightarrow\int_Ua\dd|v_i\cdot Dw|,
 \label{eq:joint-strict-flow-translation}
\end{equation}
after restricting to a protected inner cell so that every flow segment stays
in \(U\).  For the lower bound, the signed measures
\[
 a_n\frac{w_n-w_n\circ\Phi_i^{-s_n}}{s_n}\,\mathcal L^d
 \stackrel{*}{\rightharpoonup}a\,v_i\cdot Dw
\]
by testing, changing variables along the flow, and using \(w_n\to w\) in
\(L^1\).  Lower semicontinuity of total variation gives the liminf in
\eqref{eq:joint-strict-flow-translation}.  For the reverse inequality, the
one-dimensional \(BV\) estimate along flow lines gives
\begin{equation}
 \frac1s\int_Ua_n|w_n-w_n\circ\Phi_i^{-s}|\dd x
 \le\int_0^1\int_U
 a_n(\Phi_i^{\theta s}(y))J\Phi_i^{\theta s}(y)
 \dd|v_i\cdot Dw_n|(y)\dd\theta .
 \label{eq:joint-strict-flow-upper}
\end{equation}
The multiplier on the right converges uniformly to \(a\).  Strict convergence
and Reshetnyak continuity imply
\(\int a\dd|v_i\cdot Dw_n|\to\int a\dd|v_i\cdot Dw|\), proving the limsup.
The same estimate, with the multiplier bounded, also yields
\begin{equation}
 \frac1{|s|}\int |w-w\circ\Phi_i^{-s}|\dd x
 \le C|Dw|(U),\qquad0<|s|\le b\eps_0,
 \label{eq:joint-strict-flow-majorant}
\end{equation}
uniformly on the protected cell.  Apply
\eqref{eq:joint-strict-flow-translation} with \(s_n=\eps_nt\); the majorant
and \(t\in[a,b]\) permit integration in the radial variable.  This gives the
required limsup on every retained cell.  The chart, seam, and
junction errors are uniform by
\eqref{eq:moving-fan-total-estimate}.  Sending first \(n\to\infty\), then the
seam width and junction radius to zero, proves
\eqref{eq:moving-fan-joint-strict-energy}.
\end{proof}

\begin{corollary}[Measurable and continuous recovery families]
\label{cor:moving-fan-recovery-family}
Let \(\tau\mapsto v(\tau)\) be a binary finite-energy family on the reference
completion and define
\begin{equation}
 \mathcal R_\eps(\tau)v(\tau)
 :=v(\tau)\circ P_\tau^{-1}
 \quad\hbox{on }X_{\mathcal K(\tau)}.
 \label{eq:moving-fan-recovery-operator}
\end{equation}
If \(v\) is strongly measurable in \(L^1\), then the physical recovery family
is strongly measurable after the natural embedding by \(p_\tau\).  If
\(v\) is continuous in \(L^1\), then the family is continuous.  If, in
 addition, \(v([0,T])\) is compact in the chamberwise strict topology
\eqref{eq:moving-fan-chamberwise-strict} and all its pulled-back outer and
resolved-face traces form compact subsets of \(L^1\), then
\begin{equation}
 \sup_{\tau\in[0,T]}
 \left|\widetilde{\mathcal G}_\eps^\tau(v(\tau))
       -\widetilde{\mathcal G}_0^\tau(v(\tau))\right|\longrightarrow0.
 \label{eq:moving-fan-uniform-recovery-family}
\end{equation}
Thus for every \(\tau_n\to\tau\), \(\eps_n\to0\), and every such family
with \(v(\tau_n)\to v(\tau)\) chamberwise strictly as in
\eqref{eq:moving-fan-chamberwise-strict}, the same recovery
operator gives a common diagonal recovery:
\begin{equation}
 \mathcal R_{\eps_n}(\tau_n)v(\tau_n)\to
 \mathcal R_0(\tau)v(\tau),\qquad
 \widetilde{\mathcal G}_{\eps_n}^{\tau_n}(v(\tau_n))\to
 \widetilde{\mathcal G}_0^\tau(v(\tau)).
 \label{eq:moving-fan-common-diagonal}
\end{equation}
This gives one recovery operator for time-dependent coefficients and
fixed-incidence moving junctions.
\end{corollary}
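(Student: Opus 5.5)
The plan splits into four claims: measurability, continuity, uniform convergence under strict compactness, and the common diagonal. All the analysis takes place on the fixed reference completion.

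\emph{Measurability and continuity.} The recovery operator does not depend on \(\eps\): it is the pullback inverse of the label-preserving lift \(P_\tau\) from \eqref{eq:moving-fan-lift}. After embedding by \(p_\tau\), the physical state is \(v(\tau)\circ p_0^{-1}\circ\Phi_\tau^{-1}\) on \(\Omega\), \(m_0\)-a.e. Write it as \(w(\tau)\circ\Phi_\tau^{-1}\) with \(w(\tau)=v(\tau)\circ p_0^{-1}\in L^1(\Omega)\). For \(\tau,\sigma\in[0,T]\) one has
\[
\|w(\tau)\circ\Phi_\tau^{-1}-w(\sigma)\circ\Phi_\sigma^{-1}\|_1
\le J_+\|w(\tau)-w(\sigma)\|_1+\|w(\sigma)\circ\Phi_\tau^{-1}-w(\sigma)\circ\Phi_\sigma^{-1}\|_1 .
\]
The second term tends to zero as \(\tau\to\sigma\) for each fixed \(w(\sigma)\). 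To see this, approximate \(w(\sigma)\) by continuous functions, use the uniform Jacobian bounds, and use \(\omega_\Phi\to0\). This gives continuity. The same estimate shows that the map \((\tau,w)\mapsto w\circ\Phi_\tau^{-1}\) is jointly continuous on \([0,T]\times L^1\). Composing it with the strongly measurable \(v\), approximated by simple functions, gives strong measurability.

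\emph{Uniform convergence.} I argue \eqref{eq:moving-fan-uniform-recovery-family} by contradiction. Suppose there are \(\eta>0\), \(\eps_n\downarrow0\) and \(\tau_n\in[0,T]\) with a discrepancy of at least \(\eta\). By compactness of \([0,T]\) we may take \(\tau_n\to\tau\). By the assumed chamberwise strict compactness of \(v([0,T])\) and compactness of the trace sets, a further subsequence gives \(v(\tau_n)\to\bar v\) chamberwise strictly, with all fixed-face traces converging in \(L^1\). Lemma~\ref{lem:moving-fan-joint-strict-recovery} then yields \(\widetilde{\mathcal G}_{\eps_n}^{\tau_n}(v(\tau_n))\to\widetilde{\mathcal G}_0^\tau(\bar v)\).

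The main obstacle is to also get \(\widetilde{\mathcal G}_0^{\tau_n}(v(\tau_n))\to\widetilde{\mathcal G}_0^\tau(\bar v)\); the corollary does not assume \(\bar v=v(\tau)\) or continuity of \(v\), so that comparison is not available. I handle it through the pulled-back formula. The bulk term is \(\int\overline h_{\tau_n}\,d|D_Xv(\tau_n)|\), by the cancellation \eqref{eq:moving-fan-pulled-density-cancellation}. Reshetnyak continuity under chamberwise strict convergence handles the strictly converging measures. The uniform density estimate \eqref{eq:moving-fan-density-time-continuity} bounds the change from \(\overline h_{\tau_n}\) to \(\overline h_\tau\), because total variations are bounded along the sequence. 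The face and outer contact terms converge because the traces converge in \(L^1\), the data are time independent after pullback, and the face densities are uniformly continuous in time. The two limits then agree, which contradicts the assumed discrepancy.

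\emph{Common diagonal.} For \eqref{eq:moving-fan-common-diagonal}, the first convergence follows from the continuity estimate above, applied with \(v(\tau_n)\to v(\tau)\) in \(L^1\); the operator does not depend on \(\eps\). The energy convergence is exactly Lemma~\ref{lem:moving-fan-joint-strict-recovery} applied with \(v_n=v(\tau_n)\). Its trace-convergence hypothesis holds by the trace-convergence clause of Proposition~\ref{prop:resolved-bv-package} under chamberwise strict convergence.
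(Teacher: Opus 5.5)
Your proof is correct and takes essentially the paper's route. Measurability and continuity come from composition with \(P_\tau\) under uniform Jacobian bounds, the uniform statement is proved by contradiction using compactness of \(v([0,T])\) and Lemma~\ref{lem:moving-fan-joint-strict-recovery}, and the common diagonal follows from that same lemma. You also prove explicitly that \(\widetilde{\mathcal G}_0^{\tau_n}(v(\tau_n))\to\widetilde{\mathcal G}_0^\tau(\bar v)\), which fills a step the paper's proof leaves implicit; the ingredients are Reshetnyak continuity, control of the transparent-seam jumps of \(D_Xv(\tau_n)\) by the trace clause of Proposition~\ref{prop:resolved-bv-package}, and \eqref{eq:moving-fan-density-time-continuity}. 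That step is needed even if ``in addition'' is read as including \(L^1\)-continuity, which would force \(\bar v=v(\tau)\).
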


\begin{proof}
 The map \((\tau,x)\mapsto P_\tau x\) is Borel under the
\(W^{1,\infty}\)-in-time flow hypothesis and continuous under
\eqref{eq:moving-fan-time-modulus}.  Composition on \(L^1\), with uniformly
bounded Jacobians, preserves strong measurability and continuity.  The
constant recovery contains no timewise selection.  If
\eqref{eq:moving-fan-uniform-recovery-family} failed, there would be
\(\eps_n\downarrow0\) and \(\tau_n\to\tau\), after extraction, for which the
displayed difference stays bounded away from zero.  Compactness of the image
gives chamberwise strict convergence of \(v(\tau_n)\) and convergence of all
fixed-face traces.  Lemma~\ref{lem:moving-fan-joint-strict-recovery} then makes
that difference tend to zero, a contradiction.  This proves the uniform
statement; evaluation on an arbitrary joint sequence gives
\eqref{eq:moving-fan-common-diagonal}.
\end{proof}

\begin{proposition}[Fixed-incidence obstruction to topology change]
\label{prop:moving-fan-topology-obstruction}
No family satisfying \eqref{eq:moving-fan-transport} can change the number of
sheets, face copies, sector vertices, or their incidence graph.  If a new
sheet is created through a limiting smooth neck, that family cannot be
represented by label-preserving identifying maps for which the direct and
inverse bi-\(C^{1,1}\) norms, the relevant reaches, and the nonzero incidence
angles remain uniformly controlled.  Equivalently, along every attempted
fixed-reference representation at least one of those controls degenerates.
If instead a positive-area sheet appears instantaneously, there are bounded
data for which the sharp energies are not continuous in time.  Namely, with
outer state and physical state one and both new face ghosts zero, the energy
jumps from zero to
\begin{equation}
 2\int_{K_{\rm new}}h_{\rm fl}(x,\nu_{K_{\rm new}})\dd\Hh^{d-1}>0.
 \label{eq:moving-fan-birth-energy-jump}
\end{equation}
Hence neither the common state-space identification nor the time-seam estimate
\eqref{eq:moving-fan-time-seam} survives such a change.  Any topology-changing
extension therefore requires a different state-space identification and a
new cell problem.  The jump exhibits failure of continuity for one state; it
is not a state-independent creation cost.
\end{proposition}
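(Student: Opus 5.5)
The proposition has three assertions: combinatorial invariance, degeneration along a neck, and an energy jump at instantaneous birth. I would prove them in that order. Each reduces to facts already established: the lift \eqref{eq:moving-fan-lift}, the uniform constants \eqref{eq:moving-fan-reach-angle}, and the constant-state computation of Corollary~\ref{cor:fan-null-set-invisible}.

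\emph{Invariance.} Suppose \eqref{eq:moving-fan-transport} holds. Then \(\Phi_\tau\) is a homeomorphism of \(\overline\Omega\) that maps \(K_\alpha(0)\) onto \(K_\alpha(\tau)\) and \(\Gamma(0)\) onto \(\Gamma(\tau)\). Its lift \(P_\tau\) is a label-preserving bijection of the completions. Open sheets, their two bank copies, and the sector vertices \(q^{[\alpha]}\) are described intrinsically by the completion, through the number of preimages of \(p\) and the local topology of the sectors. Hence \(P_\tau\) gives a bijection between these objects at times \(0\) and \(\tau\), and it preserves incidence. Their number and the incidence graph are therefore constant in \(\tau\).

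\emph{Neck.} Assume a sheet is created through a smooth neck, and suppose some fixed-reference representation kept uniform control of the bi-\(C^{1,1}\) norms, the reach \(r_\Gamma\), and the angle \(\theta_{\rm mov}\). By the invariance step, the number of faces at every \(\tau\) would equal the reference count. This contradicts the creation of a new sheet at a time \(\tau_0\). The remaining task is to rule out a representation that works on \([0,\tau_0)\) alone, with the constants uniform there. I would first try the Arzel\`a--Ascoli theorem. The maps \(\Phi_\tau\) and \(\Phi_\tau^{-1}\) are bounded in \(C^{1,1}\), so along \(\tau\uparrow\tau_0\) they converge in \(C^1\) to a bi-Lipschitz homeomorphism \(\Phi_{\tau_0}\). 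The uniform reach and angle bounds pass to this limit. The fixed-incidence chart description then passes to \(\mathcal K(\tau_0)\), which again contradicts the change of incidence. I expect this compactness passage to be the main obstacle. The difficulty is showing that the limit still sends sheets to sheets with the same sector structure. The uniform angle separation \eqref{eq:moving-fan-uniform-angle-separation} and the uniform chart radius \(\rho_{\rm mov}\) should supply exactly that.

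\emph{Jump.} Take outer datum \(g_\partial\equiv1\), physical state \(u\equiv1\), and both ghosts on the new sheet equal to \(0\). Before \(\tau_0\), all the sharp terms in \eqref{eq:fan-limit-energy} vanish for this data: the bulk term is zero, the outer contact is zero, and the existing faces carry data one. After \(\tau_0\), the two bank traces of \(K_{\rm new}\) equal \(1\) while \(g^\pm=0\). Each bank then contributes \(\int_{K_{\rm new}}h_{\rm fl}(x,\nu)\dd\Hh^{d-1}\) with coefficient one, by Lemma~\ref{lem:boundary-halfspace-cell} and Theorem~\ref{thm:fan-junction-gamma-limit}. This yields \eqref{eq:moving-fan-birth-energy-jump}, and the value is positive by uniform spanning and \(\Hh^{d-1}(K_{\rm new})>0\). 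On the other hand, \eqref{eq:moving-fan-time-seam} would force \(|\widetilde{\mathcal G}_\eps^\tau(v)-\widetilde{\mathcal G}_\eps^\sigma(v)|\to0\) uniformly in \(\eps\), and passing to the Gamma-limit would make the sharp energy continuous in time. Likewise a common state space would require a bijection of face copies, which the invariance step rules out. Since the jump occurs for this particular state only, it shows failure of continuity and is not a state-independent creation cost.
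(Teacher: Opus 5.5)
Your proposal follows the paper's proof essentially step for step. Invariance holds because a label-preserving homeomorphism (equivalently its lift \(P_\tau\)) preserves sheets, bank copies, sector vertices and their incidence. The neck assertion is proved by contradiction, applying Arzel\`a--Ascoli to the maps, their derivatives and their inverses, with the uniform tubular charts carrying the labelled incidence to the limit. The birth jump comes from the constant-state double-face computation \eqref{eq:fan-double-face-cost}, together with the remark that a vanishing time seam would force the sharp energy to be continuous. You explicitly set the data on the pre-existing faces to one, so the pre-birth sharp energy is exactly zero; this is a small clarification that the paper leaves implicit.
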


\begin{proof}
A label-preserving homeomorphism preserves connected components, boundary
incidence, and the cyclic order of the labelled sectors, proving the first
assertion.  For the limiting assertion, suppose to the contrary that a
sequence of fixed-reference identifications and their inverses has uniformly
bounded \(C^{1,1}\) norms and that the reaches and nonzero incidence angles
stay bounded away from zero.  Arzela--Ascoli, applied to the maps, their first
derivatives, and their inverses, gives after extraction a \(C^1\) bi-Lipschitz
limit and inverse.  The labelled sheets converge in their uniform tubular
charts, so the limit map carries every reference stratum and incidence
relation to its labelled limit.  Its incidence graph is therefore unchanged,
contradicting neck creation.  For instantaneous creation,
apply \eqref{eq:fan-double-face-cost} to the new sheet.  Before creation no
corresponding face copies or contact integral exist, whereas uniform spanning
makes the displayed post-creation contact cost positive.  This contradicts any
vanishing time seam and proves the obstruction.
\end{proof}

\begin{remark}[Zero-cost topology change outside the transport class]
\label{rem:topology-zero-cost-path}
Proposition \ref{prop:moving-fan-topology-obstruction} does not assert that
every topology-changing family has a positive energy gap.  Two explicit
families show the limitation.  First, let a fixed positive-area sheet
\(K_{\rm new}\) be absent for \(\tau<\tau_0\) and present for
\(\tau\ge\tau_0\).  Set the outer state, physical state, and both new face
ghosts equal to one.  The potential, intrinsic variation, and every contact
mismatch then vanish, so the sharp energy is identically zero across
\(\tau_0\), although the incidence graph changes.  The spatial functional
therefore does not assign a state-independent cost to the birth of a cut.

Second, even the incompatible data in
\eqref{eq:moving-fan-birth-energy-jump} need not produce a positive limiting
gap if uniform geometry is abandoned.  In a coordinate ball about
\(x_0\Subset\Omega\), let
\[
 K_r:=\{x_0+(y,0):y\in\R^{d-1},\ |y|<r\},
 \qquad
 H_*:=\sup_{\overline\Omega\times\Sph^{d-1}}h_{\rm fl}<\infty.
\]
Write \(\omega_{d-1}:=\mathcal L^{d-1}(B_1^{d-1})\).
With physical state one and both face ghosts zero, the new-face cost obeys
\begin{equation}
 0<2\int_{K_r}h_{\rm fl}(x,e_d)\dd\Hh^{d-1}
 \le 2H_*\omega_{d-1}r^{d-1}\longrightarrow0.
 \label{eq:shrinking-sheet-vanishing-cost}
\end{equation}
Taking \(r=r(\tau)\downarrow0\) as \(\tau\downarrow\tau_0\) gives a
nontrivial topology-changing family whose sharp energy converges to the
pre-birth value.  There is no contradiction: the front
\(\partial K_r\) has curvature of order \(r^{-1}\), so its tubular radius
and every uniform \(C^{1,1}\) identification degenerate.  The proposition
excludes topology change only inside the uniformly controlled transport
class; it does not exclude zero- or vanishing-cost paths in a weaker geometric
closure.
\end{remark}

\begin{proposition}[Censoring deletes the contact law]
\label{prop:boundary-censoring-counterexample}
Replace \(B_\Omega\) in \eqref{eq:boundary-crossing-flow-energy} by
\(\Omega\times\Omega\).  If \(g\equiv0\) on \(\partial\Omega\) and
\(u_\eps\equiv1\) in \(\Omega\), then every censored bond and the potential
have zero energy.  Nevertheless
\begin{equation}
 \int_{\partial\Omega}h_{\rm fl}(x,\nu_\Omega)
 |\operatorname{Tr}u-g|\dd\Hh^{d-1}
 \ge\kappa\Hh^{d-1}(\partial\Omega)>0.
 \label{eq:boundary-censored-contradiction}
\end{equation}
Thus imposed-exterior and censored interactions have different Gamma-limits;
the latter cannot be used to derive the contact density of the former.
\end{proposition}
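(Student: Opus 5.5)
The plan is to verify the two assertions separately: the censored energy vanishes identically, and the imposed-exterior Gamma-limit charges the boundary with the displayed positive amount.

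For the first assertion, I will observe that with \(B_\Omega\) replaced by \(\Omega\times\Omega\), every retained bond has both endpoints in \(\Omega\). There the glued state \(\bar u_\eps\) equals \(u_\eps\equiv1\), so every integrand \(|\bar u_\eps(x)-\bar u_\eps(y)|^2\) vanishes pointwise. The datum \(g\) never enters. The potential vanishes because \(d_*(1)=0\). Hence the censored energy of the constant state is exactly zero for every \(\eps\), and its Gamma-limit at \(u\equiv1\) is zero, since the liminf is nonnegative and the constant sequence recovers the value zero.

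For the second assertion, I will evaluate the limit functional \eqref{eq:boundary-flow-limit} at \(u=\one_\Omega\), that is, \(E=\Omega\). The interior perimeter term \(\partial^*E\cap\Omega\) is empty. The interior trace \(\operatorname{Tr}_\Omega u\) equals \(1\) \(\Hh^{d-1}\)-a.e. on \(\partial\Omega\), so \(|\operatorname{Tr}u-g|=1\). The spanning inequality in \eqref{eq:boundary-ambient-frame}, applied with \(\xi=\nu_\Omega\), gives \(h_{\rm fl}(x,\nu_\Omega)\ge\kappa\). Integrating this bound yields \eqref{eq:boundary-censored-contradiction}. The quantity is strictly positive because a bounded Lipschitz domain has \(\Hh^{d-1}(\partial\Omega)>0\).

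To compare the two models, I will invoke Theorem~\ref{thm:boundary-flow-gamma-limit}. It identifies the Gamma-limit of the imposed-exterior energy at \(\one_\Omega\) as this positive number, while the censored family has Gamma-limit zero at the same state. The two Gamma-limits therefore differ at a single state, and the censored model cannot produce the contact density of the imposed-exterior one.

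There is no genuine obstacle here. The only point needing care is to check that censoring really removes every crossing bond, including those of the transposed half of \(\widehat Q\). This holds because the restriction is imposed on the symmetric pair set \(\Omega\times\Omega\) itself, so both half-orientations are removed together.
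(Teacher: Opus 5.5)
Your proposal is correct and follows the same route as the paper: the censored energy of the constant state vanishes bond by bond and in the well, and uniform spanning applied to \(\nu_\Omega\), together with \(|\operatorname{Tr}u-g|=1\), gives the positive lower bound. You also spell out two steps the paper leaves implicit: the constant sequence forces the censored Gamma-limit at \(u\equiv1\) to be zero, and Theorem~\ref{thm:boundary-flow-gamma-limit} identifies the imposed-exterior Gamma-limit at \(\one_\Omega\) with the displayed positive boundary integral.
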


\begin{proof}
The constant interior state has no interior difference and lies in a true
well.  The strict inequality follows from uniform spanning in
\eqref{eq:boundary-ambient-frame}.  This contradicts any positive contact
term and proves the claim.
\end{proof}

\section{Explicit constants in ordered-flow smoothing}
\label{app:ordered-flow-constants}

For completeness we record the constants used in
Lemma~\ref{lem:ordered-flow-smoothing}.  With \(C_d\ge1\) dimensional and the
data of \eqref{eq:ordered-flow-constant-data}--
\eqref{eq:ordered-flow-radial-constant}, one may take
\begin{align}
 \eps_{\rm sm}&:=\min\left\{\eps_0,
 \frac{\delta_{\rm atl}}
 {4dc_\rho(1+V_0)e^{d\eps_0c_\rho V_1}},
 \frac1{C_d d c_\rho K_{\rm fr}(1+V_0)(1+V_1)
 e^{C_d d\eps_0c_\rho V_1}}\right\},
 \notag\\
 C_{\rm sm}&:=C_dm_{\rm atl}(1+G_{\rm atl})(1+K_{\rm fr})^2R_\rho
 (1+V_0)(1+V_1+V_2)\notag\\
 &\qquad{}\times
 \exp\!\bigl(C_d d\eps_0b(V_1+D_0)\bigr)
 \left(1+\frac{c_\rho V_0}{\delta_{\rm atl}}\right).
 \label{eq:ordered-flow-explicit-constant}
\end{align}

\section*{Statements and Declarations}

\paragraph{Funding.}
This work was supported by NSFC Grant 12501602, Hunan Provincial Education
Department Grant 24C0055, Hunan Provincial Science and Technology Department
Grant 2025JJ60052, and Xiangtan University Start-up Fund Grant KZ0810769.

\paragraph{Competing interests.}
The author declares that there are no competing interests.

\paragraph{Use of AI-assisted tools.}
OpenAI Codex was used for language editing, structural revision, and
consistency checks.  The author reviewed and verified the mathematical
content, references, and final text and assumes full responsibility for the
manuscript.

\begingroup
\scriptsize
\raggedright
\setlength{\bibsep}{0pt plus 0.2ex}
\setlength{\bibhang}{0pt}
\bibliographystyle{unsrtnat}
\bibliography{paper_I_references}
\endgroup

\end{document}